\documentclass[review,onefignum,onetabnum]{siamonline220329}
\usepackage{amsfonts,amssymb}
\usepackage{hyperref}
\usepackage{titlesec}
\usepackage{titletoc}
\usepackage{amsmath}
\usepackage{listings}
\usepackage{verbatim}
\usepackage{graphicx}
\usepackage{hyperref}
\usepackage{mathrsfs}
\usepackage{enumitem}
\usepackage{cite}
\usepackage{bbm}
\usepackage{mathtools}
\usepackage{xcolor, tikz}
\newtheorem{assumption}{Assumption}
\usepackage{bm}
\newsiamremark{remark}{Remark}

\DeclareMathOperator{\rank}{rank}
\DeclareMathOperator{\rad}{rad}

\DeclareMathOperator{\supp}{supp}
\DeclareMathOperator{\loc}{loc}

\DeclareMathOperator{\ba}{\textbf{a}}

\DeclareMathOperator{\bPhi}{\mathbf{\Phi}}

\DeclareMathOperator{\calA}{\mathcal{A}}
\DeclareMathOperator{\calB}{\mathcal{B}}
\DeclareMathOperator{\calQ}{\mathcal{Q}}
\DeclareMathOperator{\calC}{\mathcal{C}}
\DeclareMathOperator{\calD}{\mathcal{D}}
\DeclareMathOperator{\calE}{\mathcal{E}}
\DeclareMathOperator{\calG}{\mathcal{G}}
\DeclareMathOperator{\calZ}{\mathcal{Z}}
\DeclareMathOperator{\scrH}{\mathscr{H}}

\newsiamremark{rem}{Remark}

\ifpdf
\hypersetup{
  pdftitle={Uniform Recovery Guarantees for Quantized Corrupted Sensing Using Structured or Generative Priors},
  pdfauthor={J. Chen, Z. Liu, M. Ding, M. K. Ng}
}
\fi

\title{Uniform Recovery Guarantees for Quantized Corrupted Sensing Using \\Structured or Generative Priors}
\author{Junren Chen\thanks{J.~Chen is with  Department of Mathematics, The University of Hong Kong (\email{chenjr58@connect.hku.hk}). J. Chen was supported by a Hong Kong Ph.D. Fellowship from the Hong Kong Research Grant Council (HKRGC).}
	\and Zhaoqiang Liu\thanks{Z.~Liu is with the School of Computer Science and Engineering, University of Electronic Science and Technology of China (\email{zqliu12@gmail.com}). Z.~Liu was supported by the Shenzhen Fundamental Research Program (No. JCYJ20220530164812027), Sichuan Science and Technology Program (No. 2021YFS0374), and Sichuan Science and Technology Program (No. 2022YFS0600).} \and Meng Ding\thanks{M. Ding is with the School of Mathematics, Southwest Jiaotong University (\email{dingmeng56@163.com}). M. Ding was supported by the National Natural Science Foundation of China under Grant 12201522 and the Fundamental Research Funds for the Central Universities under Grant 2682023CX069.}	\and Michael K. Ng\thanks{M. K. Ng is with Department of Mathematics,  Hong Kong Baptist University (\email{michael-ng@hkbu.edu.hk}). M. K. Ng was supported by HKRGC GRF 12300218, 12300519, 17201020, 17300021, C1013-21GF, C7004-21GF and Joint NSFC-RGC   N-HKU76921.}
		}
		
\headers{Uniform Guarantees for Quantized Corrupted Sensing}{J. Chen, Z. Liu, M. Ding, M. K. Ng}


\begin{document}
\nolinenumbers
\maketitle

\begin{abstract}
This paper studies quantized corrupted sensing where the   measurements are contaminated by unknown corruption and then quantized by a dithered uniform quantizer. We establish uniform guarantees for Lasso that ensure the accurate recovery of {\it all} signals and corruptions using a single draw of the sub-Gaussian sensing matrix and uniform dither.  For signal and corruption with structured priors (e.g., sparsity, low-rankness), our uniform error rate for constrained Lasso typically coincides with the non-uniform one up to logarithmic factors, indicating that the uniformity costs very little. By contrast,  our uniform error rate for unconstrained Lasso exhibits worse dependence on the structured parameters due to regularization parameters larger than the ones for non-uniform recovery. These results complement the non-uniform ones recently obtained in [Sun, Cui and Liu, 2022] and provide more insights for understanding actual applications where the sensing ensemble is typically fixed and the corruption may be adversarial. 
  For signal and corruption living in  the ranges of some Lipschitz continuous generative models (referred to as generative priors), we achieve uniform recovery via constrained Lasso  with a measurement number proportional to the latent dimensions of the generative models. We present experimental results to corroborate our theories.  From the technical side, our treatments to the two kinds of priors are (nearly) unified and share the common key ingredients of (global) {\it  quantized product embedding} (QPE) property, which states that the dithered uniform quantization (universally) preserves inner product. As a by-product, our QPE result refines the one in [Xu and Jacques, 2020] under sub-Gaussian random matrix, and in this specific instance we are able to sharpen the uniform error decaying rate  (for the {\it projected-back projection} estimator with signals in some    convex symmetric set) 
  presented therein from $O(m^{-1/16})$ to $O(m^{-1/8})$.
\end{abstract}
\section{Introduction}\label{sec:introduction}

In  corrupted sensing, our goal is to recover   the   signal  $\bm{x^\star}\in \mathbb{R}^n$ and   corruption  $\bm{v^\star}\in \mathbb{R}^m$ from relatively few measurements \begin{equation}
    \label{1.1}\bm{y}=\bm{\Phi x^\star}+\sqrt{m}\bm{v^\star}+\bm{\epsilon},
\end{equation}
where $\bm{\Phi}\in \mathbb{R}^{m\times n}$ is the sensing matrix, $\bm{\epsilon}\in \mathbb{R}^m$ represents the noise vector, $\bm{y}$ denotes the measurements from which we seek to recover $\bm{x^\star}$ and $\bm{v^\star}$. When the corruption $\bm{v^\star}$ does not appear, \cref{1.1} reduces to the classical compressed sensing problem \cite{foucart2013invitation,candes2006robust,donoho2006compressed}; hence, corrupted sensing is a more challenging generalization of compressed sensing. While the   corrupted sensing problem is ill-posed in general, faithful recovery   can be achieved even in a high-dimensional regime (i.e., $m\ll n$) by utilizing additional structures of $(\bm{x^\star},\bm{v^\star})$, such as sparsity and low-rankness.
In the literature,
recovery methods with theoretical guarantees have   been well developed in a long line of works, first for some specific instances of \cref{1.1} like sparse signal recovery or low-rank matrix sensing under sparse corruption \cite{li2013compressed,nguyen2013exact,nguyen2012robust,xu2012outlier,chen2013low}, and then for the more general cases where $\bm{x^\star}$ and $\bm{v^\star}$ exhibit some structures that are often captured by Gaussian width \cite{chen2018stable,mccoy2014sharp,foygel2014corrupted}. {\color{black} We note that \cref{1.1}  captures a series of applications in imaging problems.} Specifically, associated with various operators $\bm{\Phi}$, the reduced model  $\bm{y}= \bm{\Phi x^\star}+\bm{\epsilon}$ (without corruption $\bm{v^\star}$) already models most problems in computational imaging \cite{barrett2013foundations}, from classical tasks like deblurring, inpainting and super-resolution, to a wide range of tomographic imaging applications such as magnetic resonance imaging and X-ray computed tomography (e.g., \cite{elbakri2002statistical,fessler2010model}). These inverse problems of recovering $\bm{x^\star}$ from $\bm{y}$ are collectively referred to as {\it image reconstruction}, where prior knowledge on the image signal is often available (e.g., sparsity in some dictionary or basis, (approximately) low-rankness, smoothness \cite{yu2011solving,gu2017weighted}). However, the flexibility of including the corruption $\bm{v^\star}$, as per \cref{1.1}, becomes necessary in certain imaging problems, in which the recovery of the corruption often provides useful information. We provide two  specific examples: 
\begin{itemize}
 [leftmargin=5ex,topsep=0.25ex]
    \item  The       
{\it face recognition} example~\cite{wright2008robust,candes2011robust} can be modeled  by \cref{1.1}, where the columns of the dictionary $\bm{\Phi}$ are the training face images,  the prior on the present face image $\bm{y}$ is that it can be represented as a {\it sparse} linear combination of the training faces. However, it is unreasitic to assume that the prior exactly holds true (e.g., due to undesired parts in $\bm{y}$ such as glasses and shadows), and a useful remedy is to complement our prior via a sparse corruption $\bm{v^\star}$. 

\item  In some {\it image reconstruction} problems our goal is to recover the image  and the impulsive signal  $\bm{v^\star}$, with   $\bm{\Phi}$ being certain dictionary that generates the image or  the sensing matrix that produces the compressive measurements of the signal $\bm{x^\star}$. See the star-galaxy separation example described in \cite{sun2022quantized} for instance. 
\end{itemize}
 Additionally, the corrupted sensing model \cref{1.1} has found applications in  sensor network analysis \cite{haupt2008compressed}, subspace clustering \cite{elhamifar2013sparse}, latent variable modeling \cite{chandrasekaran2011rank}, among others.

Note that the data is inevitably quantized to finite precision in digital signal processing \cite{gray1998quantization,widrow2008quantization}, and working with coarsely quantized data has proven   effective in many large scale machine learning or signal processing systems \cite{zhang2017zipml,hanna2021quantization,yang2023plug}. Consequently, recent years have witnessed rapidly increasing literature on quantized compressed sensing. These works proposed various quantization schemes that are associated with accurate (post-quantization) recovery methods, including 1-bit  quantization \cite{boufounos20081,plan2012robust,plan2013one,chen2022high,thrampoulidis2020generalized,dirksen2021non,jung2021quantized,jacques2013robust}, uniform (multi-bit) quantization \cite{chen2022quantizing,xu2020quantized,thrampoulidis2020generalized,jung2021quantized}, and other adaptive quantization methods as surveyed in \cite{dirksen2019quantized}. {\color{black}The recent work  \cite{tachella2023learning}  even extended the theoretical foundation of 1-bit compressed sensing \cite{jacques2013robust} to learning a signal set.} However, nearly all of them are restricted to   classical compressed sensing without     accounting for the additional corruption $\bm{v^\star}$.  The single exception is a recent work \cite{sun2022quantized},  in which  the authors     analyzed corrupted sensing under a dithered uniform quantizer $\mathcal{Q}_\delta(\cdot+\bm{\tau})$: 
\begin{equation} \label{eq:problem}   \bm{\dot{y}}:=\mathcal{Q}_\delta(\bm{y}+\bm{\tau})=\mathcal{Q}_\delta(\bm{\Phi x^\star}+\sqrt{m}\bm{v^\star}+\bm{\epsilon}+\bm{\tau}),
\end{equation}  
where $\bm{y}=\bm{\Phi x^\star}+\sqrt{m}\bm{v^\star}+\bm{\epsilon}$ is the unquantized measurements in corrupted sensing as per \cref{1.1}, $\bm{\tau}\in\mathbb{R}^m$ is the uniform dither, $\mathcal{Q}_\delta(\cdot)$ is the uniform quantizer with resolution $\delta$, see \cref{sec2.4}. 
Under  sub-Gaussian $\bm{\Phi}$, they showed that accurate recovery of $(\bm{x^\star},\bm{v^\star})$ can be achieved by Lasso, thus confirming the compatibility of the dithered uniform quantization and the recovery of the additional structured corruption $\bm{v^\star}$. Nonetheless, their recovery guarantees are {\it non-uniform} 
and only ensure the recovery of a  single  pair of 
 $(\bm{x^\star},\bm{v^\star})$ fixed before drawing $(\bm{\Phi},\bm{\epsilon},\bm{\tau})$,  with the implication that a new realization of $(\bm{\Phi},\bm{\epsilon},\bm{\tau})$ is needed for the sensing and recovery of a different pair of $(\bm{x^\star},\bm{v^\star})$. Thus, the possibility of uniform recovery in quantized corrupted sensing \cref{eq:problem} remains unaddressed: 
 \begin{align*}
     &\textit{Is it possible to recover all signals and corruptions using a fixed sensing ensemble}\\
     & (\bm{\Phi}, \bm{\epsilon},\bm{\tau})?~\textit{If yes, what is the cost of uniformity compared to non-uniform recovery?}
 \end{align*}

   \textbf{Importance of Uniformity:} Compressed sensing theories are developed to promote the understanding of its many real-world applications, {\color{black}and the above uniform recovery question  is important for theory of \cref{eq:problem} for the following reasons:}
\begin{itemize}
[leftmargin=5ex,topsep=0.25ex] 
    \item Uniform recovery is a highly sought-after notion   in compressed/corrupted sensing theory since the sensing matrix $\bm{\Phi}$ is typically fixed in applications (e.g., think of the above face recognition example), and one expects that the single fixed sensing ensemble  works for all possible signals (and corruptions) that may arise. In fact,  uniformity is a defining property for achieving {\it compression} in some applications, and non-uniform recovery with new sensing ensemble for new signal
    could be unrealistic (if not impossible) since the memory of $(\bm{\Phi},\bm{\tau})$ is already heavier than the signal itself. 


    \item In view of the corruption $\bm{v^\star}$ in \cref{eq:problem}, a uniform guarantee also offers {\it stronger robustness} than a non-uniform guarantee. Specifically, a uniform guarantee tolerates  $\bm{v^\star}$ generated   in an {\it adversarial} manner according to the knowledge of $(\bm{\Phi},\bm{\epsilon},\bm{\tau},\bm{x^\star})$, and the error bound remains valid as long as $\bm{v^\star}$ satisfies certain structured assumption like sparsity. In contrast, a non-uniform guarantee only works for a fixed $\bm{v^\star}$   {\it oblivious} to $(\bm{\Phi},\bm{\epsilon},\bm{\tau})$.
\end{itemize}

\textbf{Generative Prior:} Beyond the classical   structured priors, it was proposed in
  \cite{bora2017compressed} to assume that the desired signal in compressed sensing lies in the range of a pre-trained generative model, known as a generative prior.  This new perspective has led to successful numerical results such as a significant reduction of the required number of measurements for accurate recovery, as well as attracted much research attention with various extensions such as nonlinear models \cite{liu2020generalized,liu2020sample,qiu2020robust},  MRI applications \cite{jalal2021robust,quan2018compressed}, and information-theoretic bounds \cite{liu2020information,kamath2020power}, among others. {\color{black} Note that generative prior has now been widely applied to imaging-related inverse problems; we refer interested readers to \cite{ongie2020deep}.} For the specific quantized corrupted sensing problem \cref{eq:problem}, the results in \cite{sun2022quantized} are restricted to classical structured priors     promoted by certain norms, and it is unclear whether their theory extends to generative priors. Without considering quantization, 
the linear case of corrupted sensing with generative priors has been studied in  \cite{berk2020deep,berk2021deep} under the name of ``generative demixing''
  (here, \cite{berk2020deep} is the extended   version of the published conference paper \cite{berk2021deep}).

\subsection{Main Results}
In this paper, we 
establish   {\it uniform} recovery guarantees for \cref{eq:problem} with structured priors on $(\bm{x^\star},\bm{v^\star})$ using (un)constrained Lasso. Our uniform guarantees state that {\it a single realization} of $(\bm{\Phi},\bm{\epsilon},\bm{\tau})$  can be used for simultaneously recovering all $(\bm{x^\star},\bm{v^\star})$ in the structured sets,\footnote{This is a generalization of traditional signal structures such as sparsity and low-rank, see   \cref{defi1} and \cref{assump3}.} thus answering the above question in affirmative. We also provide careful comparison with \cite{sun2022quantized} to unveil the cost of uniformity.   
 Noticing  the recent trend of using generative prior for compressed sensing, we also present a  uniform recovery guarantee for quantized corrupted sensing where $\bm{x^\star}$ and $\bm{v^\star}$ are equipped with generative priors (see   \cref{assump5} for details). We highlight and summarize our major results as follows:  
  \begin{itemize}
       [leftmargin=5ex,topsep=0.25ex]

\item \textbf{Structured Priors via Constrained Lasso:}
We establish uniform recovery guarantees for quantized corrupted sensing \cref{eq:problem} with structured priors on $(\bm{x^\star},\bm{v^\star})$ via Lasso. In constrained Lasso, our uniform error rate  in \cref{thm1}  exhibits a decaying rate of $O(m^{-1/2})$ and typically coincides with the non-uniform one in \cite{sun2022quantized} up to logarithmic factors (see \cref{rem1} and  \cref{coro1}, \cref{coro2}), indicating that the uniformity costs very little. 
To our best knowledge, even  going back to   compressed sensing (without corruption) associated with the dithered uniform quantizer,
our \cref{thm1} provides the sharpest uniform error rate  for a computationally feasible decoder (see \cref{rem3}).

\item \textbf{Structured Priors via Unconstrained Lasso:} In unconstrained Lasso, our result in \cref{thm2} decays in $m$ with a rate of $O(m^{-1/2})$ but exhibits a  worse dependence on the structured parameter. To our best knowledge, this is the first uniform guarantee for quantized compressed sensing via unconstrained Lasso (see   \cref{rem4} and   \cref{coro3}, \cref{coro4}). The results in this and the previous dot point strengthen the non-uniform ones in \cite{sun2022quantized} and shed more light on the understanding of actual applications with fixed sensing ensemble and possibly adversarial corruption.

\item \textbf{Generative Priors:} We present the first result in \cref{thm3} for quantized corrupted sensing using generative priors,  which assume that the signal and corruption lie in the ranges of some Lipschitz continuous generative models with latent dimensions $k$ and $k'$. Our result guarantees that roughly $\tilde{O}\big(\frac{(1+\delta^2)(k+k')}{\mu^2}\big)$ measurements (up to logarithmic factors) suffice for achieving a uniform $\ell_2$-norm recovery error of $\mu$ via constrained Lasso. Note that this also implies a decaying rate of $O(m^{-1/2})$; see \cref{rem:gene_rate}. 
  \end{itemize}



\subsection{Technical Contributions and Technically Related Works} Considering the theoretical nature of our work, we provide an introduction from a more technical aspect. 

\textbf{A Sketch of our Techniques:} While most existing works focused on either structured prior or generative prior, we provide   near unified treatments for the two kinds of priors, with the common key ingredient  for achieving uniformity  being the global {\it quantized product embedding} (QPE) property, which states that the dithered quantization uniformly preserves inner product, i.e., $|\langle \mathcal{Q}_\delta(\bm{a}+\bm{\tau})-\bm{a}, \bm{b}\rangle|$ is uniformly small over all $(\bm{a},\bm{b})$ in certain constraint sets. The most general version of QPE is presented in \cref{thm:globalqpe} in \cref{app:generalqpe}, and a version sufficient for proving our main theorems is given in \cref{coro:qpe_structured} in \cref{app:special}. The proofs of these QPE results, despite following the conceptually simple covering strategy, appear to be the most technical and tedious part of this work. To avoid being overly technical in the main body, we collect the statements and proofs of QPE in \cref{qpe}. Equipped with QPE and a set of useful concentration inequalities (see \cref{appendixA1}), the major differences in proving our three main theorems (\cref{thm1}, \ref{thm2}, \ref{thm3}) lie in estimating the Gaussian width and Kolmogorov entropy   of various constrain sets, which will be  settled in \cref{appendixA.2}.

We provide   comparisons with the     works   most relevant to this paper in techniques (Readers less interested in proof techniques could safely skip the discussions below):

\begin{itemize}
[leftmargin=5ex,topsep=0.25ex]
    \item \textbf{Comparing with \cite{xu2020quantized} on QPE:} Under the name of {\it limited projection distortion} (LPD) property, Xu and Jacques \cite{xu2020quantized} utilized   QPE for analyzing the projected-back projection (PBP) estimator. In this work, we show that global QPE also serves as the key ingredient in analyzing uniform recovery via Lasso. Compared to the global QPE in \cite{xu2020quantized}, our  \cref{thm:globalqpe} is a generalization and provides (instance-wise) improvement under sub-Gaussian sensing matrix; see \cref{coro:recover} and \cref{coro:improved_qpe} in \cref{app:implication}. As an interesting by-product of \cref{coro:improved_qpe}, under sub-Gaussian sensing matrix, we are able to improve the uniform error rate of PBP over a symmetric convex signal set from $O(m^{-1/16})$ to $O(m^{-1/8})$; see \cref{prop:improve} in \cref{app:improve}. 

    \item {\bf Comparing with \cite{genzel2022unified,chen2023unified} on the Approach to Uniformity:} The recent work \cite{genzel2022unified} due to Genzel and Stollenwerk developed a unified approach to proving uniform recovery guarantees for   constrained Lasso in non-linear compressed sensing $y_i = f_i(\bm{\Phi}_i^\top\bm{x^\star})$, where the possibly random $f_i(\cdot)$ captures some non-linearity that can be unknown and/or discontinuous.  However, under discontinuous $f_i(\cdot)$, their general strategy leads to a uniform decaying rate $O(m^{-1/4})$ inferior to our $O(m^{-1/2})$. More recently, Chen et al. \cite{chen2023unified} extended the scope of \cite{genzel2022unified} to non-linear compressed sensing with generative prior.  They observed that using a different concentration inequality yields tighter bound for the generative case, thus they managed to prove a uniform decaying rate of $O(m^{-1/2})$ for discontinuous $f_i(\cdot)$ (e.g.,   various quantization models). However, as we will discuss in \cref{rem:insensible} and \cref{rem:gene_related}, some hurdle arises if we follow the general strategy in \cite{genzel2022unified,chen2023unified} to prove our main theorems, thus our techniques are not implied by these two works. Indeed, our QPE-based analysis suggests a  possible strategy to improve the rate $O(m^{-1/4})$ in \cite{genzel2022unified}   to $O(m^{-1/2})$ under discontinuous $f_i(\cdot)$. 
     
\end{itemize}




\subsection{Paper Outline} In   \cref{sec2:pre}, we provide preliminaries and set up notations. In   \cref{sec3:main}, we present our main theorems.  Experimental results are reported in   \cref{sec4:expresult}.  We provide some remarks to conclude the paper in   \cref{sec5:conclu}. We provide technical lemmas and useful propositions in \cref{appenA} to support our analysis. We develop a   general global QPE result in \cref{app:generalqpe} and present a version sufficient for proving our main theorems in \cref{app:special}. The proofs of results in the main body, if missing, are deferred to  \cref{appendixc}. In \cref{app:by}, we present more implications of our QPE result (\cref{app:implication}) and then obtain a by-product for PBP estimator (\cref{app:improve}). In \cref{app:table}, we provide a list of recurring notation (\cref{tbl:notation}) to improve the readability of this paper.

\section{Preliminaries}\label{sec2:pre}
We first collect some generic notations. We represent matrices and vectors by boldface letters, scalars by regular letters. For positive integer $m$ we write $[m]:=\{1,...,m\}$. {\color{black}We use $|\mathcal{S}|$ to denote the cardinality of any finite set $\mathcal{S}$.}
For a vector $\bm{x}=[x_i]\in \mathbb{R}^d$, we work with the $\ell_p$-norm $\|\bm{x}\|_{p}=(\sum_i |x_i|^p)^{1/p}$ ($p\geq 1$), max norm $\|\bm{x}\|_\infty=\max_i|x_i|$, and zero norm $\|\bm{x}\|_0$ that counts the number of non-zero entries. 
We write the standard Euclidean sphere in $n$-dimensional space as $\mathbb{S}^{n-1}=\{\bm{x}\in \mathbb{R}^n:\|\bm{x}\|_2=1\}$, the set of $s$-sparse vectors as $\Sigma^n_s=\{\bm{x}\in\mathbb{R}^n:\|\bm{x}\|_0\leq s\}$. The inner product of $\bm{x},\bm{y}\in\mathbb{R}^n$ is $\langle\bm{x},\bm{y}\rangle=\bm{x}^\top\bm{y}$. Given a matrix $\bm{A}\in \mathbb{R}^{p\times q}$, we denote its operator norm (that equals the maximal singular value), Frobenius norm, nuclear norm (i.e., sum of singular values) by $\|\bm{A}\|_{\mathrm{op}}$, $\|\bm{A}\|_{\mathrm{F}}$, $\|\bm{A}\|_{\mathrm{nu}}$, respectively. The inner product between matrices $\bm{A},\bm{B}$ is   $\langle\bm{A},\bm{B}\rangle=\mathrm{Tr}(\bm{A}^\top\bm{B})$. The set of matrices with rank not exceeding $r$ is denoted by $M^{p,q}_r=\{\bm{A}\in \mathbb{R}^{p\times q}:\rank(\bm{A})\leq r\}$.

Given a norm $f(\cdot)$ in $\mathbb{R}^n$ (resp. $\mathbb{R}^{p\times q}$), we denote the corresponding ball with radius $r$ by $\mathbb{B}_f^n(r)=\{\bm{x}\in\mathbb{R}^n\,:\, f(\bm{x})\leq r\}$ (resp. $\mathbb{B}_f^{p,q}(r)=\{\bm{A}\in \mathbb{R}^{p\times q}\,:\, f(\bm{A})\leq r\}$). We let $\mathbb{B}_f^n:=\mathbb{B}_f^n(1)$, $\mathbb{B}_f^{p,q}:=\mathbb{B}_f^{p,q}(1)$ be the unit ball. 
For instance, $\mathbb{B}_2^n$, $\mathbb{B}_1^n$ are respectively the $\ell_2$-ball, $\ell_1$-ball in $\mathbb{R}^n$, $\mathbb{B}_\mathrm{F}^{p,q}$, $\mathbb{B}_{\mathrm{nu}}^{p,q}$ are respectively the Frobenius norm ball, nuclear norm ball in $\mathbb{R}^{p\times q}$. The dual norm of $f(\cdot)$ is defined as $f^*(\bm{x}):=\sup_{\bm{y}\in \mathbb{B}_f}\langle\bm{x},\bm{y}\rangle$, and we note the H\"older's inequality $\langle\bm{x},\bm{y}\rangle\leq f(\bm{x})\cdot f^*(\bm{y})$. The descent cone of $f$ at a point $\bm{x}$ and its normalized counterpart are defined as \begin{equation}\label{descentcone}
    \mathcal{D}_f(\bm{x}):=\{\bm{u}\,:\,\exists~t>0,~\text{s.t. }f(\bm{x}+t\bm{u})\leq f(\bm{x})\},~\mathcal{D}_f^*(\bm{x}):=\mathcal{D}_f(\bm{x^\star})\cap\mathbb{S}^{n-1}. 
\end{equation} 

Throughout this paper, $\mathbbm{P}(\cdot)$, $\mathbbm{E}(\cdot)$, $\mathbbm{1}(\cdot)$ stand for probability, expectation, indicator function, respectively. We make no attempt to refine multiplicative constants, and we use $C,C_i,c,c_i$ to denote absolute constants whose value may vary from line to line. For some quantities $I_1,I_2$, We write $I_1=O(I_2)$ or $I_1\lesssim I_2$ if $I_1\leq CI_2$  holds for some absolute constant $C$; Conversely, we write $I_1=\Omega(I_2)$ or $I_1\gtrsim I_2$ if $I_1\geq cI_2$ for some $c$; we refer to $(C,c)$ behind $\lesssim$ or $\gtrsim$ as the implied constant. {\color{black} For some quantity $I$, we may write ``certain event holds with probability exceeding $1-\exp(-\Omega(I))$'' to state that  this event holds with probability at least $1-\exp(cI)$ for some absolute constant $c>0$.} We will write $I_1\asymp I_2$ to  state  that $I_1=O(I_2)$ and $I_1=\Omega(I_2)$ simultaneously hold.  Given $W\subset \mathbb{R}^d$ we use $\mathscr{U}(W)$ to denote the uniform distribution over $W$. Also, $\mathcal{N}(\bm{\mu},\bm{\Sigma})$ represents Gaussian variable/vector with mean $\bm{\mu}$ and covariance $\bm{\Sigma}$.

\subsection{Sub-Gaussian Random Variable}\label{sec2.1}
The sub-Gaussian norm of a random variable $X$ is defined as $\|X\|_{\psi_2}=\inf\{t>0: \mathbbm{E}(X^2/t^2)\leq 2\}$, and we have   $\|X\|_{\psi_2}\asymp \sup_{p\geq 1}p^{-1/2}(\mathbbm{E}|X|^p)^{1/p}$ and hence $\|X\|_{\psi_2}=O(K)$ if $|X|\leq K$. $X$ is said to be sub-Gaussian if $\|X\|_{\psi_2}<\infty$, and sub-Gaussian $X$ has probability tail resembling that of a Gaussian variable:  
\begin{equation}\label{eq:sg_tail}
    \mathbbm{P}(|X|\geq t)\leq 2\exp\Big(-\frac{ct^2}{\|X\|_{\psi_2}^2}\Big) 
\end{equation}
holds for any $t>0$. 
Moreover, for independent {\it zero-mean} random variables $X_i$'s we have (see \cite[Prop. 2.6.1]{vershynin2018high}) \begin{align} \label{eq:sum_psi2}
    \Big\|\sum_i X_i\Big\|_{\psi_2}^2\leq C\sum_i\|X_i\|_{\psi_2}^2.
\end{align} 
A random vector $\bm{X}\in \mathbb{R}^n$ is sub-Gaussian if it has finite sub-Gaussian norm, which is  defined   as $\|\bm{X}\|_{\psi_2}=\sup_{\bm{v}\in \mathbb{S}^{n-1}}\|\bm{v}^\top\bm{X}\|_{\psi_2}$. Assume 
$\bm{X}$ has independent {\it zero-mean} entries $X_i$'s satisfying $\|X_i\|_{\psi_2}\leq K$, then $\|\bm{X}\|_{\psi_2}=O(K)$ \cite[Lem. 3.4.2]{vershynin2018high}. Readers may refer to \cite[Sections 2--3]{vershynin2018high} for more details.

\subsection{Covering Number and Kolmogorov Entropy}\label{sec2.2}
Given $\mathcal{K}\subset\mathbb{R}^n$, a subset $\mathcal{G}\subset \mathcal{K}$ is said to be an $\varepsilon$-net (with respect to Euclidean distance) of $\mathcal{K}$, if for any $\bm{x}\in\mathcal{K}$ there exists $\bm{x}'\in\mathcal{G}$ satisfying $\|\bm{x}-\bm{x}'\|_2\leq \varepsilon$, i.e., $\mathcal{K}\subset \cup_{\bm{x}\in\mathcal{G}}\big(\bm{x}+\mathbb{B}_2^n(\varepsilon)\big)$. The covering number of $\mathcal{K}$ under radius $\varepsilon$, denoted $\mathscr{N}(\mathcal{K},\varepsilon)$, is defined to be the smallest possible cardinality of an $\varepsilon$-net of $\mathcal{K}$.   We note the following monotonicity of covering number \cite[Exercise 4.2.10]{vershynin2018high}:
\begin{equation}\label{monoto}
    \mathscr{N}(\mathcal{K},\varepsilon)\leq \mathscr{N}(\mathcal{K}',\varepsilon/2),~\text{if}~\mathcal{K}\subset \mathcal{K}'.
\end{equation}
We will more frequently work with  the equivalent  notion called Kolmogorov entropy:  
\begin{align}
    \mathscr{H}(\mathcal{K},\varepsilon)=\log\mathscr{N}(\mathcal{K},\varepsilon). 
\end{align}

\subsection{Gaussian Width and Gaussian Complexity}\label{sec2.3}
Suppose that $\bm{g}$ has i.i.d. $\mathcal{N}(0,1)$ entries, then the Gaussian width of $\mathcal{K}\subset \mathbb{R}^n$  is defined as $\omega(\mathcal{K})=\mathbbm{E}\sup_{\bm{x}\in\mathcal{K}}\langle\bm{g},\bm{x}\rangle$,  which is a geometric quantity that precisely and stably captures the intrinsic dimension of $\mathcal{K}$. Gaussian width is closely related to Kolmogorov entropy. Specifically, we can bound the Kolmogorov entropy by Gaussian width via Sudakov's inequality \cite[Thm. 8.1.13]{vershynin2018high} \begin{equation}
    \label{sudakov}
    \mathscr{H}(\mathcal{K},\varepsilon) \leq \frac{\omega^2(\mathcal{K})}{\varepsilon^2}.
\end{equation}
We can also estimate $\omega(\mathcal{K})$ by $\mathscr{H}(\mathcal{K},\cdot)$ via Dudley's inequality \cite[Thm. 8.1.10]{vershynin2018high} \begin{equation}\label{dudley}
    \omega(\mathcal{K})\leq C\int_0^\infty \sqrt{\mathscr{H}(\mathcal{K},\varepsilon)}~\mathrm{d}\varepsilon.
\end{equation}
A slightly different notion is the Gaussian complexity defined as $\gamma(\mathcal{K})=\mathbbm{E}\sup_{\bm{x}\in\mathcal{K}}|\langle\bm{g},\bm{x}\rangle|$. In many cases $\omega(\mathcal{K})$ and $\gamma(\mathcal{K})$ are of similar scaling. For instance, $\omega(\mathcal{K})\asymp \gamma(\mathcal{K})$ holds if $0\in \mathcal{K}$.
More generally, we note the following relation from \cite[Exercise 7.6.9]{vershynin2018high}
\begin{equation}\label{widthcomple}
    \frac{1}{3}\Big(\omega(\mathcal{K})+\|\bm{x}\|_2\Big) \leq \gamma(\mathcal{K})\leq 2\Big(\omega(\mathcal{K})+\|\bm{x}\|_2\Big)
\end{equation}
that holds for any $\bm{x}\in\mathcal{K}$. 
 Given $\mathcal{K}\subset \mathbb{R}^n$ and some $\rho>0$, we will work with the localized version of $\mathcal{K}$ defined as \begin{align}
     \label{eq:localize}
     \mathcal{K}^{(\rho)}_{\loc} = (\mathcal{K}-\mathcal{K})\cap  \mathbb{B}_2^n(\rho). 
 \end{align}
 We denote the radius of $\mathcal{K}$ by $\rad(\mathcal{K})=\sup_{\bm{a}\in\mathcal{K}}\|\bm{a}\|_2$. 
\subsection{Dithered Uniform Quantization}\label{sec2.4}
For some resolution $\delta>0$, the uniform quantizer $\mathcal{Q}_\delta(\cdot)$ quantizes a scalar $a$ to \begin{equation}
\mathcal{Q}_\delta(a)=\delta\left(\Big\lfloor\frac{a}{\delta}\Big\rfloor+\frac{1}{2}\right).\label{quantizer}
\end{equation} Note that $\mathcal{Q}_\delta(\cdot)$ enjoys the bounded distortion property, i.e., 
\begin{align}
    |\mathcal{Q}_\delta(a)-a|\leq \frac{\delta}{2}
\end{align} 
holds for any $a$. In this paper, we use a dithered uniform quantizer that involves a uniform dither $\tau\sim \mathscr{U} [-\frac{\delta}{2},\frac{\delta}{2}]$. 
Specifically, we quantize $a$ to $\mathcal{Q}_\delta(a+\tau)$, and we refer to $\xi:=\mathcal{Q}_\delta(a+\tau)-a$ as quantization noise, which is always bounded because 
\begin{align}\label{eq:boundedness}
    |\xi|\leq |\mathcal{Q}_\delta(a+\tau)-(a+\tau)|+|\tau|\leq \frac{\delta}{2}+\frac{\delta}{2}=\delta.
\end{align} 
With dithering, the nice property is that the quantization noise is zero-mean: 
\begin{align}\label{eq:zero_mean}
   \mathbbm{E}(\xi) =  \mathbbm{E}[\mathcal{Q}_\delta(a+\tau)]-a= 0, 
\end{align}  see \cite{gray1993dithered,thrampoulidis2020generalized,chen2022quantizing,gray1998quantization} for instance.\footnote{It is revealed by \cref{eq:zero_mean} that the benefit of dithering is to whiten the quantization noise. As a brief introduction, we mention that the use of dithering (prior to quantization) dates back to early engineering works \cite{jayant1972application,limb1969design} and theoretical analysis \cite{schuchman1964dither}, while in the past few years it has regained a surge of research interest in various estimation/recovery problems, including compressed sensing \cite{jung2021quantized,dirksen2021non,xu2020quantized,thrampoulidis2020generalized,sun2022quantized}, matrix completion \cite{davenport20141,chen2022high,chen2022quantizing,cai2013max}, and more recently covariance estimation \cite{dirksen2022covariance,dirksen2023tuning,chen2023parameter} and reduced-rank regression \cite{chen2023quantized}.} To quantize a   vector $\bm{a}\in \mathbb{R}^m$, we apply the dithered uniform quantizer to each entry in a {\it memoryless} manner. That is, we draw a random uniform dither $\bm{\tau}\sim \mathscr{U} ([-\frac{\delta}{2},\frac{\delta}{2}]^m)$ and then quantize $\bm{a}$ to $\mathcal{Q}_\delta(\bm{a}+\bm{\tau})$. Let $\bm{\xi}=\mathcal{Q}_\delta(\bm{a}+\bm{\tau})-\bm{a}$ be the quantization noise. It follows that, for a fixed $\bm{a}\in \mathbb{R}^m$, entries of $\bm{\xi}$ are independent (since entries of $\bm{\tau}$ are independent), zero-mean (due to \cref{eq:zero_mean}), and bounded by $\delta$ (see \cref{eq:boundedness}), and hence the sub-Gaussian norm of each entry also scales as $O(\delta)$. Taken collectively, 
we arrive at (see   \cref{sec2.1})
 \begin{align}
     \|\bm{\xi}\|_{\psi_2}=O(\delta). \label{eq:sgxi}
 \end{align}

\section{Main Results}\label{sec3:main}
Recall that   the corrupted sensing problem can be formulated as 
\begin{equation}\label{3.1}
    \bm{y}=\bm{\Phi x^\star}+\sqrt{m}\bm{v^\star}+\bm{\epsilon},
\end{equation}
where $\bm{v^\star}$ is the corruption mixed with the clean measurements $\bm{\Phi x^\star}$ of the signal $\bm{x^\star}$, and $\bm{\epsilon}$ represents measurement noise. In this work, we study 
a more challenging nonlinear model that involves quantization of $\bm{y}$, adopting a dithered uniform quantizer  
following prior works \cite{thrampoulidis2020generalized,xu2020quantized,sun2022quantized,chen2022quantizing,jung2021quantized}. For some quantization level $\delta>0$,\footnote{Smaller $\delta$ corresponds to higher resolution. Specifically, letting $\delta\to 0$ returns the unquantized (full-data) setting.}  
%
 we  acquire the quantized measurements as
 \begin{equation}
\label{3.2}   
\bm{\dot{y}}:=\mathcal{Q}_\delta(\bm{y}+\bm{\tau})=\mathcal{Q}_\delta(\bm{\Phi x^\star}+\sqrt{m}\bm{v^\star}+\bm{\epsilon}+\bm{\tau}),
\end{equation}
where $\bm{\tau}\sim \mathscr{U}([-\frac{\delta}{2},\frac{\delta}{2}]^m)$ is the uniform dither independent of $\bm{\Phi}$ and $\bm{\epsilon}$. 
 We denote the quantization noise (that depends on $(\bm{x^\star},\bm{v^\star})$) by 
\begin{align}\label{eq:xi_xv}
    \bm{\xi}_{\bm{x^\star},\bm{v^\star}}: &= \bm{\dot{y}} - \bm{y} =   \mathcal{Q}_\delta(\bm{\Phi x^\star}+\sqrt{m}\bm{v^\star}+\bm{\epsilon}+\bm{\tau})- (\bm{\Phi x^\star}+\sqrt{m}\bm{v^\star}+\bm{\epsilon}), 
\end{align}
with the $k$-th entry denoted by $(\bm{\xi}_{\bm{x^\star},\bm{v^\star}})_k$. Then we can also write \cref{3.2} as
\begin{align}\label{eq:quan_noise}
    \bm{\dot{y}} = \bm{\Phi x^\star}+\sqrt{m}\bm{v^\star}+\bm{\epsilon} + \bm{\xi}_{\bm{x^\star},\bm{v^\star}}. 
\end{align}
Moreover, let $\bm{\Phi}_i^\top$ be the $i$-th row of $\bm{\Phi}$, $v_i^\star$ and $\epsilon_i$ be respectively the $i$-th entry of $\bm{v^\star}$ and $\bm{\epsilon}$, then the $i$-th quantized measurement $\dot{y}_i$ is given by  
\begin{align}
        \dot{y}_i & =\mathcal{Q}_\delta(\bm{\Phi}_i^\top\bm{x^\star}+\sqrt{m}v^\star_i + \epsilon_i+\tau_i)=\bm{\Phi}_i^\top\bm{x^\star}+\sqrt{m}v^\star_i + \epsilon_i+(\bm{\xi}_{\bm{x^\star},\bm{v^\star}})_i.
\end{align}

Note that the recovery behaviour under a sub-Gaussian sensing matrix serves as an important benchmark in compressed sensing. In this work, we make the following assumption.  
\begin{assumption}[Random Sensing Ensemble]\label{assump1}
$\bm{\Phi}_1,...,\bm{\Phi}_m$ are independent, zero-mean, isotropic (i.e., $\mathbbm{E}(\bm{\Phi}_i\bm{\Phi}_i^\top)=\bm{I}_n$),   sub-Gaussian {\color{black}sensing vectors} satisfying $\|\bm{\Phi}_i\|_{\psi_2} \leq K$ for all $i$ and for some absolute constant $K$; $\epsilon_1,...,\epsilon_m$ are independent of each other and of $\bm{\Phi}$, and sub-Gaussian satisfying $\|\epsilon_i\|_{\psi_2}\leq E$; $\tau_1,...,\tau_m$ are independent of each other and of $(\bm{\Phi},\bm{\epsilon})$, and $\tau_i\sim\mathscr{U}[-\frac{\delta}{2},\frac{\delta}{2}]$.  
\end{assumption}

To handle the high-dimensional regime where $m\ll n$, it is standard and necessary to utilize the low-complexity structures of $(\bm{x^\star},\bm{v^\star})$. Following \cite{chen2018stable} we assume that the structure can be promoted by some norm, for instance, $\ell_1$-norm for sparsity, $\ell_1/\ell_2$-norm for group sparsity, and nuclear norm for low-rankness. 
\begin{assumption}[Structures Promoted by Norms]\label{assump2} 
    For some low-complexity sets $\mathcal{K}_{\bm{x}}\subset \mathbb{R}^n$ and $\mathcal{K}_{\bm{v}}\subset \mathbb{R}^m$, we assume  that $\bm{x^\star}\in \mathcal{K}_{\bm{x}}$ and $\bm{v^\star}\in \mathcal{K}_{\bm{v}}$. 
    The structures of $\bm{x^\star}$ and $\bm{v^\star}$ can be promoted by the norms $f(\bm{x}) $ and $g(\bm{v})$, respectively.
\end{assumption}

We will investigate the recovery performance of two types of Lasso: the constrained Lasso\footnote{Though it is possible the pursue a relaxation (e.g., \cite{plan2016generalized,genzel2022unified}), 
we follow prior works such as \cite{sun2022quantized,chen2018stable,chen2023quantized,thrampoulidis2020generalized} to consider the constrained Lasso with {\it the best possible constraint} to allow for an descent-cone-based analysis.}
\begin{equation}
   \label{3.4} (\bm{\hat{x}},\bm{\hat{v}})=\mathrm{arg}\min_{\substack{\bm{x}\in \mathbb{R}^n\\\bm{v}\in \mathbb{R}^m}}~ \|\bm{\dot{y}}-\bm{\Phi x}-\sqrt{m}\bm{v}\|_2,~\mathrm{s.t.}~f(\bm{x})\leq f(\bm{x^\star}),~g(\bm{v})\leq g(\bm{v^\star}),
\end{equation}
and the unconstrained Lasso
\begin{equation}
    \label{3.5}(\bm{\hat{x}},\bm{\hat{v}})=\mathrm{arg}\min_{\substack{\bm{x}\in \mathbb{R}^n\\\bm{v}\in \mathbb{R}^m}}~\|\bm{\dot{y}}-\bm{\Phi x}-\sqrt{m}\bm{v}\|^2_2+\lambda_1\cdot f(\bm{x})+\lambda_2\cdot g(\bm{v}).
\end{equation}
Note that the loss function $\mathcal{L}(\bm{x},\bm{v}):=\|\bm{\dot{y}}-\bm{\Phi x}-\sqrt{m}\bm{v}\|_2^2$   is simply the regular $\ell_2$-loss with full observations $\bm{y}$ substituted by the quantized ones $\bm{\dot{y}}$.

 The non-uniform guarantees    in \cite{sun2022quantized} state that for any fixed $(\bm{x^\star},\bm{v^\star})$, \cref{3.4} and  \cref{3.5}  deliver comparably accurate recovery using a random realization of $(\bm{\Phi},\bm{\dot{y}})$ according to   \cref{assump1}. By contrast, the primary  goal of this paper is to establish uniform recovery guarantees that ensure the  accurate recovery of {\it all} possible pairs of $(\bm{x^\star},\bm{v^\star})$ using $(\bm{\Phi},\bm{\dot{y}})$, where the quantized measurements $\bm{\dot{y}}$ are produced by a single draw of $(\bm{\Phi},\bm{\epsilon},\bm{\tau})$. Compared to non-uniform guarantees, our uniform ones provide more insights in understanding the actual applications with fixed sensing ensemble and possibly adversarial corruption, as we explained in 
 \cref{sec:introduction}. 

\subsection{Structured Priors with Constrained Lasso}

Our first set of results are for \cref{3.2} where  $\bm{x^\star}$ and $\bm{v^\star}$ lie in some structured   sets, which we follow \cite[Section 3.1]{xu2020quantized} and define as follows. 

\begin{definition}[Structured Set]\label{defi1}
If $\mathcal{K}\subset \mathbb{R}^p$ is a cone satisfying 
\begin{equation}\label{structuredset}
    \mathscr{H}(\mathcal{K}\cap \mathbb{B}_2^{p},\eta) \leq C\cdot\omega^2(\mathcal{K}\cap \mathbb{B}_2^{p})\log\Big(1+\frac{1}{\eta}\Big)
\end{equation}
for all $\eta>0$ and for some absolute constant $C$, then we say $\mathcal{K}$ is a structured set. 
\end{definition}

Structured   set is a generalization of various prototypical structures utilized in compressed sensing, e.g., (group) sparse vectors, low-rank matrices ({\em cf.}   \cref{pro6}), subspaces, union of subspaces, among others.  In   \cref{defi1},  the distinguishing feature of a structured set is that its Kolmogorov entropy  is only logarithmically dependent on its covering radius, in contrast to Sudakov's inequality     \cref{sudakov} that holds for arbitrary subset. We note that notions analogous to structured sets have been widely adopted in the literature \cite{dirksen2016dimensionality,chen2022uniform,oymak2015near}.

To derive a uniform error bound in quantized compressed sensing, it is standard to  concentrate on $\bm{x^\star},\bm{v^\star}$ with bounded $\ell_2$-norm (e.g., \cite[Thm. 3]{jung2021quantized}, \cite[Thm. 4.1]{xu2020quantized}). Without loss of generality, we   assume that $\bm{x^\star}\in \mathbb{B}_2^n$ and $\bm{v^\star}\in \mathbb{B}_2^m$ and   state our structured set assumption as follows. 

\begin{assumption}[Structured Priors]
    \label{assump3} 
    Given a pair of structured  sets $\mathcal{K}_{\bm{x}}^0$ and $\mathcal{K}_{\bm{v}}^0$ as per   \cref{defi1},  
      we let  $\mathcal{K}_{\bm{x}}=\mathcal{K}^0_{\bm{x}}\cap \mathbb{B}_2^n$ and $\mathcal{K}_{\bm{v}}=\mathcal{K}_{\bm{v}}^0\cap \mathbb{B}_2^m$. According to   \cref{defi1}, for any $\eta>0$ we have
      \begin{equation}\label{3.77}
          \mathscr{H}(\mathcal{K}_{\bm{x}},\eta)\lesssim \omega^2(\mathcal{K}_{\bm{x}})\log\Big(1+\frac{1}{\eta}\Big),~ \mathscr{H}(\mathcal{K}_{\bm{v}},\eta)\lesssim \omega^2(\mathcal{K}_{\bm{v}})\log\Big(1+\frac{1}{\eta}\Big).
      \end{equation}
\end{assumption}

We first consider constrained Lasso \cref{3.4} and present an upper bound that holds uniformly for all $(\bm{x^\star},\bm{v^\star})\in \mathcal{K}_{\bm{x}}\times \mathcal{K}_{\bm{v}}$. 

\begin{theorem}[Uniform Recovery via Constrained Lasso]\label{thm1}
Under   \cref{assump1}--\cref{assump3}, 
 we define the constraint sets 
\begin{gather}
    \label{3.91}\mathcal{D}_{\bm{x}}:=\cup_{\bm{x^\star}\in \mathcal{K}_{\bm{x}}}\mathcal{D}_f(\bm{x^\star}),~\mathcal{D}_{\bm{v}}:=\cup_{\bm{v^\star}\in \mathcal{K}_{\bm{v}}} \mathcal{D}_g(\bm{v^\star})\\\label{3.101}
\mathcal{D}_{\bm{x}}^*:=\mathcal{D}_{\bm{x}}\cap \mathbb{S}^{n-1}=\cup_{\bm{x^\star}\in \mathcal{K}_{\bm{x}}}\mathcal{D}_f^*(\bm{x^\star}),~\mathcal{D}_{\bm{v}}^*:=\mathcal{D}_{\bm{v}}\cap \mathbb{S}^{m-1}=\cup_{\bm{v^\star}\in \mathcal{K}_{\bm{v}}}\mathcal{D}_g^*(\bm{v^\star}).
\end{gather}
Suppose that the positive scalars $(\zeta,\rho_1,\rho_2)$ and the sample size $m$ satisfy 
\begin{gather}\label{eq:qpecon1}
    \zeta  = \frac{4\delta(\mathscr{H}(\mathcal{K}_{\bm{x}},\rho_1)+\mathscr{H}(\mathcal{K}_{\bm{v}},\rho_2))}{m},\\\label{eq:qpecon2}
    \rho_1 \le \frac{c_1\zeta}{(\log\frac{\delta}{\zeta})^{1/2}},~\omega\big((\mathcal{K}_{\bm{x}})^{(\rho_1)}_{\loc}\big)\le c_2\zeta\sqrt{\frac{m\zeta}{\delta}},~\rho_2\le c_3\zeta\sqrt{\frac{\zeta}{\delta}} 
\end{gather}
for  small enough $(c_1,c_2,c_3)$. If for large enough $C_4$ it holds that 
\begin{align}
    \label{eq:thm1_sam_size}
    m\ge C_4 \Big(\gamma^2(\mathcal{D}_{\bm{x}}^*)+\gamma^2(\mathcal{D}_{\bm{v}}^*)+\mathscr{H}(\mathcal{K}_{\bm{x}},\rho_1)+\mathscr{H}(\mathcal{K}_{\bm{v}},\rho_2)\Big)
\end{align}
then with probability exceeding  
\begin{equation}\label{promisedthm1}
    1 -  12\exp\big(-\Omega(\mathscr{H}(\mathcal{K}_{\bm{x}},\rho_1)+\mathscr{H}(\mathcal{K}_{\bm{v}},\rho_2))\big)-6\exp\big(-\Omega(\gamma^2(\mathcal{D}^*_{\bm{x}})+\gamma^2(\mathcal{D}^*_{\bm{v}}))\big)
\end{equation}
 on a single draw of $(\bm{\Phi},\bm{\epsilon},\bm{\tau})$, 
 the following uniform error bound holds true for all $(\bm{x^\star},\bm{v^\star})\in \mathcal{K}_{\bm{x}}\times \mathcal{K}_{\bm{v}}$:
\begin{align}
    \label{eq:thm1bound}
\sqrt{\|\bm{\Delta_x}\|_2^2+\|\bm{\Delta_v}\|_2^2}\lesssim \frac{(E+\delta)\big(\gamma(\mathcal{D}^*_{\bm{x}})+\gamma(\mathcal{D}^*_{\bm{v}})\big)+\delta(\mathscr{H}(\mathcal{K}_{\bm{x}},\rho_1)+\mathscr{H}(\mathcal{K}_{\bm{v}},\rho_2))^{1/2}}{\sqrt{m}},
\end{align}
 where $\bm{\Delta_x}=\bm{\hat{x}}-\bm{x^\star}$ and $\bm{\Delta_v}=\bm{\hat{v}}-\bm{v^\star}$, with $(\bm{\hat{x}},\bm{\hat{v}})$ being the solution  to \cref{3.4}
\end{theorem}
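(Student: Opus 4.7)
The plan is to follow the classical optimality plus restricted-strong-convexity route for constrained Lasso, with the quantization cross term handled by the global QPE property of \cref{coro:qpe_structured}. Since $(\bm{x^\star},\bm{v^\star})$ is feasible in \cref{3.4} while $(\bm{\hat{x}},\bm{\hat{v}})$ is optimal, expanding $\|\bm{\dot{y}}-\bm{\Phi\hat{x}}-\sqrt{m}\bm{\hat{v}}\|_2^2\le\|\bm{\dot{y}}-\bm{\Phi x^\star}-\sqrt{m}\bm{v^\star}\|_2^2$ via \cref{eq:quan_noise} yields the basic error inequality
\begin{equation*}
\|\bm{\Phi\Delta_x}+\sqrt{m}\,\bm{\Delta_v}\|_2^2\;\le\;2\bigl\langle\bm{\Phi\Delta_x}+\sqrt{m}\,\bm{\Delta_v},\;\bm{\epsilon}+\bm{\xi}_{\bm{x^\star},\bm{v^\star}}\bigr\rangle,
\end{equation*}
and the two constraints force $\bm{\Delta_x}\in\mathcal{D}_f(\bm{x^\star})\subset\mathcal{D}_{\bm{x}}$ and $\bm{\Delta_v}\in\mathcal{D}_g(\bm{v^\star})\subset\mathcal{D}_{\bm{v}}$, so that the normalized errors live in $\mathcal{D}_{\bm{x}}^*$ and $\mathcal{D}_{\bm{v}}^*$.

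The next step is a uniform lower restricted isometry $\|\bm{\Phi u}+\sqrt{m}\,\bm{w}\|_2^2\gtrsim m(\|\bm{u}\|_2^2+\|\bm{w}\|_2^2)$ for all $(\bm{u},\bm{w})\in\mathcal{D}_{\bm{x}}\times\mathcal{D}_{\bm{v}}$, which is standard corrupted-sensing machinery: expanding the square one controls $\|\bm{\Phi u}\|_2^2\approx m\|\bm{u}\|_2^2$ uniformly via the sub-Gaussian matrix deviation inequality, and dominates the cross term $\sqrt{m}\langle\bm{\Phi u},\bm{w}\rangle$ by a Gaussian-width bound, all under the sample-size condition \cref{eq:thm1_sam_size} with failure probability $\exp(-\Omega(\gamma^2(\mathcal{D}_{\bm{x}}^*)+\gamma^2(\mathcal{D}_{\bm{v}}^*)))$. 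On the right-hand side, the genuine-noise piece $\langle\bm{\Phi\Delta_x}+\sqrt{m}\bm{\Delta_v},\bm{\epsilon}\rangle$ is routine: because $\bm{\epsilon}$ is sub-Gaussian with parameter $E$ and independent of $\bm{\Phi}$, uniform chaining over $\mathcal{D}_{\bm{x}}^*$ and $\mathcal{D}_{\bm{v}}^*$ yields a bound of order $E\sqrt{m}\,(\gamma(\mathcal{D}_{\bm{x}}^*)\|\bm{\Delta_x}\|_2+\gamma(\mathcal{D}_{\bm{v}}^*)\|\bm{\Delta_v}\|_2)$, which is responsible for the $E(\gamma(\mathcal{D}_{\bm{x}}^*)+\gamma(\mathcal{D}_{\bm{v}}^*))$ term in \cref{eq:thm1bound}.

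The main obstacle is the quantization cross term $\langle\bm{\Phi\Delta_x}+\sqrt{m}\bm{\Delta_v},\bm{\xi}_{\bm{x^\star},\bm{v^\star}}\rangle$: the noise $\bm{\xi}_{\bm{x^\star},\bm{v^\star}}$ depends discontinuously on $(\bm{x^\star},\bm{v^\star})$ through the quantizer acting on $\bm{\Phi x^\star}+\sqrt{m}\bm{v^\star}+\bm{\epsilon}+\bm{\tau}$, and is strongly correlated with $\bm{\Phi}$, so neither a naive union bound over a net of pairs nor direct sub-Gaussian concentration applies. This is precisely the regime \cref{coro:qpe_structured} is designed for: invoking QPE with $\bm{a}\in\{\bm{\Phi x^\star}+\sqrt{m}\bm{v^\star}+\bm{\epsilon}:(\bm{x^\star},\bm{v^\star})\in\mathcal{K}_{\bm{x}}\times\mathcal{K}_{\bm{v}}\}$ and $\bm{b}\in\{\bm{\Phi u}+\sqrt{m}\bm{w}:(\bm{u},\bm{w})\in\mathcal{D}_{\bm{x}}^*\times\mathcal{D}_{\bm{v}}^*\}$ produces a uniform bound whose two contributions are exactly the $\delta(\gamma(\mathcal{D}_{\bm{x}}^*)+\gamma(\mathcal{D}_{\bm{v}}^*))$ and $\delta(\mathscr{H}(\mathcal{K}_{\bm{x}},\rho_1)+\mathscr{H}(\mathcal{K}_{\bm{v}},\rho_2))^{1/2}$ pieces of \cref{eq:thm1bound}; the conditions \cref{eq:qpecon1}--\cref{eq:qpecon2} are precisely the hypotheses required so that the $(\rho_1,\rho_2)$-discretization of $\mathcal{K}_{\bm{x}}\times\mathcal{K}_{\bm{v}}$ contributes only at the desired order after rescaling by $\delta$.

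Combining the three ingredients gives $c_0 m(\|\bm{\Delta_x}\|_2^2+\|\bm{\Delta_v}\|_2^2)\lesssim N\cdot\sqrt{\|\bm{\Delta_x}\|_2^2+\|\bm{\Delta_v}\|_2^2}$, where $N$ equals the sum of the two noise bounds above; dividing through and taking a union bound over the restricted-isometry, $\bm{\epsilon}$-chaining and QPE events yields \cref{eq:thm1bound} with the probability stated in \cref{promisedthm1}.
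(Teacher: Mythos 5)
Your proposal follows essentially the same route as the paper: pass the constrained-Lasso optimality inequality through \cref{eq:quan_noise} to obtain the basic error inequality, observe $\bm{\Delta_x}\in\mathcal{D}_{\bm{x}}$ and $\bm{\Delta_v}\in\mathcal{D}_{\bm{v}}$, lower-bound $\|\bm{\Phi\Delta_x}+\sqrt{m}\bm{\Delta_v}\|_2^2$ by $\Omega(m)\cdot(\|\bm{\Delta_x}\|_2^2+\|\bm{\Delta_v}\|_2^2)$ via the matrix-deviation machinery, control the $\bm{\epsilon}$-cross-term by a sub-Gaussian chaining/Chevet bound, and control the $\bm{\xi}_{\bm{x^\star},\bm{v^\star}}$-cross-term by the global QPE of \cref{coro:qpe_structured}. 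The only thing worth flagging is that in your QPE invocation you parametrize $\mathcal{A}$ and $\mathcal{B}$ by the push-forward sets $\{\bm{\Phi x^\star}+\sqrt{m}\bm{v^\star}+\bm{\epsilon}\}$ and $\{\bm{\Phi u}+\sqrt{m}\bm{w}\}$ in measurement space, whereas \cref{coro:qpe_structured} is stated with $(\mathcal{A},\mathcal{B},\mathcal{E})=(\mathcal{K}_{\bm{x}},\mathcal{K}_{\bm{v}},\mathcal{D}_{\bm{x},\bm{v}}^*)$ in signal/corruption space and the matrix $\bm{\Phi}$ appearing inside both arguments of the inner product; this pre-image formulation is what avoids the double-randomness issue when both the quantization noise and the probe direction depend on $\bm{\Phi}$, so you should state the application in those terms, but this is a presentational fix rather than a gap. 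Also, the paper obtains the restricted lower bound in one shot from the extended matrix deviation inequality rather than expanding the square and handling the cross term separately, but that again is cosmetic.
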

\begin{proof}
We present the proof in three steps.

\subsubsection*{Step 1: Problem Reduction}
Our first step is to reduce the problem to bounding some random processes. We   assume that   $\bm{\Delta_x}$ and  $\bm{\Delta_v}$ are non-zero with no loss of generality. 

\textbf{Using Optimality:} Starting with the optimality of $(\bm{\hat{x}},\bm{\hat{v}})$ that implies   \begin{align}
        \|\bm{\dot{y}}-\bm{\Phi \hat{x}}-\sqrt{m}\bm{\hat{v}}\|_2^2\leq \|\bm{\dot{y}}-\bm{\Phi x^\star}-\sqrt{m}\bm{v^\star}\|_2^2,
    \end{align}   we substitute $\bm{\hat{x}}=\bm{x^\star}+\bm{\Delta_x}$ and $\bm{\hat{v}}=\bm{v^\star}+\bm{\Delta_v}$ into the left-hand side and then expand the square to obtain  
  \begin{align}\label{3.7}
    \|\bm{\Phi\Delta_x}+\sqrt{m}\bm{\Delta_v}\|^2_2& \leq 2\langle\bm{\dot{y}}-(\bm{\Phi x^\star}+\sqrt{m}\bm{v^\star}),\bm{\Phi\Delta_x}+\sqrt{m}\bm{\Delta_v}\rangle\\\label{eq:sub_quan_noi}
        & = 2 \langle \bm{\epsilon},\bm{\Phi\Delta_x}+\sqrt{m}\bm{\Delta_v}\rangle + 2 \langle \bm{\xi}_{\bm{x^\star},\bm{v^\star}},\bm{\Phi\Delta_x}+\sqrt{m}\bm{\Delta_v}\rangle,
    \end{align}
    where in \cref{eq:sub_quan_noi} we substitute \cref{eq:quan_noise}, and recall that $\bm{\xi}_{\bm{x^\star},\bm{v^\star}}$ is the quantization noise associated with $(\bm{x^\star},\bm{v^\star})$ defined in 
    \cref{eq:xi_xv}. In pursuit of uniform error bound, we must ensure that each step proceeds universally for all $(\bm{x^\star},\bm{v^\star})\in \mathcal{K}_{\bm{x}}\times \mathcal{K}_{\bm{v}}$; for clarity, we will take supremum/infimum at this early stage, which requires us to identify contraint sets that accommodate the estimation errors. 
    
    \textbf{Identifying Constraint Sets:}  Because $f(\bm{\hat{x}})\leq f(\bm{x^\star})$, $g(\bm{\hat{v}})\leq g(\bm{v^\star})$, we have $\bm{\Delta_x}\in\mathcal{D}_f(\bm{x^\star})$ and $\bm{\Delta_v}\in \mathcal{D}_g(\bm{v^\star})$. Combining with $\mathcal{D}_{\bm{x}}$ and $\mathcal{D}_{\bm{v}}$ defined in \cref{3.91},   we have\begin{align}\label{eq:uni_constraint}
        (\bm{\Delta_x},\bm{\Delta_v})\in  \calD _{\bm{x},\bm{v}}:=\mathcal{D}_{\bm{x}}\times \mathcal{D}_{\bm{v}},~\forall (\bm{x^\star},\bm{v^\star})\in \mathcal{K}_{\bm{x}}\times \mathcal{K}_{\bm{v}}.
    \end{align}     Besides $\mathcal{D}_{\bm{x},\bm{v}}$ in \cref{eq:uni_constraint}, we further define its localized version $\mathcal{D}_{\bm{x},\bm{v}}^* = \mathcal{D}_{\bm{x},\bm{v}}\cap\mathbb{S}^{m+n-1}$; along with \cref{3.101}, we have that 
    \begin{align}
        \frac{\bm{\Delta}_{\bm{x}}}{\|\bm{\Delta}_{\bm{x}}\|_2} \in \mathcal{D}_{\bm{x}}^*,~\frac{\bm{\Delta}_{\bm{v}}}{\|\bm{\Delta}_{\bm{v}}\|_2} \in \mathcal{D}_{\bm{v}}^*,~\frac{(\bm{\Delta_x},\bm{\Delta_v})}{(\|\bm{\Delta_x}\|_2^2+\|\bm{\Delta_v}\|_2^2)^{1/2}} \in \mathcal{D}_{\bm{x},\bm{v}}^*
    \end{align}
    holds universally for all $(\bm{x^\star},\bm{v^\star})\in \mathcal{K}_{\bm{x},\bm{v}}$. 
    
    \textbf{Bounding Both Sides of \cref{3.7}:} Uniformly for all $(\bm{x^\star},\bm{v^\star})$, the left-hand side of \cref{3.7} is lower bounded by 
    \begin{align}
        \big(\|\bm{\Delta}_{\bm{x}}\|_2^2+\|\bm{\Delta}_{\bm{v}}\|_2^2\big)\cdot\inf_{(\bm{a},\bm{b})\in \mathcal{D}_{\bm{x},\bm{v}}^*}\big\|\bm{\Phi a}+\sqrt{m}\bm{b}\big\|_2^2:= \big(\|\bm{\Delta}_{\bm{x}}\|_2^2+\|\bm{\Delta}_{\bm{v}}\|_2^2\big)\cdot I_1,
    \end{align}
    and \cref{eq:sub_quan_noi} is (upper) bounded by 
    \begin{align}
       & 2\|\bm{\Delta_x}\|_2\cdot\sup_{\bm{c}\in \mathcal{D}_{\bm{x}}^*}\langle\bm{\epsilon},\bm{\Phi c}\rangle + 2\|\bm{\Delta_v}\|_2 \cdot\sup_{\bm{d}\in \mathcal{D}_{\bm{v}}^*}\langle\bm{\epsilon},\sqrt{m}\bm{d}\rangle \\&\quad+ 2 \big(\|\bm{\Delta_x}\|_2^2+\|\bm{\Delta_v}\|_2^2\big)^{1/2}\cdot \sup_{\bm{a}\in \mathcal{K}_{\bm{x}}}\sup_{\bm{b}\in \mathcal{K}_{\bm{v}}}\sup_{(\bm{c},\bm{d})\in \mathcal{D}_{\bm{x},\bm{v}}^*}\langle \bm{\xi}_{\bm{a},\bm{b}}, \bm{\Phi c}+\sqrt{m}\bm{d}\rangle \label{eq:qpe_to_bound}\\
       & \quad := 2\|\bm{\Delta_x}\|_2\cdot I_2 + 2\|\bm{\Delta_v}\|_2\cdot I_3 +  2 \big(\|\bm{\Delta_x}\|_2^2+\|\bm{\Delta_v}\|_2^2\big)^{1/2}\cdot I_4, 
    \end{align}
    where in \cref{eq:qpe_to_bound} we introduce a generic notation for quantization noise similar to \cref{eq:xi_xv}: 
    \begin{align}
        \label{eq:xiabfirst}
        \bm{\xi}_{\bm{a},\bm{b}} := \mathcal{Q}_\delta(\bm{\Phi a}+\sqrt{m}\bm{b}+\bm{\epsilon}+\bm{\tau})- (\bm{\Phi a}+\sqrt{m}\bm{b}+\bm{\epsilon}),
    \end{align}
    and for convenience we denote the random processes that arose by $I_1,I_2,I_3,I_4$. Therefore, we obtain that \begin{align}
        \label{I1234}\big(\|\bm{\Delta}_{\bm{x}}\|_2^2+\|\bm{\Delta}_{\bm{v}}\|_2^2\big)\cdot I_1\le 2\Big(\|\bm{\Delta_x}\|_2\cdot I_2 + \|\bm{\Delta_v}\|_2\cdot I_3 +   \big(\|\bm{\Delta_x}\|_2^2+\|\bm{\Delta_v}\|_2^2\big)^{1/2}\cdot I_4\Big)
    \end{align}
holds uniformly for all $(\bm{x^\star},\bm{v^\star})\in \mathcal{K}_{\bm{x}}\times \mathcal{K}_{\bm{v}}$. 

\subsubsection*{Step 2: Bounding $I_1,I_2,I_3,I_4$} With \cref{I1234}, all that remains is to bound $I_1,I_2,I_3,I_4$  from the correct side.
We provide a sketch of our techniques in this step:  
   \begin{itemize}
   [leftmargin=5ex,topsep=0.25ex]
       \item We apply the extended matrix deviation inequality (\cref{pro1}) to get a lower bound on $I_1$;

       \item We   apply \cref{pro4} with the randomness of $\bm{\Phi}$  to bound $I_2$; 

       \item We apply \cref{pro2} with the randomness of $\bm{\epsilon}$ to bound $I_3$; 
       \item We apply  the global QPE property for structured sets (\cref{coro:qpe_structured}) to bound $I_4$. 
   \end{itemize} 
\vspace{1mm}

\textbf{Bounding $I_1$:} 
 For any $t\ge 0$, \cref{pro1} yields that the event
 \begin{equation}\label{3.10}
     \sup_{(\bm{a},\bm{b})\in \mathcal{D}^*_{\bm{x},\bm{v}}}\big|\|\bm{\Phi a}+\sqrt{m}\bm{b}\|_2-\sqrt{m}\big|\leq C \big(\gamma(\mathcal{D}^*_{\bm{x},\bm{v}})+t\big)
 \end{equation}
 holds with probability  exceeding $ 1-\exp(-t^2)$. 
 Note that   \cref{pro7} gives $\gamma(\mathcal{D}^*_{\bm{x},\bm{v}})\asymp \gamma(\mathcal{D}^*_{\bm{x}})+\gamma\big(\mathcal{D}^*_{\bm{v}})$, so  the sample complexity \cref{eq:thm1_sam_size} implies $m\gtrsim \gamma^2(\mathcal{D}^*_{\bm{x},\bm{v}})$, and we  can set $t=\gamma(\mathcal{D}_{\bm{x},\bm{v}}^*)$ in \cref{3.10} and obtain that the event
 \begin{align}\label{eq:I1lower1}
\sup_{(\bm{a},\bm{b})\in\mathcal{D}_{\bm{x},\bm{v}}^*}\big|\|\bm{\Phi a}+\sqrt{m}\bm{b}\|_2-\sqrt{m}\big|\leq \frac{\sqrt{m}}{2}
 \end{align}
{\color{black}holds with     probability exceeding $1-\exp(-\gamma^2(\mathcal{D}_{\bm{x},\bm{v}}^*))$.} Combining with triangle inequality, \cref{eq:I1lower1} gives 
 \begin{align}
     \sqrt{I_1}&= \inf_{(\bm{a},\bm{b})\in \mathcal{D}_{\bm{x},\bm{v}}^*}\big\|\bm{\Phi a}+\sqrt{m}\bm{b}\big\|_2\\
     &\ge \sqrt{m} - \sup_{(\bm{a},\bm{b})\in\mathcal{D}_{\bm{x},\bm{v}}^*}\big|\|\bm{\Phi a}+\sqrt{m}\bm{b}\|_2-\sqrt{m}\big| \ge \frac{\sqrt{m}}{2}, \label{eq:bound_thm1I1}
 \end{align}
thus yielding the desired lower bound on $I_1$: $I_1\ge \frac{m}{4}$.

 \textbf{Bounding $I_2$:} Conditioning on $\bm{\epsilon}$, we invoke \cref{pro4} to obtain  that for any $t\ge 0$, the event   \begin{align} 
        I_{2}\leq C_1\|\bm{\epsilon}\|_2\big(\omega (\mathcal{D}^*_{\bm{x}})+t\big) 
   \end{align}
    holds with probability exceeding $ 1-2\exp(-t^2)$.  We set $t=\gamma(\mathcal{D}^*_{\bm{x},\bm{v}})$ to obtain that the event $I_{2}\leq C_1\|\bm{\epsilon}\|_{2}\cdot \omega(\mathcal{D}^*_{\bm{x},\bm{v}})$ holds with probability exceeding $1-2\exp(-\gamma^2(\mathcal{D}^*_{\bm{x},\bm{v}}))$. Then we deal with the randomness of $\bm{\epsilon}$. By $\|\bm{\epsilon}\|_{\psi_2}=O(E)$ from \cref{assump1},  we can use \cite[Exercise 6.3.5]{vershynin2018high} to bound $\|\bm{\epsilon}\|_2$ and obtain 
    \begin{equation}   \label{3.1999}\mathbbm{P}\big(\|\bm{\epsilon}\|_2\geq C_2E\sqrt{m}+t\big)\leq \exp\Big(-\frac{c_3t^2}{E^2}\Big)
   \end{equation}
   for any $t\ge 0$. 
   We set $t\asymp E\sqrt{m}$ to obtain that  $\|\bm{\epsilon}\|_2\lesssim E\sqrt{m}$ holds with  probability exceeding $1-\exp(-\Omega(m))$. Therefore, we arrive at the desired bound 
   \begin{align}\label{eq:bound_thm1_I2}
       I_2 \lesssim E\sqrt{m}\cdot \gamma(\mathcal{D}^*_{\bm{x},\bm{v}})
   \end{align}
   that holds with probability exceeding $1-2\exp(-\gamma^2(\mathcal{D}^*_{\bm{x},\bm{v}}))-\exp(-\Omega(m))$.

   \textbf{Bounding $I_3$:} For any $\bm{d}_1,\bm{d}_2\in \mathcal{D}_{\bm{v}}^*\cup\{0\}$ we have \begin{align}\label{eq:verify_proa2}
       \|\langle \bm{\epsilon},\sqrt{m}\bm{d}_1\rangle-\langle \bm{\epsilon},\sqrt{m}\bm{d}_2\rangle\|_{\psi_2}=\sqrt{m}\|\langle\bm{\epsilon},\bm{d}_1-\bm{d}_2\rangle\|_{\psi_2}\le E\sqrt{m}\|\bm{d}_1-\bm{d}_2\|_2.
   \end{align}
   Thus, \cref{pro2} implies that for any $t\ge 0$,  the event $I_3 \le C_4E\sqrt{m}(\omega(\mathcal{D}_{\bm{v}}^*)+t)$ holds with probability exceeding $1-2\exp(-t^2)$. Setting $t=\gamma(\mathcal{D}^*_{\bm{x},\bm{v}})$ gives 
   \begin{align}\label{eq:bound_thm1_I3}
       I_3 \lesssim E\sqrt{m}\cdot \gamma(\mathcal{D}^*_{\bm{x},\bm{v}})
   \end{align}
   that holds with probability at least $1-2\exp(-\gamma^2(\mathcal{D}^*_{\bm{x},\bm{v}}))$.

   \textbf{Bounding $I_4$:} This is the most challenging part in our analysis, but we leave the development of QPE to \cref{qpe} to allow for a clean analysis in the main body.  
   With \cref{eq:qpecon1}, \cref{eq:qpecon2} and \cref{eq:thm1_sam_size} we can apply \cref{coro:qpe_structured} to obtain that 
   \begin{align}\label{eq:bound_thm1_I4}
       I_4 \le C_5\delta\sqrt{m}\Big(\omega(\mathcal{D}^*_{\bm{x},\bm{v}})+\sqrt{\mathscr{H}(\mathcal{K}_{\bm{x}},\rho_1)+\mathscr{H}(\mathcal{K}_{\bm{v}},\rho_2)}\Big)
   \end{align}
  holds with probability at least $1-12\exp(-\Omega(\mathscr{H}(\mathcal{K}_{\bm{x}},\rho_1)+\mathscr{H}(\mathcal{K}_{\bm{v}},\rho_2)))$.

\subsubsection*{Step 3: Combining Everything}
We are in the position to combine everything together to conclude the proof. Substituting the bounds \cref{eq:bound_thm1I1} on $I_1$, \cref{eq:bound_thm1_I2}, \cref{eq:bound_thm1_I3} and \cref{eq:bound_thm1_I4} on $(I_2,I_3,I_4)$ into \cref{I1234} yields 
\begin{align}
    \nonumber&m\big(\|\bm{\Delta_x}\|_2^2+\|\bm{\Delta_v}\|_2^2\big) \\&\quad\quad\lesssim \sqrt{m}\big(\|\bm{\Delta_x}\|_2^2+\|\bm{\Delta_v}\|_2^2\big)^{1/2}\Big((E+\delta)\cdot\gamma(\mathcal{D}^*_{\bm{x},\bm{v}})+\delta\sqrt{\mathscr{H}(\mathcal{K}_{\bm{x}},\rho_1)+\mathscr{H}(\mathcal{K}_{\bm{v}},\rho_2)}\Big) 
\end{align}
that holds uniformly for all $(\bm{x^\star},\bm{v^\star})\in\mathcal{K}_{\bm{x}}\times \mathcal{K}_{\bm{v}}$.  Rearranging, along with $\gamma(\mathcal{D}^*_{\bm{x},\bm{v}})\asymp\gamma(\mathcal{D}^*_{\bm{x}})+\gamma(\mathcal{D}^*_{\bm{v}})$ from \cref{pro7},   yields the desired bound \cref{eq:thm1bound}. All that remains is to count the probability terms: We rule out probability terms of $\exp(-\gamma^2(\mathcal{D}^*_{\bm{x},\bm{v}}))$  to ensure \cref{eq:bound_thm1I1}, $2\exp(-\gamma^2(\mathcal{D}^*_{\bm{x},\bm{v}}))+\exp(-\Omega(m))$ for \cref{eq:bound_thm1_I2}, $2\exp(-\gamma^2(\mathcal{D}^*_{\bm{x},\bm{v}}))$ for \cref{eq:bound_thm1_I3}, $12\exp(-\Omega(\mathscr{H}(\mathcal{K}_{\bm{x}},\rho_1)$ $+\mathscr{H}(\mathcal{K}_{\bm{v}},\rho_2)))$ for \cref{eq:bound_thm1_I4}. Combining with $\gamma^2(\mathcal{D}^*_{\bm{x},\bm{v}})\asymp \gamma^2(\mathcal{D}^*_{\bm{x}})+\gamma^2(\mathcal{D}^*_{\bm{v}})$ and \cref{eq:thm1_sam_size}, we can promise that the uniform error bound holds with the probability stated in \cref{promisedthm1}. 
\end{proof}

\begin{rem}[The Cost of Uniformity: Constrained Lasso with Structured Priors] \label{rem1}
To see the implication of   \cref{thm1} on structured priors (\cref{assump3}), we substitute \cref{3.77} into \cref{eq:thm1bound} to obtain the uniform $\ell_2$-norm error bound
\begin{equation}        \label{3.35}\tilde{O}\left(\frac{(E+\delta)\big[\gamma(\mathcal{D}_{\bm{x}}^*)+\gamma(\mathcal{D}_{\bm{v}}^*)\big]+\delta\cdot\big[\omega(\mathcal{K}_{\bm{x}}) +\omega(\mathcal{K}_{\bm{v}})\big]}{\sqrt{m}}\right),
    \end{equation}
    where we use $\tilde{O}(\cdot)$ to omit some logarithmic factors on $(\rho_1,\rho_2)$. 
       We compare \cref{3.35} with   the non-uniform bound \cite[Thm. 1]{sun2022quantized} \begin{equation}
        O\left(\frac{(E+\delta)\big[\omega\big(\mathcal{D}^*_f(\bm{x^\star})\big)+\gamma\big(\mathcal{D}^*_g(\bm{v^\star})\big)\big]}{\sqrt{m}}\right)\label{sunnon}
    \end{equation} 
    and elaborate the cost of uniformity 
    by
    noting two differences:
    \begin{itemize}  [leftmargin=5ex,topsep=0.25ex]
        \item  First,  the term $\omega\big(\mathcal{D}_f^*(\bm{x^\star})\big)+\gamma\big(\mathcal{D}_g^*(\bm{v^\star})\big)$ regarding some fixed $(\bm{x^\star},\bm{v^\star})$ in \cref{sunnon} is substituted with     $\gamma(\mathcal{D}_{\bm{x}}^*)+\gamma(\mathcal{D}_{\bm{v}}^*)$ in \cref{3.35}, where $\mathcal{D}_{\bm{x}}^*$ and $\mathcal{D}_{\bm{v}}^*$ are defined in \cref{3.101}. This appears rather natural to us --- in uniform recovery, the local complexity quantity (e.g., $\omega(\mathcal{D}_f^*(\bm{x^\star}))$) upgrades to a global one (e.g., $\omega(\mathcal{D}^*_{\bm{x}})$) concerning   all signals and corruptions.
        \item Second, our uniform bound  also presents the additional term $\tilde{O}\big(\frac{\delta}{\sqrt{m}}\big[\omega(\mathcal{K}_{\bm{x}})+\omega(\mathcal{K}_{\bm{v}})\big]\big)$. Note that this term vanishes in   non-uniform recovery where $\mathcal{K}_{\bm{x}}$ and $\mathcal{K}_{\bm{v}}$ only contain one point. Thus, we can think of that \cref{sunnon} also implicitly includes this term, and from this viewpoint the two bounds \cref{3.35} and \cref{sunnon} stay consistent.  
    \end{itemize}
    We further note that similar phenomena were also observed from the results of \cite{genzel2022unified} (e.g., Theorem 1 therein). As we shall see, in the most interesting cases of structured priors (such as sparsity and low-rank), \cref{3.35} and \cref{sunnon}   are  typically of the same scaling up to logarithmic factors, indicating that the uniformity costs very little. 
\end{rem}
\begin{rem}[The  Role of Quantization Resolution $\delta$]\label{rem2}
  Due to random dithering, the quantization resolution $\delta$ appears in \cref{eq:thm1bound} as multiplicative factors,   agreeing with similar findings in \cite{chen2022quantizing,xu2020quantized,chen2023quantized,thrampoulidis2020generalized,sun2022quantized} and confirming the  intuition that the recovery worsens under coarser quantization (i.e., larger $\delta$). 
\end{rem}

To illustrate the implications of   \cref{thm1}, we provide two concrete examples in \cref{coro1} and~\cref{coro2}. The proofs of these corollaries can be found in   \cref{appendixc}. Equipped with the general \cref{thm1}, the proofs can be done by selecting $(\rho_1,\rho_2)$ and estimating the geometric quantities. Due to the feature of structured signal sets as per \cref{3.77}, using some extremely small $(\rho_1,\rho_2)$ to render \cref{eq:qpecon2} only leads to logarithmic degradation. That being said, we still (slightly) refine the choice of $(\rho_1,\rho_2)$ to lessen the logarithmic factor (our delicate QPE result \cref{coro:qpe_structured} allows us to do so); see \cref{rem:elimination}, \cref{rem:simplifi} below.

\begin{corollary}[Sparse Signal and Sparse Corruption]
\label{coro1}
 Under  \cref{assump1}--\cref{assump3}, we assume that $\bm{x^\star}$ is $s$-sparse and $\bm{v^\star}$ is $k$-sparse, i.e., $\mathcal{K}_{\bm{x}}=\Sigma^n_s\cap \mathbb{B}_2^n$ and  $\mathcal{K}_{\bm{v}}=\Sigma^m_k\cap \mathbb{B}_2^m$ in   \cref{assump3}, and accordingly we use $f(\bm{x})=\|\bm{x}\|_1$ and $g(\bm{v})=\|\bm{v}\|_1$ in   \cref{assump2}. If  $ m\ge C_1 s\log (\frac{nm^{3/2}}{s^{5/2}\delta})+C_1k\log(\frac{m^{5/2}}{k^{5/2}\delta})$ holds with large enough $C_1$, then with probability exceeding $
    1- C_2\exp(-\Omega(s\log\frac{en}{s}+k\log\frac{em}{k}))$
  on a single draw of $(\bm{\Phi},\bm{\epsilon},\bm{\tau})$,
 the following uniform error bound holds true for all $(\bm{x^\star},\bm{v^\star})\in \mathcal{K}_{\bm{x}}\times \mathcal{K}_{\bm{v}}$:
 \begin{equation}   \label{boundcoro1}\sqrt{\|\bm{\Delta_x}\|_2^2+\|\bm{\Delta_v}\|_2^2} \lesssim \frac{E\sqrt{s\log\frac{en}{s}+k\log\frac{em}{k}}+\delta \sqrt{s\log\frac{nm^{3/2}}{s^{5/2}\delta}+k\log\frac{m^{5/2}}{k^{5/2}\delta}}}{\sqrt{m}},
\end{equation}
where $\bm{\Delta_x}=\bm{\hat{x}}-\bm{x^\star}$, $\bm{\Delta_v}=\bm{\hat{v}}-\bm{v^\star}$, with $(\bm{\hat{x}},\bm{\hat{v}})$ being the solution to \cref{3.4}.
\end{corollary}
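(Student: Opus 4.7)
The plan is to specialize \cref{thm1} to the sparse setting by choosing concrete values of $\rho_1,\rho_2$ and then bounding the geometric quantities that appear in \cref{eq:thm1bound} and in the hypotheses \cref{eq:qpecon1}--\cref{eq:thm1_sam_size}. First, I would verify that $\mathcal{K}_{\bm x}^0=\Sigma^n_s$ and $\mathcal{K}_{\bm v}^0=\Sigma^m_k$ qualify as structured sets in the sense of \cref{defi1}, so that \cref{assump3} holds with the standard bounds
\[
\mathscr{H}(\mathcal{K}_{\bm x},\eta)\lesssim s\log\!\Big(\frac{en}{s}\Big)\log\!\Big(1+\frac{1}{\eta}\Big),\quad \mathscr{H}(\mathcal{K}_{\bm v},\eta)\lesssim k\log\!\Big(\frac{em}{k}\Big)\log\!\Big(1+\frac{1}{\eta}\Big),
\]
and with Gaussian widths $\omega(\mathcal{K}_{\bm x})\lesssim\sqrt{s\log(en/s)}$ and $\omega(\mathcal{K}_{\bm v})\lesssim\sqrt{k\log(em/k)}$. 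For the $\ell_1$-descent cones, the classical estimates used throughout the corrupted sensing literature give $\gamma(\mathcal{D}^*_{\bm x})\lesssim \sqrt{s\log(en/s)}$ and $\gamma(\mathcal{D}^*_{\bm v})\lesssim \sqrt{k\log(em/k)}$, and these remain valid when one takes the union over all $s$-sparse (resp.\ $k$-sparse) unit-norm signals since the descent cone of $\|\cdot\|_1$ at any $s$-sparse point depends only on its support pattern.

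Next, I would choose $\rho_1,\rho_2$ so that the resulting bound matches the claim. Concretely, with the ansatz $\rho_1\asymp s^{3/2}\delta/m^{3/2}$ and $\rho_2\asymp k^{3/2}\delta/m^{3/2}$, the Kolmogorov entropy bound above yields
\[
\mathscr{H}(\mathcal{K}_{\bm x},\rho_1)\lesssim s\log\!\Big(\frac{nm^{3/2}}{s^{5/2}\delta}\Big),\qquad \mathscr{H}(\mathcal{K}_{\bm v},\rho_2)\lesssim k\log\!\Big(\frac{m^{5/2}}{k^{5/2}\delta}\Big),
\]
which upon substitution into \cref{eq:thm1bound} produces exactly the right-hand side of \cref{boundcoro1}. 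The stated sample complexity then ensures \cref{eq:thm1_sam_size}, since $\gamma^2(\mathcal{D}^*_{\bm x})+\gamma^2(\mathcal{D}^*_{\bm v})\lesssim s\log(en/s)+k\log(em/k)$ is dominated by the stated lower bound on $m$.

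The main obstacle is verifying the technical smallness conditions in \cref{eq:qpecon2} with the above choices. With $\zeta\asymp \delta(s\log(\tfrac{nm^{3/2}}{s^{5/2}\delta})+k\log(\tfrac{m^{5/2}}{k^{5/2}\delta}))/m$, the condition $\rho_1\lesssim \zeta/\sqrt{\log(\delta/\zeta)}$ reduces to $s/m\lesssim \log(\cdot)$, which is implied by the sample complexity up to a logarithmic slack. The condition $\rho_2\le c_3\zeta\sqrt{\zeta/\delta}$ is checked similarly. The most delicate piece is the localized Gaussian width condition $\omega((\mathcal{K}_{\bm x})^{(\rho_1)}_{\loc})\le c_2\zeta\sqrt{m\zeta/\delta}$: here one uses that $(\mathcal{K}_{\bm x})^{(\rho_1)}_{\loc}\subset \Sigma^n_{2s}\cap\mathbb{B}_2^n(\rho_1)$ and hence $\omega((\mathcal{K}_{\bm x})^{(\rho_1)}_{\loc})\lesssim \rho_1\sqrt{s\log(en/s)}$; plugging in the ansatz for $\rho_1$ reduces the required inequality to a logarithmic condition implied by the sample complexity. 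An analogous calculation covers the $\bm v$-block.

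Finally, I would collect the bounds: the main error term simplifies by combining $(E+\delta)[\gamma(\mathcal{D}^*_{\bm x})+\gamma(\mathcal{D}^*_{\bm v})]$ with $\delta\sqrt{\mathscr{H}(\mathcal{K}_{\bm x},\rho_1)+\mathscr{H}(\mathcal{K}_{\bm v},\rho_2)}$, and observing that the $E\cdot\delta$ cross-terms are absorbed into the second summand, which yields the desired uniform error bound \cref{boundcoro1}. The failure probability from \cref{promisedthm1} reduces to $C_2\exp(-\Omega(s\log(en/s)+k\log(em/k)))$ because both $\gamma^2(\mathcal{D}^*_{\bm x})+\gamma^2(\mathcal{D}^*_{\bm v})$ and the entropy terms $\mathscr{H}(\mathcal{K}_{\bm x},\rho_1)+\mathscr{H}(\mathcal{K}_{\bm v},\rho_2)$ are both at least of this order, completing the proof.
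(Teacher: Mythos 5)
Your proposal follows essentially the same route as the paper's proof: invoke \cref{thm1}, estimate $\gamma(\mathcal{D}^*_{\bm x})$ and $\gamma(\mathcal{D}^*_{\bm v})$, choose $\rho_1\asymp\delta(s/m)^{3/2}$ and $\rho_2\asymp\delta(k/m)^{3/2}$, check the smallness conditions in \cref{eq:qpecon2}, and substitute the resulting entropy estimates into the error bound. The parameter choice and the key computation $\mathscr{H}(\mathcal{K}_{\bm x},\rho_1)\lesssim s\log(nm^{3/2}/(s^{5/2}\delta))$ are identical to the paper's.

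One small point worth flagging: your justification that $\gamma(\mathcal{D}^*_{\bm x})\lesssim\sqrt{s\log(en/s)}$ survives the union because ``the descent cone of $\|\cdot\|_1$ at any $s$-sparse point depends only on its support pattern'' is incomplete as stated. Taking the union over $\binom{n}{s}2^s$ signed support patterns does increase the Gaussian complexity, by a term of order $\rad\cdot\sqrt{\log N}\asymp\sqrt{s\log(en/s)}$; this is the same order as the per-cone width, so the final bound is unaffected, but the union-bound step needs to be invoked explicitly. The paper sidesteps this entirely by observing that for any $s$-sparse $\bm{x}$ with support $\mathcal{S}$, every $\bm{v}\in\mathcal{D}_f(\bm{x})$ satisfies $\|\bm{v}_{\mathcal{S}^c}\|_1\le\|\bm{v}_{\mathcal{S}}\|_1$ and hence $\|\bm{v}\|_1\le 2\sqrt{s}\|\bm{v}\|_2$, which gives the uniform inclusion $\mathcal{D}^*_{\bm x}\subset\mathbb{B}_2^n\cap\mathbb{B}_1^n(2\sqrt{s})$ and then applies \cref{pro5} once. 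That embedding-based argument avoids any union bound and is cleaner; you may want to adopt it. Everything else in your sketch — the verification that $\rho_1+\rho_2\lesssim\zeta\sqrt{\zeta/\delta}$ suffices for both of the first and third conditions in \cref{eq:qpecon2}, the inclusion $(\mathcal{K}_{\bm x})^{(\rho_1)}_{\loc}\subset\Sigma^n_{2s}\cap\mathbb{B}_2^n(\rho_1)$ giving $\omega((\mathcal{K}_{\bm x})^{(\rho_1)}_{\loc})\lesssim\rho_1\sqrt{s\log(en/s)}$, and the absorption of the $\delta\cdot\gamma(\mathcal{D}^*)$ terms into $\delta\sqrt{\mathscr{H}}$ — matches the paper. (Note that \cref{eq:qpecon2} has no localized-width condition for the $\bm{v}$-block, so your last ``analogous calculation'' remark is superfluous.)
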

\begin{rem}[Elimination of Logarithmic Factors]\label{rem:elimination}
Provided the  additional scaling conditions   $m\lesssim n$ and $\delta \gtrsim \big(\frac{\min\{s,k\}}{m}\big)^N$ for some positive integer $N$ 
(note that they are very mild and cover most interesting settings), we have $\log(\frac{nm^{3/2}}{s^{5/2}\delta})\lesssim \log(\frac{n}{s})$ and $\log(\frac{m^{5/2}}{k^{5/2}\delta})\lesssim \log (\frac{m}{k})$.
Thus notably, our \cref{coro1}     coincides with its non-uniform counterpart in \cite[Coro. 1]{sun2022quantized} without suffering from   logarithmic degradation (rather, the cost is at most a larger multiplicative constant which both works do not aim to refine).  
\end{rem}
\begin{corollary}[Low-Rank Signal and Sparse Corruption]
\label{coro2}
Under  \cref{assump1}--\cref{assump3}, we assume that $\bm{x^\star}\in \mathbb{R}^{p\times q}$ is of rank no greater than $r$,\footnote{When substituted into \cref{3.1}, we view $\bm{x^\star}$ as a $(n:=pq)$-dimensional vector by vectorization.} and $\bm{v^\star}$ is $k$-sparse, i.e., $\mathcal{K}_{\bm{x}}=M^{p,q}_{r}\cap \mathbb{B}_\mathrm{F}^{p,q}$ and  $\mathcal{K}_{\bm{v}}=\Sigma^m_k\cap\mathbb{B}^m_2$ in \cref{assump3}, and accordingly we use $f(\bm{x})=\|\bm{x}\|_{\mathrm{nu}}$ and $g(\bm{v})=\|\bm{v}\|_1$ in \cref{assump2}. If $m\ge C_1 r(p+q)\log(\frac{m^{3/2}}{\delta(r(p+q))^{3/2}})+ C_1k\log(\frac{m^{5/2}}{k^{5/2}\delta})$ for some large enough $C_1$,
    then with probability exceeding $1-C_2\exp(-\Omega(r(p+q)+k\log\frac{em}{k}))$ on a single draw of $(\bm{\Phi},\bm{\epsilon},\bm{\tau})$,   
    the following uniform error bound holds true  for all $(\bm{x^\star},\bm{v^\star})\in \mathcal{K}_{\bm{x}}\times \mathcal{K}_{\bm{v}}$:
    \begin{equation}   \label{eq:bound_low_rank}\sqrt{\|\bm{\Delta_x}\|_{\rm F}^2+\|\bm{\Delta_v}\|_2^2}\lesssim \frac{E\sqrt{r(p+q)+k\log\frac{em}{k}}+\delta\sqrt{r(p+q)\log(\frac{m^{3/2}}{\delta(r(p+q))^{3/2}})+k\log(\frac{m^{5/2}}{k^{5/2}\delta})}}{\sqrt{m}}, 
    \end{equation} 
    where $\bm{\Delta_x}=\bm{\hat{x}}-\bm{x^\star}$, $\bm{\Delta_v}=\bm{\hat{v}}-\bm{v^\star}$, with $(\bm{\hat{x}},\bm{\hat{v}})$ being the solution to \cref{3.4}.
\end{corollary}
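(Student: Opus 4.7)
The plan is to obtain \cref{coro2} as a direct specialization of \cref{thm1}, so the real work is twofold: first, estimating the four relevant complexity quantities for the rank-sparse pair $(\mathcal{K}_{\bm{x}},\mathcal{K}_{\bm{v}})=(M^{p,q}_r\cap\mathbb{B}_{\mathrm{F}}^{p,q},\Sigma^m_k\cap\mathbb{B}_2^m)$; and second, choosing $(\rho_1,\rho_2)$ so that \cref{eq:qpecon1}, \cref{eq:qpecon2}, \cref{eq:thm1_sam_size} are met and the resulting Kolmogorov-entropy terms in \cref{eq:thm1bound} reproduce the announced logarithmic factors.

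For the four complexity terms I would invoke standard estimates.  The Gaussian widths of the signal/corruption sets themselves are $\omega^2(\mathcal{K}_{\bm{x}})\lesssim r(p+q)$ and $\omega^2(\mathcal{K}_{\bm{v}})\lesssim k\log(em/k)$, and their Kolmogorov entropies obey $\mathscr{H}(\mathcal{K}_{\bm{x}},\eta)\lesssim r(p+q)\log(1+1/\eta)$ and $\mathscr{H}(\mathcal{K}_{\bm{v}},\eta)\lesssim k\log(em/(k\eta))$, using standard $\varepsilon$-net constructions for low-rank matrices (three nets of $p\times r$, $r\times r$, $q\times r$ factors) and for $s$-sparse vectors.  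For the global descent cones, the union $\mathcal{D}_{\bm{x}}^*=\cup_{\bm{X^\star}\in \mathcal{K}_{\bm{x}}}\mathcal{D}_{\|\cdot\|_{\mathrm{nu}}}^*(\bm{X^\star})$ sits inside the set of matrices that are the sum of a rank-$r$ piece and a Frobenius-bounded multiple of a rank-$r$ piece, so that $\gamma(\mathcal{D}_{\bm{x}}^*)\lesssim \sqrt{r(p+q)}$; analogously $\gamma(\mathcal{D}_{\bm{v}}^*)\lesssim \sqrt{k\log(em/k)}$ via the well-known characterization of the $\ell_1$ descent cone at a $k$-sparse vector as being contained in the set of effectively $O(k)$-sparse vectors.

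I would then pick $\rho_1\asymp \delta (r(p+q))^{3/2}/m^{3/2}$ and $\rho_2\asymp \delta k^{3/2}/m^{3/2}$, which turns the bounds above into $\mathscr{H}(\mathcal{K}_{\bm{x}},\rho_1)\lesssim r(p+q)\log(m^{3/2}/(\delta(r(p+q))^{3/2}))$ and $\mathscr{H}(\mathcal{K}_{\bm{v}},\rho_2)\lesssim k\log(m^{5/2}/(k^{5/2}\delta))$, matching exactly the logarithmic factors appearing in \cref{eq:bound_low_rank}.  Verifying the conditions \cref{eq:qpecon2} is then a direct computation: using the sample complexity to deduce $\zeta\asymp \delta(\mathscr{H}_1+\mathscr{H}_2)/m\gtrsim \delta r(p+q)/m+\delta k/m$, one checks $\rho_1\lesssim \zeta/\sqrt{\log(\delta/\zeta)}$, $\rho_2\lesssim \zeta^{3/2}/\sqrt{\delta}$, and $\omega((\mathcal{K}_{\bm{x}})^{(\rho_1)}_{\loc})\lesssim \rho_1\sqrt{r(p+q)}\lesssim \zeta^{3/2}\sqrt{m/\delta}$ by using that $(\mathcal{K}_{\bm{x}})^{(\rho_1)}_{\loc}$ lies in the rank-$2r$ matrices of Frobenius norm at most $\rho_1$.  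The announced sample complexity ensures $m\gtrsim \gamma^2(\mathcal{D}_{\bm{x}}^*)+\gamma^2(\mathcal{D}_{\bm{v}}^*)+\mathscr{H}_1+\mathscr{H}_2$, so \cref{eq:thm1_sam_size} holds and \cref{eq:thm1bound} yields \cref{eq:bound_low_rank} once the widths are substituted.

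The main obstacle I expect is the bookkeeping around $(\rho_1,\rho_2)$: too small a choice gives a huge $\mathscr{H}$ contribution and spoils the log factors, while too large a choice breaks the three scaling inequalities in \cref{eq:qpecon2}.  In particular, the bound on the localized width $\omega((\mathcal{K}_{\bm{x}})^{(\rho_1)}_{\loc})$ must use the low-rank structure of \emph{differences} of rank-$r$ matrices (which are merely rank-$2r$), rather than the coarser Sudakov/Dudley estimate coming from \cref{3.77}; without this observation the required inequality $\omega((\mathcal{K}_{\bm{x}})^{(\rho_1)}_{\loc})\lesssim \zeta\sqrt{m\zeta/\delta}$ does not close.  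Once this geometric point is in hand, the remainder is routine substitution into \cref{thm1}.
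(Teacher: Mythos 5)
Your proposal follows the same route as the paper's proof: pick $\rho_1\asymp\delta(r(p+q)/m)^{3/2}$ and $\rho_2\asymp\delta(k/m)^{3/2}$, bound $\gamma(\mathcal{D}_{\bm{x}}^*)\lesssim\sqrt{r(p+q)}$ (the paper cites \cite{fuchs2012proof}) and $\gamma(\mathcal{D}_{\bm{v}}^*)\lesssim\sqrt{k\log(em/k)}$, and verify the scaling conditions \cref{eq:qpecon1}--\cref{eq:qpecon2} by noting that $(\mathcal{K}_{\bm{x}})^{(\rho_1)}_{\loc}\subset M^{p,q}_{2r}\cap\mathbb{B}_{\mathrm{F}}^{p,q}(\rho_1)$. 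One small caution on your description of the descent cone: $\mathcal{D}_{\bm{x}}^*$ is \emph{not} contained in a Minkowski sum of two rank-$r$ pieces --- the off-tangent-space part of a descent direction $\bm{H}$ can have arbitrary rank; what is true (and what actually delivers the bound) is that descent directions obey $\|\bm{H}\|_{\mathrm{nu}}\lesssim\sqrt{r}\,\|\bm{H}\|_{\mathrm{F}}$, so $\mathcal{D}_{\bm{x}}^*$ lies in a nuclear-norm ball of radius $O(\sqrt{r})$ intersected with the sphere, whose Gaussian width is $O(\sqrt{r(p+q)})$ since $\omega(\mathbb{B}_{\mathrm{nu}}^{p,q})\asymp\sqrt{p+q}$.
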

\begin{rem} \label{rem:simplifi}
Analogously to \cref{rem:elimination}, provided the additional scaling conditions of $\delta\gtrsim(\frac{r(p+q)}{m})^N$ and $\delta\gtrsim(\frac{k}{m})^N$ for some positive integer $N$, one can further simplify \cref{eq:bound_low_rank} to 
\begin{align}
    \frac{\delta\sqrt{r(p+q)\log(\frac{m}{r(p+q)})}+E\sqrt{r(p+q)}+(\delta+E)\sqrt{k\log\frac{em}{k}}}{\sqrt{m}}.
\end{align}
This only exhibits an additional factor of $\log^{1/2}(\frac{m}{r(p+q)})$ compared to the non-uniform counterpart in \cite[Coro. 2]{sun2022quantized}.   
\end{rem}

We close this subsection by comparing with relevant results and claiming our contributions.
\begin{rem}[Related Works and the Novelty of Our Results]\label{rem3}
      Restricted to Lasso,  the prior developments are as follows:
      \begin{itemize}
      [leftmargin=5ex,topsep=0.25ex]
          \item Non-uniform guarantees were presented in \cite[Thm. III.1]{thrampoulidis2020generalized} for compressed sensing and in \cite[Thm. 1]{sun2022quantized} for corrupted sensing.
          \item The only existing uniform guarantee for constrained Lasso was obtained in \cite[Coro. 4]{genzel2022unified},  but only applies to classical compressed sensing (without the need of recovering an additional structured corruption) and typically yields an error rate of $O(\sqrt{\delta}[\frac{\omega^2(\mathcal{K}_{\bm{x}})}{m}]^{1/4})$, which is slower than our \cref{3.35} under structured priors (\cref{assump3}). We will further note that it may not be sensible (if not impossible) to follow the proof technique in \cite{genzel2022unified} to prove \cref{thm1}; see \cref{rem:insensible}. 
      \end{itemize}
    Note that constrained Lasso is   a general recipe for nonlinear compressed sensing models \cite{plan2016generalized,liu2020generalized,genzel2016high}, but there have also been uniform guarantees for other recovery methods (see also the less extensive discussion in \cite[Sec. 4.3]{genzel2022unified}):
    \begin{itemize}
      [leftmargin=5ex,topsep=0.25ex]
        \item Jung et al. devised and theoretically analyzed a more specialized recovery method for quantized compressed sensing \cite[Thm. 3]{jung2021quantized}. However,  under the dithered uniform quantizer, their uniform error rate translates into $O\big(\big[\frac{\delta\omega^2(\mathcal{K}_{\bm{x}})}{m}\big]^{1/3}\big)$ in the worst case, and it is unknown whether their result can be sharpened for structured sets like $\Sigma_s^n$ since their statement requires $\mathcal{K}_{\bm{x}}$ to be convex;

        \item    Xu and Jacques analyzed the projected-back projection (PBP) estimator in \cite[Sec. 7.3\textbf{A}]{xu2020quantized}, providing a rate of   $\tilde{O}\big((1+\delta)\frac{\omega(\mathcal{K}_{\bm{x}})}{\sqrt{m}}\big)$ (logarithmic factors omitted) for structured sets. Although the rate is comparable to ours, one downside of the PBP estimator is that it does not achieve exact reconstruction  in a noiseless unquantized case \cite[Sec. 7.3\textbf{C}]{xu2020quantized}. (In contrast, Lasso achieves exact reconstruction in a noiselesss unquantized case; see \cite{chen2018stable,foygel2014corrupted} for instance.) 
    \end{itemize} 
    In a nutshell, in compressed/corrupted sensing associated with the dithered uniform quantizer, our \cref{thm1} presents the {\it sharpest} uniform error rate over structured set (\cref{defi1}), and note that this is achieved by   constrained Lasso which returns exact recovery in a noisyless unquantized setting.  
   \end{rem}
    \begin{rem}[Optimality]
    We note that the rates in \cref{coro1} and \cref{coro2} are near minimax optimal when the sub-Gaussian noise $\bm{\epsilon}$ is severer than the quantization noise (i.e., $E\gtrsim \delta$). Such optimality is implied by 
    adding together the lower bounds (e.g., from \cite[Thm. 4.2]{plan2017high}) for the two  estimation problems $\bm{y}_1 = \bm{\Phi} \bm{x^\star}+\bm{\epsilon}$ and 
    $\bm{y}_2 = \sqrt{m}\bm{v^\star}+\bm{\epsilon}$.\footnote{The minimax lower bound for $\bm{y}_1=\bm{\Phi}\bm{x^\star}+\bm{\epsilon}$ applies to the   estimation of $\bm{x^\star}$ from \cref{3.2}, since the additional corruption and quantization can only decrease our ability to estimate $\bm{x^\star}$. Similarly, the minimax lower bound for $\bm{y}_2=\sqrt{m}\bm{v^\star}+\bm{\epsilon}$ stands when estimating $\bm{v^\star}$ from \cref{3.2}.} 
    Nonetheless, for the noiseless case with quantization (i.e., $E=0$, $\delta>0$), the information theoretic limit exhibits a  decaying rate of $O(m^{-1})$ that is faster than our $O(m^{-1/2})$
 (e.g., \cite{boufounos2015quantization}), and we suspect that such faster rate cannot be achieved by Lasso due to some fundamental performance limit.\footnote{Though we are not aware of a rigorous analysis, to our best knowledge, all proved rates for {\it Lasso} in   quantized compressed sensing with memorylesss quantizer  are no faster than  $O(m^{-1/2})$; see   similar discussion in \cite[P. 34]{genzel2022unified}.} 
\end{rem}
\begin{rem}[Technical Comparison with \cite{genzel2022unified}]
    \label{rem:insensible}
    Genzel and Stollenwerk \cite{genzel2022unified} developed a general strategy to achieve uniform recovery, which consists of two ingredients: (i) Constructing Lipschitz approximation for handling the   discontinuity of $f_i(\cdot)$ (if any), and (ii) Applying the concentration inequality \cite[Thm. 8]{genzel2022unified} (due to Mendelson \cite{mendelson2016upper}) to bounding the product processes arising in the analysis. 
     However, it might not be sensible (if not impossible) to follow their techniques to prove \cref{thm1} for two reasons. First, when $f_i(\cdot)$ contains some discontinuity (as with our      \cref{eq:problem}), their general strategy leads to a uniform decaying rate of $O(m^{-1/4})$ (see \cite[Sec. 4]{genzel2022unified}) that is inferior to our $O(m^{-1/2})$, and it is unclear how to get faster uniform rate without incorporating existing embedding result available in the literature.\footnote{This is a workaround proposed in \cite[Sec. 5]{genzel2022unified}. Nonetheless, the needed embedding result may not exist in the literature for the problem at hand, and this is the case for our quantized corrupted sensing problem.} Second, the appearance of the corruption $\bm{v^\star}$ poses additional hurdle to the approach in \cite{genzel2022unified}, since $\bm{v^\star}$ leads to random processes beyond the scope of \cite[Thm. 8]{genzel2022unified}.  Conversely, our work of   getting $O(m^{-1/2})$ decaying rate based on QPE suggests the possibility of improving the slow rate of $O(m^{1/4})$ in \cite{genzel2022unified} under more general discontinuous $f_i(\cdot)$ --- one may deal with discontinuity of $f_i(\cdot)$ by proper {\it product embedding} property  (or {\it limited projection distortion} as termed by \cite{xu2020quantized}) rather than constructing Lipschitz approximation. 
\end{rem}
\subsection{Structured Priors with Unconstrained Lasso}
We now turn our attention to unconstrained Lasso \cref{3.5}, which is more practical than the constrained Lasso in \cref{3.4} in the sense that   it does not require prior estimates of $\big(f(\bm{x}^*),g(\bm{v^}\star)\big)$. Rather, as we shall see, a fixed large enough choice of the regularization parameters $(\lambda_1,\lambda_2)$ works  uniformly for all pairs of signal and corruption. 
To proceed, we first define the restricted compatibility constant between $f(\cdot)$ and $\ell_2$-norm over some constraint set $\mathcal{X}\subset \mathbb{R}^n$ as \begin{align}\label{eq:compatible}
    \alpha_f(\mathcal{X})= \sup_{\bm{x}\in \mathcal{X}}\frac{f(\bm{x})}{\|\bm{x}\|_2}.
\end{align}
Compared to constrained Lasso, the analysis of unconstrained Lasso is more technical in the following senses:
\begin{itemize}
  [leftmargin=5ex,topsep=0.25ex]
    \item The derivation of a low-complexity constraint set that contains the estimation error  becomes non-trivial, as contrasted to the straightforward $\bm{\Delta_x}\in\mathcal{D}_f(\bm{x^\star})$ and $\bm{\Delta_v}\in\mathcal{D}_g(\bm{v^\star})$ for constrained Lasso; 
    \item  Some other additional efforts are needed, e.g.,   bounding the compatibility constant that is in general technically challenging. 
\end{itemize}

Without pursuing full generality, 
 we make the following decomposable assumption on $f(\cdot)$ and $g(\cdot)$ to facilitate the estimation of compatibility constant.  

\begin{assumption}[Decomposable Norm]\label{assump4} Regarding the   sets $\mathcal{K}_{\bm{x}},\mathcal{K}_{\bm{v}}$   and associated norms $f(\cdot),g(\cdot)$  for promoting certain structure (\cref{assump2}), we assume that:
 \begin{itemize}
 [leftmargin=5ex,topsep=0.25ex]
     \item {\rm (Decomposibility)} Given any $\bm{a}\in \mathcal{K}_{\bm{x}}$, there exists a pair of linear subspaces $(\mathcal{X}_{\bm{a}},\overline{\mathcal{X}}_{\bm{a}})$  (possibly depending on $\bm{a}$) with $\mathcal{X}_{\bm{a}}\subset \overline{\mathcal{X}}_{\bm{a}}$, such that $\bm{a}\in \mathcal{X}_{\bm{a}}$, and  $f(\cdot)$   is decomposable over $(\mathcal{X}_{\bm{a}},\overline{\mathcal{X}}_{\bm{a}}^{\bot})$:\footnote{For a given linear subapce $\mathcal{X}$,   we denote  its orthogonal complement by $\mathcal{X}^\bot$.}
    \begin{equation}
        \begin{aligned}
           \label{3.39} &f(\bm{x}_1+\bm{x}_2)=f(\bm{x}_1)+f(\bm{x}_2),~~\forall~ \bm{x}_1\in \mathcal{X}_{\bm{a}},\bm{x}_2\in \overline{\mathcal{X}}_{\bm{a}}^\bot.
        \end{aligned}
    \end{equation} 
    Similarly, given any $\bm{b}\in \mathcal{K}_{\bm{v}}$ there exists a pair of linear subspaces $(\mathcal{V}_{\bm{b}},\overline{\mathcal{V}}_{\bm{b}})$ (possibly depending on $\bm{b}$) with $\mathcal{V}_{\bm{b}}\subset\overline{\mathcal{V}}_{\bm{b}}$, such that $\bm{b}\in \mathcal{V}_{\bm{b}}$, and the decomposibility $g(\bm{v}_1+\bm{v}_2)=g(\bm{v}_1)+g(\bm{v}_2)$ holds for any $\bm{v}_1\in \mathcal{V}_{\bm{a}}$, $\bm{v}_2\in\overline{\mathcal{V}}^\bot_{\bm{b}}$.
    
    \item {\rm(Uniform Bound on Compatibility Constant)} There exist $\alpha_{\bm{x}}$ and $\alpha_{\bm{v}}$ such that $\alpha_f(\overline{\mathcal{X}}_{\bm{a}})\leq \alpha_{\bm{x}}$ holds uniformly for all $\bm{a}\in \mathcal{K}_{\bm{x}}$, and that $\alpha_g(\overline{\mathcal{V}}_{\bm{b}})\leq \alpha_{\bm{v}}$ holds uniformly for all $\bm{b}\in \mathcal{K}_{\bm{v}}$, where $(\overline{\mathcal{X}}_{\bm{a}},\overline{\mathcal{V}}_{\bm{b}})$ are the   linear subspaces identified for a specific $(\bm{a},\bm{b})\in\mathcal{K}_{\bm{x}}\times \mathcal{K}_{\bm{v}}$ in the preceding dot point, $\alpha_f(\overline{\mathcal{X}}_{\bm{a}})$ and $\alpha_g(\overline{\mathcal{V}}_{\bm{b}})$ are the compatibility constants defined as per \cref{eq:compatible}.
 \end{itemize}
\end{assumption}

It is well-known that this decomposibility assumption is    satisfied   $\|\cdot\|_1$, $\|\cdot\|_{\rm nu}$,  $\|\cdot\|_{\ell_1/\ell_2}$ and so on   \cite{negahban2012restricted,negahban2012unified}, thus covering the most interesting cases of structured priors. As a canonical example, for  $s$-sparse structured prior together with $f(\bm{x})=\|\bm{x}\|_1$, we will let $\mathcal{X}_{\bm{a}}=\overline{\mathcal{X}}_{\bm{a}}=\{\bm{v}\in \mathbb{R}^n:\supp(\bm{v})\subset \supp(\bm{a})\}$ for any $\bm{a}\in \mathcal{K}_{\bm{x}}=\Sigma^n_s\cap \mathbb{B}_2^n$, under which it is evident that $\alpha_{\bm{x}}=\sqrt{s}$ is a uniform bound on $\alpha_f(\overline{\mathcal{X}}_{\bm{a}})$ (see more details in the proof of \cref{coro3}).

\begin{theorem}[Uniform Recovery via Unconstrained Lasso]
    \label{thm2}
    Under \cref{assump1}--\cref{assump4},   suppose that the positive scalars $(\zeta,\rho_1,\rho_2)$ and the sample size $m$ satisfy 
\begin{gather}\label{eq:qpecon1un}
    \zeta  = \frac{4\delta(\mathscr{H}(\mathcal{K}_{\bm{x}},\rho_1)+\mathscr{H}(\mathcal{K}_{\bm{v}},\rho_2))}{m},\\\label{eq:qpecon2un}
    \rho_1 \le \frac{c_1\zeta}{(\log\frac{\delta}{\zeta})^{1/2}},~\omega\big((\mathcal{K}_{\bm{x}})^{(\rho_1)}_{\loc}\big)\le c_2\zeta\sqrt{\frac{m\zeta}{\delta}},~\rho_2\le c_3\zeta\sqrt{\frac{\zeta}{\delta}} 
\end{gather}
for  small enough $(c_1,c_2,c_3)$, and {\color{black}we also suppose that $f(\bm{x})\geq \|\bm{x}\|_2$ holds for any $\bm{x}\in \mathbb{R}^n$, $g(\bm{v})\geq \|\bm{v}\|_2$ holds for any $\bm{v}\in \mathbb{R}^m$.}\footnote{This   is a very mild condition because $f(\cdot)$ and $g(\cdot)$ are norms that promote low-complexity structure, thus naturally dominating $\ell_2$-norm; see, e.g., \cite{raskutti2019convex}.}  We set
    \begin{align}\label{eq:lam1choice}
            &\lambda_1= C_4 (E+\delta)\sqrt{m}\cdot \omega(\mathbb{B}_f^n)+ C_4\delta \sqrt{m}\cdot \sqrt{\mathscr{H}(\mathcal{K}_{\bm{x}},\rho_1)+\mathscr{H}(\mathcal{K}_{\bm{v}},\rho_2)},
            \\&\lambda_2=C_5(E+\delta)\sqrt{m}
            \cdot \omega(\mathbb{B}_g^m)+ C_5\delta \sqrt{m}\cdot \sqrt{\mathscr{H}(\mathcal{K}_{\bm{x}},\rho_1)+\mathscr{H}(\mathcal{K}_{\bm{v}},\rho_2)},\label{eq:lam2choice}
        \end{align}
    for some large enough $C_4,C_5$. 
   If for some sufficiently large implied constant, it holds that 
    \begin{align}\label{eq:thm2sample}
        m\gtrsim  \Big(\alpha_{\bm{x}}+\frac{\lambda_2\alpha_{\bm{v}}}{\lambda_1}\Big)^2\omega^2(\mathbb{B}_f^n)+\Big(\alpha_{\bm{v}}+\frac{\lambda_1\alpha_{\bm{x}}}{\lambda_2}\Big)^2\omega^2(\mathbb{B}^m_g) +\mathscr{H}\big(\mathcal{K}_{\bm{x}},\rho_1\big)+\mathscr{H}\big(\mathcal{K}_{\bm{v}},\rho_2\big),
        \end{align}
    then  with probability exceeding 
    \begin{align}\label{eq:thm2_prob}
        1-C_6\exp\big(-c_7\min\big\{\omega^2(\mathbb{B}_f^n),\omega^2(\mathbb{B}_g^m)\big\}\big)-24 
     \exp\big(-\Omega(\mathscr{H}(\mathcal{K}_{\bm{x}},\rho_1)+\mathscr{H}(\mathcal{K}_{\bm{v}},\rho_2))\big)
     \end{align} 
     on a single draw of $(\bm{\Phi},\bm{\epsilon},\bm{\tau})$,     the following uniform error bound holds true for all $(\bm{x^\star},\bm{v^\star})\in\mathcal{K}_{\bm{x}}\times \mathcal{K}_{\bm{v}}$:
    \begin{equation}\label{thm2bound}
\sqrt{\|\bm{\Delta_x}\|_2^2+\|\bm{\Delta_v}\|_2^2}\lesssim \frac{\lambda_1\alpha_{\bm{x}}+\lambda_2\alpha_{\bm{v}}}{m},
    \end{equation}
    where $\bm{\Delta_x}=\bm{\hat{x}}-\bm{x^\star}$ and $\bm{\Delta_v}=\bm{\hat{v}}-\bm{v^\star}$, with $(\bm{\hat{x}},\bm{\hat{v}})$ being the solution  to \cref{3.5}
\end{theorem}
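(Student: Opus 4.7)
My plan is to mirror the three-step structure of \cref{thm1}, but with the crucial twist that the error pair $(\bm{\Delta_x},\bm{\Delta_v})$ no longer lies in a descent cone. Instead, I will derive a \emph{cone-like} constraint from the KKT-type inequality produced by the unconstrained Lasso together with the decomposability assumption (\cref{assump4}), and then bound the compatibility ratio $f(\bm{\Delta_x})/\|\bm{\Delta_x}\|_2$ uniformly by $\alpha_{\bm{x}}$ (analogously for $\bm{v}$). This error-cone characterization is what will let me apply the QPE and matrix deviation tools from \cref{appenA} to a tractable set.

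First, I will start from the optimality of $(\bm{\hat{x}},\bm{\hat{v}})$ to write
$\|\bm{\Phi\Delta_x}+\sqrt{m}\bm{\Delta_v}\|_2^2 \le 2\langle\bm{\epsilon}+\bm{\xi}_{\bm{x^\star},\bm{v^\star}},\bm{\Phi\Delta_x}+\sqrt{m}\bm{\Delta_v}\rangle+\lambda_1[f(\bm{x^\star})-f(\bm{\hat{x}})]+\lambda_2[g(\bm{v^\star})-g(\bm{\hat{v}})]$.
Decomposing $\bm{\Delta_x}=\bm{\Delta_x}^{\parallel}+\bm{\Delta_x}^{\perp}$ with $\bm{\Delta_x}^{\parallel}\in\overline{\mathcal{X}}_{\bm{x^\star}}$ and $\bm{\Delta_x}^{\perp}\in\overline{\mathcal{X}}_{\bm{x^\star}}^{\perp}$, the decomposability \cref{3.39} combined with the reverse triangle inequality gives $f(\bm{x^\star})-f(\bm{\hat{x}})\le f(\bm{\Delta_x}^{\parallel})-f(\bm{\Delta_x}^{\perp})$, and analogously for $g$. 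The dual-norm step then upper bounds the inner product by $\sup_{\bm{c}\in\mathbb{B}_f^n}\langle\bm{\epsilon}+\bm{\xi}_{\bm{x^\star},\bm{v^\star}},\bm{\Phi c}\rangle\cdot f(\bm{\Delta_x})+\sup_{\bm{d}\in\mathbb{B}_g^m}\langle\bm{\epsilon}+\bm{\xi}_{\bm{x^\star},\bm{v^\star}},\sqrt{m}\bm{d}\rangle\cdot g(\bm{\Delta_v})$.

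Second, I will prove the \emph{uniform} dual-norm bounds
$\sup_{(\bm{a},\bm{b})\in\mathcal{K}_{\bm{x}}\times\mathcal{K}_{\bm{v}}}\sup_{\bm{c}\in\mathbb{B}_f^n}\langle\bm{\epsilon}+\bm{\xi}_{\bm{a},\bm{b}},\bm{\Phi c}\rangle\le \tfrac{\lambda_1}{4}$
and the $g$-analogue, by splitting into an $\bm{\epsilon}$-piece (handled via \cref{pro4} and the sub-Gaussian norm bound on $\|\bm{\epsilon}\|_2$, producing the $(E+\delta)\sqrt{m}\,\omega(\mathbb{B}_f^n)$ part of $\lambda_1$) and a quantization-noise piece $\sup\langle\bm{\xi}_{\bm{a},\bm{b}},\bm{\Phi c}\rangle$ (handled by the global QPE statement \cref{coro:qpe_structured} applied with the ambient direction set being $\mathbb{B}_f^n$ instead of $\mathcal{D}_{\bm{x}}^{*}$, which produces the $\delta\sqrt{m}\sqrt{\mathscr{H}(\mathcal{K}_{\bm{x}},\rho_1)+\mathscr{H}(\mathcal{K}_{\bm{v}},\rho_2)}$ part). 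Feeding these bounds back, the basic inequality absorbs $\pm\tfrac{\lambda_1}{4}f(\bm{\Delta_x})$ and $\pm\tfrac{\lambda_2}{4}g(\bm{\Delta_v})$, leaving
\[
\|\bm{\Phi\Delta_x}+\sqrt{m}\bm{\Delta_v}\|_2^2 \le \tfrac{3\lambda_1}{2}f(\bm{\Delta_x}^{\parallel})-\tfrac{\lambda_1}{2}f(\bm{\Delta_x}^{\perp})+\tfrac{3\lambda_2}{2}g(\bm{\Delta_v}^{\parallel})-\tfrac{\lambda_2}{2}g(\bm{\Delta_v}^{\perp}).
\]
This simultaneously yields the cone condition $f(\bm{\Delta_x}^{\perp})\le 3 f(\bm{\Delta_x}^{\parallel})$ (hence $f(\bm{\Delta_x})\le 4\alpha_{\bm{x}}\|\bm{\Delta_x}\|_2$ by $\alpha_f(\overline{\mathcal{X}}_{\bm{x^\star}})\le\alpha_{\bm{x}}$), the analogue for $g$, and, together with the assumed $f(\cdot)\ge\|\cdot\|_2$, confines $\bm{\Delta_x}/\|\bm{\Delta_x}\|_2$ to the set $\mathcal{T}_{\bm{x}}:=\{\bm{u}\in\mathbb{S}^{n-1}:f(\bm{u})\le 4\alpha_{\bm{x}}\}\subset 4\alpha_{\bm{x}}\mathbb{B}_f^n\cap\mathbb{S}^{n-1}$, and analogously for $\bm{\Delta_v}$.

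Third, I will invoke the extended matrix deviation inequality (\cref{pro1}) on the normalized error set $\mathcal{T}_{\bm{x}}\times\mathcal{T}_{\bm{v}}\cap\mathbb{S}^{m+n-1}$, whose Gaussian complexity is controlled by $\alpha_{\bm{x}}\omega(\mathbb{B}_f^n)+\alpha_{\bm{v}}\omega(\mathbb{B}_g^m)$. The sample-size hypothesis \cref{eq:thm2sample} (which is exactly the one needed to absorb these mixed $\alpha_{\bm{x}},\alpha_{\bm{v}},\omega(\mathbb{B}_f^n),\omega(\mathbb{B}_g^m)$ cross-terms) then delivers $\|\bm{\Phi\Delta_x}+\sqrt{m}\bm{\Delta_v}\|_2^2\gtrsim m(\|\bm{\Delta_x}\|_2^2+\|\bm{\Delta_v}\|_2^2)$. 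Combining with the right-hand upper bound $\tfrac{3\lambda_1}{2}f(\bm{\Delta_x}^{\parallel})+\tfrac{3\lambda_2}{2}g(\bm{\Delta_v}^{\parallel})\le \tfrac{3}{2}(\lambda_1\alpha_{\bm{x}}\|\bm{\Delta_x}\|_2+\lambda_2\alpha_{\bm{v}}\|\bm{\Delta_v}\|_2)$ and rearranging yields \cref{thm2bound}. The failure probability is the union of the QPE, noise-concentration and matrix-deviation events, which matches \cref{eq:thm2_prob}.

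The hardest part will be the second step: transferring the global QPE estimate from descent cones (as used in \cref{thm1}) to the genuinely larger balls $\mathbb{B}_f^n$ and $\mathbb{B}_g^m$, and verifying that the prescribed $(\zeta,\rho_1,\rho_2)$ conditions \cref{eq:qpecon1un}--\cref{eq:qpecon2un} are strong enough for \cref{coro:qpe_structured} to yield a bound proportional to $\delta\sqrt{m}\,\omega(\mathbb{B}_f^n)$ rather than a looser bound that would blow up the required regularization. Once this uniform dual-norm control is in hand, the rest of the argument is a routine adaptation of the decomposable-norm analysis to the uniform setting.
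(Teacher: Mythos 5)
Your roadmap tracks the paper's proof fairly closely in Steps~1 and~2: bounding the $\bm{\epsilon}$-parts via \cref{pro4} and the sub-Gaussian norm of $\|\bm{\epsilon}\|_2$, bounding the quantization-noise parts via \cref{coro:qpe_structured} applied over $\mathbb{B}_f^n$ and $\mathbb{B}_g^m$, then absorbing $\tfrac{\lambda_1}{4}f(\bm{\Delta_x})$, $\tfrac{\lambda_2}{4}g(\bm{\Delta_v})$ into the basic inequality and using decomposability to reach
\[
\|\bm{\Phi\Delta_x}+\sqrt{m}\bm{\Delta_v}\|_2^2 \le \tfrac{3\lambda_1}{2}f(\bm{\Delta_x}^{\parallel})-\tfrac{\lambda_1}{2}f(\bm{\Delta_x}^{\perp})+\tfrac{3\lambda_2}{2}g(\bm{\Delta_v}^{\parallel})-\tfrac{\lambda_2}{2}g(\bm{\Delta_v}^{\perp}).
\]
This is exactly the paper's \cref{eq:final_use}--\cref{3.522} up to trivial rearrangement, and your choice of $\lambda_1,\lambda_2$ to dominate the uniform dual-norm suprema matches the paper's \cref{eq:lam12ge}.

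There is, however, a genuine gap where you extract the constraint set. From the display above and $\|\bm{\Phi\Delta_x}+\sqrt{m}\bm{\Delta_v}\|_2^2\ge 0$, one obtains only the \emph{coupled} inequality
\[
\lambda_1 f(\bm{\Delta_x}^{\perp})+\lambda_2 g(\bm{\Delta_v}^{\perp}) \le 3\lambda_1 f(\bm{\Delta_x}^{\parallel})+3\lambda_2 g(\bm{\Delta_v}^{\parallel}),
\]
which does \emph{not} split into the separate cone conditions $f(\bm{\Delta_x}^{\perp})\le 3 f(\bm{\Delta_x}^{\parallel})$ and $g(\bm{\Delta_v}^{\perp})\le 3 g(\bm{\Delta_v}^{\parallel})$ as you assert. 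One of the two perpendicular components could be very large provided the other is correspondingly small, so $\bm{\Delta_x}/\|\bm{\Delta_x}\|_2$ is not confined to $\{\bm{u}\in\mathbb{S}^{n-1}:f(\bm{u})\le 4\alpha_{\bm{x}}\}$ and your candidate constraint set $\mathcal{T}_{\bm{x}}\times\mathcal{T}_{\bm{v}}$ does not contain the error. (As a side issue, $\mathcal{T}_{\bm{x}}\times\mathcal{T}_{\bm{v}}\cap\mathbb{S}^{m+n-1}$ is actually empty, since each factor already has unit Euclidean norm.) The paper instead keeps the coupling: it defines the product-space cone $\mathcal{C}(\lambda_1,\lambda_2)=\{(\bm{c},\bm{d}):\lambda_1 f(\bm{c})+\lambda_2 g(\bm{d})\le 4\lambda_1\alpha_{\bm{x}}\|\bm{c}\|_2+4\lambda_2\alpha_{\bm{v}}\|\bm{d}\|_2\}$ and bounds $\omega(\mathcal{C}^*)$ via \cref{pro9}, which necessarily produces the cross-ratio terms $\frac{\lambda_2\alpha_{\bm{v}}}{\lambda_1}\omega(\mathbb{B}_f^n)$ and $\frac{\lambda_1\alpha_{\bm{x}}}{\lambda_2}\omega(\mathbb{B}_g^m)$ that appear in the sample-size hypothesis \cref{eq:thm2sample}. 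Your analysis would give only $\alpha_{\bm{x}}\omega(\mathbb{B}_f^n)+\alpha_{\bm{v}}\omega(\mathbb{B}_g^m)$, leaving the cross terms unexplained and, more seriously, resting the lower bound on $\|\bm{\Phi\Delta_x}+\sqrt{m}\bm{\Delta_v}\|_2$ on a set that may fail to contain $(\bm{\Delta_x},\bm{\Delta_v})$. Replacing $\mathcal{T}_{\bm{x}}\times\mathcal{T}_{\bm{v}}$ by $\mathcal{C}(\lambda_1,\lambda_2)\cap\mathbb{S}^{n+m-1}$ and appealing to \cref{pro9} repairs the argument and matches the paper exactly.
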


\begin{proof}
    We   assume that $\bm{\Delta_x}$ and $\bm{\Delta_v}$ are non-zero without loss of generality. Note that the optimality of $(\bm{\hat{x}},\bm{\hat{v}})$ gives
    \begin{align}\label{eq:un_optimal}
    \|\bm{\dot{y}}-\bm{\Phi \hat{x}}-\sqrt{m}\bm{\hat{v}}\|_2^2+\lambda_1f(\bm{\hat{x}})+\lambda_2g(\bm{\hat{v}})\leq \|\bm{\dot{y}}-\bm{\Phi x^\star}-\sqrt{m}\bm{v^\star}\|_2^2+\lambda_1f(\bm{x^\star})+\lambda_2g(\bm{v^\star}).     
    \end{align}
     We perform some   calculation and reformulate the inequality as follows: 
        \begin{align}
            \label{3.44}&\|\bm{\Phi\Delta_x}+\sqrt{m}\bm{\Delta_v}\|_2^2\leq 2\langle\bm{\dot{y}}-\bm{\Phi x^\star}-\sqrt{m}\bm{v^\star},\bm{\Phi\Delta_x}+\sqrt{m}\bm{\Delta_v}\rangle\\\label{eq:351}
            &~~~~~~~~~~~~~~~~~~~\quad\quad\quad\quad+\lambda_1\big(f(\bm{x^\star})-f(\bm{\hat{x}})\big)+\lambda_2\big(g(\bm{v^\star})-g(\bm{\hat{v}})\big)\\\label{eq:2nd_ine}
            &\leq 2\Big(f(\bm{\Delta_x})\cdot\sup_{\bm{c}\in \mathbb{B}_f^n}\langle\bm{\epsilon}+\bm{\xi}_{\bm{x^\star},\bm{v^\star}},\bm{\Phi c}\rangle+ g(\bm{\Delta_v})\cdot \sup_{\bm{d}\in \mathbb{B}_g^m}\langle\bm{\epsilon}+\bm{\xi}_{\bm{x^\star},\bm{v^\star}},\sqrt{m}\bm{d}\rangle\Big)\\
        &~~~~~~~~~~~~~~~~~~~\quad\quad\quad\quad+\lambda_1\big(f(\bm{x^\star})-f(\bm{\hat{x}})\big)+\lambda_2\big(g(\bm{v^\star})-g(\bm{\hat{v}})\big)\\
        &\le 2 \Big(f(\bm{\Delta_x})\cdot \big[I_1+I_2\big]+ g(\bm{\Delta_v})\cdot\big[I_3+I_4\big]\Big)+ \lambda_1\big(f(\bm{x^\star})-f(\bm{\hat{x}})\big)+\lambda_2\big(g(\bm{v^\star})-g(\bm{\hat{v}})\big) \label{eq:last_ineq}
        \end{align}
    where the first inequality \cref{3.44}--\cref{eq:351} is obtained from \cref{eq:un_optimal} by substituting $\bm{\hat{x}}= \bm{x^\star}+\bm{\Delta_x}$ and $\bm{\hat{v}}= \bm{v^\star}+\bm{\Delta_v}$ into $\|\bm{\dot{y}}-\bm{\Phi\hat{x}}-\sqrt{m}\bm{\hat{v}}\|_2^2$ and then expanding the square; then, in \cref{eq:2nd_ine} we substitute $\bm{\dot{y}}-\bm{\Phi x^\star}-\sqrt{m}\bm{v^\star}=\bm{\epsilon}+\bm{\xi}_{\bm{x^\star},\bm{v^\star}}$ \cref{eq:quan_noise} and then take the supremum over $\bm{c} = \frac{\bm{\Delta_x}}{f(\bm{\Delta_x})}\in \mathbb{B}_f^n$ and $\bm{d} = \frac{\bm{\Delta_v}}{g(\bm{\Delta_v})}\in \mathbb{B}_g^m$; moreover, in \cref{eq:last_ineq} we further take the supremum with respect to $\bm{\xi}_{\bm{x^\star},\bm{v^\star}}$ over $(\bm{x^\star},\bm{v^\star})\in \mathcal{K}_{\bm{x}}\times \mathcal{K}_{\bm{v}}$ and introduce the shorthand 
    \begin{gather}\label{eq:I1I2_uncon}
         I_1: = \sup_{\bm{c}\in \mathbb{B}_f^n}\langle \bm{\epsilon},\bm{\Phi c}\rangle,~I_2 := \sup_{\bm{a}\in \mathcal{K}_{\bm{x}}}\sup_{\bm{b}\in \mathcal{K}_{\bm{v}}}\sup_{\bm{c}\in \mathbb{B}_f^n}\langle\bm{\xi}_{\bm{a},\bm{b}},\bm{\Phi c}\rangle,\\\label{eq:I3I4_uncon}
         I_3:=\sup_{\bm{d}\in \mathbb{B}_g^m}\langle\bm{\epsilon},\sqrt{m}\bm{d}\rangle,~I_4 := \sup_{\bm{a}\in \mathcal{K}_{\bm{x}}}\sup_{\bm{b}\in \mathcal{K}_{\bm{v}}}\sup_{\bm{d}\in \mathbb{B}_g^m}\langle \bm{\xi}_{\bm{a},\bm{b}},\sqrt{m}\bm{d}\rangle,
    \end{gather}
    where $\bm{\xi}_{\bm{a},\bm{b}}$ is the quantization noise as per \cref{eq:xiabfirst}. We pause to provide an outline of the remainder of this proof:
    \begin{itemize}
    [leftmargin=5ex,topsep=0.25ex]
        \item {\bf Step 1:}  Bounding  $I_1,I_2,I_3,I_4$ in \cref{eq:I1I2_uncon}--\cref{eq:I3I4_uncon}
        by the techniques similar to those in the proof of \cref{thm1}.  The high-probability bounds imply  $\lambda_1\geq 4(I_1+I_2)$, $\lambda_2\geq 4(I_3+I_4)$; 

        \item {\bf Step 2:} Based on   \cref{assump4},   identifying the constraint set  where $(\bm{\Delta_x},\bm{\Delta_v})$ lives and establishing a uniform lower bound on $\|\bm{\Phi\Delta_x}+\sqrt{m}\bm{\Delta_v}\|_2^2$ via \cref{pro1};  

        \item {\bf Step 3:}   Combining everything to conclude the proof. 
    \end{itemize}

    \subsubsection*{Step 1: Bounding $I_1,I_2,I_3,I_4$} 
    Similarly to the proof of \cref{thm1}, we bound the error terms associated with $\bm{\epsilon}$ (i.e., $I_1,I_3$) via \cref{pro2} and \cref{pro4}, and bound the terms associated with quantization noise (i.e., $I_2,I_4$) via QPE property specialized to structured sets (\cref{coro:qpe_structured}).

    {\bf Bounding $I_{1}$:} Note that 
    $\mathbb{B}_f^n\subset \mathbb{B}_2^n$ since $f(\bm{x})\geq \|\bm{x}\|_2$ holds for all $\bm{x}\in \mathbb{R}^n$. 
    Conditioning on $\bm{\epsilon}$,  \cref{pro4} provides that $I_{1}\leq C_1\|\bm{\epsilon}\|_2\cdot \omega(\mathbb{B}_f^n)$ holds with probability exceeding $1-\exp(-\omega^2(\mathbb{B}_f^n))$. Moreover, we can still bound $\|\bm{\epsilon}\|_2$ as in \cref{3.1999}, which implies $\|\bm{\epsilon}\|_2=O(E\sqrt{m})$ with probability exceeding $1-\exp(-\Omega(m))$. Thus, the bound on $I_1$ \begin{align}
        \label{eq:un_I1_bound}
        I_1 \lesssim E \sqrt{m}\cdot\omega(\mathbb{B}_f^n)
    \end{align} 
    holds with probability exceeding $1-\exp(-\omega^2(\mathbb{B}_f^n))-\exp(-\Omega(m))$.

   {\bf Bounding $I_{2}$:} With the assumptions \cref{eq:qpecon1un}--\cref{eq:qpecon2un} and \cref{eq:thm2sample}, we can invoke \cref{coro:qpe_structured} with $(\mathcal{A},\mathcal{B},\mathcal{E}) = (\mathcal{K}_{\bm{x}},\mathcal{K}_{\bm{v}},\mathbb{B}_f^n\times\{0\})$, yielding that the event \begin{align}
       I_2 \lesssim \delta\sqrt{m}\cdot \Big(\omega(\mathbb{B}_f^n)+\sqrt{\mathscr{H}(\mathcal{K}_{\bm{x}},\rho_1)+\mathscr{H}(\mathcal{K}_{\bm{v}},\rho_2)}\Big)\label{eq:bound_I2_un}
   \end{align}
    holds with probability exceeding $1-12\exp(-\Omega(\mathscr{H}(\mathcal{K}_{\bm{x}},\rho_1)+\mathscr{H}(\mathcal{K}_{\bm{v}},\rho_2)))$. 
    
    \textbf{Bounding $I_{3}$:} We note that  $\mathbb{B}_g^m\subset \mathbb{B}_2^m$ due to $g(\bm{v})\geq \|\bm{v}\|_2$ holds for any $\bm{v}\in \mathbb{R}^m$. Analogously to ``Bounding $I_3$'' in the proof of \cref{thm1},  we can apply   \cref{pro2} to obtain to get $I_3\lesssim E\sqrt{m}(\omega(\mathbb{B}_g^m)+t)$ with probability exceeding $1-2\exp(-t^2)$. Setting $t= \omega(\mathbb{B}_g^m)$ yields the bound on $I_3$\begin{align}
        \label{eq:bound_I3_un}
        I_3 \lesssim E\sqrt{m}\cdot\omega(\mathbb{B}_g^m)
    \end{align}
    with probability exceeding $1-\exp(-\omega^2(\mathbb{B}_g^m))$.

    {\bf Bounding $I_4$:} With the assumptions \cref{eq:qpecon1un}--\cref{eq:qpecon2un} and \cref{eq:thm2sample}, we can invoke \cref{coro:qpe_structured} with $(\mathcal{A},\mathcal{B},\mathcal{E})=(\mathcal{K}_{\bm{x}},\mathcal{K}_{\bm{v}},\{0\}\times \mathbb{B}_g^m)$. This yields that the event 
    \begin{align}
       I_4 \lesssim \delta\sqrt{m}\cdot \Big(\omega(\mathbb{B}_g^m)+\sqrt{\mathscr{H}(\mathcal{K}_{\bm{x}},\rho_1)+\mathscr{H}(\mathcal{K}_{\bm{v}},\rho_2)}\Big)\label{eq:bound_I4_un}
   \end{align}
    holds with probability exceeding $1-12\exp(-\Omega(\mathscr{H}(\mathcal{K}_{\bm{x}},\rho_1)+\mathscr{H}(\mathcal{K}_{\bm{v}},\rho_2)))$.

    Compared to our choice of $(\lambda_1,\lambda_2)$ in  \cref{eq:lam1choice}--\cref{eq:lam2choice}, we arrive at
    \begin{align}
    \lambda_1\geq 4(I_1+I_2),~  \lambda_2\geq 4(I_3+I_4) \label{eq:lam12ge}  
    \end{align}
    that hold with the promised probability.

  \subsubsection*{Step 2: Constraining Estimation Errors} Substituting \cref{eq:lam12ge} into 
  \cref{eq:last_ineq}  yields 
  \begin{align}
      \|\bm{\Phi\Delta_x}+\sqrt{m}\bm{\Delta_v}\|_2^2 \le \frac{\lambda_1}{2}f(\bm{\Delta_x})+\frac{\lambda_2}{2}g(\bm{\Delta_v})+\lambda_1\big(f(\bm{x^\star})-f(\bm{\hat{x}})\big)+ \lambda_2 \big(g(\bm{v^\star})-g(\bm{\hat{v}})\big). \label{eq:final_use}
  \end{align}
  Since $\|\bm{\Phi\Delta_x}+\sqrt{m}\bm{\Delta_v}\|^2_2\geq 0$, we obtain 
  \begin{equation}\label{3.49}
      \lambda_1\big(f(\bm{\hat{x}})-f(\bm{x^\star})\big)+\lambda_2\big(g(\bm{\hat{v}})-g(\bm{v^\star})\big)\leq \frac{\lambda_1}{2}f(\bm{\Delta_x})+\frac{\lambda_2}{2}g(\bm{\Delta_v}).
  \end{equation}
  For a linear subspace $\mathcal{X}$, we use $\mathcal{P}_{\mathcal{X}}(\cdot)$ to denote the projection onto $\mathcal{X}$ under $\ell_2$-norm. Given any $\bm{x^\star}\in \mathcal{K}_{\bm{x}}$, we can pick a pair of linear subspaces $(\mathcal{X},\overline{\mathcal{X}}):=(\mathcal{X}_{\bm{x^\star}},\overline{\mathcal{X}}_{\bm{x^\star}})$   as in   \cref{assump4} (they depend on $\bm{x^\star}$, while we will proceed with $(\mathcal{X},\overline{\mathcal{X}})$ to avoid cumbersome notation) such that $f(\cdot)$ is decomposable over $(\mathcal{X},\overline{\mathcal{X}}^\bot)$.  
  Then we can proceed as   
      \begin{align} 
          f(\bm{\hat{x}}) =f(\bm{x^\star}+\bm{\Delta_x}) &=f(\bm{x^\star}+\mathcal{P}_{\overline{\mathcal{X}}}\bm{\Delta_x}+\mathcal{P}_{\overline{\mathcal{X}}^\bot}\bm{\Delta_x}) \\&\label{eq:use_tri1}\geq f(\bm{x^\star}+\mathcal{P}_{\overline{\mathcal{X}}^\bot}\bm{\Delta_x})-f(\mathcal{P}_{\overline{\mathcal{X}}}\bm{\Delta_x})\\\label{eq:use_decom1}&=f(\bm{x^\star})+f(\mathcal{P}_{\overline{\mathcal{X}}^\bot}\bm{\Delta_x})-f(\mathcal{P}_{\overline{\mathcal{X}}}\bm{\Delta_x}),
      \end{align}
   where in \cref{eq:use_tri1} we use   triangle inequality and in \cref{eq:use_decom1} we use the decomposibility of $f(\cdot)$ as per \cref{3.39}. Thus, we arrive at
\begin{align}\label{eq:flower}
    f(\bm{\hat{x}})-f(\bm{x^\star})\geq f(\mathcal{P}_{\overline{\mathcal{X}}^\bot}\bm{\Delta_x})-f(\mathcal{P}_{\overline{\mathcal{X}}}\bm{\Delta_x}).
\end{align}  Regarding the corruption, there exists $(\mathcal{V},\overline{\mathcal{V}}):=(\mathcal{V}_{\bm{v^\star}},\overline{\mathcal{V}}_{\bm{v^\star}})$ for a given $\bm{v^\star}\in \mathcal{K}_{\bm{v}}$ as in   \cref{assump4} such that $g(\cdot)$ is decomposable over $(\mathcal{V},\overline{\mathcal{V}}^\bot)$, and we can similarly show  
\begin{align}\label{eq:glower}
g(\bm{\hat{v}})-g(\bm{v^\star})\geq g(\mathcal{P}_{\overline{\mathcal{V}}^\bot}\bm{\Delta_v})-g(\mathcal{P}_{\overline{\mathcal{V}}}\bm{\Delta_v}).    
\end{align}
  Substituting \cref{eq:flower}--\cref{eq:glower} into the left-hand side of \cref{3.49}, and then use triangle inequality $f(\bm{\Delta_x})\leq f(\mathcal{P}_{\overline{\mathcal{X}}^\bot}\bm{\Delta_x})+f(\mathcal{P}_{\overline{\mathcal{X}}}\bm{\Delta_x})$ and $g(\bm{\Delta_v})\leq g(\mathcal{P}_{\overline{\mathcal{V}}^\bot}\bm{\Delta_v})+g(\mathcal{P}_{\overline{\mathcal{V}}}\bm{\Delta_v})$ in the right-hand side, we obtain 
  \begin{equation}
      \label{3.522}\lambda_1f(\mathcal{P}_{\overline{\mathcal{X}}^\bot}\bm{\Delta_x})+\lambda_2g(\mathcal{P}_{\overline{\mathcal{V}}^\bot}\bm{\Delta_v})\leq 3\lambda_1f(\mathcal{P}_{\overline{\mathcal{X}}}\bm{\Delta_x})+3\lambda_2g(\mathcal{P}_{\overline{\mathcal{V}}}\bm{\Delta_v}).
  \end{equation} Based on this, we can derive the following
      \begin{align}  &\lambda_1f(\bm{\Delta_x})+\lambda_2g(\bm{\Delta_v})\\\label{eq:use_tri2}\leq&\lambda_1f(\mathcal{P}_{\overline{\mathcal{X}}}\bm{\Delta_x})+ \lambda_1f(\mathcal{P}_{\overline{\mathcal{X}}^\bot}\bm{\Delta_x})+\lambda_2g(\mathcal{P}_{\overline{\mathcal{V}}}\bm{\Delta_v})+ \lambda_2g(\mathcal{P}_{\overline{\mathcal{V}}^\bot}\bm{\Delta_v})\\ \label{eq:usebound}
      \le &  4\lambda_1f(\mathcal{P}_{\overline{\mathcal{X}}}\bm{\Delta_x})+4\lambda_2g(\mathcal{P}_{\overline{\mathcal{V}}}\bm{\Delta_v})\\\label{eq:use_compatible}\le& 4\lambda_1\alpha_{\bm{x}}\|\bm{\Delta_x}\|_2+4\lambda_2\alpha_{\bm{v}}\|\bm{\Delta_v}\|_2,
      \end{align}
  where \cref{eq:use_tri2} follows from triangle inequality,   in \cref{eq:usebound} we substitute \cref{3.522}, and moreover, \cref{eq:use_compatible} is due to $\alpha_f(\overline{\mathcal{X}})\leq \alpha_{\bm{x}} $ and $\alpha_g(\overline{\mathcal{V}})\leq \alpha_{\bm{v}}$ that hold uniformly for all $(\bm{x^\star},\bm{v^\star})$   (\cref{assump4}); in more detail, e.g., $f(\mathcal{P}_{\overline{\mathcal{X}}}\bm{\Delta_x})\le \alpha_f(\overline{\mathcal{X}})\|\mathcal{P}_{\overline{\mathcal{X}}}\bm{\Delta_x}\|_2\le \alpha_{\bm{x}}\|\bm{\Delta_x}\|_2$.
  Therefore, we   arrive at \begin{align}
      \label{eq:constrain_err}(\bm{\Delta_x},\bm{\Delta_v})\in \mathcal{C}(\lambda_1,\lambda_2),~\forall (\bm{x^\star},\bm{v^\star})\in \mathcal{K}_{\bm{x}}\times \mathcal{K}_{\bm{v}},
  \end{align}  
  (it is easy to check that each step is uniform for all pairs of $(\bm{x^\star},\bm{v^\star})$,) where the constraint set is given by \begin{align}
      \mathcal{C}(\lambda_1,\lambda_2)=\big\{(\bm{c},\bm{d})\in\mathbb{R}^n\times \mathbb{R}^m:\lambda _1f(\bm{c})+\lambda_2g(\bm{d})\leq 4\lambda_1\alpha_{\bm{x}}\|\bm{c}\|_2+4\lambda_2\alpha_{\bm{v}}\|\bm{d}\|_2\big\}.\label{eq:Clam12}
  \end{align}

\textbf{Uniform Lower Bound on $\|\bm{\Phi\Delta_x}+\sqrt{m}\bm{\Delta_v}\|_2^2$:} Equipped with \cref{eq:constrain_err}, we are now able to establish a uniform lower bound on $\|\bm{\Phi\Delta_x}+\sqrt{m}\bm{\Delta_v}\|_2^2$. Note that $\mathcal{C}(\lambda_1,\lambda_2)$ is a cone, and we   localize it as  $\mathcal{C}^*=\mathcal{C}(\lambda_1,\lambda_2)\cap \mathbb{S}^{n+m-1}$. Then we invoke   \cref{pro1} with $\mathcal{T}=\mathcal{C}^*$ and $t= \omega(\mathbb{B}_f^n)$ to obtain that the event\footnote{By \cref{pro1} we can bound the left-hand side of \cref{3.53} by $O(\gamma(\mathcal{C}^*)+\omega(\mathbb{B}_f^n))$. Note that $\bm{v}\in\mathcal{C}^*$ implies  $-\bm{v}\in \mathcal{C}^*$, hence let $\bm{g}\sim \mathcal{N}(0,\bm{I}_{n+m})$ we have $\omega(\mathcal{C}^*)=\mathbbm{E}\sup_{\bm{v}\in \mathcal{C}^*}(\bm{g}^\top\bm{v})\ge \mathbbm{E}\sup_{\bm{v}\in \mathcal{C}^*}\max\{\bm{g}^\top \bm{v},\bm{g}^\top(-\bm{v})\}=\mathbbm{E}\sup_{\bm{v}\in \mathcal{C}^*}|\bm{g}^\top \bm{v}|=\gamma(\mathcal{C}^*)$. Thus, we arrive at the bound given in the right-hand side of \cref{3.53}.}
\begin{equation} \label{3.53}
\sup_{(\bm{c},\bm{d})\in\mathcal{C}^*}\Big|\|\bm{\Phi c}+\sqrt{m}\bm{d}\|_2-\sqrt{m}\Big|\leq C _1\big(\omega(\mathcal{C}^*)+\omega(\mathbb{B}_f^n)\big)
\end{equation} 
holds with  probability exceeding $1-\exp(-\omega^2(\mathbb{B}_f^n))$. Note that  \cref{pro9} provides a bound on $\omega(\mathcal{C}^*)$, which indicates that \cref{eq:thm2sample} implies $m\gtrsim \omega^2(\mathcal{C}^*)+\omega^2(\mathbb{B}_f^n)$. Thus, we can assume that the right-hand side of \cref{3.53} is bounded by $\frac{1}{2}\sqrt{m}$ and achieve 
\begin{align}
    \inf_{(\bm{c},\bm{d})\in\mathcal{C}^*}\|\bm{\Phi c}+\sqrt{m}\bm{d}\|_2 &\ge \sqrt{m} - \sup_{(\bm{c},\bm{d})\in \mathcal{C}^*}\big|\|\bm{\Phi c}+\sqrt{m}\bm{d}\|_2-\sqrt{m}\big|\\
    &\ge \sqrt{m}-\frac{\sqrt{m}}{2}= \frac{\sqrt{m}}{2}.
\end{align}
Combining with \cref{eq:constrain_err}, we obtain 
\begin{align}
    \|\bm{\Phi\Delta_x}+\sqrt{m}\bm{\Delta_v}\|_2^2 & \ge  \big(\|\bm{\Delta_x}\|_2^2+\|\bm{\Delta_v}\|_2^2\big) \inf_{(\bm{c},\bm{d})\in\mathcal{C}^*}\|\bm{\Phi c}+\sqrt{m}\bm{d}\|_2 ^2 \\
    &\ge \frac{m}{4}\big(\|\bm{\Delta_x}\|_2^2+\|\bm{\Delta_v}\|_2^2\big).\label{eq:un_lower} 
\end{align}

\subsubsection*{Step 3: Combining Everything} 
We derive the desired bound by bounding both sides of \cref{eq:final_use}: By \cref{eq:un_lower}, the left-hand side is uniformly lower bounded by $\frac{m}{4}(\|\bm{\Delta_x}\|_2^2+\|\bm{\Delta_v}\|_2^2)$; Using triangle inequality and \cref{eq:use_compatible} in \cref{eq:386}, along with simple algebra in \cref{eq:387}, the right-hand side  can be uniformly bounded by 
    \begin{align}\label{upperthm2}
        &\frac{\lambda_1}{2} f(\bm{\Delta_x})+\frac{\lambda_2}{2}g(\bm{\Delta_v})+\lambda_1\big(f(\bm{x^\star})-f(\bm{\hat{x}})\big)+\lambda_2\big(g(\bm{v^\star})-g(\bm{\hat{v}})\big)
        \\& \label{eq:386}\quad \leq \frac{3\lambda_1}{2}f(\bm{\Delta_x})+ \frac{3\lambda_2}{2}g(\bm{\Delta_v})\leq 6\big(\lambda_1\alpha_{\bm{x}}\|\bm{\Delta_x}\|_2+\lambda_2\alpha_{\bm{v}}\|\bm{\Delta_v}\|_2\big)\\&\label{eq:387} \quad\le 6(\lambda_1\alpha_{\bm{x}}+\lambda_2\alpha_{\bm{v}})\big(\|\bm{\Delta_x}\|_2^2+\|\bm{\Delta_v}\|_2^2\big)^{1/2}
    \end{align}
    Rearranging immediately yields the claim that holds with the promised probability as per \cref{eq:thm2_prob}.  
\end{proof}

Substituting \cref{eq:lam1choice}--\cref{eq:lam2choice} into \cref{thm2bound} yields the following more explicit uniform bound on $(\|\bm{\Delta_x}\|_2^2+\|\bm{\Delta_v}\|_2^2)^{1/2}$
\begin{align}\label{eq:explicit_bound}
     O\left( \frac{(E+\delta)\big(\alpha _{\bm{x}}\cdot\omega(\mathbb{B}^n_f)+\alpha_{\bm{v}}\cdot\omega(\mathbb{B}^m_g)\big)+\delta(\alpha_{\bm{x}}+\alpha_{\bm{v}})\big(\mathscr{H}(\mathcal{K}_{\bm{x}},\rho_1)+\mathscr{H}(\mathcal{K}_{\bm{v}},\rho_2)\big)^{1/2}}{\sqrt{m}}\right). 
\end{align}
To see the implication of   \cref{thm2} on structured priors (\cref{assump3}), we substitute  \cref{3.77} into \cref{eq:explicit_bound} and omit some logarithmic factors on $(\rho_1,\rho_2)$, then we obtain the uniform bound
\begin{equation}
\label{unconstrinbound}\tilde{O}\left(\frac{(E+\delta)\big(\alpha_{\bm{x}}\cdot\omega(\mathbb{B}_f^n)+\alpha_{\bm{v}}\cdot\omega(\mathbb{B}_g^m)\big)+\delta(\alpha_{\bm{x}}+\alpha_{\bm{v}})\big(\omega(\mathcal{K}_{\bm{x}})+\omega(\mathcal{K}_{\bm{v}})\big)}{\sqrt{m}}\right).
\end{equation}
\begin{rem}[The Cost of Uniformity: Unconstrained Lasso with Structured Priors]\label{rem4} 
The non-uniform error rate for \cref{3.5} in \cite[Thm. 2]{sun2022quantized} also reads as $O\big(\frac{\lambda_1\alpha_{\bm{x}}+\lambda_2\alpha_{\bm{v}}}{m}\big)$ (when adjusted to our notation), with the  parameters $\lambda_1\asymp (E+\delta)\sqrt{m}\cdot \omega(\mathbb{B}_f^n)$ and $\lambda_2\asymp (E+\delta)\sqrt{m}\cdot \omega(\mathbb{B}_g^m)$ to guarantee that 
\begin{equation}\label{unconstrainkey}
        \lambda_1\gtrsim \sup_{\bm{c}\in \mathbb{B}_f^n}\langle\bm{\epsilon}+\bm{\xi}_{\bm{x^\star},\bm{v^\star}},\bm{\Phi c}\rangle,~\lambda_2\gtrsim \sup_{\bm{d}\in \mathbb{B}_g^m}\langle\bm{\epsilon}+\bm{\xi}_{\bm{x^\star},\bm{v^\star}},\sqrt{m}\bm{d}\rangle
    \end{equation}
    holds for   a fixed $(\bm{x^\star},\bm{v^\star})$; see Remark 5 therein. Thus, their non-uniform error rate translates into \begin{align}\label{eq:sun_non_un}
        O\left(\frac{(E+\delta)\big(\alpha_{\bm{x}}\cdot\omega(\mathbb{B}_f^n)+\alpha_{\bm{v}}\cdot\omega(\mathbb{B}_g^m)\big)}{\sqrt{m}}\right).
    \end{align} 
    However, to achieve uniformity, we must use sufficiently large $\lambda_1$ and $\lambda_2$     to ensure that \cref{unconstrainkey} holds for all $(\bm{x^\star},\bm{v^\star})$, i.e., 
    \begin{align}
          \lambda_1\gtrsim \sup_{\bm{a}\in \mathcal{K}_{\bm{x}}}\sup_{\bm{b}\in \mathcal{K}_{\bm{v}}}\sup_{\bm{c}\in \mathbb{B}_f^n}\langle\bm{\epsilon}+\bm{\xi}_{\bm{a},\bm{b}},\bm{\Phi c}\rangle,~\lambda_2\gtrsim \sup_{\bm{a}\in \mathcal{K}_{\bm{x}}}\sup_{\bm{b}\in \mathcal{K}_{\bm{v}}}\sup_{\bm{d}\in \mathbb{B}_g^m}\langle\bm{\epsilon}+\bm{\xi}_{\bm{a},\bm{b}},\sqrt{m}\bm{d}\rangle.
    \end{align}
    For this purpose, the second term of $\lambda_1$  in \cref{eq:lam1choice} (or $\lambda_2$ in \cref{eq:lam2choice}) is additional compared to the $\lambda_1$ and $\lambda_2$ in \cite{sun2022quantized}. This leads to the additional term   \begin{align}
        \tilde{O}\left(\frac{\delta(\alpha_{\bm{x}}+\alpha_{\bm{v}})(\omega(\mathcal{K}_{\bm{x}})+\omega(\mathcal{K}_{\bm{v}}))}{\sqrt{m}}\right)\label{eq:un_additional}
    \end{align} in our uniform error rate \cref{unconstrinbound}, which we further discuss as follows: 
    \begin{itemize}
     [leftmargin=5ex,topsep=0.25ex]
        \item Under the regular scaling of $E\asymp \delta$, the additional term \cref{eq:un_additional} is typically dominant, and as a result our uniform rate \cref{unconstrinbound} often exhibits worse dependence on structured parameters than its non-uniform counterpart \cref{eq:sun_non_un}. For instance, in recovery of $s$-sparse $\bm{x^\star}$ we have $\mathcal{K}_{\bm{x}}=\Sigma^n_s\cap \mathbb{B}^n_2$ and $f(\bm{x})=\|\bm{x}\|_1$, and thus  $\omega(\mathcal{K}_{\bm{x}})\asymp \sqrt{s\log\frac{en}{s}}$ and $\omega(\mathbb{B}_f^n)\asymp \sqrt{\log n}$, then the term $\frac{\alpha_{\bm{x}}\cdot \omega(\mathcal{K}_{\bm{x}})}{\sqrt{m}}$ in \cref{eq:un_additional} already loses a factor of $\sqrt{s}$ compared to $\frac{\alpha_{\bm{x}}\cdot \omega(\mathbb{B}_f^n)}{\sqrt{m}}$ in \cref{eq:sun_non_un}. The readers can   clearly see such degradation by comparing   \cite[Coros 3--4]{sun2022quantized} and our \cref{coro3}--\cref{coro4} below.  
        
        \item Nevertheless, to the best of our knowledge,  there exists no   uniform result for unconstrained Lasso   in  quantized compressed/corrupted sensing (see   \cref{rem3} for a review), thereby pointing to the open question on the possibility of improvement.
        \item  Moreover, we observe that the additional term \cref{eq:un_additional} presents a multiplicative factor $\delta$, reflecting that the gap between uniform recovery and non-uniform recovery closes when $\delta$ is extremely small or in an unquantized setting where $\delta=0$. 
    \end{itemize}
\end{rem}

We present some concrete outcomes of   \cref{thm2}, with proofs   deferred to \cref{appendixc}.

\begin{corollary}[Sparse Signal and Sparse Corruption]
    \label{coro3}
 We consider the same settings as   in  \cref{coro1} (i.e., $\bm{x^\star}\in\mathcal{K}_{\bm{x}}=\Sigma^n_s\cap \mathbb{B}_2^n$ with $f(\bm{x})=\|\bm{x}\|_1$, $\bm{v^\star}\in\mathcal{K}_{\bm{v}}=\Sigma^m_k\cap \mathbb{B}_2^m$ with $g(\bm{v})=\|\bm{v}\|_1$), but use \cref{3.5}  as recovery program. We set \begin{equation}
        \begin{aligned}\nonumber
            &\lambda_1=C_1 \delta \sqrt{ms\log\Big(\frac{nm^{3/2}}{s^{5/2}\delta}\Big)+mk\log\Big(\frac{m^{5/2}}{k^{5/2}\delta}\Big)}+C_1(E+\delta)\sqrt{m\log n} ,\\
            &\lambda_2=C_2  \delta \sqrt{ms\log\Big(\frac{nm^{3/2}}{s^{5/2}\delta}\Big)+mk\log\Big(\frac{m^{5/2}}{k^{5/2}\delta}\Big)}+C_2(E+\delta)\sqrt{m\log m} 
        \end{aligned}
    \end{equation}   
    with sufficiently large $C_1,C_2$. If $m\ge C_3(s+k)\log(mn)+C_3 s\log(\frac{nm^{3/2}}{s^{5/2}\delta})+C_3k\log(\frac{m^{5/2}}{k^{5/2}\delta})$ for large enough $C_3$, then with probability exceeding $1-n^{-c_4}-m^{-c_5}$ on a single draw of $(\bm{\Phi},\bm{\epsilon},\bm{\tau})$,  the following uniform error bound holds true for all $(\bm{x^\star},\bm{v^\star})\in \mathcal{K}_{\bm{x}}\times \mathcal{K}_{\bm{v}}$: 
    \begin{equation}\nonumber
        \sqrt{\|\bm{\Delta_{x}} \|_2^2+\|\bm{\Delta_{v}}\|_2^2}\lesssim \frac{(E+\delta)\sqrt{s\log n+k\log m}+\delta \sqrt{(s+k)\big(s\log(\frac{nm^{3/2}}{s^{5/2}\delta})+k\log(\frac{m^{5/2}}{k^{5/2}\delta})\big)}}{\sqrt{m}},
    \end{equation}
    where   $\bm{\Delta_x}=\bm{\hat{x}}-\bm{x^\star},\bm{\Delta_v}=\bm{\hat{v}}-\bm{v^\star}$, $(\bm{\hat{x}},\bm{\hat{v}})$ is the solution to \cref{3.5}.
\end{corollary}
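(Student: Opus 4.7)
My plan is to prove Corollary \ref{coro3} by instantiating Theorem \ref{thm2} with $f(\bm{x})=\|\bm{x}\|_1$, $g(\bm{v})=\|\bm{v}\|_1$, $\mathcal{K}_{\bm{x}}=\Sigma^n_s\cap\mathbb{B}_2^n$, $\mathcal{K}_{\bm{v}}=\Sigma^m_k\cap\mathbb{B}_2^m$, and then carefully evaluating each geometric ingredient. First I would verify the hypotheses of the theorem: $\ell_1$-norm dominates $\ell_2$-norm, and decomposability holds with $\mathcal{X}_{\bm{a}}=\overline{\mathcal{X}}_{\bm{a}}=\{\bm{x}\in\mathbb{R}^n:\supp(\bm{x})\subseteq\supp(\bm{a})\}$ for $\bm{a}\in\mathcal{K}_{\bm{x}}$ (and analogously for $\bm{b}\in\mathcal{K}_{\bm{v}}$); on each such $s$-dimensional coordinate subspace, Cauchy--Schwarz gives $\|\cdot\|_1\le\sqrt{s}\|\cdot\|_2$, so $\alpha_{\bm{x}}=\sqrt{s}$ and $\alpha_{\bm{v}}=\sqrt{k}$ serve as uniform compatibility bounds. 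Standard estimates give $\omega(\mathbb{B}_f^n)\asymp\sqrt{\log n}$, $\omega(\mathbb{B}_g^m)\asymp\sqrt{\log m}$, and $\omega(\mathcal{K}_{\bm{x}})\asymp\sqrt{s\log(en/s)}$, $\omega(\mathcal{K}_{\bm{v}})\asymp\sqrt{k\log(em/k)}$, while \cref{assump3} yields $\mathscr{H}(\mathcal{K}_{\bm{x}},\eta)\lesssim s\log(en/s)\log(1+1/\eta)$ and similarly for $\mathcal{K}_{\bm{v}}$ (covered in \cref{appendixA.2}).

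Next, the technical heart is to choose $(\rho_1,\rho_2)$ satisfying \cref{eq:qpecon1un}--\cref{eq:qpecon2un}. Motivated by the matching pattern in Corollary \ref{coro1}, I would set $\rho_1$ and $\rho_2$ polynomially small in $s/m$, $k/m$, and $\delta$ (for instance $\rho_1\asymp(s/m)^{N}$, $\rho_2\asymp(k/m)^{N}$ for a fixed $N$), calibrated so that $\mathscr{H}(\mathcal{K}_{\bm{x}},\rho_1)\lesssim s\log\bigl(nm^{3/2}/(s^{5/2}\delta)\bigr)$ and $\mathscr{H}(\mathcal{K}_{\bm{v}},\rho_2)\lesssim k\log\bigl(m^{5/2}/(k^{5/2}\delta)\bigr)$, which matches the entropy terms in the stated $(\lambda_1,\lambda_2)$ and sample complexity. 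To verify the middle condition $\omega\bigl((\mathcal{K}_{\bm{x}})^{(\rho_1)}_{\loc}\bigr)\le c_2\zeta\sqrt{m\zeta/\delta}$, I would use the inclusion $(\mathcal{K}_{\bm{x}})^{(\rho_1)}_{\loc}\subset\Sigma^n_{2s}\cap\mathbb{B}_2^n(\rho_1)$, yielding $\omega\lesssim\rho_1\sqrt{s\log(en/s)}$; this is where the choice of $\rho_1$ must be taken small enough, but since the right-hand side $\zeta\sqrt{m\zeta/\delta}$ scales polynomially in $s/m$ as well, the condition amounts to verifying an inequality among polynomial factors, which the stated sample complexity accommodates. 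The leftmost condition $\rho_1\le c_1\zeta/\sqrt{\log(\delta/\zeta)}$ and $\rho_2\le c_3\zeta\sqrt{\zeta/\delta}$ are then automatically satisfied for the prescribed polynomially small choice (absorbing the extra logarithmic factor into the implicit constants $N$).

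With $(\rho_1,\rho_2)$ fixed, I substitute the estimates above into \cref{eq:lam1choice}--\cref{eq:lam2choice} to recover the displayed $(\lambda_1,\lambda_2)$, noting that $(E+\delta)\sqrt{m}\omega(\mathbb{B}_f^n)\asymp(E+\delta)\sqrt{m\log n}$ and $\delta\sqrt{m}\sqrt{\mathscr{H}(\mathcal{K}_{\bm{x}},\rho_1)+\mathscr{H}(\mathcal{K}_{\bm{v}},\rho_2)}\asymp\delta\sqrt{ms\log(nm^{3/2}/(s^{5/2}\delta))+mk\log(m^{5/2}/(k^{5/2}\delta))}$; an analogous identification works for $\lambda_2$. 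The sample complexity \cref{eq:thm2sample} reduces (using $\alpha_{\bm{x}}=\sqrt{s}$, $\alpha_{\bm{v}}=\sqrt{k}$, the ratio $\lambda_2/\lambda_1=O(1)$ up to logs, $\omega^2(\mathbb{B}_f^n)\asymp\log n$, $\omega^2(\mathbb{B}_g^m)\asymp\log m$) to $m\gtrsim(s+k)\log(mn)+s\log(nm^{3/2}/(s^{5/2}\delta))+k\log(m^{5/2}/(k^{5/2}\delta))$, which matches the hypothesis.

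Finally, to obtain the stated error bound, I plug these quantities into \cref{thm2bound}: $(\lambda_1\alpha_{\bm{x}}+\lambda_2\alpha_{\bm{v}})/m$ yields the claimed expression after using $\alpha_{\bm{x}}=\sqrt{s}$, $\alpha_{\bm{v}}=\sqrt{k}$ and collecting the $(E+\delta)$ and $\delta$ terms. The probability bound follows since $\omega^2(\mathbb{B}_f^n)\gtrsim\log n$ and $\omega^2(\mathbb{B}_g^m)\gtrsim\log m$ make $\exp(-c_7\min\{\omega^2(\mathbb{B}_f^n),\omega^2(\mathbb{B}_g^m)\})\le n^{-c}+m^{-c}$, while the entropy exponential $\exp(-\Omega(\mathscr{H}(\mathcal{K}_{\bm{x}},\rho_1)+\mathscr{H}(\mathcal{K}_{\bm{v}},\rho_2)))$ is likewise dominated by $n^{-c_4}+m^{-c_5}$ under the sample complexity hypothesis. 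The main obstacle, as in the constrained-Lasso corollaries, will be the delicate bookkeeping around the choice of $(\rho_1,\rho_2)$: ensuring that all three conditions in \cref{eq:qpecon2un} can be simultaneously met while keeping the resulting entropy only mildly inflated over the raw $s\log(en/s)+k\log(em/k)$ scale; this tracks exactly the computations that \cref{coro:qpe_structured} is engineered to enable, so once the choice is made the remaining algebra is straightforward.
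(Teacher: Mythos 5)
Your proposal follows the same route as the paper: instantiate \cref{thm2}, verify \cref{assump4} via coordinate subspaces so that $\alpha_{\bm{x}}=\sqrt{s}$, $\alpha_{\bm{v}}=\sqrt{k}$, estimate $\omega(\mathbb{B}_f^n)\asymp\sqrt{\log n}$, $\omega(\mathbb{B}_g^m)\asymp\sqrt{\log m}$ and the entropies via \cref{pro6}, choose $(\rho_1,\rho_2)$ to satisfy \cref{eq:qpecon1un}--\cref{eq:qpecon2un}, and plug into \cref{thm2bound} and the probability statement. The only place where your description is slightly looser than the paper is the example $\rho_1\asymp(s/m)^N$, $\rho_2\asymp(k/m)^N$, which as written omits the factor of $\delta$: the paper takes $\rho_1=c\delta(s/m)^{3/2}$ and $\rho_2=c\delta(k/m)^{3/2}$, and the $\delta$ is needed to produce the $\delta$ inside the logarithm $s\log(nm^{3/2}/(s^{5/2}\delta))$ in the stated $\lambda_i$ and error bound (without it, the entropy estimate from \cref{pro6} gives $s\log(nm^{3/2}/s^{5/2})$). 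Since you explicitly state that $(\rho_1,\rho_2)$ should be polynomially small in $\delta$ as well and should be calibrated to match the stated entropy, this is a presentational gap rather than a mathematical one; otherwise the argument is correct and coincides with the paper's proof.
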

\begin{corollary}[Low-Rank Signal and Sparse Corruption]\label{coro4}
 We consider the same settings as   in  \cref{coro2} (i.e., $\bm{x^\star}\in \mathcal{K}_{\bm{x}}=M_r^{p,q}\cap\mathbb{B}_{\rm F}^{p,q}$ with $f(\bm{x})=\|\bm{x}\|_{\rm nu}$, $\bm{x^\star}$ is vectorized as a $(pq)$-dimensional vector when substituted into \cref{3.1}, $\bm{v^\star}\in\mathcal{K}_{\bm{v}}=\Sigma^m_k\cap \mathbb{B}_2^m$ with $g(\bm{v})=\|\bm{v}\|_1$), but use \cref{3.5}   as recovery program. We set \begin{equation}
        \begin{aligned}\nonumber
            &\lambda_1=C_1\delta\cdot \sqrt{mr(p+q)\log\Big(\frac{m^{3/2}}{\delta(r(p+q))^{3/2}}\Big)+mk\log\Big(\frac{m^{5/2}}{k^{5/2}\delta}\Big)}+C_1(E+\delta)\sqrt{m(p+q)},
            \\&\lambda_2=C_2\delta \cdot \sqrt{mr(p+q)\log\Big(\frac{m^{3/2}}{\delta(r(p+q))^{3/2}}\Big)+mk\log\Big(\frac{m^{5/2}}{k^{5/2}\delta}\Big)}+C_2(E+\delta)\sqrt{m\log m}
        \end{aligned}
    \end{equation}
    with sufficiently large $C_1,C_2$. If $m\ge C_3(r+k)(p+q+\log m)+C_3 r (p+q)\log(\frac{m^{3/2}}{\delta(r(p+q))^{3/2}})+C_3k\log \frac{m^{5/2}}{k^{5/2}\delta}$ for large enough $C_3$, then with probability exceeding $1- C_4\exp(-c_5(p+q))-C_6m^{-c_7}$ on a single draw of $(\bm{\Phi},\bm{\epsilon},\bm{\tau})$,  the following uniform error bound holds true for all $(\bm{x^\star},\bm{v^\star})\in \mathcal{K}_{\bm{x}}\times \mathcal{K}_{\bm{v}}$: 
    \begin{align*}
        \sqrt{\|\bm{\Delta_{x}} \|_\mathrm{F}^2+\|\bm{\Delta_{v}}\|_2^2}&\lesssim \frac{(E+\delta)\sqrt{r(p+q)+k\log m}}{\sqrt{m}}\\&\quad\quad+\frac{\delta\sqrt{(r+k)\big(r(p+q)\log(\frac{m^{3/2}}{\delta(r(p+q))^{3/2}})+k\log(\frac{m^{5/2}}{\delta k^{5/2}})\big)}}{\sqrt{m}},
    \end{align*}
    where $\bm{\Delta_x}=\bm{\hat{x}}-\bm{x^\star},\bm{\Delta_v}=\bm{\hat{v}}-\bm{v^\star}$, $(\bm{\hat{x}},\bm{\hat{v}})$ is the solution to \cref{3.5}. 
\end{corollary}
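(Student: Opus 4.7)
\textbf{Plan for the proof of Corollary \ref{coro4}.} The strategy is to instantiate Theorem \ref{thm2} with the low-rank and sparse structures. Concretely, I would take $\mathcal{K}_{\bm{x}} = M_r^{p,q}\cap \mathbb{B}_{\rm F}^{p,q}$ together with $f = \|\cdot\|_{\rm nu}$ on $\mathbb{R}^{p\times q}$, and $\mathcal{K}_{\bm{v}} = \Sigma^m_k\cap \mathbb{B}_2^m$ together with $g = \|\cdot\|_1$ on $\mathbb{R}^m$. Both $f$ and $g$ dominate the Euclidean norm on their respective spaces (so the mild hypothesis in \cref{thm2} is satisfied), and the two sets are structured in the sense of \cref{defi1}, as guaranteed by \cref{pro6}. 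The bulk of the work is then the verification of \cref{assump4} and the estimation of the geometric quantities that appear in \cref{eq:explicit_bound}; after that the conclusion follows by plugging in.

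\textbf{Decomposability and compatibility constants.} For any $\bm{a}\in \mathcal{K}_{\bm{x}}$ with SVD $\bm{a} = \bm{U}\bm{\Sigma}\bm{V}^\top$, I would choose $\overline{\mathcal{X}}_{\bm{a}}$ to be the standard low-rank subspace $\{\bm{U}\bm{M}^\top + \bm{N}\bm{V}^\top : \bm{M}\in\mathbb{R}^{q\times r},\bm{N}\in\mathbb{R}^{p\times r}\}$ and $\mathcal{X}_{\bm{a}}$ to be the span of $\bm{U}\bm{\Sigma}'\bm{V}^\top$ (matrices with the same row/column spaces as $\bm{a}$). The standard nuclear-norm decomposability $\|\bm{z}_1+\bm{z}_2\|_{\rm nu} = \|\bm{z}_1\|_{\rm nu}+\|\bm{z}_2\|_{\rm nu}$ for $\bm{z}_1\in\mathcal{X}_{\bm{a}}$, $\bm{z}_2\in \overline{\mathcal{X}}_{\bm{a}}^\perp$ then holds, and since every element of $\overline{\mathcal{X}}_{\bm{a}}$ has rank at most $2r$, one gets $\alpha_{\bm{x}} = \sup_{\bm{a}\in\mathcal{K}_{\bm{x}}}\alpha_f(\overline{\mathcal{X}}_{\bm{a}})\le \sqrt{2r}$. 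For $\bm{b}\in\mathcal{K}_{\bm{v}}$ I would take $\mathcal{V}_{\bm{b}} = \overline{\mathcal{V}}_{\bm{b}} = \{\bm{w}\in\mathbb{R}^m : \supp(\bm{w})\subset \supp(\bm{b})\}$, giving $\alpha_{\bm{v}}\le \sqrt{k}$ by Cauchy--Schwarz.

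\textbf{Geometric quantities and choice of $(\rho_1,\rho_2)$.} I would use the standard estimates $\omega(\mathbb{B}_{\rm nu}^{p,q})\lesssim \sqrt{p+q}$ and $\omega(\mathbb{B}_1^m)\lesssim \sqrt{\log m}$, together with $\omega(\mathcal{K}_{\bm{x}})\lesssim \sqrt{r(p+q)}$ and $\omega(\mathcal{K}_{\bm{v}})\lesssim \sqrt{k\log(em/k)}$. Using the structured-set entropy bound \cref{3.77} together with \cref{pro6}, one obtains $\mathscr{H}(\mathcal{K}_{\bm{x}},\rho_1)\lesssim r(p+q)\log(1+\rho_1^{-1})$ and $\mathscr{H}(\mathcal{K}_{\bm{v}},\rho_2)\lesssim k\log(em/k)\log(1+\rho_2^{-1})$. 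Following the spirit of \cref{rem:simplifi}, I would set $\rho_1$ and $\rho_2$ polynomially small (roughly $\rho_i\asymp (\cdot/m)^N$ with some fixed $N$) so that $\log(1+\rho_i^{-1})\asymp \log(m/\cdot)$; this causes the final logarithmic factors in the bound to match those advertised, and the standing conditions \cref{eq:qpecon1un}--\cref{eq:qpecon2un} (which are self-referential through $\zeta$) reduce to inequalities that are satisfied under the claimed sample-size hypothesis because the localized Gaussian width $\omega((\mathcal{K}_{\bm{x}})^{(\rho_1)}_{\loc})$ is at most $\omega(2\mathcal{K}_{\bm{x}})\lesssim \sqrt{r(p+q)}$.

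\textbf{Assembly and main obstacle.} Plugging $\alpha_{\bm{x}}\lesssim \sqrt{r}$, $\alpha_{\bm{v}}\lesssim \sqrt{k}$ and the entropy bounds into the choices of $\lambda_1,\lambda_2$ from \cref{eq:lam1choice}--\cref{eq:lam2choice} recovers the stated regularization parameters up to absolute constants. The sample-size condition \cref{eq:thm2sample} becomes $m\gtrsim (r+k)(p+q+\log m) + r(p+q)\log(\tfrac{m^{3/2}}{\delta(r(p+q))^{3/2}}) + k\log\tfrac{m^{5/2}}{k^{5/2}\delta}$, which matches the hypothesis. Substituting everything into \cref{thm2bound} (or equivalently \cref{eq:explicit_bound}) and collecting terms yields the desired bound, and the probability estimate \cref{eq:thm2_prob} reads as $1 - C\exp(-c(p+q)) - Cm^{-c}$ after noting $\omega^2(\mathbb{B}_{\rm nu}^{p,q})\asymp p+q$ and $\omega^2(\mathbb{B}_1^m)\asymp \log m$. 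The main technical nuisance I anticipate is the self-referential nature of \cref{eq:qpecon1un}--\cref{eq:qpecon2un} (since $\zeta$ depends on $\rho_1,\rho_2$ and the constraints depend on $\zeta$); the cleanest way to close this loop is to pick $\rho_1,\rho_2$ of the form $m^{-\Theta(1)}$ so that $\zeta\asymp \delta(r(p+q)+k)\log m/m$ and then verify the three conditions in \cref{eq:qpecon2un} against this concrete $\zeta$ under the stated sample size.
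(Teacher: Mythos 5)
Your overall strategy is the same as the paper's: instantiate \cref{thm2} for the low-rank/sparse pair, verify \cref{assump4} via the standard nuclear-norm decomposability with $\alpha_{\bm{x}}\le\sqrt{2r}$, $\alpha_{\bm{v}}\le\sqrt{k}$, compute $\omega(\mathbb{B}_{\rm nu}^{p,q})\asymp\sqrt{p+q}$ and $\omega(\mathbb{B}_1^m)\asymp\sqrt{\log m}$, pick $\rho_1,\rho_2$ polynomially small, and plug in. The decomposability and compatibility calculations are right, and your identification of the probability terms is correct.

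However, there is a genuine gap in your verification of \cref{eq:qpecon2un}. You invoke the bound $\omega\big((\mathcal{K}_{\bm{x}})^{(\rho_1)}_{\loc}\big)\le\omega(2\mathcal{K}_{\bm{x}})\lesssim\sqrt{r(p+q)}$, which discards the localization entirely. With $\zeta = \tfrac{4\delta(\mathscr{H}_{\bm{x}}+\mathscr{H}_{\bm{v}})}{m}$ one has
\begin{equation}
\zeta\sqrt{\tfrac{m\zeta}{\delta}} \asymp \frac{\delta\,(\mathscr{H}_{\bm{x}}+\mathscr{H}_{\bm{v}})^{3/2}}{m},
\end{equation}
so requiring $\sqrt{r(p+q)}\lesssim c_2\zeta\sqrt{m\zeta/\delta}$ becomes an \emph{upper} bound on $m$ of order $\delta\,r(p+q)\,(\log\cdot)^{3/2}$, which is incompatible with the sample-size hypothesis $m\gtrsim r(p+q)\log(\cdot)$ except in a narrow $\delta$ regime. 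The condition cannot close without a $\rho_1$-dependent estimate. The paper uses the containment $(\mathcal{K}_{\bm{x}})^{(\rho_1)}_{\loc}\subset M_{2r}^{p,q}\cap\mathbb{B}_{\rm F}^{p,q}(\rho_1)$ to get a bound proportional to $\rho_1$ (effectively $\omega\lesssim\rho_1\sqrt{r(p+q)}$), and then combines $\rho_1\lesssim\zeta$ with $\sqrt{r(p+q)}\lesssim\sqrt{\mathscr{H}(\mathcal{K}_{\bm{x}},\rho_1)}\lesssim\sqrt{m\zeta/\delta}$ to conclude $\omega\big((\mathcal{K}_{\bm{x}})^{(\rho_1)}_{\loc}\big)\lesssim\zeta\sqrt{m\zeta/\delta}$. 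So your "polynomially small $\rho_1$" instinct is right, but it only pays off if you also let the localized width shrink with $\rho_1$; once you replace the crude constant bound with the $\rho_1$-linear one, the rest of your argument goes through and matches the paper's.

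A smaller point: the paper commits to the explicit choices $\rho_1=c\delta\big(\tfrac{r(p+q)}{m}\big)^{3/2}$ and $\rho_2=c\delta\big(\tfrac{k}{m}\big)^{3/2}$, which is what produces exactly the logarithmic arguments $\log\big(\tfrac{m^{3/2}}{\delta(r(p+q))^{3/2}}\big)$ and $\log\big(\tfrac{m^{5/2}}{\delta k^{5/2}}\big)$ in the stated $\lambda_1,\lambda_2$ and error bound; a generic ``$\rho_i\asymp(\cdot/m)^N$'' will reproduce the rate but not the exact logarithmic factors advertised.
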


\subsection{Generative Priors}
 To handle the case where traditional structured priors fail to precisely characterize the underlying signal,  it was recently proposed to use a generative prior for compressed sensing, i.e., assuming that the desired signal lies in the range of a generative model \cite{bora2017compressed}. 
This new perspective for compressed sensing  has led to numerical success and attracted
much research interest.
The goal of this subsection is to establish uniform recovery guarantee for quantized corrupted sensing using generative priors. Following the long list of prior works on this field (e.g., \cite{bora2017compressed,liu2020generalized,chen2023unified,berk2021deep}), we formulate the generative priors on signal and corruption as follows. 
\begin{assumption}[Generative Priors]
    \label{assump5}
    For some $r, r' > 0$, let  $G\,:\, \mathbb{B}_2^k(r) \to \mathbb{R}^n$, $H\,:\, \mathbb{B}_2^{k'}(r') \to \mathbb{R}^m$ be some   generative models. We assume that $G(\cdot)$ is $L$-Lipschitz continuous, $H(\cdot)$ is $L'$-Lipschitz continuous:
    \begin{align}
        &\|G(\bm{a}_1)-G(\bm{a}_2)\|_2\leq L \|\bm{a}_1-\bm{a}_2\|_2,~\forall \bm{a}_1,\bm{a}_2\in \mathbb{R}^k,\\
        &\|H(\bm{b}_1)-H(\bm{b}_2)\|_2\leq L'\|\bm{b}_1-\bm{b}_2\|_2,~\forall \bm{b}_1,\bm{b}_2\in \mathbb{R}^{k'}, 
    \end{align}
  and assume that the signal and corruption lie in the range of the two generative models:  
  \begin{align}
      \bm{x^\star}\in \mathcal{K}_{\bm{x}}:=G\big(\mathbb{B}_2^k(r)\big), \bm{v^\star}\in\mathcal{K}_{\bm{v}}:= H\big(\mathbb{B}_2^{k'}(r')\big). \label{eq:gene_prior}
  \end{align}
\end{assumption}

Recall that $\bm{\dot{y}}$ are the quantized measurements as per \cref{3.2}. We naturally extend the scope of constrained Lasso \cref{3.4} by substituting the norm constraints with the generative priors on $(\bm{x^\star},\bm{v^\star})$: \begin{equation}
    \label{geneprogram}
    (\bm{\hat{x}},\bm{\hat{v}})=\mathrm{arg}\min_{\substack{\bm{x}\in \mathbb{R}^n\\\bm{v}\in \mathbb{R}^m}}\|\bm{\dot{y}}-\bm{\Phi x}-\sqrt{m}\bm{v}\|_2,~~\mathrm{s.t.~}\bm{x}\in G\big(\mathbb{B}_2^k(r)\big),~\bm{v}\in H\big(\mathbb{B}_2^{k'}(r')\big).
\end{equation}
We note that it is in general hard to exactly optimize \cref{geneprogram}   due to the highly non-convex constraint,  while fortunately there have been some practical approaches to {\it approximately} solve this program~\cite{shah2018solving, raj2019gan, liu2022generative}. Note that all prior recovery methods in this area exhibit optimization issue of this type but have proven effective in practice.

We present a uniform recovery guarantee of \cref{geneprogram}.
\begin{theorem}[Quantized Corrupted Sensing with Generative Priors]
    \label{thm3}
    Under   \cref{assump1} and  \cref{assump5}, we let $\mu\in (0,1)$ be some given recovery accuracy and suppose that $m=O(n)$. If  for some sufficiently large $C_1$ we have
    \begin{equation}\label{genesample}
        m\ge  \frac{C_1E^2}{\mu^2}\Big(k\log\frac{Lr}{\mu}+k'\log\frac{L'r'}{\mu}\Big) + C_1\Big(1+\frac{\delta^2}{\mu^2}\Big)\Big(k\log \Big(\frac{Lrn^{3/2}}{\mu\delta k^{3/2}}\Big)+k'\log\Big(\frac{L'r'm^{3/2}}{\mu\delta (k')^{3/2}}\Big)\Big),
    \end{equation}
    then with probability exceeding  $1-C_2\exp(-\Omega(k\log(Lr)+k'\log(L'r')))$ on a single draw of $(\bm{\Phi},\bm{\epsilon},\bm{\tau})$,   the uniform error bound $\sqrt{\|\bm{\Delta_x} \|_2^2+\|\bm{\Delta_v}\|_2^2}\leq \mu$  holds  for all $(\bm{x^\star},\bm{v^\star})\in \mathcal{K}_{\bm{x}}\times \mathcal{K}_{\bm{v}}$, where $\bm{\Delta_x}=\bm{\hat{x}}-\bm{x^\star}$ and $\bm{\Delta_v}=\bm{\hat{v}}-\bm{v^\star}$, with  $(\bm{\hat{x}},\bm{\hat{v}})$ being the solution to \cref{geneprogram}. 
\end{theorem}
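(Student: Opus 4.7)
The plan is to follow the architecture of the proof of \cref{thm1}, replacing the descent-cone geometry by the Minkowski difference sets induced by the generative priors. Starting from the optimality of $(\bm{\hat{x}},\bm{\hat{v}})$ for \cref{geneprogram} and expanding as in Step~1 of the proof of \cref{thm1}, we obtain
$$\|\bm{\Phi\Delta_x}+\sqrt{m}\bm{\Delta_v}\|_2^2 \leq 2\langle\bm{\epsilon},\bm{\Phi\Delta_x}+\sqrt{m}\bm{\Delta_v}\rangle + 2\langle\bm{\xi}_{\bm{x^\star},\bm{v^\star}},\bm{\Phi\Delta_x}+\sqrt{m}\bm{\Delta_v}\rangle.$$
Writing $M=\sqrt{\|\bm{\Delta_x}\|_2^2+\|\bm{\Delta_v}\|_2^2}$ and noting that $(\bm{\Delta_x},\bm{\Delta_v})\in\mathcal{T}_{\bm{x}}\times\mathcal{T}_{\bm{v}}$ with $\mathcal{T}_{\bm{x}}:=\mathcal{K}_{\bm{x}}-\mathcal{K}_{\bm{x}}$ and $\mathcal{T}_{\bm{v}}:=\mathcal{K}_{\bm{v}}-\mathcal{K}_{\bm{v}}$, we aim to show $M\leq \mu$; it suffices to work in the regime $M\geq \mu$ since otherwise the claim is trivial.

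The Lipschitz structure of $G,H$ supplies the covering estimates $\mathscr{H}(\mathcal{K}_{\bm{x}},\eta)\leq k\log(3Lr/\eta)$ and $\mathscr{H}(\mathcal{K}_{\bm{v}},\eta)\leq k'\log(3L'r'/\eta)$ (with analogous bounds for $\mathcal{T}_{\bm{x}},\mathcal{T}_{\bm{v}}$), obtained by pushing $\mu$-nets of the latent balls through the generators. These covering estimates drive three uniform bounds that parallel the proof of \cref{thm1}: (a)~a set-restricted lower bound $\|\bm{\Phi a}+\sqrt{m}\bm{b}\|_2\geq \frac{\sqrt{m}}{2}\|(\bm{a},\bm{b})\|_2 - C\sqrt{m}\mu$ for $(\bm{a},\bm{b})\in\mathcal{T}_{\bm{x}}\times\mathcal{T}_{\bm{v}}$, obtained by applying the matrix deviation inequality (\cref{pro1}) at the net level plus a Lipschitz extension; (b)~an upper bound $\langle\bm{\epsilon},\bm{\Phi\Delta_x}+\sqrt{m}\bm{\Delta_v}\rangle\lesssim EM\sqrt{m}\cdot\sqrt{k\log(Lr/\mu)+k'\log(L'r'/\mu)}$ via sub-Gaussian concentration over the same net; (c)~the quantization-noise term, handled by the global QPE result (\cref{thm:globalqpe}) specialized to signal set $\mathcal{K}_{\bm{x}}\times\mathcal{K}_{\bm{v}}$ and perturbation set $\mathcal{T}_{\bm{x}}\times\mathcal{T}_{\bm{v}}$, whose hypotheses are met by selecting $(\zeta,\rho_1,\rho_2)$ proportional to $\mu$ and to the covering estimates above, yielding a bound of order $\delta M\sqrt{m}\sqrt{k\log\tfrac{Lrn^{3/2}}{\mu\delta k^{3/2}}+k'\log\tfrac{L'r'm^{3/2}}{\mu\delta(k')^{3/2}}}$ plus a lower-order remainder.

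Chaining (a)--(c) into the optimality inequality gives, after dividing through by $M$ in the regime $M\geq \mu$, an inequality of the form
$$\sqrt{m}\,M \lesssim E\sqrt{k\log\tfrac{Lr}{\mu}+k'\log\tfrac{L'r'}{\mu}}+\delta\sqrt{k\log\tfrac{Lrn^{3/2}}{\mu\delta k^{3/2}}+k'\log\tfrac{L'r'm^{3/2}}{\mu\delta(k')^{3/2}}}+\text{(lower order)},$$
which combined with \cref{genesample} yields $M\leq \mu$; the factor $1+\delta^2/\mu^2$ in the stated sample complexity arises from absorbing the $\delta$-dependent term. The main obstacle is step~(c): instantiating the global QPE, originally designed for general constraint sets, on the product of two Lipschitz generative ranges, and verifying that the parameter choice dictated by $(\mu,\delta)$ satisfies the QPE hypotheses without inflating the sample complexity via the unwanted Gaussian-width scaling $Lr\sqrt{k}+L'r'\sqrt{k'}$. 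The remaining pieces reduce to net-plus-concentration reasoning analogous to \cref{thm1}, with descent-cone-based geometric quantities replaced by set-difference covering numbers throughout, and the assumption $m=O(n)$ entering only through the control of spillover terms in the $\delta$-dependent estimate.
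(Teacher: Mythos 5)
Your proposal follows the same skeleton as the paper's proof---reduce via optimality, constrain the error to the Minkowski difference sets, bound three random processes, invoke the global QPE---so the route is essentially the paper's. Two clarifications are in order. First, your lower bound (a) on $\|\bm{\Phi a}+\sqrt{m}\bm{b}\|_2$ with an additive slack $-C\sqrt{m}\mu$ is a valid alternative, but the paper takes a cleaner path: it restricts attention to the regime $M\geq 2\mu$, normalizes the error to land in $\mathcal{E}^*\subset\mathbb{S}^{n+m-1}$, and applies \cref{pro1} directly to $\mathcal{E}^*$, getting $\inf_{(\bm{c},\bm{d})\in\mathcal{E}^*}\|\bm{\Phi c}+\sqrt{m}\bm{d}\|_2\geq\frac{\sqrt{m}}{2}$ with no additive error. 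Both work; the normalization avoids carrying the slack term through the chain.

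Second, you correctly identify the main obstacle in step (c)---the Gaussian width of $\mathcal{T}_{\bm{x}}\times\mathcal{T}_{\bm{v}}$ scales like $Lr\sqrt{k}+L'r'\sqrt{k'}$, which would destroy the sample complexity---but you do not quite state the resolution. The fix is precisely the normalization: the QPE perturbation set must be $\mathcal{E}^*$, not $\mathcal{T}_{\bm{x}}\times\mathcal{T}_{\bm{v}}$, and its width must be computed via Dudley's inequality \cref{dudley} from the covering estimates (\cref{proadd1}(c)), giving $\omega(\mathcal{E}^*)\lesssim\sqrt{k\log(Lr/\mu)+k'\log(L'r'/\mu)}$; Sudakov would again lose the polynomial factor in $Lr$. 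Your stated bound for (c) carries the factor $M$, so you are implicitly assuming exactly this, but it should be made explicit. Relatedly, ``selecting $(\zeta,\rho_1,\rho_2)$ proportional to $\mu$'' is not how the paper proceeds: $\mu$ enters only through the restriction $M\geq 2\mu$ that defines $\mathcal{E}^*$ (and hence through $\omega(\mathcal{E}^*)$), while the covering radii are chosen as $\rho_1\asymp\delta(k/n)^{3/2}$ and $\rho_2\asymp\delta(k'/m)^{3/2}$, independently of $\mu$. These choices, together with $m=O(n)$ (to control $\omega(\mathcal{A}^{(\rho_1)}_{\loc})\leq\rho_1\sqrt{n}$ against the budget $\zeta\sqrt{m\zeta/\delta}$), are what verify the hypotheses \cref{eq:zeta_choice}--\cref{eq:rho12_choice} of the structured-set QPE \cref{coro:qpe_structured}. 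Filling in these two points would complete your plan.
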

\begin{proof}[A Sketch of the Proof]
The proof of  \cref{thm3} is analogous to that of \cref{thm1}, with the major differences lying in constraining the range of the estimation error $(\bm{\Delta_x},\bm{\Delta_v})$, which also appeals to separate treatments in   estimating the Gaussian width and Kolmogorov entropy (see   \cref{proadd1}). The full proof can be found in   \cref{appendixc}.
\end{proof}
\begin{rem}
    [Uniform Error Bound for \cref{thm3}]\label{rem:gene_rate} Provided that $m\gtrsim k\log (\frac{Lrn^{3/2}}{\mu\delta k^{3/2}})+k'\log(\frac{L'r'm^{3/2}}{\mu\delta (k')^{3/2}})$ for large enough implied constant, the sample complexity \cref{genesample} sufficient for achieving a uniform $\ell_2$-error of $\mu$ implies a uniform error bound 
    \begin{align}
        \nonumber\sqrt{\|\bm{\Delta_x}\|_2^2+\|\bm{\Delta_v}\|_2^2}&\lesssim E\sqrt{\frac{k\log(\frac{Lr\sqrt{m}}{E\sqrt{k}})+k'\log(\frac{L'r'\sqrt{m}}{E\sqrt{k'}})}{m}}  \\&\quad\quad+ \delta\sqrt{\frac{k\log(\frac{Lrn^2}{\delta^2k^2})+k'\log(\frac{L'r'm^2}{\delta^2(k')^2})}{m}}, 
    \end{align}
    which can be verified by some algebra. 
\end{rem}
\begin{rem}[Technical Comparison with \cite{chen2023unified}]\label{rem:gene_related}  
Restricted to compressed sensing with generative prior, most recovery guarantees for non-linear models are non-uniform. As a follow-up of \cite{genzel2022unified}, Chen et al. \cite{chen2023unified} built a unified framework for proving uniform recovery guarantee in non-linear compressed sensing with generative prior. Specifically, they handled potential discontinuity of $f_i(\cdot)$ by constructing Lipschitz approximation 
    as in \cite{genzel2022unified}, but used \cite[Thm. 2]{chen2023unified} (rather than \cite[Thm. 8]{genzel2022unified}) to bound the product processes.  
    Their key observation is that the replacement of concentration inequality yields tighter bound for the generative case \cite[Remark 8]{chen2023unified}, thus they proved a uniform decaying rate of $O(m^{1/2})$ even when $f_i(\cdot)$ contains some discontinuity. (In contrast, \cite{genzel2022unified} only achieves a uniform decaying rate of $O(m^{-1/4})$ under discontinuous $f_i(\cdot)$, see \cref{rem:insensible}.) However, as with \cref{rem:insensible}, it is unclear whether their approach applies to analyzing corrupted sensing, since the corruption leads to random process out of the scope of \cite[Thm. 2]{chen2023unified}.\footnote{As we reviewed in \cref{sec:introduction}, under generative prior, the only existing result that accommodates a generative corruption $\bm{v^\star}$ was presented in  \cite{berk2020deep,berk2021deep} and restricted to the linear case.} Moreover, while the present paper provides unified analysis of two priors, it is unclear whether the techniques in \cite{chen2023unified} can be adapted to structured prior.


\end{rem}
\section{Experimental Results}\label{sec4:expresult}
In this section we provide experimental results to corroborate and demonstrate our uniform recovery guarantees. Due to the theoretical nature of our work, an extensive set of experiments will not pursued.

\subsection{Structured Priors}
First, we consider using structured priors for corruption sensing, for which our main theoretical results are   \cref{thm1} and \cref{thm2}. The aim of our first set of experiments is to show the recovery performance of   (un)constrained Lasso  under two settings, namely sparse signal recovery from sparse corruption (\cref{coro1} and \cref{coro3}) and low-rank matrix recovery from sparse corruption (\cref{coro2} and \cref{coro4}). All simulations in this subsection are performed using MATLAB R2018b on a desktop with a 3.70 GHz Intel Core i7-8700M CPU and 32 GB RAM.

 We use a realization of the ensemble  $(\bm{\Phi},\bm{\tau},\bm{\epsilon})$ to recover a fixed $(\bm{x}^\star,\bm{v}^\star)$ for simulating non-uniform recovery. In contrast, with a single realization of the sensing ensemble, the error rates in our theorems holds uniformly for {\it all} $(\bm{x}^\star,\bm{v}^\star)\in \mathcal{K}_{\bm{x}}\times \mathcal{K}_{\bm{v}}$,  or equivalently interpreted, they are upper bounds on the following quantity:  
 \begin{align}
     \label{eq:uniform_error}
     \sup_{\bm{x^\star}\in \mathcal{K}_{\bm{x}}}\sup_{\bm{v^\star}\in \mathcal{K}_{\bm{v}}}\big(\|\bm{\hat{x}}-\bm{x^\star}\|_2^2+\|\bm{\hat{v}}-\bm{v^\star}\|_2^2\big)^{1/2}.
 \end{align}
 Nonetheless, under a fixed $(\bm{\Phi},\bm{\tau},\bm{\epsilon})$, it is in general impossible to track \cref{eq:uniform_error} since $\mathcal{K}_{\bm{x}}\times \mathcal{K}_{\bm{v}}$ is typically infinite set (e.g., $\mathcal{K}_{\bm{x}}=\Sigma^n_s\cap \mathbb{B}_2^n$). To   provide some clues to demonstrate our theories, we instead utilize a fixed $(\bm{\Phi},\bm{\tau},\bm{\epsilon})$ to recover multiple signal-and-corruption pairs in a testing set $\mathcal{X}_{test}$, and we will track the maximum recovery error 
\begin{equation}\label{track}
    \sup_{(\bm{x^\star},\bm{v^\star})\in \mathcal{X}_{test}}\big(\|\bm{\hat{x}}-\bm{x^\star}\|_2^2+\|\bm{\hat{v}}-\bm{v^\star}\|_2^2\big)^{1/2}
\end{equation}
as an approximation of \cref{eq:uniform_error}. We adopt the following general principles in our simulations: \begin{itemize}
[leftmargin=5ex,topsep=0.25ex]
    \item \textbf{Construction of $\mathcal{X}_{test}$:} Given the cardinality $|\mathcal{X}_{test}|$, the $(\bm{x}^\star,\bm{v}^\star)$ in $\mathcal{X}_{test}$ are independently, randomly created by the   construction of sparse vector and low-rank matrix below. 

    \item \textbf{Calculation of \cref{track}:} In a single trial that simulates the uniform recovery over some $\mathcal{X}_{test}$, we calculate \cref{track} under a single draw of $(\bm{\Phi},\bm{\tau},\bm{\epsilon})$. We will report \cref{track} as its mean value in $10$ independent trials.

    \item   
    \textbf{Construction of a Vector in $\Sigma^N_s$:} We let the support of the vector uniformly distributed over all $\binom{N}{s}$ possibilities, and then draw the non-zero entries from $\mathcal{N}(0,1)$. 

    \item \textbf{Construction of a Matrix in $M^{p,q}_r$:} We use $\bm{U}\bm{V}^\top$ where  
where $\bm{U}\in \mathbb{R}^{p\times r}$ and $\bm{V}\in \mathbb{R}^{q\times r}$ are 
independent random matrices with orthonormal columns, generated by the Matlab code ``\texttt{[U,S,U1]=svd(randn(p,r),r);~[V,S,V1]=svd(randn(q,r),r)}''.  

    \item \textbf{Tuning Parameters:} In unconstrained Lasso we provide the best possible constraint for each pair  $(\bm{x}^\star,\bm{v}^\star)\in \mathcal{X}_{test}$, namely $f(\bm{x})\le f(\bm{x}^\star)$ and $g(\bm{v})\le g(\bm{v^\star})$. In unconstrained Lasso, we properly choose a large enough $(\lambda_1,\lambda_2)$ and then use it for all   $(\bm{x}^\star,\bm{v}^\star)\in \mathcal{K}_{test}$. 
\end{itemize}

\subsubsection{Non-Uniformity v.s. Uniformity} We first compare non-uniform recovery and uniform recovery (over $\mathcal{X}_{test}$) to {\it demonstrate the major theoretical achievements of this work}.

\textbf{Constrained Lasso:} we simulate \cref{coro1} with $(n,s,k,\delta,E)=(256,2,2,0.1,0)$. 
 We construct four test sets with different cardinalities: $|\mathcal{X}_{test}|=1,10,100,300$. Then we test these four cases under a properly chosen range of the measurement number $m$. 
 Note that $|\mathcal{X}_{test}|=1$ reduces to non-uniform recovery, while $|\mathcal{X}_{test}|=300$ represents the highest level of uniformity {\it that we simulate}. 
 Over these four test sets, we report \cref{track} as its mean value in 10 independent trials under properly chosen measurement number, and then plot the log-log curves in \cref{fig:constra_demo}(left). 
 We note the following observations from \cref{fig:constra_demo}(left) that are consistent with our \cref{coro1}: 
 \begin{itemize}
 [leftmargin=5ex,topsep=0.25ex]
     \item All curves decay  in a rate of $O(m^{-1/2})$;
     \item To achieve the same recovery error over a ``larger'' $\mathcal{X}_{test}$ (which corresponds to a higher level of uniformity) requires more measurements. 
 \end{itemize}
   Since \cref{eq:uniform_error} is always larger than \cref{track}, the log-log curve of   \cref{eq:uniform_error}   will further shift to the right compared to the green curve (corresponding to $|\mathcal{X}_{test}|=300$) in \cref{fig:constra_demo}(left), whereas regarding this curve corresponding to the ``actual'' uniform recovery (that we cannot experimentally track), our \cref{coro1}   guarantees the following: (i) Fixing other parameters, this curve still decays in a rate of $O(m^{-1/2})$; (ii) The  measurement number   needed to achieve the same recovery error as the curves in \cref{fig:constra_demo}(left) is still of the   scaling law  $\tilde{O}(s+k)$  (logarithmic factors omitted).

\textbf{Unconstrained Lasso:} We provide   similar experiment results for unconstrained Lasso to demonstrate our theory. We simulate \cref{coro3} with $(n,s,k,\delta,E)=(256,2,2,0.1,0)$ and track the quantity \cref{track} with $|\mathcal{X}_{test}|=1,10,100,300$. The results are displayed in \cref{fig:constra_demo}(right), and we note the similar observations that  all curves are decaying roughly in a rate of $O(m^{-1/2})$, and that  larger $\mathcal{X}_{test}$ requires more measurements to  ensure that \cref{track} is smaller than some desired accuracy.

Naturally, the log-log curve of \cref{eq:uniform_error} will further shift to the right compared to the green curve (under $|\mathcal{X}_{test}|=300$) in \cref{fig:constra_demo}(right), while regarding this curve corresponding to the ``actual'' uniform recovery (that we cannot experimentally track), our \cref{coro3} promises the following: (i) This curve still decays i a rate of $O(m^{-1/2})$ when other parameters are fixed; (ii) The measurement number needed to achieve the same recovery error as the curves in \cref{fig:constra_demo}(right) is of scaling $\tilde{O}(s^2+k^2)$ (logarithmic factors omitted, and note that we do not know whether this is sharp).

\begin{figure*}[!ht]\label{fig:constra_demo}
\scriptsize\setlength{\tabcolsep}{0.3pt}
\begin{center}
\begin{tabular}{cc}
\includegraphics[width=0.45\textwidth]{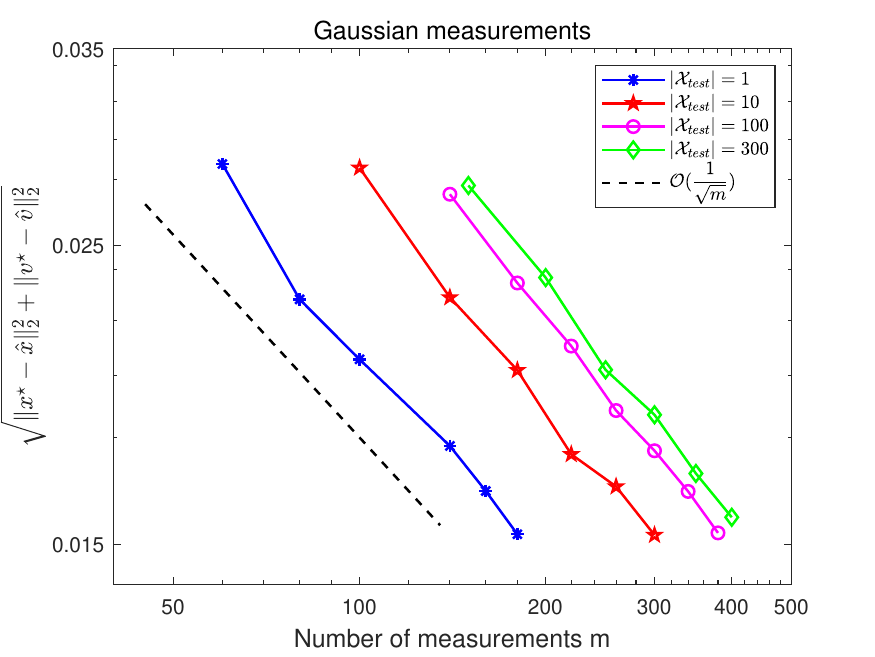}&
\includegraphics[width=0.45\textwidth]{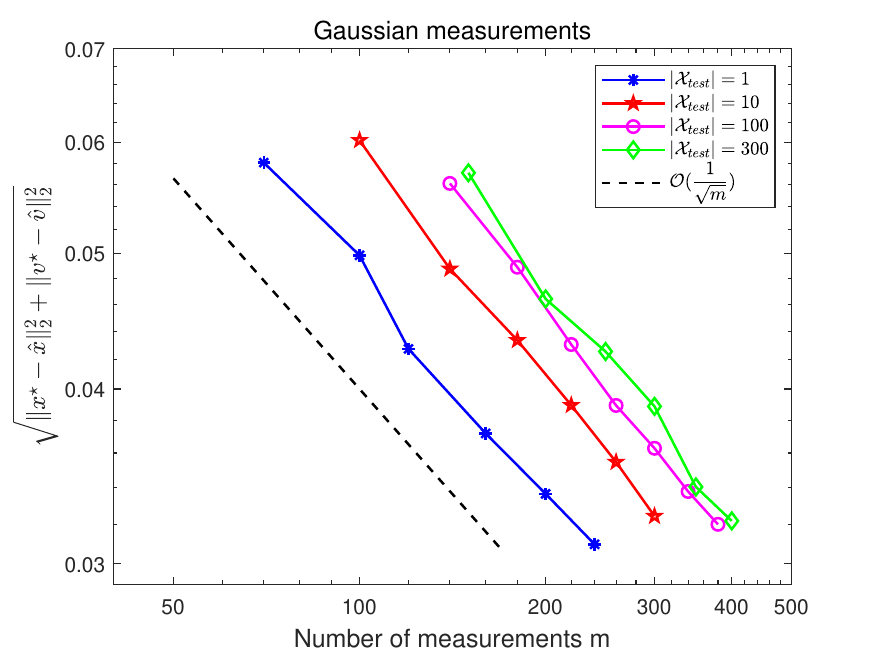}
\end{tabular}
\caption{From Non-Uniformity to Uniformity: (Left) Sparse Recovery  via Constrained Lasso (\cref{coro1}); (Right) Sparse Recovery via Unconstrained Lasso (\cref{coro3})} 
  \end{center}
\end{figure*}

\begin{figure*}[!ht]
\scriptsize\setlength{\tabcolsep}{0.3pt}
\begin{center}
\begin{tabular}{ccc}
\includegraphics[width=0.33\textwidth]{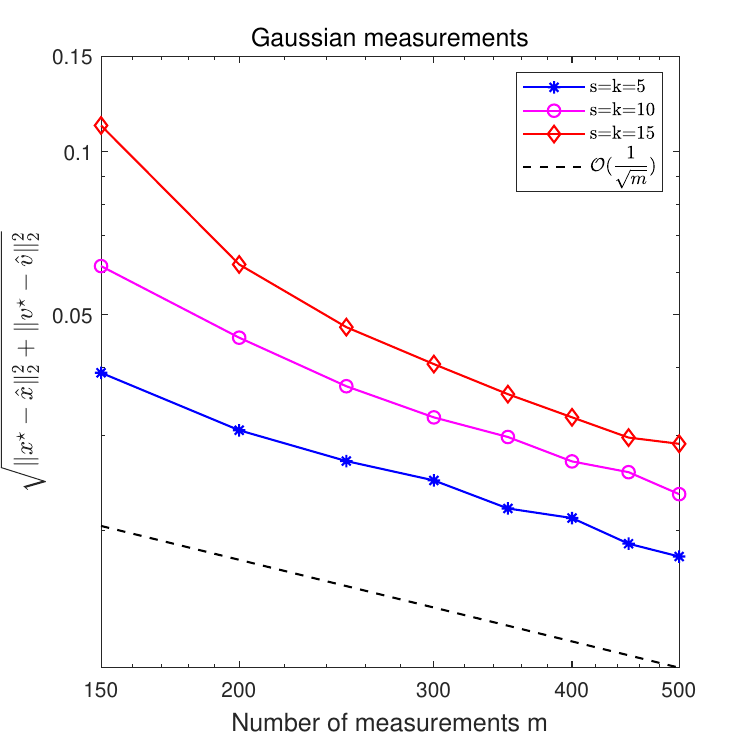}&
\includegraphics[width=0.33\textwidth]{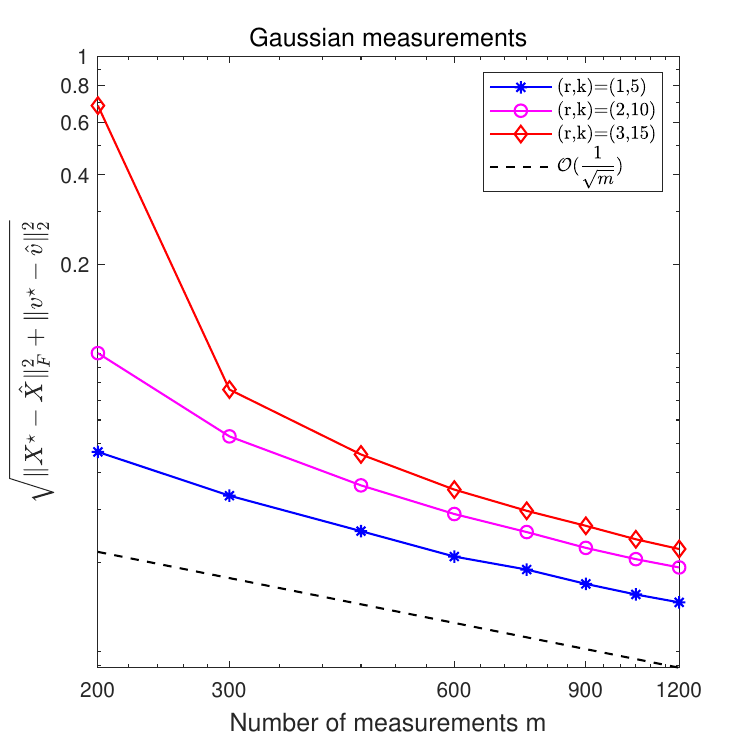}&
\includegraphics[width=0.33\textwidth]{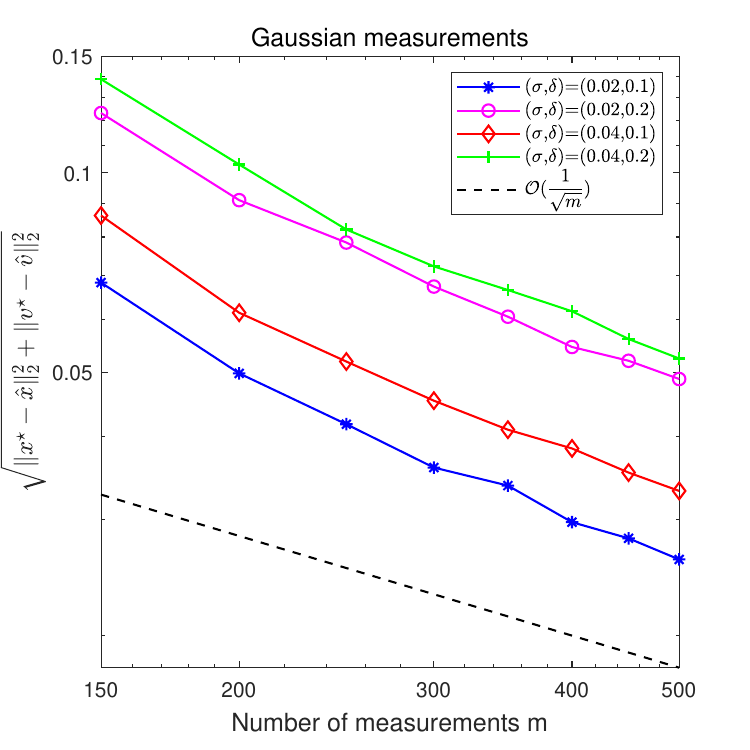}\\
 (a) Sparse signal recovery 
 & (b) Low-rank matrix recovery 
 & (c) Robustness to noise\\
 from sparse corruption 
 & from sparse corruption 
 & \\
\includegraphics[width=0.33\textwidth]{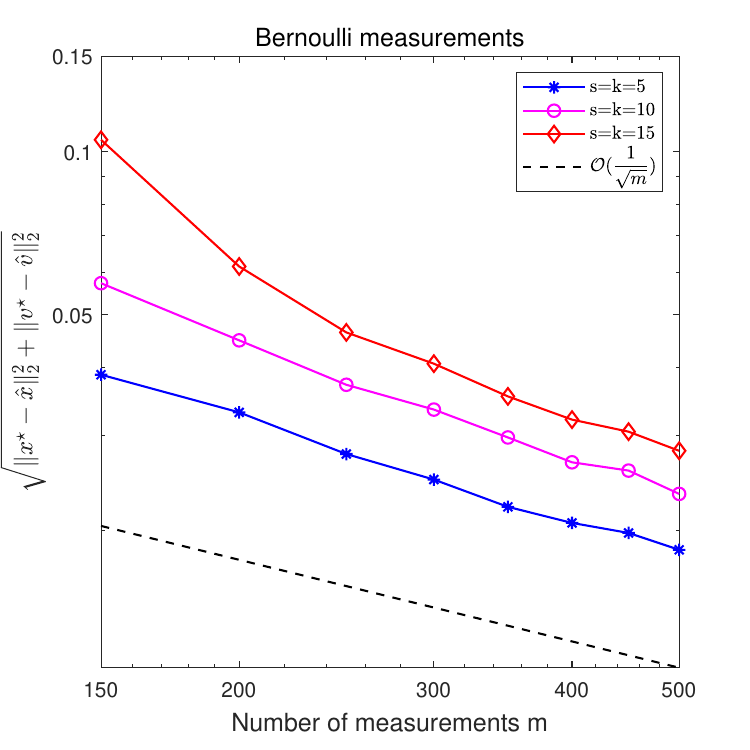}&
\includegraphics[width=0.33\textwidth]{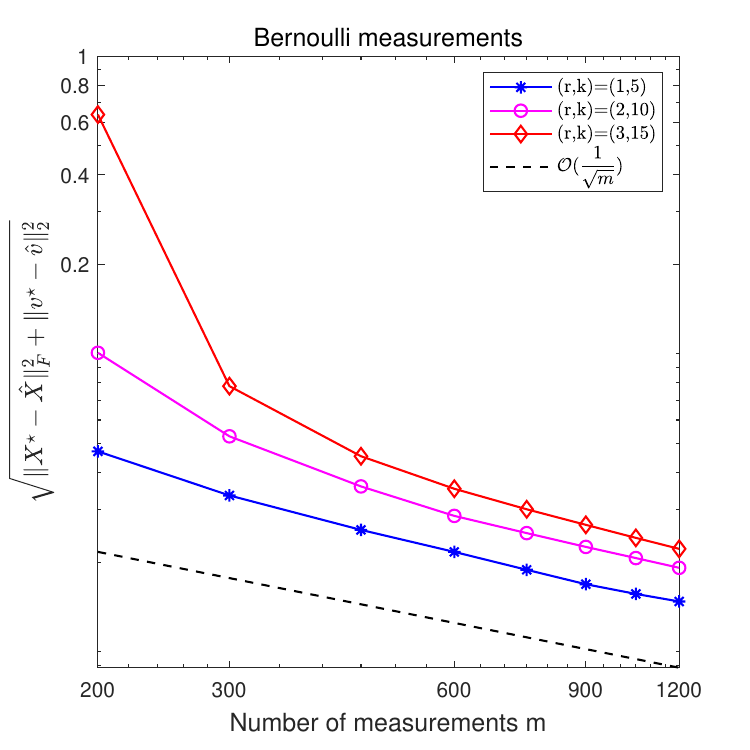}&
\includegraphics[width=0.33\textwidth]{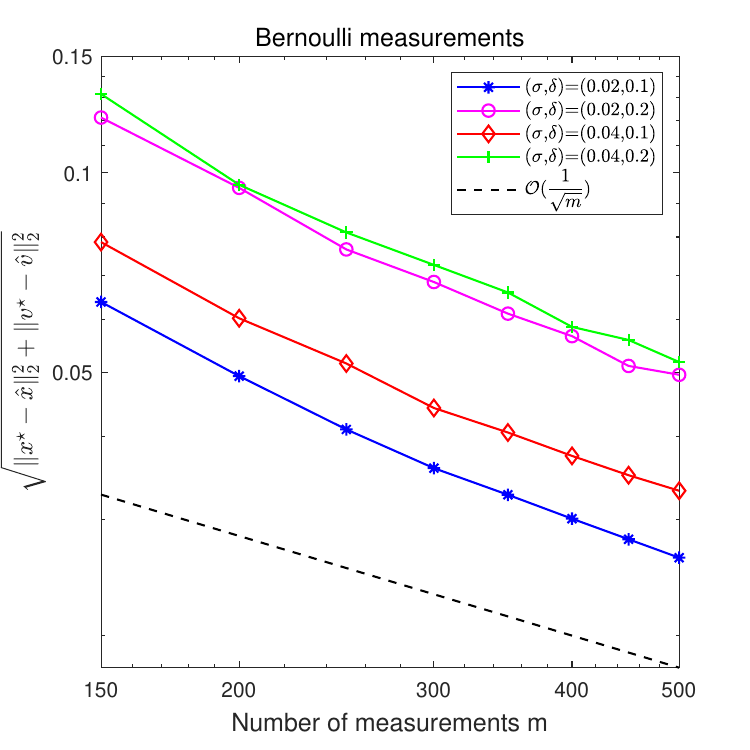}\\
 (d) Sparse signal recovery 
 & (e) Low-rank matrix recovery 
 & (f) Robustness to noise\\
 from sparse corruption 
 & from sparse corruption 
 & \\
\end{tabular}
\caption{Log-log error curves for the constrained Lasso under Gaussian or Bernoulli measurements.}
  \label{fig:ConstrainedLasso}
  \end{center}\vspace{-0.3cm}
\end{figure*}


\subsubsection{Constrained Lasso} 
In order to demonstrate the role of different parameters in our uniform bound, we proceed to more simulations for constrained Lasso. 

\textbf{Sparse Recovery from Sparse Corruption:} 
We simulate \cref{coro1} with  $(n,\delta,E)=(256,0.1,0)$ and   vary the measurement number $m$ between 150 and 500.  We consider both Gaussian design $\bm{\Phi}\sim \mathcal{N}^{m\times n}(0,1)$ and Bernoulli design that has i.i.d. zero-mean $\{-1,1\}$-valued entries.\footnote{Bernoulli design is an example that demonstrates the benefit of using dithering. Without the random dither $\bm{\tau}$, the identifiability issue   arises under Bernoulli design even  in compressed sensing without the corruption $\bm{v^\star}$ (e.g., \cite{chen2022quantizing,sun2022quantized}). On the other hand, by using dithering, recovery can be ensured   under general sub-Gaussian $\bm{\Phi}$.} We simulate the uniform recovery with  size-$100$ $\mathcal{X}_{test}$ in the cases of ``$s=k=5$'', ``$s=k=10$'' and ``$s=k=15$'', and we report  \cref{track} as log-log curves in 
 \cref{fig:ConstrainedLasso}(a) and \cref{fig:ConstrainedLasso}(d) for Gaussian design and Bernoulli design, respectively. 
Clearly, the results under two designs are similar, and note the   two observations that are consistent with our theory. First, all curves decrease with $m$ in the theoretical rate $O(m^{-1/2})$. Second, the errors increase under larger $(s,k)$, for which the intuition is that  weaker sparse priors on $(\bm{x^\star},\bm{v^\star})$ correspond to a harder high-dimensional estimation problem.

\textbf{Low-Rank Recovery from Sparse Corruption:} 
We simulate \cref{coro2} with  $(p,q,\delta,E)=(16,16,0.1,0)$ and vary measurements $m$ between 200 and 1200. Under a realization of $(\bm{\Phi},\bm{\tau})$, we track \cref{track} with size-$100$ $\mathcal{X}_{test}$ in the cases of ``$(r,k)=(1,5)$'', ``$(r,k)=(2,10)$'' and ``$(r,k)=(3,15)$''. The log-log curves corresponding to Gaussian and Bernoulli designs are shown in   \cref{fig:ConstrainedLasso}(b) and \cref{fig:ConstrainedLasso}(e)  respectively, which are consistent with our theoretical uniform bound.  

\begin{figure*}[!ht]
\scriptsize\setlength{\tabcolsep}{0.3pt}
\begin{center}
\begin{tabular}{ccc}
\includegraphics[width=0.33\textwidth]{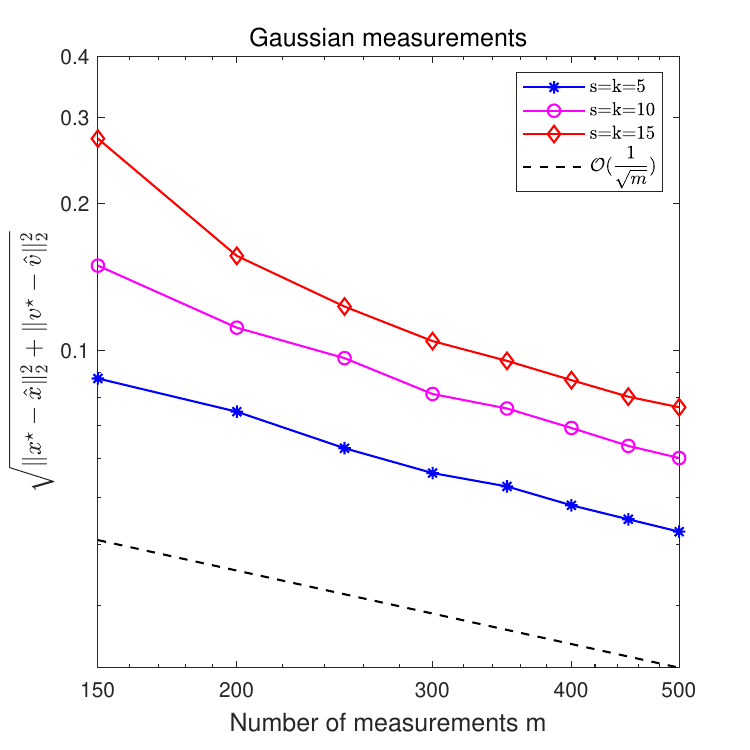}&
\includegraphics[width=0.33\textwidth]{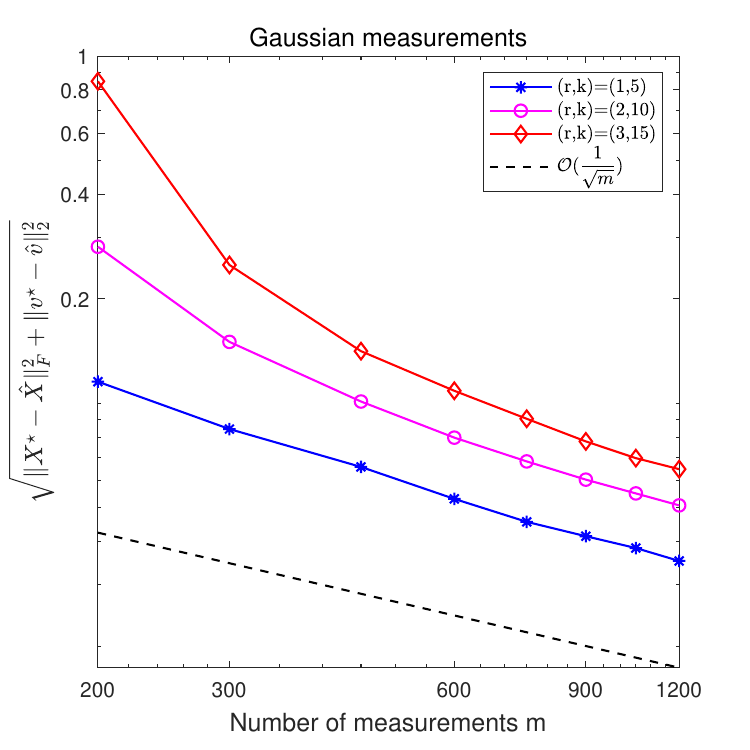}&
\includegraphics[width=0.33\textwidth]{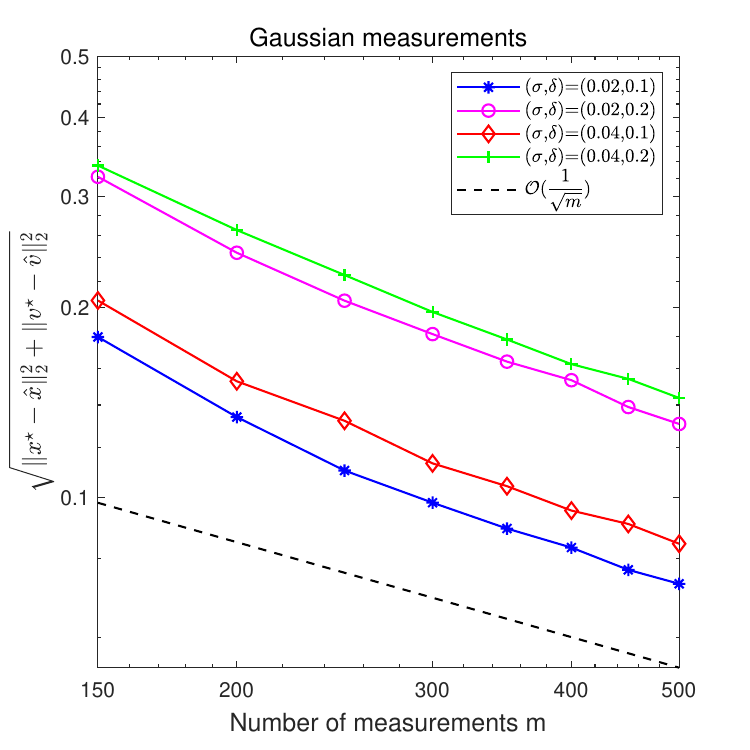}\\
 (a) Sparse signal recovery
 & (b) Low-rank matrix recovery 
 & (c) Robustness to noise\\
 from sparse corruption 
 & from sparse corruption 
 & \\
\includegraphics[width=0.33\textwidth]{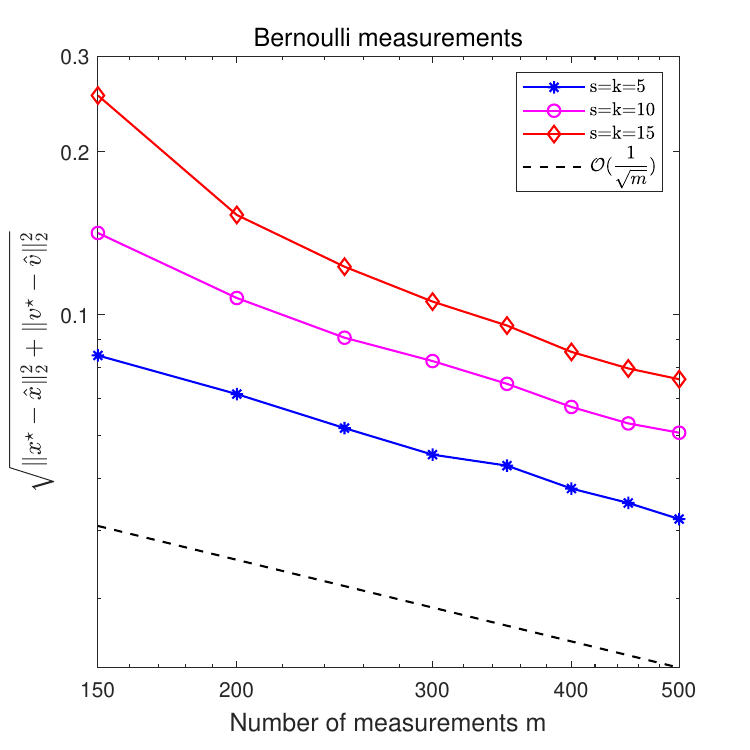}&
\includegraphics[width=0.33\textwidth]{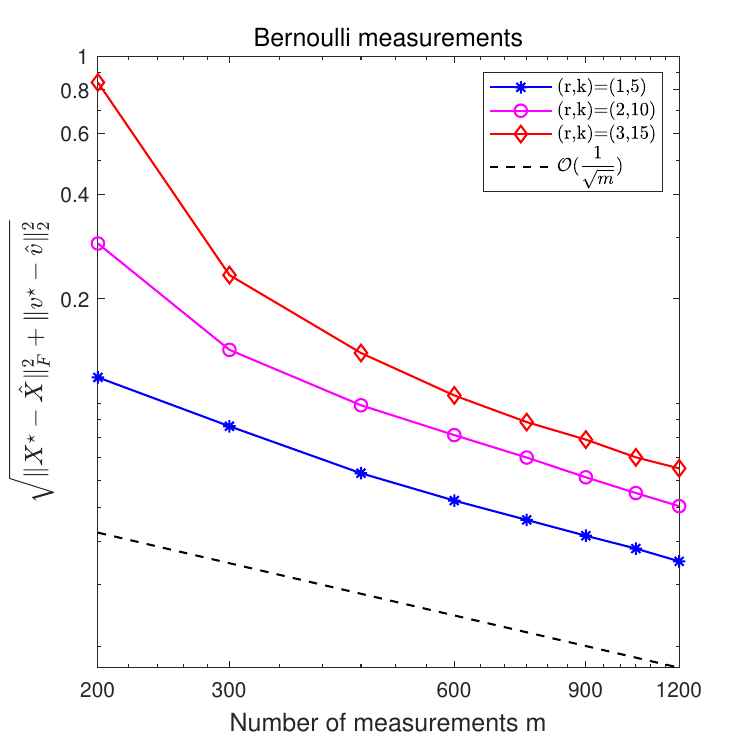}&
\includegraphics[width=0.33\textwidth]{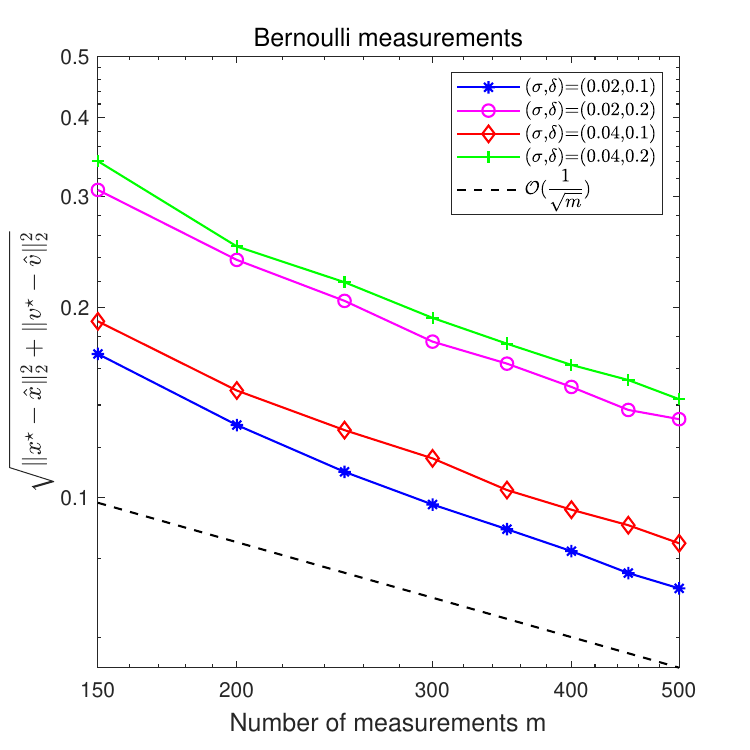}\\
 (d) Sparse signal recovery
 & (e) Low-rank matrix recovery
 & (f) Robustness to noise\\
 from sparse corruption 
 & from sparse corruption 
 & \\
\end{tabular}
\caption{Log-log error curves for the unconstrained Lasso under Gaussian or Bernoulli measurements.}
  \label{fig:UnconstrainedLasso}
  \end{center}\vspace{-0.3cm}
\end{figure*}

\textbf{The Role of $\delta$ and $E$:} We also use the setting of   \cref{coro1} to illustrate   the role played by the quantization resolution $\delta$ and the robustness to   noise $\bm{\epsilon}$. We test  Gaussian noise  $\bm{\epsilon}\sim \mathcal{N}(0,\sigma^2\bm{I}_m)$ (note that the noise level $\sigma$ can be simply understood as $E$ in   \cref{assump1}) in the cases of ``$(\sigma,\delta)=(0.02,0.1)$'', ``$(\sigma,\delta)=(0.02,0.2)$'', ``$(\sigma,\delta)=(0.04,0.1)$'' and ``$(\sigma,\delta)=(0.04,0.2)$''. The log-log curves  are displayed in   \cref{fig:ConstrainedLasso}(c) and \cref{fig:ConstrainedLasso}(f). As predicted by our uniform bound \cref{boundcoro1},    larger $\delta$ and severer sub-Gaussian noise (i.e., larger $E$) lift the curves higher but do not affect the decaying rate of $O(m^{-1/2})$.



\subsubsection{Unconstrained Lasso}
We conduct parallel experiments using   unconstrained Lasso \cref{3.5}. Note that unconstrained Lasso \cref{3.5} might be more practical than its constrained counterpart, in the sense that we can   use a program with   fixed large enough $(\lambda_1,\lambda_2)$ for all 100 pairs of $(\bm{x^\star},\bm{v^\star})$ in $ \mathcal{X}_{test}$. We show the results in   \cref{fig:UnconstrainedLasso}, which are consistent with the uniform error bounds in   \cref{coro3}--\cref{coro4} in terms of decaying rate, qualitative dependence on structured parameters, $\delta$ and $E$. 

\subsection{Generative Priors}

In this subsection, we present proof-of-concept experimental results for the case of using generative priors. In particular, we consider the case that the signal is close to the range of a generative model, and the corruption vector is also close to the range of another generative model. All the experiments were conducted using the Python 3.10.6 and PyTorch 2.0.0 framework on an NVIDIA RTX 3060 Laptop 6GB GPU. We modify \cref{track} and track the following two quantities for   signal and corruption:
\begin{equation}\label{track1}
    \sup_{(\bm{x^\star},\bm{v^\star})\in \mathcal{X}_{test}}\frac{\|\bm{\hat{x}}-\bm{x^\star}\|_2}{\|\bm{x^\star}\|_2}~,~~\sup_{(\bm{x^\star},\bm{v^\star})\in \mathcal{X}_{test}}\frac{\|\bm{\hat{v}}-\bm{v^\star}\|_2}{\|\bm{v^\star}\|_2},
\end{equation}
which are just the maximum relative error over a test set of $(\bm{x^\star},\bm{v^\star})$ denoted by $\mathcal{X}_{test}$.

\textbf{Demixing ``8'' from ``1'' in MNIST:}  
First, we follow~\cite{berk2020deep,berk2021deep} to train two variational auto-encoders (VAEs) for the training images of digits $8$ and $1$ in the MNIST dataset~\cite{lecun1998gradient} respectively. The decoders of these two VAEs were composed of a fully connected neural network with ReLU activation functions. The VAEs had an input dimension $k=k'=20$ and an output dimension of $m = n=28\times 28 = 784$, with two hidden layers consisting of 500 neurons each. We used the Adam optimizer with a mini-batch size of 100 and a learning rate of $0.001$ to train these VAEs.

We take the images of digit 8 as the   signal and those of digit 1 as the   corruption vectors. We use $\bm{\epsilon}\sim \mathcal{N}(0,\sigma^2\bm{I}_m)$ to simulate sub-Gaussian noise. 
 To demonstrate uniform recovery, we use a single realization of $(\bm{\Phi},\bm{\epsilon},\bm{\tau})$ to track the maximum relative error in \cref{track1}, where
$\mathcal{X}_{test}$ contains 20 test images of digits 8 and 1 from the testing set of MNIST.
As before, we report the quantities  in \cref{track1}  as its mean value in   10 independent random trials. 

We use the constrained Lasso \cref{geneprogram} to reconstruct $(\bm{x^\star},\bm{v^\star})$. Similarly to the algorithm proposed in~\cite{bora2017compressed}, we employ the gradient descent algorithm to minimize the following objective function over $\mathbb{R}^{k} \times  \mathbb{R}^{k'}$:
 \begin{equation}\nonumber
     \mathcal{L}(\bm{z}, \bm{z}') := \|\dot{\bm{y}} - \bm{\Phi} G(\bm{z}) - \sqrt{m} H(\bm{z'})\|_2, 
 \end{equation}
 {\color{black}where $G:\mathbb{B}_2^k(r)\to \mathbb{R}^n$, $H:\mathbb{B}_2^{k'}(r')\to \mathbb{R}^m$ are the generative models as per \cref{assump5}.}\footnote{Since $r$ and $r'$ can typically scale as large as $n^{\Theta(d)}$ with $d$ being the number of layers~\cite{bora2017compressed}, we do not impose constraints of $\|\bm{z}\|_2 \le r$ and $\|\bm{z}'\|_2 \le r'$ in our experiments.} 
 Our algorithm is referred to as~\texttt{QCS\_Gen}. We follow the settings in~\cite{bora2017compressed} and perform $10$ random restarts with $1000$ gradient descent steps per restart. The optimal reconstruction is chosen based on the lowest measurement error. 

We test the sensing matrix $\bm{\Phi}\in \mathbb{R}^{n\times n}$ under both Gaussian and Bernoulli designs. Examples of reconstructed signals and corruptions are presented in \cref{fig:images_vaes_gaussian} and~\cref{fig:images_vaes_bernoulli}, with the quantitative results showcased in  ~\cref{fig:quant_vaes_gaussian} and~\cref{fig:quant_vaes_bernoulli}. The $\ell_2$-norm of the test images is about $10$, and the reconstructed images have impressive accuracy, even in a  coarsely quantized and highly noisy setting with  $\delta = 20$ and $\sigma = 10$. 

   \begin{figure}
     \includegraphics[height=0.4\textwidth]{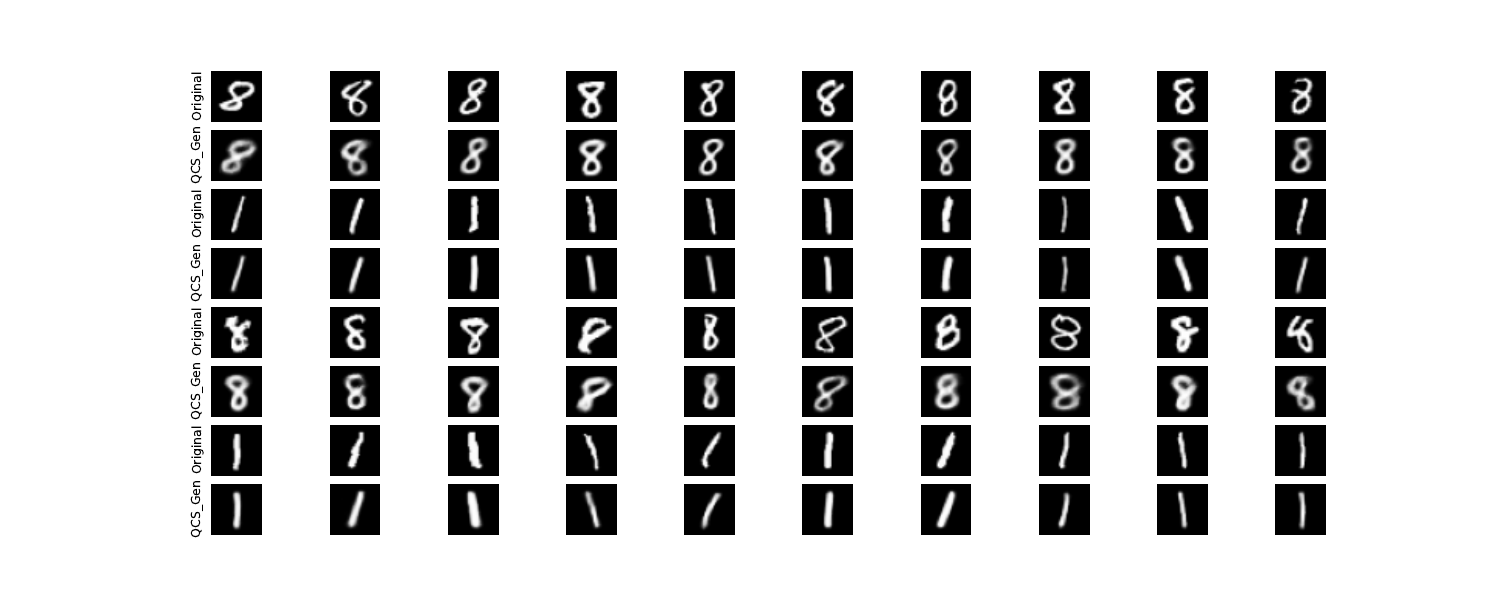}
     \vspace{-1.cm}
     \caption{Reconstructed images for digits 8 and 1 of MNIST under Gaussian measurements with $\sigma =2$ and $\delta = 10$.}
     \label{fig:images_vaes_gaussian}
 \end{figure}

    \begin{figure}
     \includegraphics[height=0.4\textwidth]{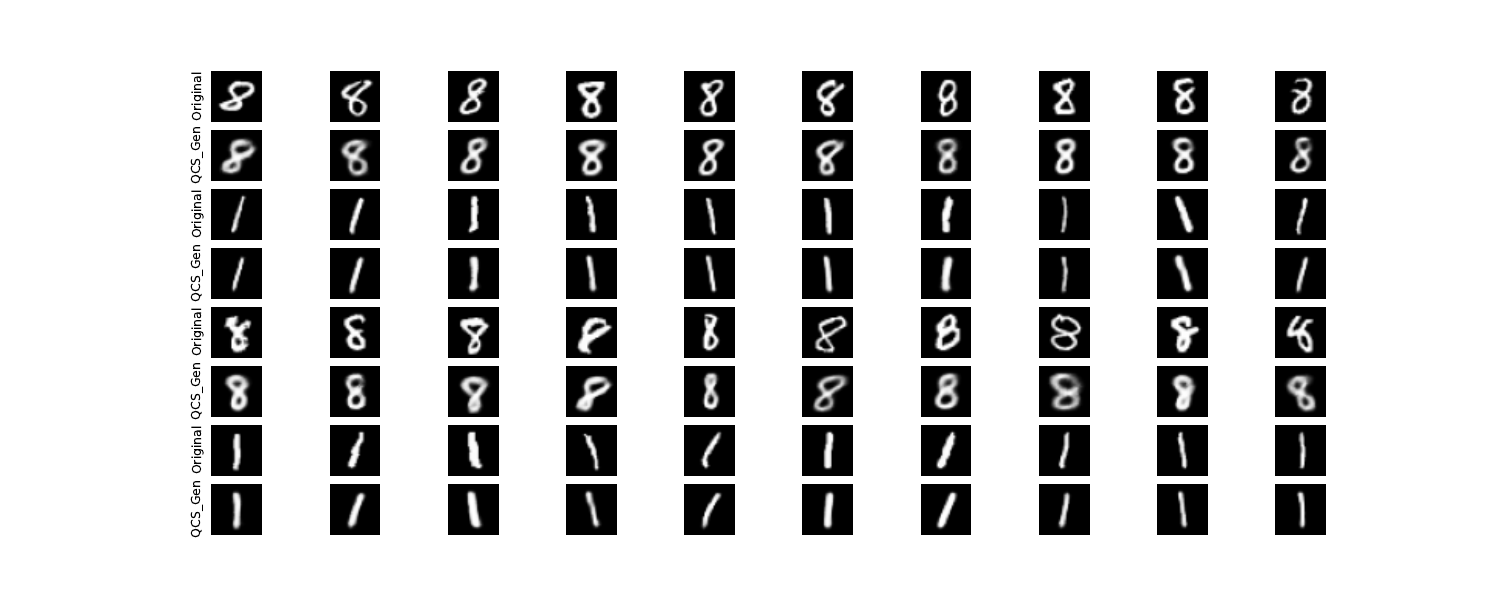}
     \vspace{-1.cm}
     \caption{Reconstructed images for digits 8 and 1 of MNIST under Bernoulli measurements with $\sigma =10$ and $\delta = 20$.}
     \label{fig:images_vaes_bernoulli}
 \end{figure}

   \begin{figure}
\begin{center}
\begin{tabular}{cc}
\includegraphics[height=0.37\textwidth]{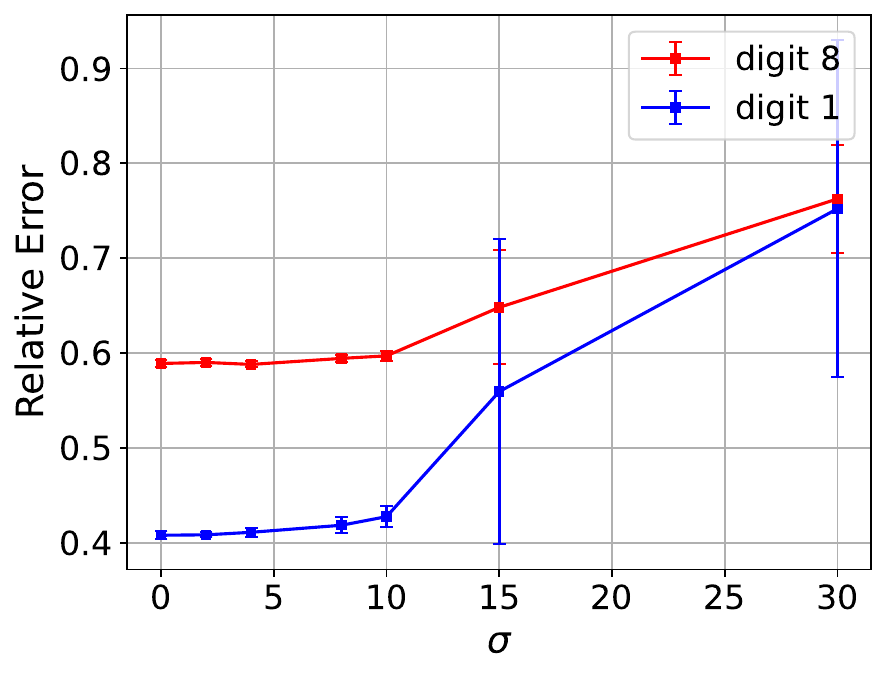} & \hspace{-0.5cm}
\includegraphics[height=0.37\textwidth]{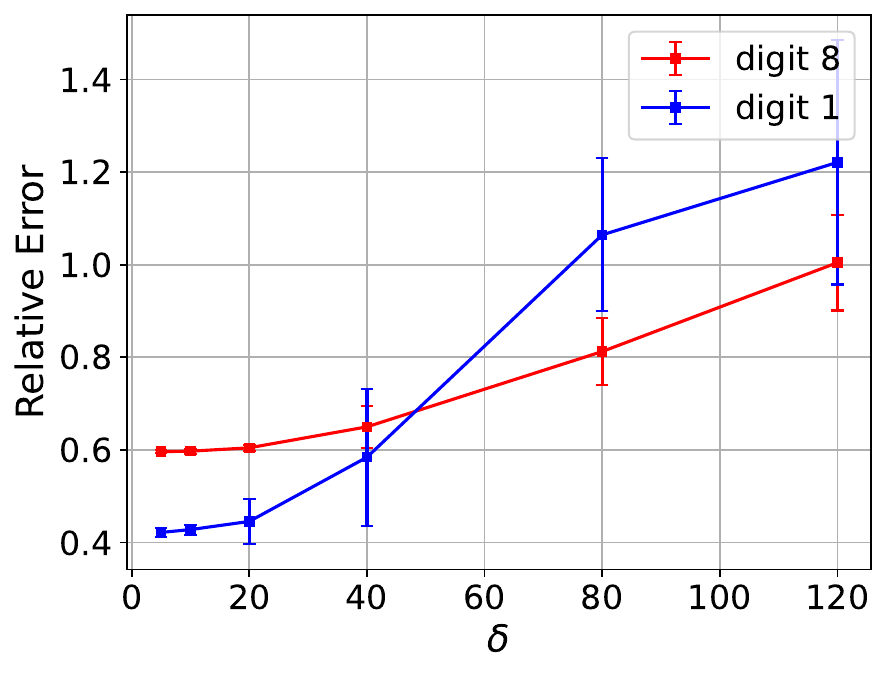} \\
{\small (a) Varying $\sigma$ with fixed $\delta = 10$} & \hspace{-0.8cm} {\small (b) Varying $\delta$ with fixed $\sigma = 10$} 
\end{tabular}
\caption{Quantitative results of the performance of~\texttt{QCS\_Gen} under Gaussian measurements for digits 8 and 1 of MNIST.} \label{fig:quant_vaes_gaussian} \vspace*{-2ex}
\end{center}
\end{figure}

   \begin{figure}
\begin{center}
\begin{tabular}{cc}
\includegraphics[height=0.37\textwidth]{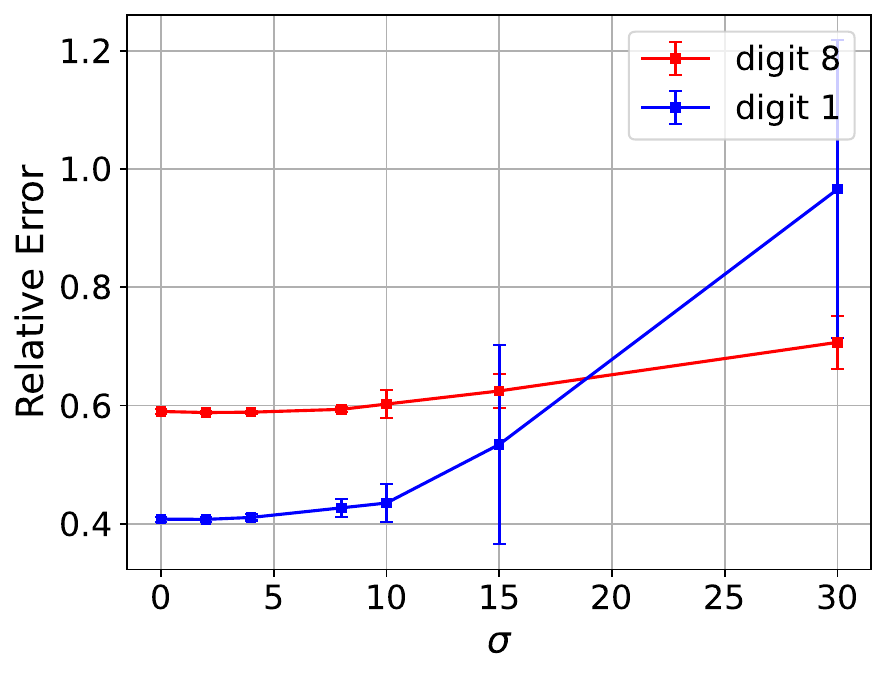} & \hspace{-0.5cm}
\includegraphics[height=0.37\textwidth]{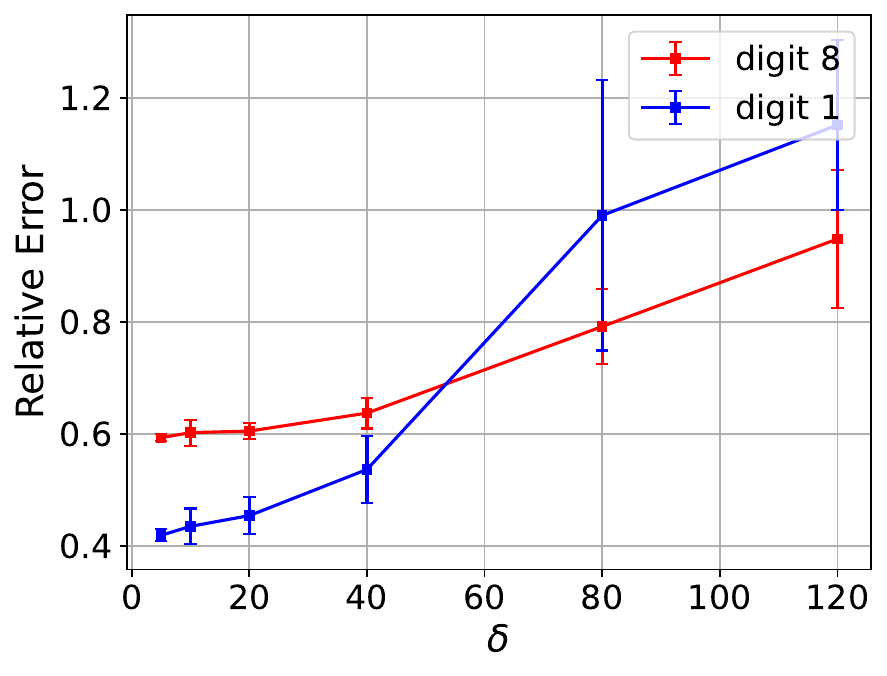} \\
{\small (a) Varying $\sigma$ with fixed $\delta = 10$} & \hspace{-0.8cm} {\small (b) Varying $\delta$ with fixed $\sigma = 10$} 
\end{tabular}
\caption{Quantitative results of the performance of~\texttt{QCS\_Gen} under Bernoulli measurements for digits 8 and 1 of MNIST.} \label{fig:quant_vaes_bernoulli} \vspace*{-2ex}
\end{center}
\end{figure}


\textbf{Demixing Images in CelebA from Digits in MNIST:} Since the image vectors of digits 8 and 1 have the same data dimension, the above experiment with $\bm{x^\star}$ and $\bm{v^\star}$ both from MNIST dataset is a demixing task without compression. In order to achieve compression, we additionally performed experiments for the case in which the signals are selected from the test images of the CelebA dataset~\cite{liu2015deep} with data dimension $n = 3\times 64 \times 64 = 12288$ and the corruptions correspond to the test images of the MNIST dataset   with data dimension $m = 784$. Since the number of measurements $m$ is much smaller than $n$, the measurement matrix $\bm{\Phi}$ of dimension $m \times n = 784\times12288$ is used  for our simulations of   corrupted sensing with compression (as $m\ll n$). 

We train a VAE on the training set of the MNIST dataset, which comprises 60,000 images for digits 0 to 9. The decoder of the VAE is a fully connected neural network with two hidden layers and 500 neurons each, ReLU activations in the layers, and an input dimension of $k=20$ and output dimension of $n=784$. We also employ the Adam optimizer with a mini-batch size of 100 and a learning rate of $0.001$ for training.

The CelebA database contains more than 200,000 face images of celebrities, on which we train a deep convolutional generative adversarial network (DCGAN) following the settings in~\url{https://pytorch.org/tutorials/beginner/dcgan_faces_tutorial.html}. The latent dimension of the generator for this model is $100$ and the number of epochs for training is $20$. We select $20$ images from the test set of CelebA as our signals and $20$ test images of MNIST as our corruptions, and we conduct $5$ random trials. As the images of CelebA and MNIST differ significantly in their $\ell_2$-norm, we normalize both of them to have unit $\ell_2$-norm prior to generating the quantized observations. All other settings remain the same as those applied for the case of using two VAEs for digits 8 and 1 of MNIST.

Examples of reconstructed signals and corruptions can be seen in  ~\cref{fig:images_vaegan_gaussian} and~\cref{fig:images_vaegan_bernoulli}, and the relative error for each is quantified in  ~\cref{fig:quant_vaegan_gaussian} and~\cref{fig:quant_vaegan_bernoulli}. We observe that when $m$ is much smaller than $n$, accurate reconstructions of the multiple test images can be achieved using a single draw of $(\bm{\Phi},\bm{\epsilon},\bm{\tau})$, as theoretically supported by our uniform recovery guarantee. Consistent with \cref{genesample} in   \cref{thm3}, under fixed $\delta$, the error increases under larger $\sigma$. Also, while fixing the noise level $\sigma$,  larger $\delta$ (that represents coarser quantization) corresponds to larger error, indicating that a trade-off between quantization resolution and 
recovery accuracy is important in practice.

   \begin{figure}
     \includegraphics[height=0.40\textwidth]{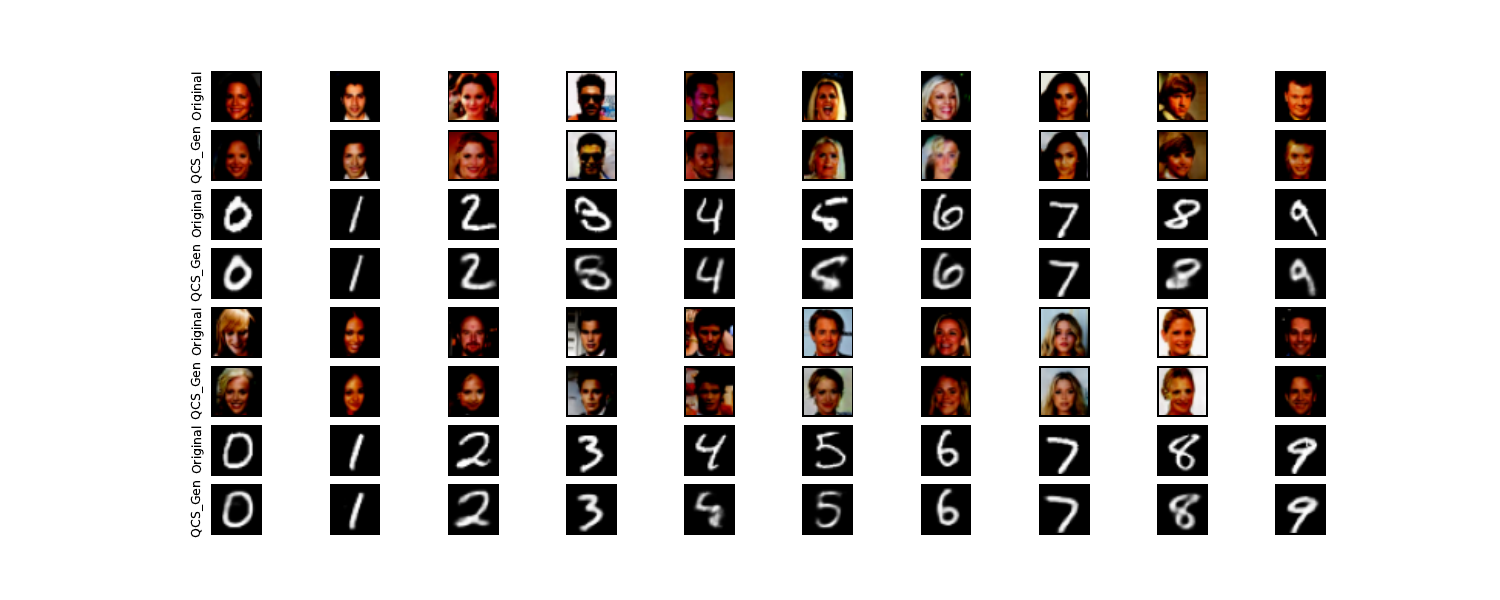}
     \vspace{-1.cm}
     \caption{Reconstructed images for CelebA and MNIST under Gaussian measurements and $\sigma =0.1$, $\delta = 1.0$.}
     \label{fig:images_vaegan_gaussian}
 \end{figure}

    \begin{figure}
     \includegraphics[height=0.40\textwidth]{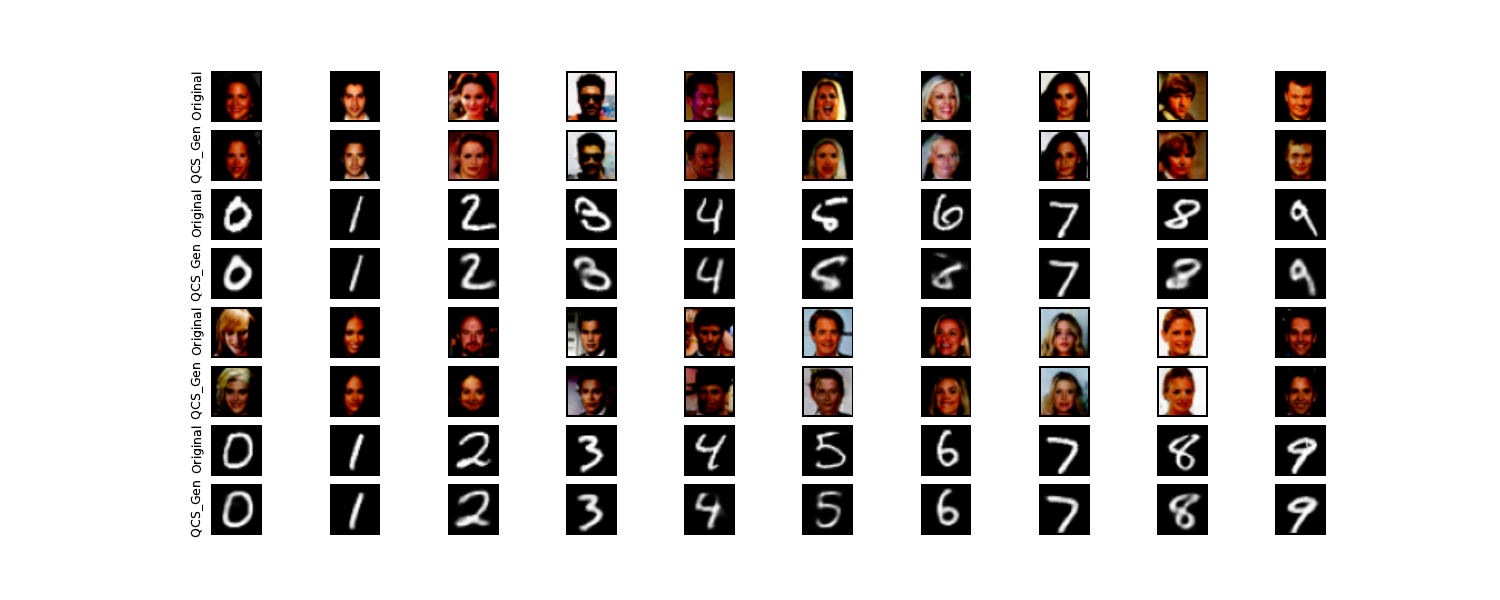}
     \vspace{-1.cm}
     \caption{Reconstructed images for CelebA and MNIST under Bernoulli measurements and $\sigma =0.2$, $\delta = 0.2$.}
     \label{fig:images_vaegan_bernoulli}
 \end{figure}

   \begin{figure}
\begin{center}
\begin{tabular}{cc}
\includegraphics[height=0.36\textwidth]{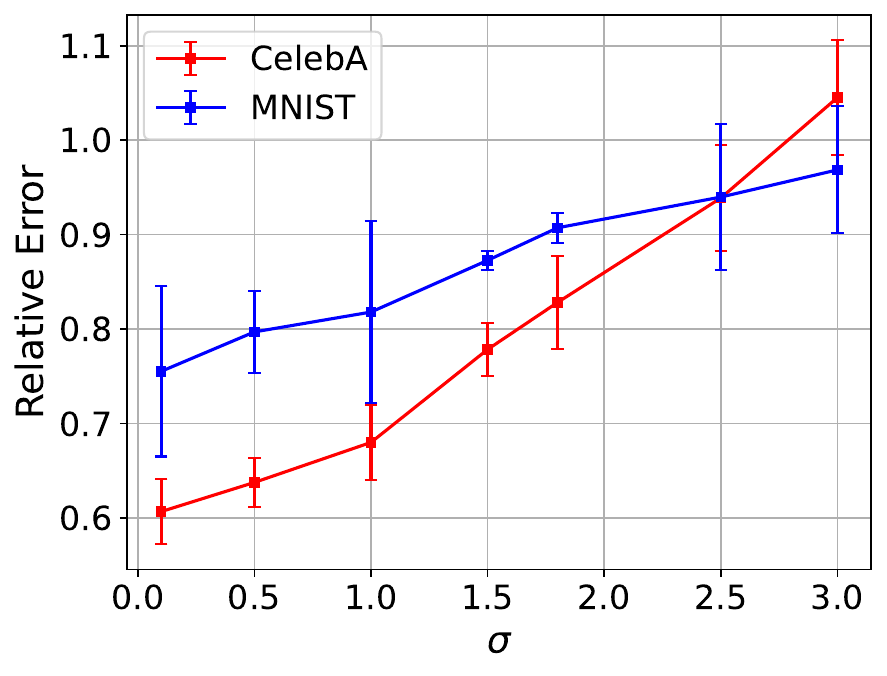} & \hspace{-0.5cm}
\includegraphics[height=0.36\textwidth]{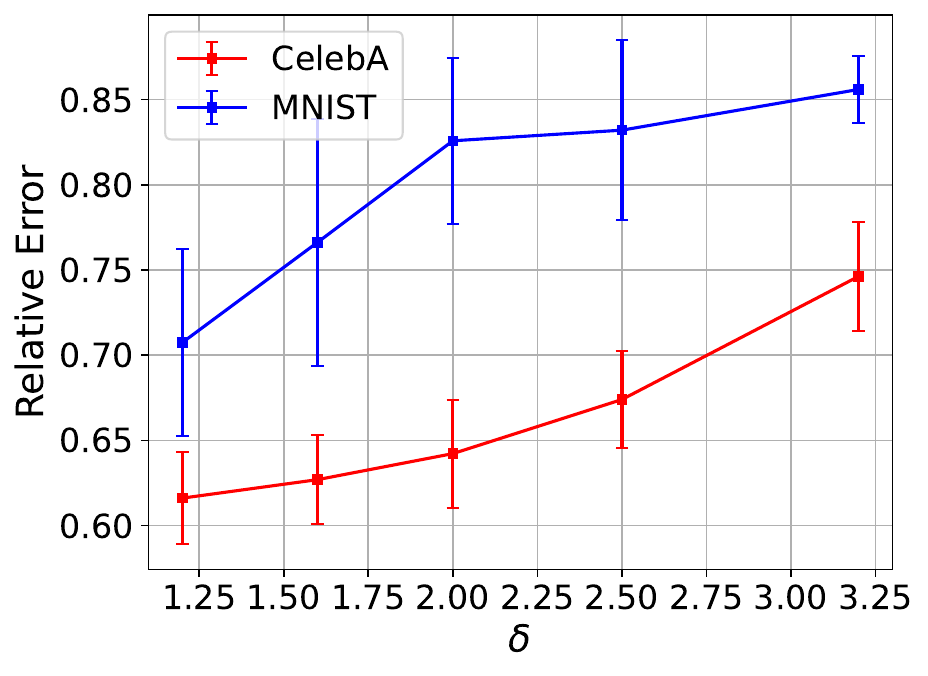} \\
{\small (a) Varying $\sigma$ with fixed $\delta = 1.0$} & \hspace{-0.8cm} {\small (b) Varying $\delta$ with fixed $\sigma = 0.2$} 
\end{tabular}
\caption{Quantitative results of the performance of~\texttt{QCS\_Gen} under Gaussian measurements for CelebA and MNIST.} \label{fig:quant_vaegan_gaussian} \vspace*{-2ex}
\end{center}
\end{figure}

   \begin{figure}
\begin{center}
\begin{tabular}{cc}
\includegraphics[height=0.36\textwidth]{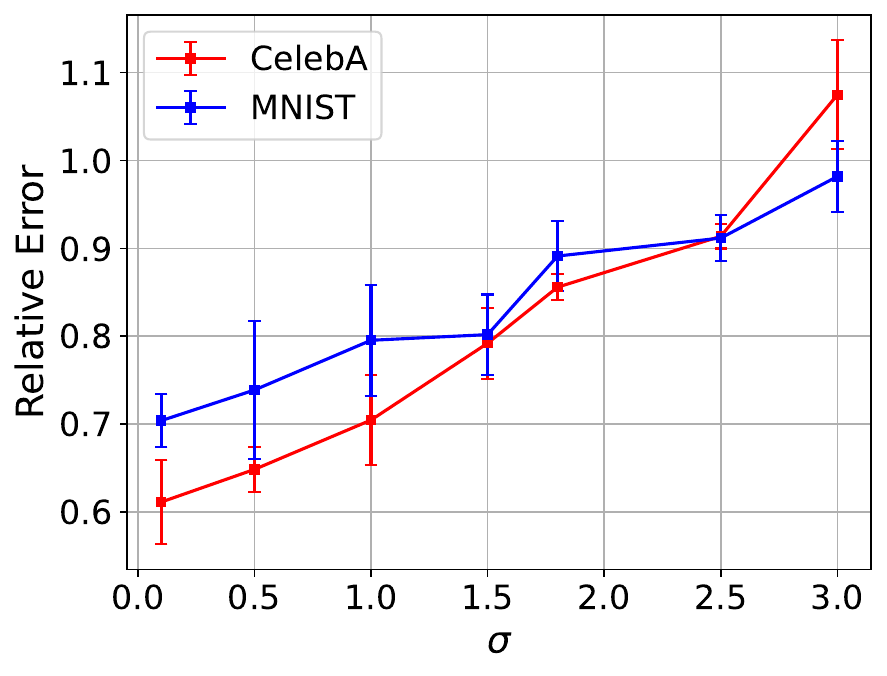} & \hspace{-0.5cm}
\includegraphics[height=0.36\textwidth]{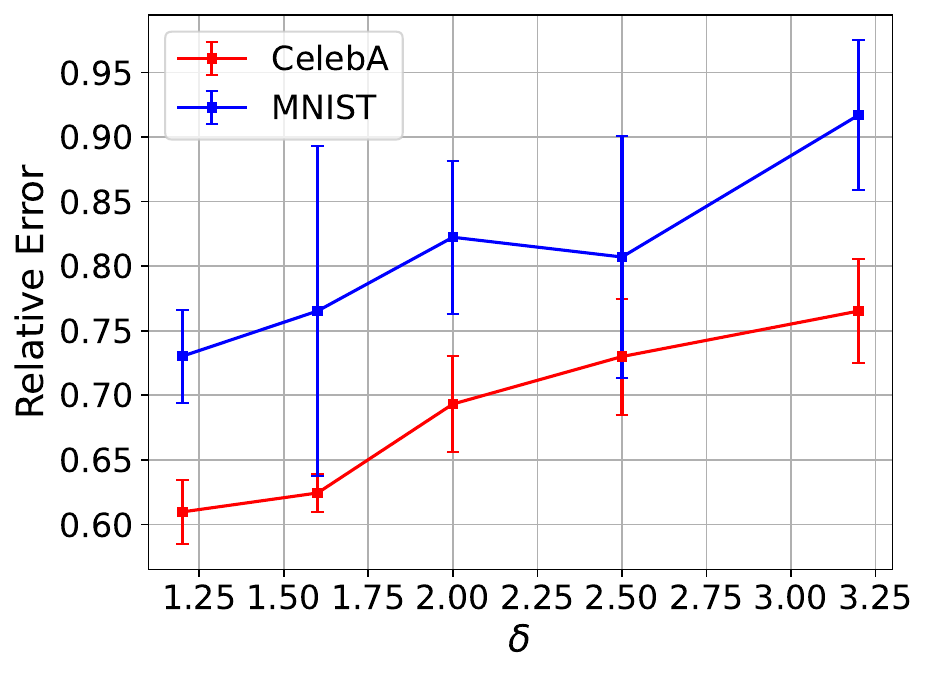} \\
{\small (a) Varying $\sigma$ with fixed $\delta = 1.0$} & \hspace{-0.8cm} {\small (b) Varying $\delta$ with fixed $\sigma = 0.2$} 
\end{tabular}
\caption{Quantitative results of the performance of~\texttt{QCS\_Gen} under Bernoulli measurements for CelebA and MNIST.} \label{fig:quant_vaegan_bernoulli} \vspace*{-2ex}
\end{center}
\end{figure}

\section{Concluding Remarks}\label{sec5:conclu}
In this work, we established uniform recovery guarantees for the problem of quantized corrupted sensing,  using a dithered uniform quantizer, as well as structured or generative priors on the signal and corruption. Unlike the non-uniform results in \cite{sun2022quantized}, our results allow one to use a fixed realization of the sensing ensemble to accurately recover all signals and corruptions of interest.  Our main techniques to prove uniformity are certain quantized embedding properties obtained from   covering arguments; based on them, interestingly, our treatments for structured priors and generative priors are nearly unified. All the uniform error bounds exhibit decaying rates of $O\big(\frac{1}{\sqrt{m}}\big)$. Specifically, the bounds for constrained Lasso typically coincide with the corresponding non-uniform ones up to logarithmic factors, while those for unconstrained Lasso usually display worse dependence on the structured parameters like sparsity level or matrix rank, creating a gap between  constrained Lasso and unconstrained Lasso whose closing presents an open question.  To demonstrate uniform recovery, in the experiments we use a fixed realization of $(\bm{\Phi},\bm{\epsilon},\bm{\tau})$ to accurately recover all $(\bm{x^\star},\bm{v^\star})$   in some testing sets, and the numerical results are consistent with our uniform bounds. For future research, besides the aforementioned open question concerning using unconstrained Lasso for uniform recovery, corrupted sensing under 1-bit quantization is also worth investigation.

\begin{appendix}
\section{Auxiliary Facts}\label{appenA}
We collect  auxiliary results in this Appendix, including some known concentration inequalities and technical lemmas that support our proofs. 
\subsection{Concentration Inequalities}\label{appendixA1}
First of all, we provide the extended matrix deviation inequality that is well tailored to suit the analysis of corrupted sensing. We comment that setting $\mathcal{T}=\mathcal{T}_0\times \{0\}$ for some $\mathcal{T}_0\subset \mathbb{R}^n$ returns the regular matrix deviation inequality  in \cite{jeong2022sub,liaw2017simple,vershynin2018high}  
\begin{proposition}[Extended matrix deviation inequality, Theorem 1 in \cite{chen2018stable}]
\label{pro1} Let $\bm{\Phi}\in \mathbb{R}^{m\times n}$ be the sub-Gaussian  sensing matrix described in   \cref{assump1}, we let $\mathcal{T}\subset \mathbb{R}^n\times \mathbb{R}^m$ be a bounded subset, then for any $t\geq 0$, the event \begin{equation}\nonumber
    \sup_{(\bm{a},\bm{b})\in\mathcal{T}}\Big|\|\bm{\Phi a}+\sqrt{m}\bm{b}\|_2-\sqrt{m}\big(\|\bm{a}\|_2^2+\|\bm{b}\|_2^2\big)^{1/2}\Big|\leq C \big(\gamma(\mathcal{T})+t\cdot\mathrm{rad}(\mathcal{T})\big)
\end{equation} 
holds with probability at least $1-\exp(-t^2)$.
\end{proposition}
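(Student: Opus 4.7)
The plan is to regard the expression inside the supremum as a random process
\begin{align*}
X_{\bm{u}}\ :=\ \bigl\|\bm{\Phi a}+\sqrt{m}\bm{b}\bigr\|_2 \ -\ \sqrt{m}\,\|\bm{u}\|_2,\qquad \bm{u}=(\bm{a};\bm{b})\in\mathcal{T}\subset\mathbb{R}^{n+m},
\end{align*}
and show that it has \emph{sub-Gaussian increments} with respect to the ambient Euclidean metric, i.e. $\|X_{\bm{u}}-X_{\bm{u}'}\|_{\psi_2}\lesssim \|\bm{u}-\bm{u}'\|_2$. Given this, Talagrand's generic-chaining tail bound for sub-Gaussian processes (in the packaged form of, e.g., Theorem~8.5.5 of \cite{vershynin2018high}) immediately produces the desired uniform control $\sup_{\bm{u}\in\mathcal{T}}|X_{\bm{u}}|\leq C(\gamma(\mathcal{T})+t\cdot\mathrm{rad}(\mathcal{T}))$ with probability at least $1-\exp(-t^2)$.

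The substantive step is the verification of sub-Gaussian increments. Writing $\bm{d}=\bm{a}-\bm{a}'$ and $\bm{e}=\bm{b}-\bm{b}'$, the decomposition
\begin{align*}
X_{\bm{u}}-X_{\bm{u}'}=\bigl(\|\bm{\Phi a}+\sqrt{m}\bm{b}\|_2-\|\bm{\Phi a}'+\sqrt{m}\bm{b}'\|_2\bigr)-\sqrt{m}\bigl(\|\bm{u}\|_2-\|\bm{u}'\|_2\bigr)
\end{align*}
controls the deterministic term by $\sqrt{m}\|\bm{u}-\bm{u}'\|_2$ via reverse triangle inequality and reduces the stochastic part to studying $\|\bm{\Phi d}+\sqrt{m}\bm{e}\|_2$. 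Its $i$-th coordinate $\bm{\Phi}_i^\top \bm{d}+\sqrt{m}e_i$ has mean $\sqrt{m}e_i$, variance $\|\bm{d}\|_2^2$ (by isotropy in \cref{assump1}) and centered sub-Gaussian norm $O(K\|\bm{d}\|_2)$; independence across $i$ then gives $\mathbb{E}\|\bm{\Phi d}+\sqrt{m}\bm{e}\|_2^2=m\|\bm{u}-\bm{u}'\|_2^2$, and Bernstein's inequality for sums of independent sub-exponentials (\cite[Thm.~2.8.1]{vershynin2018high}) yields, with probability at least $1-2\exp(-t^2)$,
\begin{align*}
\Bigl|\,\|\bm{\Phi d}+\sqrt{m}\bm{e}\|_2^2-m\|\bm{u}-\bm{u}'\|_2^2\,\Bigr|\lesssim K^2\bigl(t\sqrt{m}\,\|\bm{u}-\bm{u}'\|_2^2+t^2\,\|\bm{u}-\bm{u}'\|_2^2\bigr).
\end{align*}
Linearizing via the identity $a^2-b^2=(a-b)(a+b)$ with $a=\|\bm{\Phi d}+\sqrt{m}\bm{e}\|_2$ and $b=\sqrt{m}\|\bm{u}-\bm{u}'\|_2$ and absorbing the deterministic contribution converts this into the desired sub-Gaussian tail $\mathbb{P}(|X_{\bm{u}}-X_{\bm{u}'}|\geq t\|\bm{u}-\bm{u}'\|_2)\leq 2\exp(-ct^2)$.

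The technical obstacle I anticipate is precisely this quadratic-to-linear conversion: the Bernstein bound is homogeneous of degree $2$ in $\|\bm{u}-\bm{u}'\|_2$, so a naive split of $a^2-b^2$ risks producing an unwanted $\sqrt{m}$ factor or losing the sub-Gaussian tail in the regime where $a\ll b$. The fix is to treat the two regimes $a\geq b/2$ and $a<b/2$ separately --- in the first regime the factor $(a+b)^{-1}$ is $\Theta(\sqrt{m}\|\bm{u}-\bm{u}'\|_2)^{-1}$ and the Bernstein bound linearizes cleanly, whereas in the second regime one simply notes that $a<b/2$ already forces the squared-norm deviation to be at least $3m\|\bm{u}-\bm{u}'\|_2^2/4$, which itself is incompatible with the Bernstein bound at moderate $t$. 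Once sub-Gaussian increments are established, the passage to the uniform tail via the Talagrand/Dudley machinery is completely standard, and the isotropy/sub-Gaussianity constants of $\bm{\Phi}$ get absorbed into the absolute constant $C$.
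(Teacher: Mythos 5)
The paper does not prove this statement --- it cites it as Theorem~1 of \cite{chen2018stable}, which is itself an extension of the Liaw--Mehrabian--Plan--Vershynin matrix deviation inequality. Your overall plan (prove sub-Gaussian increments in the Euclidean metric and then invoke Talagrand's comparison/chaining tail bound) is the correct strategy, and your Bernstein computation for $\|\bm{\Phi d}+\sqrt{m}\bm{e}\|_2^2$ is essentially right. However, the reduction step breaks before you get to Bernstein, and the gap is structural rather than technical.

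Applying the reverse triangle inequality separately to the random and deterministic pieces gives
$|X_{\bm{u}}-X_{\bm{u}'}|\le \|\bm{\Phi d}+\sqrt{m}\bm{e}\|_2 + \sqrt{m}\|\bm{u}-\bm{u}'\|_2$,
and your subsequent work shows (correctly) that $\|\bm{\Phi d}+\sqrt{m}\bm{e}\|_2 - \sqrt{m}\|\bm{u}-\bm{u}'\|_2$ is sub-Gaussian of size $O(K^2\|\bm{u}-\bm{u}'\|_2)$. But the quantity you have analyzed is $X_{\bm{u}-\bm{u}'}$, not $X_{\bm{u}}-X_{\bm{u}'}$; these are not equal, and combining your two bounds yields only $|X_{\bm{u}}-X_{\bm{u}'}|\lesssim K^2\|\bm{u}-\bm{u}'\|_2 + 2\sqrt{m}\|\bm{u}-\bm{u}'\|_2$. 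The increment is therefore $\Theta(\sqrt{m}\|\bm{u}-\bm{u}'\|_2)$, not $O(\|\bm{u}-\bm{u}'\|_2)$, and after chaining this produces a supremum bound of order $\sqrt{m}\,\gamma(\mathcal{T})$ --- a factor $\sqrt{m}$ too large. The cancellation you need between $\|\bm{\Phi a}+\sqrt{m}\bm{b}\|_2-\|\bm{\Phi a}'+\sqrt{m}\bm{b}'\|_2$ and $\sqrt{m}(\|\bm{u}\|_2-\|\bm{u}'\|_2)$ is exactly what the triangle-inequality split destroys. A sanity check: set $\bm{b}=\bm{b}'=0$ and $\bm{a}'=2\bm{a}$. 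Then $X_{\bm{u}}-X_{\bm{u}'}=-(\|\bm{\Phi a}\|_2-\sqrt{m}\|\bm{a}\|_2)$, which is genuinely sub-Gaussian of size $\asymp K^2\|\bm{a}\|_2=K^2\|\bm{u}-\bm{u}'\|_2$, whereas your chain of inequalities bounds it by $\approx 2\sqrt{m}\|\bm{a}\|_2$.

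The fix --- what the cited proof actually does --- is to apply the quadratic-to-linear trick to the difference of the two \emph{original} norms, not to the norm of the difference. Namely, one writes
$\|\bm{\Phi a}+\sqrt{m}\bm{b}\|_2^2-\|\bm{\Phi a}'+\sqrt{m}\bm{b}'\|_2^2
=\sum_{i=1}^m\bigl[\bm{\Phi}_i^\top(\bm{a}-\bm{a}')+\sqrt{m}(b_i-b_i')\bigr]\cdot\bigl[\bm{\Phi}_i^\top(\bm{a}+\bm{a}')+\sqrt{m}(b_i+b_i')\bigr]$,
which is a sum of independent products of sub-Gaussians indexed by $\bm{u}-\bm{u}'$ and by $\bm{u}+\bm{u}'$ respectively, and then divides by $\|\bm{\Phi a}+\sqrt{m}\bm{b}\|_2+\|\bm{\Phi a}'+\sqrt{m}\bm{b}'\|_2$, which is $\gtrsim\sqrt{m}(\|\bm{u}\|_2+\|\bm{u}'\|_2)$ with high probability. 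The $\|\bm{u}+\bm{u}'\|_2\le\|\bm{u}\|_2+\|\bm{u}'\|_2$ factor from the numerator cancels against the denominator, leaving an increment of order $\|\bm{u}-\bm{u}'\|_2$; the remaining deterministic piece $\sqrt{m}(\|\bm{u}\|_2-\|\bm{u}'\|_2)$ is handled by an identical $(a^2-b^2)/(a+b)$ identity so it genuinely cancels rather than adding on. This requires the case analysis you anticipated (denominator bounded below, homogeneity reduction to the unit sphere, the degenerate regime of small norms), but applied to this difference, not to $\|\bm{\Phi d}+\sqrt{m}\bm{e}\|_2$. One further caveat specific to the corrupted-sensing extension: the effective sensing matrix $[\bm{\Phi},\ \sqrt{m}\bm{I}_m]$ does not have isotropic rows with bounded sub-Gaussian constant (the $i$-th row has second moment $\diag(\bm{I}_n,m\bm{e}_i\bm{e}_i^\top)$ and a $\sqrt{m}$-scaled deterministic block), so the off-the-shelf LMPV lemma does not apply verbatim; \cite{chen2018stable} must and does rework the increment estimate for this block structure.
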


We will utilize the following result   to prove a local version of the quantized product embedding property (see  \cref{lem:localqpe}).
\begin{proposition}[Exercise 8.6.5 in \cite{vershynin2018high}]\label{pro2}
     Consider a random process $(Z_{\bm{a}})_{\bm{a}\in \mathcal{T}}$ indexed by points $\bm{a}$ in a bounded subset $\mathcal{T}\subset \mathbb{R}^n$. Assuming that $Z_0=0$, and for all $\bm{a},\bm{b}\in \mathcal{T}\cup\{0\}$ we have 
     \begin{align}\label{eq:increment}
        \|Z_{\bm{a}}-Z_{\bm{b}}\|_{\psi_2}\leq M\|\bm{a}-\bm{b}\|_2. 
    \end{align}
    Then  for any $t\geq 0$, the event  \begin{align} 
        \sup_{\bm{a}\in\mathcal{T}}\big|Z_{\bm{a}}\big|\leq CM\cdot \big[\omega(\mathcal{T})+t\cdot \mathrm{rad}(\mathcal{T})\big]
    \end{align}
    holds with probability exceeding $1-2\exp(-t^2)$.
\end{proposition}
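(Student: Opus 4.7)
The plan is to establish this uniform tail bound via a standard chaining argument—essentially a high-probability form of Dudley's inequality for sub-Gaussian processes. The sub-Gaussian increment assumption \cref{eq:increment} lets us transfer pointwise tail control into a uniform bound whose expectation part is governed by the chaining integral, which is in turn dominated by $\omega(\mathcal{T})$ via \cref{dudley}. The deviation-in-$t$ part will come from an additional level-independent contribution in the union bound.

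First, I would set up a sequence of dyadic nets. For each integer $k\geq 0$, let $\varepsilon_k = 2^{-k}\mathrm{rad}(\mathcal{T})$ and fix an $\varepsilon_k$-net $\mathcal{N}_k\subset\mathcal{T}$ of minimal cardinality $\mathscr{N}(\mathcal{T},\varepsilon_k)$, with $\mathcal{N}_0=\{\bm{a}_0\}$ for an arbitrary fixed $\bm{a}_0\in\mathcal{T}$. Let $\pi_k(\bm{a})$ denote a closest point in $\mathcal{N}_k$ to $\bm{a}$. For any $\bm{a}\in\mathcal{T}$, I would use the telescoping identity
\begin{equation*}
Z_{\bm{a}}-Z_{\bm{a}_0}=\sum_{k=0}^{\infty}\bigl(Z_{\pi_{k+1}(\bm{a})}-Z_{\pi_k(\bm{a})}\bigr),
\end{equation*}
noting $\|\pi_{k+1}(\bm{a})-\pi_k(\bm{a})\|_2\leq 3\varepsilon_k$ by the triangle inequality, with convergence justified by sub-Gaussian continuity of $(Z_{\bm{a}})$ inherited from \cref{eq:increment}.

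Next, I would bound each chain link by sub-Gaussian tails. For each $k$ and each admissible pair $(\bm{p},\bm{q})\in\mathcal{N}_k\times\mathcal{N}_{k+1}$ with $\|\bm{p}-\bm{q}\|_2\leq 3\varepsilon_k$, combining \cref{eq:increment} with \cref{eq:sg_tail} yields
\begin{equation*}
\mathbbm{P}\bigl(|Z_{\bm{p}}-Z_{\bm{q}}|>3M\varepsilon_k s_k\bigr)\leq 2\exp(-c s_k^2)
\end{equation*}
for any $s_k>0$. Choosing $s_k=C_1\sqrt{\mathscr{H}(\mathcal{T},\varepsilon_{k+1})}+2^{k/2}t$ and taking a union bound over the at most $\exp\bigl(2\mathscr{H}(\mathcal{T},\varepsilon_{k+1})\bigr)$ such pairs, the level-$k$ failure probability is bounded by $\exp(-2^k-c't^2)$ for $C_1$ large enough, and summing over $k$ yields a total failure probability of at most $2\exp(-t^2)$. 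On the complementary good event, uniformly over $\bm{a}\in\mathcal{T}$,
\begin{equation*}
|Z_{\bm{a}}-Z_{\bm{a}_0}|\lesssim M\sum_{k\geq 0}\varepsilon_k\sqrt{\mathscr{H}(\mathcal{T},\varepsilon_{k+1})}+Mt\sum_{k\geq 0}2^{k/2}\varepsilon_k.
\end{equation*}
The first sum is a dyadic discretization of Dudley's integral, hence $\lesssim\omega(\mathcal{T})$ by \cref{dudley}; the second is geometric with ratio $2^{-1/2}$, hence $\lesssim\mathrm{rad}(\mathcal{T})$. Since $Z_0=0$ and $\|\bm{a}_0\|_2\leq\mathrm{rad}(\mathcal{T})$, one further application of \cref{eq:sg_tail} handles $|Z_{\bm{a}_0}|$, and the claim follows.

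The main obstacle will be the careful calibration of the deviation parameters $s_k$: the $2^{k/2}t$ piece must grow fast enough in $k$ that $\sum_k\exp\bigl(2\mathscr{H}(\mathcal{T},\varepsilon_{k+1})-c s_k^2\bigr)\lesssim\exp(-t^2)$, yet slow enough that $t\sum_k 2^{k/2}\varepsilon_k$ collapses into the single clean term $t\cdot\mathrm{rad}(\mathcal{T})$ rather than picking up extra logarithmic factors across levels. This is the standard trade-off at the heart of generic chaining; once the level weights are balanced, the remaining steps reduce to routine estimates.
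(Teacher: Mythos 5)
The paper does not prove this proposition; it cites it directly as \cite[Exercise 8.6.5]{vershynin2018high}, so there is no in-paper proof to compare against. Your chaining argument is the standard route to that exercise and the overall structure — dyadic nets, telescoping decomposition, sub-Gaussian tails per link, union bound, Dudley discretization — is the right one.

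There is one concrete gap in the union-bound step. With your choice $s_k=C_1\sqrt{\mathscr{H}(\mathcal{T},\varepsilon_{k+1})}+2^{k/2}t$ you have $cs_k^2\geq cC_1^2\,\mathscr{H}(\mathcal{T},\varepsilon_{k+1})+c\,2^{k}t^2$, so after the union bound the level-$k$ failure probability is at most
\begin{equation*}
2\exp\bigl((2-cC_1^2)\,\mathscr{H}(\mathcal{T},\varepsilon_{k+1})-c\,2^{k}t^2\bigr),
\end{equation*}
not $\exp(-2^k-c't^2)$ as you assert: the additive $-2^k$ term (independent of $t$) simply does not arise from this $s_k$. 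Since $\mathscr{H}(\mathcal{T},\varepsilon_{k+1})$ need not grow in $k$ (think of a finite $\mathcal{T}$), the sum $\sum_k\exp(-c\,2^kt^2)$ diverges for small $t$, and your claim that the total failure probability is $\leq 2\exp(-t^2)$ is not established as stated. Two standard repairs work. (a) Add a $t$-independent margin, e.g. use $s_k\geq C_2 2^{k/2}$; this makes the level-$k$ probability $\lesssim\exp(-c'2^k)\exp(-c\,2^kt^2)$, which is summable for every $t\geq 0$, at the cost of an extra $M\sum_k 2^{k/2}\varepsilon_k\lesssim M\rad(\mathcal{T})$ in the chain bound — harmless, since the statement is vacuous for $t\lesssim 1$ and this term is absorbed into $CMt\rad(\mathcal{T})$ once $t\gtrsim 1$. (b) Alternatively, restrict attention to $t\geq t_0$ for a small absolute $t_0$ (below which $1-2\exp(-t^2)\leq 0$), in which regime $\sum_k\exp(-c\,2^kt^2)\lesssim\exp(-ct^2)$ by geometric domination, and then rescale $t\mapsto t/\sqrt{c}$ to land on the stated $\exp(-t^2)$ tail, adjusting the absolute constant $C$. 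Once this calibration is fixed, the Dudley discretization $\sum_k\varepsilon_k\sqrt{\mathscr{H}(\mathcal{T},\varepsilon_{k+1})}\lesssim\omega(\mathcal{T})$, the geometric sum $t\sum_k 2^{-k/2}\rad(\mathcal{T})\lesssim t\rad(\mathcal{T})$, and the final $|Z_{\bm{a}_0}|$ tail bound all go through as you describe.
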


Next, we present a result that precisely characterizes the range of a low-complexity set under the sub-Gaussian map $\bm{\Phi}$. In particular, it provides uniform bound on the $l$-th largest measurement (since this is evidently no larger than the left-hand side of \cref{eq:bound_l_large} below), which   proves an effective tool in bounding the number of large perturbations (see \cref{eq:bound_ell_large} in the proof of \cref{thm:globalqpe}). We note that \cref{prop:bound_l_largest} is the most crucial ingredient for getting our  improvement on \cite{xu2020quantized}, which is to be presented in \cref{app:improve}.  

\begin{proposition}[Adapted from Theorem 2.10 in  \cite{dirksen2021non}]\label{prop:bound_l_largest}
   Let $\bm{\Phi}_1,...,\bm{\Phi}_m$ be independent, isotropic sub-Gaussian sensing vectors satisfying $\max_i\|\bm{\Phi}_i\|_{\psi_2}=O(1)$, we consider some $\mathcal{T}\subset \mathbb{R}^n$. If $1\le l\le m$, then for some absolute constants $C_1,C_2$,  
    the event\footnote{In the original statement of \cite[Thm. 2.10]{dirksen2021non}, $\omega(\mathcal{T})$ in the right-hand side of \cref{eq:bound_l_large} should be $\gamma(\mathcal{T})$, while we can safely use $\omega(\mathcal{T})$ here because \cref{widthcomple} gives  $\gamma(\mathcal{K})\le 2\omega(\mathcal{K})+2\rad(\mathcal{T})$, and we observe that $\frac{\rad(\mathcal{T})}{l^{1/2}}\le \rad(\mathcal{T})(\log\frac{em}{l})^{1/2}$ holds for any $l\in[1,m]$.}
    \begin{align}\label{eq:bound_l_large}
        \sup_{\bm{a}\in\mathcal{T}}\max_{\substack{I\subset [m]\\|I|\le l}}\Big(\frac{1}{l}\sum_{i\in I}|\langle\bm{\Phi}_i,\bm{a}\rangle|^2\Big)^{1/2} \le C_1 \Big(\frac{\omega(\mathcal{T})}{\sqrt{l}}+\rad(\mathcal{T})\sqrt{\log\frac{em}{l}}\Big)
    \end{align}
    holds with probability at least $1-2\exp(-C_2l\log\frac{em}{l})$. 
\end{proposition}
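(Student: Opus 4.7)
My first move is to rewrite the left-hand side in a form amenable to a single chaining bound. By duality between the top-$l$ $\ell_2$-norm and the set of $l$-sparse unit $\ell_2$-vectors,
\begin{equation*}
    \max_{|I|\le l}\Bigl(\sum_{i\in I}|\langle\bm{\Phi}_i,\bm{a}\rangle|^2\Bigr)^{1/2}=\sup_{\bm{u}\in\mathcal{S}_l}\langle\bm{u},\bm{\Phi}\bm{a}\rangle,\qquad \mathcal{S}_l:=\Sigma^m_l\cap\mathbb{B}_2^m.
\end{equation*}
Hence the problem reduces to an upper tail bound for the bilinear sub-Gaussian process
\begin{equation*}
    X_{\bm{a},\bm{u}}:=\sum_{i=1}^m u_i\langle\bm{\Phi}_i,\bm{a}\rangle,\qquad (\bm{a},\bm{u})\in\mathcal{T}\times\mathcal{S}_l,
\end{equation*}
where the sparse side $\mathcal{S}_l$ captures the combinatorial max over subsets $I$.

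I would next verify the correct metric structure. Splitting
\begin{equation*}
    X_{\bm{a},\bm{u}}-X_{\bm{a}',\bm{u}'}=\sum_i(u_i-u_i')\langle\bm{\Phi}_i,\bm{a}\rangle+\sum_i u_i'\langle\bm{\Phi}_i,\bm{a}-\bm{a}'\rangle,
\end{equation*}
and applying \cref{eq:sum_psi2} together with $\|\bm{\Phi}_i\|_{\psi_2}=O(1)$ and the isotropy assumption, the increments are sub-Gaussian with
\begin{equation*}
    \|X_{\bm{a},\bm{u}}-X_{\bm{a}',\bm{u}'}\|_{\psi_2}\lesssim \rad(\mathcal{T})\|\bm{u}-\bm{u}'\|_2+\|\bm{a}-\bm{a}'\|_2,
\end{equation*}
using $\rad(\mathcal{S}_l)\le 1$. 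Invoking Talagrand's generic chaining deviation inequality for sub-Gaussian processes with respect to this product-type $\psi_2$-metric (and using that $\gamma_2\asymp\omega$ for Gaussian complexity), one gets
\begin{equation*}
    \sup_{(\bm{a},\bm{u})\in\mathcal{T}\times\mathcal{S}_l}|X_{\bm{a},\bm{u}}|\le C\bigl(\omega(\mathcal{T})+\rad(\mathcal{T})\omega(\mathcal{S}_l)\bigr)+t
\end{equation*}
with probability exceeding $1-2\exp(-ct^2/\rad(\mathcal{T})^2)$.

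To finish, I would substitute the standard estimate $\omega(\mathcal{S}_l)\asymp\sqrt{l\log(em/l)}$, choose $t\asymp \rad(\mathcal{T})\sqrt{l\log(em/l)}$ to match the advertised failure probability $2\exp(-C_2 l\log(em/l))$, and then divide through by $\sqrt{l}$ to obtain the stated bound. The main obstacle is ensuring that the tail exponent scales as $l\log(em/l)$ rather than merely $l$; this is precisely the contribution of the sparse side $\mathcal{S}_l$, whose Gaussian width carries the logarithmic factor, while the product-type metric keeps the $\psi_2$-diameter pinned at $\rad(\mathcal{T})$, so the combinatorial enumeration of the $\binom{m}{l}\le(em/l)^l$ subsets is absorbed correctly. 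An equivalent route, closer to the original argument in \cite{dirksen2021non}, first applies Bernstein's inequality to $\frac{1}{l}\sum_{i\in I}|\langle\bm{\Phi}_i,\bm{a}\rangle|^2$ for fixed $(\bm{a},I)$, then takes a union bound over the $\binom{m}{l}$ choices of $I$ and performs Dudley-type chaining over $\bm{a}\in\mathcal{T}$; the prefactor $(em/l)^l$ again produces exactly the target tail exponent.
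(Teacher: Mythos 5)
The paper itself does not prove this proposition: it is imported wholesale from Dirksen and Mendelson (their Theorem~2.10), and the only original content is the footnote that replaces $\gamma(\mathcal{T})$ by $\omega(\mathcal{T})$ using \cref{widthcomple} together with the observation that $\rad(\mathcal{T})/\sqrt{l}\le\rad(\mathcal{T})\sqrt{\log(em/l)}$ for $l\in[1,m]$. So there is no in-paper argument to compare against, and your proposal should be judged on its own as an independent derivation.

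On those terms, your sketch is correct and clean. The duality step $\max_{|I|\le l}(\sum_{i\in I}|\langle\bm{\Phi}_i,\bm{a}\rangle|^2)^{1/2}=\sup_{\bm{u}\in\Sigma_l^m\cap\mathbb{B}_2^m}\langle\bm{u},\bm{\Phi a}\rangle$ is exactly right, and it moves the combinatorial $\max_I$ into a Gaussian width $\omega(\Sigma_l^m\cap\mathbb{B}_2^m)\asymp\sqrt{l\log(em/l)}$, which is where the $\log(em/l)$ factor and the tail exponent $l\log(em/l)$ come from. The increment computation is correct: the first summand is centered and independent across $i$, so \cref{eq:sum_psi2} and $\|\langle\bm{\Phi}_i,\bm{a}\rangle\|_{\psi_2}\lesssim\|\bm{a}\|_2\le\rad(\mathcal{T})$ give the term $\rad(\mathcal{T})\|\bm{u}-\bm{u}'\|_2$, and the second summand gives $\|\bm{u}'\|_2\cdot\|\bm{a}-\bm{a}'\|_2\le\|\bm{a}-\bm{a}'\|_2$. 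Since $X_{\bm{a},0}=0$, anchoring at an arbitrary $(\bm{a}_0,0)$ with $0\in\mathcal{S}_l$ and the symmetry of $\mathcal{S}_l$ allow you to pass from a one-sided deviation to $\sup|X_{\bm{a},\bm{u}}|$ without loss. The one place where the sketch leans on machinery that deserves to be named precisely is the step from the product-type increment bound to the chaining functional: you need the deviation form of generic chaining for sub-Gaussian processes (e.g.\ Talagrand's bound, or \cref{pro2} in the form stated in the paper after identifying the appropriate metric and diameter), together with the sub-additivity $\gamma_2(\mathcal{T}\times\mathcal{S}_l,\,d_1\oplus d_2)\lesssim\gamma_2(\mathcal{T},d_2)+\gamma_2(\mathcal{S}_l,d_1)$ for the sum metric $d((\bm{a},\bm{u}),(\bm{a}',\bm{u}'))=\rad(\mathcal{T})\|\bm{u}-\bm{u}'\|_2+\|\bm{a}-\bm{a}'\|_2$ and the majorizing-measure equivalence $\gamma_2(\cdot,\|\cdot\|_2)\asymp\omega(\cdot)$. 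These are standard but nontrivial facts, so in a full write-up I would cite them explicitly rather than invoke ``Talagrand's generic chaining'' generically; with those references filled in, the diameter $\asymp\rad(\mathcal{T})$ yields exactly the tail normalization $t^2/\rad(\mathcal{T})^2$ you wrote, and your choice $t\asymp\rad(\mathcal{T})\sqrt{l\log(em/l)}$ delivers both the stated bound (after dividing by $\sqrt{l}$) and the failure probability $2\exp(-C_2\,l\log(em/l))$. Your second route via Bernstein over fixed $(\bm{a},I)$ plus a union bound over $\binom{m}{l}\le(em/l)^l$ subsets and a final chaining over $\bm{a}$ is also viable and is in fact closer to how such sparse-recovery deviation bounds are usually proved in the literature.
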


Finally, we present a Chevet-type inequality   that is indeed a simple outcome of \cref{pro2}.


\begin{proposition}[Chevet-type inequality, Lemma 4 in \cite{chen2018stable}] \label{pro4}
    Let $\bm{\Phi}$ be the sub-Gaussian sensing matrix described in   \cref{assump1}, $\bm{w}\in \mathbb{R}^m$ be fixed, and $\mathcal{T}$ be a bounded subset of $\mathbb{R}^n$. Then for any $t\geq 0$, the event  \begin{equation}
        \sup_{\bm{a}\in \mathcal{T}}\big|\langle\bm{w},\bm{\Phi a}\rangle\big| \leq C\|\bm{w}\|_2\big(\omega(\mathcal{T})+t\cdot\mathrm{rad}(\mathcal{T})\big),
    \end{equation}
    holds with probability exceeding $1-2\exp(-t^2)$. 
\end{proposition}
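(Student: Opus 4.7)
The plan is to recognize the claim as a maximal inequality for a sub-Gaussian process and reduce it to the generic-chaining-type bound already recorded as \cref{pro2}. Concretely, I would introduce the (scalar) random process
\begin{equation*}
    Z_{\bm{a}} := \langle \bm{w}, \bm{\Phi}\bm{a}\rangle = \sum_{i=1}^m w_i\,\langle\bm{\Phi}_i,\bm{a}\rangle,\qquad \bm{a}\in\mathcal{T}\cup\{0\},
\end{equation*}
and verify the two hypotheses of \cref{pro2}: that $Z_0=0$ (which is immediate from linearity) and that the increments are $M$-sub-Gaussian-Lipschitz with $M\asymp \|\bm{w}\|_2$. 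Once this is established, applying \cref{pro2} to $Z_{\bm{a}}$ with that value of $M$ yields exactly the stated bound, after absorbing the absolute constant $K$ from \cref{assump1} into $C$.

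The key step is therefore the increment estimate. For $\bm{a},\bm{b}\in\mathcal{T}\cup\{0\}$, write
\begin{equation*}
    Z_{\bm{a}}-Z_{\bm{b}} \;=\; \sum_{i=1}^m w_i\,\bigl\langle\bm{\Phi}_i,\bm{a}-\bm{b}\bigr\rangle.
\end{equation*}
Under \cref{assump1} the summands $w_i\langle\bm{\Phi}_i,\bm{a}-\bm{b}\rangle$ are independent (because the rows $\bm{\Phi}_i$ are independent) and zero-mean (because each $\bm{\Phi}_i$ is zero-mean). Using the definition of the sub-Gaussian norm of a random vector and $\|\bm{\Phi}_i\|_{\psi_2}\le K$, each summand satisfies
\begin{equation*}
    \bigl\|w_i\langle\bm{\Phi}_i,\bm{a}-\bm{b}\rangle\bigr\|_{\psi_2} \;\le\; K\,|w_i|\,\|\bm{a}-\bm{b}\|_2.
\end{equation*}
Combining this with the sum rule for independent zero-mean sub-Gaussian variables recorded in \cref{eq:sum_psi2} gives
\begin{equation*}
    \|Z_{\bm{a}}-Z_{\bm{b}}\|_{\psi_2}^2 \;\le\; C\sum_{i=1}^m K^2 w_i^2\,\|\bm{a}-\bm{b}\|_2^2 \;=\; C K^2\|\bm{w}\|_2^2\,\|\bm{a}-\bm{b}\|_2^2,
\end{equation*}
so the increment condition \cref{eq:increment} holds with $M=C^{1/2}K\|\bm{w}\|_2$. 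Invoking \cref{pro2} with this $M$ produces the desired tail bound on $\sup_{\bm{a}\in\mathcal{T}}|Z_{\bm{a}}|$, and the probability term $1-2\exp(-t^2)$ is inherited directly from \cref{pro2}.

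There is no real obstacle in this argument: the nontrivial work has been packaged into \cref{pro2} (itself a consequence of Talagrand's generic chaining / Dudley's inequality). The only point requiring any care is not to lose a factor of $\|\bm{w}\|_2$ when bounding the increments, which is why I prefer to apply the summation rule \cref{eq:sum_psi2} directly rather than pulling out $\|\bm{w}\|_\infty$. An alternative route would be to condition on $\bm{w}$ and apply \cref{pro1} with $\mathcal{T}_0 = \mathcal{T}\times\{0\}$ after rescaling, but the present one is cleaner and makes the role of $\|\bm{w}\|_2$ transparent.
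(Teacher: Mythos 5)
Your proof is correct and takes exactly the route the paper intends: the text immediately preceding \cref{pro4} explicitly says the Chevet-type inequality ``is indeed a simple outcome of \cref{pro2},'' and your argument — verify the sub-Gaussian increment condition via the sum rule \cref{eq:sum_psi2} with $M\asymp K\|\bm{w}\|_2$, then invoke \cref{pro2} — is precisely that derivation. (The paper itself does not write it out, simply citing Lemma~4 of \cite{chen2018stable}.) One small aside: the alternative route you mention via \cref{pro1} would not work as stated, since the matrix deviation inequality controls $\|\bm{\Phi a}\|_2$ rather than the bilinear form $\langle\bm{w},\bm{\Phi a}\rangle$; but this is a parenthetical remark and does not affect the correctness of your main argument.
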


\subsection{Estimates of Geometric Quantities}\label{appendixA.2}
We collect some useful estimates on geometric quantities, specifically on Gaussian width, Gaussian complexity or Kolmogorov entropy (a notion equivalent to covering number).
We begin with the well-known bounds on Gaussian width and Kolmogorov entropy for the structured priors of sparsity and low-rankness. 
\begin{proposition}[See, e.g., \cite{plan2012robust}]
    \label{pro5}
    We can estimate the Gaussian width of $\Sigma^n_s\cap \mathbb{S}^{n-1}$, $\mathbb{B}_2^n\cap \sqrt{s}\mathbb{B}^n_1$   as follows:\begin{equation}\nonumber
        \max\big\{\omega(\Sigma^n_s\cap \mathbb{S}^{n-1}),\omega(\mathbb{B}_2^n\cap \sqrt{s}\mathbb{B}_1^n)\big\}\asymp \sqrt{s\log\frac{en}{s}}.
    \end{equation}
\end{proposition}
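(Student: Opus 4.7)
The plan is to prove matching upper and lower bounds of order $\sqrt{s\log(en/s)}$ for both Gaussian widths, and to reduce the second bound to the first via a convex-hull comparison. Throughout, let $\bm{g}\sim\mathcal{N}(0,\bm{I}_n)$ and let $g_{(1)}^2\ge g_{(2)}^2\ge\cdots\ge g_{(n)}^2$ denote the order statistics of $(g_1^2,\ldots,g_n^2)$.

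For $\omega(\Sigma_s^n\cap\mathbb{S}^{n-1})$, I would first observe the identity
\begin{equation*}
\sup_{\bm{x}\in\Sigma_s^n\cap\mathbb{S}^{n-1}}\langle\bm{g},\bm{x}\rangle=\Big(\sum_{i=1}^s g_{(i)}^2\Big)^{1/2},
\end{equation*}
since for any fixed support $S$ of size $s$, the supremum over $\mathbb{S}^{n-1}\cap\{\bm{x}:\supp(\bm{x})\subset S\}$ is attained at $\bm{x}=\bm{g}_S/\|\bm{g}_S\|_2$, and the outer supremum over $S$ selects the $s$ largest coordinates of $|\bm{g}|$. For the upper bound, I would combine Jensen's inequality with a standard tail estimate for order statistics of Gaussians: there exists an absolute constant $C$ such that for every $i\le n$, $\mathbb{E}\,g_{(i)}^2\le C\log(en/i)$. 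Summing gives $\mathbb{E}\sum_{i=1}^s g_{(i)}^2\lesssim s\log(en/s)$, whence $\omega(\Sigma_s^n\cap\mathbb{S}^{n-1})\lesssim\sqrt{s\log(en/s)}$. Alternatively, one can derive the same bound through Dudley's inequality \cref{dudley} using the fact that the $\varepsilon$-covering number of the set is bounded by $\binom{n}{s}(1+2/\varepsilon)^s\le(en/s)^s(1+2/\varepsilon)^s$.

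For the matching lower bound, I would restrict the supremum to vectors supported on the $s$ indices with the largest $|g_i|$. A routine calculation (e.g.\ Gaussian anti-concentration applied coordinate-wise) shows that, in expectation, $g_{(i)}^2\gtrsim\log(en/i)$ for $i\le s$, so $\mathbb{E}\big(\sum_{i=1}^s g_{(i)}^2\big)^{1/2}\gtrsim\sqrt{s\log(en/s)}$ via a square-root-of-expectation bound combined with concentration of Lipschitz functions of Gaussians (to move the square root inside the expectation).

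For $\omega(\mathbb{B}_2^n\cap\sqrt{s}\mathbb{B}_1^n)$, I would exploit the sandwich
\begin{equation*}
\conv(\Sigma_s^n\cap\mathbb{S}^{n-1})\subset \mathbb{B}_2^n\cap\sqrt{s}\mathbb{B}_1^n\subset 2\,\conv(\Sigma_s^n\cap\mathbb{S}^{n-1}),
\end{equation*}
where the left inclusion is immediate (any $s$-sparse unit vector lies in both balls by Cauchy--Schwarz) and the right inclusion is the standard sparse-decomposition lemma: one splits a vector in $\mathbb{B}_2^n\cap\sqrt{s}\mathbb{B}_1^n$ into blocks of its largest-$s$ coordinates and bounds the tail contribution against $\sqrt{s}\|\cdot\|_1$-budget. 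Since Gaussian width is invariant under taking convex hulls, the bound on $\omega(\Sigma_s^n\cap\mathbb{S}^{n-1})$ transfers to $\omega(\mathbb{B}_2^n\cap\sqrt{s}\mathbb{B}_1^n)$ up to the factor of $2$. The main technical obstacle is the right inclusion above: carrying out the block decomposition cleanly to show $\mathbb{B}_2^n\cap\sqrt{s}\mathbb{B}_1^n\subset 2\,\conv(\Sigma_s^n\cap\mathbb{S}^{n-1})$, which requires one to verify that the $\ell_2$-mass of the later blocks decays geometrically enough to sum to an absolute constant; everything else is bookkeeping.
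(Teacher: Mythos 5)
The paper does not prove this proposition --- it simply cites \cite{plan2012robust} --- so there is no in-paper argument to compare against. Your route (computing $\omega(\Sigma_s^n\cap\mathbb{S}^{n-1})$ via the identity with order statistics of $\chi^2_1$ variables, then transferring to $\mathbb{B}_2^n\cap\sqrt{s}\mathbb{B}_1^n$ through the convex-hull sandwich) is exactly the standard argument one finds in the cited reference and in Vershynin's book, and it is sound in outline. The block-decomposition showing $\mathbb{B}_2^n\cap\sqrt{s}\mathbb{B}_1^n\subset 2\,\mathrm{conv}(\Sigma_s^n\cap\mathbb{S}^{n-1})$ is correct: with the magnitude-ordered coordinates split into consecutive size-$s$ blocks $S_0,S_1,\dots$, one has $\|\bm{x}_{S_k}\|_2\le\|\bm{x}_{S_{k-1}}\|_1/\sqrt{s}$ for $k\ge 1$, so $\sum_{k\ge 1}\|\bm{x}_{S_k}\|_2\le\|\bm{x}\|_1/\sqrt{s}\le 1$, which together with $\|\bm{x}_{S_0}\|_2\le 1$ gives total convex weight at most $2$.

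The one place where your sketch is literally false as stated is the lower bound $\mathbb{E}\,g_{(i)}^2\gtrsim\log(en/i)$ ``for $i\le s$.'' When $s$ is close to $n$ this claim breaks down for large $i$: the minimum of $n$ i.i.d.\ $\chi^2_1$ variables satisfies $\mathbb{E}\,g_{(n)}^2\asymp n^{-2}\to 0$, whereas $\log(en/n)=1$, so the per-index bound cannot hold with an absolute constant near $i=n$. You only need the sum $\sum_{i\le s}\mathbb{E}\,g_{(i)}^2\gtrsim s\log(en/s)$, and a safe way to get it is to restrict to $i\le\min\{s,\lceil n/2\rceil\}$, where the Gaussian tail does give $\mathbb{E}\,g_{(i)}^2\gtrsim\log(en/i)\gtrsim\log(en/s)$ with an absolute constant; this already yields the desired order of magnitude, possibly at the cost of a factor $\tfrac12$ absorbed by $\gtrsim$. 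Alternatively one can bypass order statistics on the lower side entirely by exhibiting a $\tfrac12$-separated subset of $\Sigma_s^n\cap\mathbb{S}^{n-1}$ of log-cardinality $\gtrsim s\log(en/s)$ and invoking Sudakov minoration, which is arguably cleaner. Also, the phrase ``Gaussian anti-concentration'' is a misnomer here --- what is actually used is the Gaussian \emph{tail} lower bound $\mathbb{P}(|g|\ge t)\gtrsim e^{-t^2}$ for the relevant range of $t$ --- but this does not affect the substance of the argument.
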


\begin{proposition}[See, e.g.,  \cite{plan2013one,candes2011tight,baraniuk2007simple}]\label{pro6}
  Given some $\epsilon>0$, recall that $\Sigma_s^n$ is the set of all $s$-sparse signals in $\mathbb{R}^n$, and $M_{r}^{p,q}$ is the set of $p\times q$  matrices   with rank  not exceeding $r$, 
    then we have    \begin{equation}\nonumber
        \mathscr{H}(\Sigma_{s}^n\cap \mathbb{B}_2^{n},\epsilon)\leq s\log\Big( \frac{9n}{\epsilon s} \Big)~~\mathrm{and}~~\mathscr{H}(M_r^{p,q}\cap \mathbb{B}_{\rm F}^{p,q},\epsilon)\leq 2r(p+q)\log \Big(\frac{9}{\epsilon}\Big).
    \end{equation}   
\end{proposition}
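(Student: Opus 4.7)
\textbf{Proof Plan for \cref{pro6}.}

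For the sparse bound, the plan is to build a global $\varepsilon$-net by a support-plus-volume argument. First, decompose $\Sigma^n_s\cap \mathbb{B}_2^n = \bigcup_{S\subset[n],\,|S|=s}\mathbb{B}_2^S$, where $\mathbb{B}_2^S$ denotes the unit Euclidean ball of the coordinate subspace indexed by $S$. For each fixed $S$, the standard volume estimate on an $s$-dimensional Euclidean ball gives an $\varepsilon$-net of cardinality at most $(1+2/\varepsilon)^s\le (3/\varepsilon)^s$ for $\varepsilon\in(0,1]$ (the range $\varepsilon\ge 1$ is trivial since $\mathscr{N}(\Sigma^n_s\cap\mathbb{B}_2^n,\varepsilon)=1$). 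Taking the union of these nets over all $\binom{n}{s}\le (en/s)^s$ supports yields a net of the full set with at most $(en/s)^s\cdot(3/\varepsilon)^s=(3en/(\varepsilon s))^s$ elements. Using $3e<9$, taking logarithms gives the desired $s\log(9n/(\varepsilon s))$.

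For the low-rank bound, the plan is an SVD factorization followed by a product-of-nets construction. Any $\bm{X}\in M^{p,q}_r\cap \mathbb{B}_{\rm F}^{p,q}$ admits a representation $\bm{X}=\bm{U}\bm{\Sigma}\bm{V}^\top$, where $\bm{U}\in \mathbb{R}^{p\times r}$ and $\bm{V}\in \mathbb{R}^{q\times r}$ have orthonormal columns (hence $\|\bm{U}\|_{\rm op}=\|\bm{V}\|_{\rm op}=1$), and $\bm{\Sigma}$ is $r\times r$ diagonal with $\|\bm{\Sigma}\|_{\rm F}\le 1$. I would build three separate $(\varepsilon/3)$-nets: $\mathcal{S}_U$ for the Stiefel set under operator norm (which sits inside the operator-norm unit ball, a convex body of dimension $pr$ in $\mathbb{R}^{p\times r}$, so a volume argument gives $|\mathcal{S}_U|\le (1+6/\varepsilon)^{pr}\le (9/\varepsilon)^{pr}$); symmetrically $|\mathcal{S}_V|\le (9/\varepsilon)^{qr}$; and a Frobenius-norm $(\varepsilon/3)$-net $\mathcal{S}_\Sigma$ of the $r$-dimensional Frobenius ball of diagonal matrices, with $|\mathcal{S}_\Sigma|\le (9/\varepsilon)^r$. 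The product $\mathcal{S}_U\times\mathcal{S}_\Sigma\times \mathcal{S}_V$ yields, after forming $\bm{U}\bm{\Sigma}\bm{V}^\top$, a candidate net of cardinality at most $(9/\varepsilon)^{(p+q+1)r}\le (9/\varepsilon)^{2r(p+q)}$, since $p+q+1\le 2(p+q)$.

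The only nonroutine step is verifying that the product net is a genuine $\varepsilon$-net in Frobenius norm. The plan is to use the telescoping identity
\begin{align*}
\bm{U}_1\bm{\Sigma}_1\bm{V}_1^\top-\bm{U}_2\bm{\Sigma}_2\bm{V}_2^\top=(\bm{U}_1-\bm{U}_2)\bm{\Sigma}_1\bm{V}_1^\top+\bm{U}_2(\bm{\Sigma}_1-\bm{\Sigma}_2)\bm{V}_1^\top+\bm{U}_2\bm{\Sigma}_2(\bm{V}_1-\bm{V}_2)^\top,
\end{align*}
and then bound each of the three resulting Frobenius norms via $\|\bm{A}\bm{B}\|_{\rm F}\le \|\bm{A}\|_{\rm op}\|\bm{B}\|_{\rm F}$ together with the normalizations $\|\bm{\Sigma}_i\|_{\rm F}\le 1$, $\|\bm{U}_2\|_{\rm op}=\|\bm{V}_1\|_{\rm op}=1$. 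This turns three $\varepsilon/3$ approximation errors into a single $\varepsilon$ Frobenius bound. The main obstacle—really the only one—is keeping the factors of $3$ in the component accuracies aligned with the final bound $9/\varepsilon$; the rest of the argument is bookkeeping with standard volumetric covering estimates.
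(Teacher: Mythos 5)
Your proposal is correct and follows essentially the same route as the standard proofs in the references cited for \cref{pro6} (the paper itself gives no proof): a union-over-supports plus volumetric net for $\Sigma_s^n\cap\mathbb{B}_2^n$, and the Cand\`es--Plan style SVD factorization with product nets for $U$, $\Sigma$, $V$ (telescoping via $\|\bm{A}\bm{B}\|_{\rm F}\le\|\bm{A}\|_{\rm op}\|\bm{B}\|_{\rm F}$) for $M_r^{p,q}\cap\mathbb{B}_{\rm F}^{p,q}$. The constant bookkeeping ($3e\le 9$, $(1+6/\epsilon)\le 9/\epsilon$ for $\epsilon\le 1$, and $p+q+1\le 2(p+q)$) checks out.
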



Then, we give two results on relations between complexity quantities. With structured priors,  \cref{pro7} will be used in the analysis of   constrained Lasso, while \cref{pro9} is for analyzing unconstrained Lasso. 
\begin{proposition}
    \label{pro7}
    Given two cones $\mathcal{A}\subset \mathbb{R}^n$ and  $\mathcal{B}\subset \mathbb{R}^m$,   let $\mathcal{C}=\mathcal{A}\times \mathcal{B}$  and   write $\mathcal{A}^*=\mathcal{A}\cap \mathbb{S}^{n-1}$, $\mathcal{B}^*=\mathcal{B}\cap \mathbb{S}^{m-1}$, $\mathcal{C}^*=\mathcal{C}\cap \mathbb{S}^{m+n-1}$.  Then we have \begin{equation}\nonumber
        \max\{\gamma(\mathcal{A}^*),\gamma(\mathcal{B}^*)\}\leq \gamma(\mathcal{C}^*) \leq \gamma(\mathcal{A}^*)+\gamma(\mathcal{B^*}).
    \end{equation}
\end{proposition}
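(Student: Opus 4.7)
The plan is to establish the two inequalities separately using elementary manipulations of the defining supremum of Gaussian complexity. I would let $\bm{g}\sim \mathcal{N}(0,\bm{I}_{n+m})$ and split it as $\bm{g}=(\bm{g}_1,\bm{g}_2)$ with $\bm{g}_1\in\mathbb{R}^n$ and $\bm{g}_2\in\mathbb{R}^m$; then for any $(\bm{a},\bm{b})\in\mathbb{R}^n\times\mathbb{R}^m$ we have $\langle \bm{g},(\bm{a},\bm{b})\rangle=\langle \bm{g}_1,\bm{a}\rangle+\langle \bm{g}_2,\bm{b}\rangle$. The proof uses only (i) that $\mathcal{A}$ and $\mathcal{B}$ are cones (so in particular each contains the origin) and (ii) the trivial inequality $\|\bm{a}\|_2,\|\bm{b}\|_2\le 1$ for $(\bm{a},\bm{b})\in \mathcal{C}^*$.

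For the lower bound $\max\{\gamma(\mathcal{A}^*),\gamma(\mathcal{B}^*)\}\leq \gamma(\mathcal{C}^*)$, I would use the inclusions $\mathcal{A}^*\times\{0\}\subset \mathcal{C}^*$ and $\{0\}\times \mathcal{B}^*\subset \mathcal{C}^*$, which hold because $\mathcal{A}$ and $\mathcal{B}$ are cones. For any $\bm{a}\in\mathcal{A}^*$, $|\langle \bm{g}_1,\bm{a}\rangle|=|\langle \bm{g},(\bm{a},0)\rangle|\le \sup_{(\bm{a}',\bm{b}')\in\mathcal{C}^*}|\langle \bm{g},(\bm{a}',\bm{b}')\rangle|$, and taking suprema over $\mathcal{A}^*$ followed by expectation yields $\gamma(\mathcal{A}^*)\le \gamma(\mathcal{C}^*)$; the same argument gives $\gamma(\mathcal{B}^*)\le \gamma(\mathcal{C}^*)$.

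For the upper bound $\gamma(\mathcal{C}^*)\leq \gamma(\mathcal{A}^*)+\gamma(\mathcal{B}^*)$, I would fix $(\bm{a},\bm{b})\in \mathcal{C}^*$ and use the triangle inequality plus the homogeneity of the inner product. When $\bm{a}\ne 0$, the normalized vector $\bm{a}/\|\bm{a}\|_2$ lies in $\mathcal{A}^*$ since $\mathcal{A}$ is a cone, so $|\langle \bm{g}_1,\bm{a}\rangle|=\|\bm{a}\|_2\cdot |\langle \bm{g}_1,\bm{a}/\|\bm{a}\|_2\rangle|\le \|\bm{a}\|_2\sup_{\bm{a}'\in\mathcal{A}^*}|\langle \bm{g}_1,\bm{a}'\rangle|$, and the bound holds trivially when $\bm{a}=0$; similarly for $\bm{b}$. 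Combined with $\|\bm{a}\|_2,\|\bm{b}\|_2\le 1$, this gives
\begin{equation*}
|\langle \bm{g},(\bm{a},\bm{b})\rangle|\le \sup_{\bm{a}'\in\mathcal{A}^*}|\langle \bm{g}_1,\bm{a}'\rangle|+\sup_{\bm{b}'\in\mathcal{B}^*}|\langle \bm{g}_2,\bm{b}'\rangle|.
\end{equation*}
Taking supremum over $(\bm{a},\bm{b})\in \mathcal{C}^*$ on the left (which leaves the right-hand side intact) and then expectation, together with the independence of $\bm{g}_1$ and $\bm{g}_2$ (only used to additively split the expectation), delivers $\gamma(\mathcal{C}^*)\le \gamma(\mathcal{A}^*)+\gamma(\mathcal{B}^*)$.

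There is no genuine obstacle in this proof; the only minor point to be careful about is handling the degenerate situations $\bm{a}=0$ or $\bm{b}=0$ in the normalization step of the upper bound, which is dispatched by the observation that the inequality is trivial in these cases. The entire argument is a few lines long and uses nothing beyond the cone property and the triangle inequality.
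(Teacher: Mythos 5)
Your proof is correct and follows essentially the same route as the paper's: the lower bound uses the same inclusions $\mathcal{A}^*\times\{0\},\{0\}\times\mathcal{B}^*\subset\mathcal{C}^*$, and your upper bound (normalize each coordinate into $\mathcal{A}^*$ or $\mathcal{B}^*$ and use $\|\bm{a}\|_2,\|\bm{b}\|_2\le 1$) is the same observation as the paper's inclusion $\mathcal{C}^*\subset\cup_{c\in[0,1]}(c\mathcal{A}^*\times\sqrt{1-c^2}\mathcal{B}^*)$, phrased without the explicit $c$-parametrization and with the degenerate $\bm{a}=0$ or $\bm{b}=0$ cases handled more carefully. One tiny remark: the independence of $\bm{g}_1$ and $\bm{g}_2$ is not actually needed for the final step — linearity of expectation alone splits the sum — but this is cosmetic and does not affect correctness.
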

\begin{proof}
    Because $0\in \mathcal{A}$ and $0\in \mathcal{B}$, it is easy to see that $\mathcal{A}^*\times \{0\}\subset \mathcal{C}^*$ and $\{0\}\times \mathcal{B}^*\subset \mathcal{C}^*$, which implies $\gamma(\mathcal{A}^*)=\gamma(\mathcal{A}^*\times \{0\})\leq \gamma(\mathcal{C}^*)$, and similarly $\gamma(\mathcal{B}^*)\leq \gamma(\mathcal{C}^*)$. To prove the second inequality, we first observe that $\mathcal{C}^*\subset\cup_{c\in [0,1]} \big(c\mathcal{A}^*\times \sqrt{1-c^2}\mathcal{B}^*\big)$, and we let $\bm{g}_i$ be standard Gaussian vector with self-evident dimension. Then   following the definition of Gaussian complexity we have \begin{equation}
        \begin{aligned}\nonumber
        \gamma(\mathcal{C}^*)&\leq \mathbbm{E}\sup_{\substack{c\in[0,1]\\\bm{a}\in \mathcal{A}^*,\bm{b}\in \mathcal{B}^*}}\Big|\langle\bm{g}_1,c\cdot \bm{a}\rangle+\langle\bm{g}_2,\sqrt{1-c^2}\cdot \bm{b}\rangle\Big|\\
           &\leq \mathbbm{E}\sup_{\bm{a}\in\mathcal{A}^*}\big|\langle\bm{g}_1,\bm{a}\rangle\big|+\mathbbm{E}\sup_{\bm{b}\in \mathcal{B}^*}\big|\langle\bm{g}_2,\bm{b}\rangle\big|=\gamma(\mathcal{A}^*)+\gamma(\mathcal{B}^*),
        \end{aligned}
    \end{equation}
   which completes the proof.
\end{proof}

\begin{proposition}
    \label{pro9}
    Let $f(\cdot)$ be a norm in $\mathbb{R}^n$, $g(\cdot)$ be a norm in $\mathbb{R}^m$, $\lambda_1,\lambda_2,\alpha_{\bm{x}},\alpha_{\bm{v}}$ be some positive parameters, $C$ be some absolute constant. Define the cone \begin{align}\label{eq:defi_Clam12}
        \mathcal{C}(\lambda_1,\lambda_2)=\big\{(\bm{a},\bm{b})\in\mathbb{R}^n\times \mathbb{R}^m:\lambda_1f(\bm{a})+ \lambda_2g(\bm{b})\leq C\lambda_1\alpha_{\bm{x}}\|\bm{a}\|_2+C\lambda_2\alpha_{\bm{v}}\|\bm{b}\|_2\big\} 
    \end{align} and  let $\mathcal{C}^*=\mathcal{C}(\lambda_1,\lambda_2)\cap \mathbb{S}^{n+m-1},$ 
    then we have 
    \begin{equation}\label{A.17}
        \omega(\mathcal{C}^*)\lesssim \Big(\alpha_{\bm{x}}+\frac{\lambda_2\alpha_{\bm{v}}}{\lambda_1}\Big)\cdot\omega(\mathbb{B}_f^n)+\Big(\alpha_{\bm{v}}+\frac{\lambda_1\alpha_{\bm{x}}}{\lambda_2}\Big)\cdot\omega(\mathbb{B}_g^m).
    \end{equation}
\end{proposition}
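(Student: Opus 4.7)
\textbf{Proof proposal for \cref{pro9}.} The plan is to reduce $\mathcal{C}^*$ to a Cartesian product of two rescaled norm balls and then invoke the additivity of Gaussian width on products. First, I would extract componentwise size bounds from the defining inequality of $\mathcal{C}(\lambda_1,\lambda_2)$. Fix any $(\bm{a},\bm{b})\in\mathcal{C}^*$; then $\|\bm{a}\|_2^2+\|\bm{b}\|_2^2=1$, so in particular $\|\bm{a}\|_2\le 1$ and $\|\bm{b}\|_2\le 1$. Since $f$ and $g$ are norms, both $\lambda_1 f(\bm{a})$ and $\lambda_2 g(\bm{b})$ are nonnegative; dropping the nonnegative term $\lambda_2 g(\bm{b})$ (resp. $\lambda_1 f(\bm{a})$) in \cref{eq:defi_Clam12} and using $\|\bm{a}\|_2,\|\bm{b}\|_2\le 1$ on the right-hand side gives
\begin{equation*}
    f(\bm{a})\le C\Big(\alpha_{\bm{x}}+\frac{\lambda_2\alpha_{\bm{v}}}{\lambda_1}\Big)=:r_1,\qquad g(\bm{b})\le C\Big(\alpha_{\bm{v}}+\frac{\lambda_1\alpha_{\bm{x}}}{\lambda_2}\Big)=:r_2.
\end{equation*}
Hence $\mathcal{C}^*\subset r_1\mathbb{B}_f^n\times r_2\mathbb{B}_g^m$.

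Second, by monotonicity of Gaussian width under inclusion, $\omega(\mathcal{C}^*)\le\omega(r_1\mathbb{B}_f^n\times r_2\mathbb{B}_g^m)$. Splitting a standard Gaussian vector in $\mathbb{R}^{n+m}$ into its first $n$ and last $m$ coordinates and taking the supremum factorwise, one has $\omega(\mathcal{K}_1\times \mathcal{K}_2)=\omega(\mathcal{K}_1)+\omega(\mathcal{K}_2)$ for any $\mathcal{K}_1\subset\mathbb{R}^n$ and $\mathcal{K}_2\subset\mathbb{R}^m$; combined with the scaling identity $\omega(r\mathcal{K})=r\,\omega(\mathcal{K})$, this yields $\omega(r_1\mathbb{B}_f^n\times r_2\mathbb{B}_g^m)=r_1\omega(\mathbb{B}_f^n)+r_2\omega(\mathbb{B}_g^m)$, which is exactly \cref{A.17}. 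The argument is essentially mechanical, with no genuine obstacle; the only point to be careful about is correctly using $\|\bm{a}\|_2,\|\bm{b}\|_2\le 1$ (rather than a joint bound) to decouple the two norm constraints when passing from \cref{eq:defi_Clam12} to the separate bounds on $f(\bm{a})$ and $g(\bm{b})$.
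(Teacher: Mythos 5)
Your proof is correct and takes a genuinely different route from the paper's. The paper's proof works directly with the Gaussian width expression: it applies H\"older's inequality to write $\langle\bm{h}_1,\bm{a}\rangle+\langle\bm{h}_2,\bm{b}\rangle\le \lambda_1 f(\bm{a})\cdot f^*(\bm{h}_1/\lambda_1)+\lambda_2 g(\bm{b})\cdot g^*(\bm{h}_2/\lambda_2)$, relaxes this to $(\lambda_1 f(\bm{a})+\lambda_2 g(\bm{b}))\cdot\big(f^*(\bm{h}_1/\lambda_1)+g^*(\bm{h}_2/\lambda_2)\big)$, and then uses the defining inequality of $\mathcal{C}(\lambda_1,\lambda_2)$ together with $\|\bm{a}\|_2,\|\bm{b}\|_2\le1$ to bound the first factor by $C(\lambda_1\alpha_{\bm{x}}+\lambda_2\alpha_{\bm{v}})$ before taking expectation. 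You instead extract the set inclusion $\mathcal{C}^*\subset r_1\mathbb{B}_f^n\times r_2\mathbb{B}_g^m$ first (by dropping the nonnegative term on the left of \cref{eq:defi_Clam12} and bounding $\|\bm{a}\|_2,\|\bm{b}\|_2\le1$ on the right, then dividing by $\lambda_1$ or $\lambda_2$), and only then invoke monotonicity of Gaussian width, its additivity over Cartesian products, and the scaling identity $\omega(r\mathcal{K})=r\omega(\mathcal{K})$. Both routes yield the same bound up to the absolute constant, but yours proves the stronger geometric fact that $\mathcal{C}^*$ itself sits inside a product of rescaled norm balls, which is a more reusable piece of information; the paper's Hölder manipulation is more direct but is specific to the Gaussian-width calculation and does not produce any set inclusion along the way.
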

\begin{proof}
    Let $\bm{h}_1\sim \mathcal{N}(0,\bm{I}_n)$, $\bm{h}_2\sim \mathcal{N}(0,\bm{I}_m)$, then we proceed as 
     
    \begin{align} 
\omega(\mathcal{C}^*)&=\mathbbm{E}\sup_{(\bm{a},\bm{b})\in \mathcal{C}^*}\langle\bm{h}_1,\bm{a}\rangle+\langle\bm{h}_2,\bm{b}\rangle \\&\label{eq:holder}\leq \mathbbm{E}\sup_{(\bm{a},\bm{b})\in\mathcal{C}^*}\Big[\lambda_1f(\bm{a})\cdot f^*\Big(\frac{\bm{h}_1}{\lambda_1}\Big)+\lambda_2g(\bm{b})\cdot g^*\Big(\frac{\bm{h}_2}{\lambda_2}\Big)\Big] 
        \\& \label{eq:dual_norm}
        \leq \mathbbm{E}\sup_{(\bm{a},\bm{b})\in\mathcal{C}^*} \left[\big(\lambda_1f(\bm{a})+\lambda_2g(\bm{b})\big)\cdot\Big(\sup_{\bm{a}\in\mathbb{B}_f^n}\Big\langle\frac{\bm{h}_1}{\lambda_1},\bm{a}\Big\rangle+\sup_{\bm{b}\in\mathbb{B}_g^m}\Big\langle\frac{\bm{h}_2}{\lambda_2},\bm{b}\Big\rangle\Big)\right]
        \\&\label{eq:compatible_cons}\lesssim \mathbbm{E}\left[(\lambda_1\alpha_{\bm{x}}+\lambda_2\alpha_{\bm{v}})\cdot\Big(\sup_{\bm{a}\in\mathbb{B}_f^n}\Big\langle\frac{\bm{h}_1}{\lambda_1},\bm{a}\Big\rangle+\sup_{\bm{b}\in\mathbb{B}_g^m}\Big\langle\frac{\bm{h}_2}{\lambda_2},\bm{b}\Big\rangle\Big)\right] \\&\lesssim   \Big(\alpha_{\bm{x}}+\frac{\lambda_2\alpha_{\bm{v}}}{\lambda_1}\Big)\cdot\omega(\mathbb{B}_f^n)+\Big(\alpha_{\bm{v}}+\frac{\lambda_1\alpha_{\bm{x}}}{\lambda_2}\Big)\cdot\omega(\mathbb{B}_g^m),
    \end{align}
 where in \cref{eq:holder} we apply H\"older's inequality with proper re-scaling, \cref{eq:dual_norm} follows from the definitions of $f^*(\cdot),g^*(\cdot)$ and some simple relaxation, \cref{eq:compatible_cons} is due to   \cref{eq:defi_Clam12} and $\mathcal{C}^*\subset \mathbb{S}^{n+m-1}$, and in the last inequality we use the definition of Gaussian width. 
The proof is complete.
\end{proof}

  We close this subsection with another set of geometric quantities estimates   specialized to the analysis of generative priors. 

\begin{proposition}
    \label{proadd1}
    Let $\mathcal{K}_{\bm{x}}$ and $\mathcal{K}_{\bm{v}}$ be described in   \cref{assump5}, then the following statements hold true for any $\eta_1\in (0,Lr)$, $\eta_2\in (0,L'r')$:

    \noindent
    {\rm (a)} The Kolmogorov entropy of $\mathcal{K}_{\bm{x}}$ and $\mathcal{K}_{\bm{v}}$ is bounded as follows:
    \begin{align}
        \mathscr{H}(\mathcal{K}_{\bm{x}},\eta_1)\leq k\log\frac{3Lr}{\eta_1},~~\mathscr{H}(\mathcal{K}_{\bm{v}},\eta_2)\leq k'\log\frac{3L'r'}{\eta_2}.\label{eq:bound_on_entropy}
    \end{align}
   

    \noindent 
    {\rm (b)} Let $\mathcal{K}_{\bm{x}}^-= \mathcal{K}_{\bm{x}}-\mathcal{K}_{\bm{x}}$, $\mathcal{K}_{\bm{v}}^-=\mathcal{K}_{\bm{v}}-\mathcal{K}_{\bm{v}}$, then we have \begin{align}\label{eq:secant_entropy}
        \mathscr{H}(\mathcal{K}_{\bm{x}}^-,\eta_1) \leq 2k\log\frac{6Lr}{\eta_1},~~\mathscr{H}(\mathcal{K}_{\bm{v}}^-,\eta_2)\leq 2k'\log\frac{6L'r'}{\eta_2}.
    \end{align} Moreover, for any $0<\eta<\min\{Lr,L'r'\}$ it holds that
    \begin{align}\label{eq:times_secant}
        \mathscr{H}(\mathcal{K}^-_{\bm{x}}\times \mathcal{K}_{\bm{v}}^-,\eta)\leq 2k\log \frac{12Lr}{\eta}+2k'\log\frac{12L'r'}{\eta}. 
    \end{align} 

    \noindent
    {\rm(c)} Given any $\mu\in (0,1)$, we define $\mathcal{E}$ and $\mathcal{E}^*$ as  
        \begin{gather} 
         \label{eq:gene_E}\mathcal{E}=\big\{(\bm{c},\bm{d}):\bm{c}\in \mathcal{K}_{\bm{x}}^-,~\bm{d}\in \mathcal{K}_{\bm{v}}^-,~(\|\bm{c}\|_2^2+\|\bm{d}\|_2^2)^{1/2}\geq 2\mu\big\}\\\label{eq:gene_E_star}
         \mathcal{E}^*=\big\{(\bm{c},\bm{d})/(\|\bm{c}\|^2_2+\|\bm{d}\|_2^2)^{1/2}:(\bm{c},\bm{d})\in \mathcal{E}\big\},
        \end{gather}
    then for any $0<\eta<\min\{Lr,L'r'\}$ we have \begin{align}\label{eq:bound_Estar_entropy}
        \mathscr{H}(\mathcal{E}^*,\eta)\leq 2k\log \frac{24Lr}{\mu \eta}+2k'\log\frac{24L'r'}{\mu\eta}. 
    \end{align} 
    Moreover, we have \begin{align}    \label{eq:gw_E_star}\omega(\mathcal{E}^*)\lesssim \Big(k\log\frac{Lr}{\mu}+k'\log\frac{L'r'}{\mu}\Big)^{1/2}.
    \end{align} 
\end{proposition}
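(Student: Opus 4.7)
My plan is to tackle the three parts in order, leveraging the Lipschitz structure of $G,H$ together with standard volumetric covering bounds, and then to reduce the later parts to the earlier ones by elementary Minkowski/normalization arguments. The only place where a genuinely new estimate is needed is the Gaussian width in part (c), which will come from Dudley's inequality (\ref{dudley}) applied to the entropy bound from the first half of the same part.

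For part (a), I would start from the standard covering bound $\mathscr{H}(\mathbb{B}_2^k(r),\varepsilon)\le k\log(3r/\varepsilon)$ (the usual volumetric argument). Because $G$ is $L$-Lipschitz, the image under $G$ of an $\varepsilon$-net of $\mathbb{B}_2^k(r)$ is an $(L\varepsilon)$-net of $\mathcal{K}_{\bm{x}}$; setting $\varepsilon=\eta_1/L$ immediately yields $\mathscr{H}(\mathcal{K}_{\bm{x}},\eta_1)\le k\log(3Lr/\eta_1)$. The analogous argument handles $\mathcal{K}_{\bm{v}}$.

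For part (b), I would use the fact that if $\mathcal{N}$ is an $(\eta_1/2)$-net of $\mathcal{K}_{\bm{x}}$, then the Minkowski difference $\mathcal{N}-\mathcal{N}$ is an $\eta_1$-net of $\mathcal{K}_{\bm{x}}^-=\mathcal{K}_{\bm{x}}-\mathcal{K}_{\bm{x}}$, giving $\mathscr{H}(\mathcal{K}_{\bm{x}}^-,\eta_1)\le 2\mathscr{H}(\mathcal{K}_{\bm{x}},\eta_1/2)\le 2k\log(6Lr/\eta_1)$, and similarly for $\mathcal{K}_{\bm{v}}^-$. For the Cartesian product I would take $(\eta/\sqrt{2})$-nets of each factor; their product is an $\eta$-net of $\mathcal{K}_{\bm{x}}^-\times\mathcal{K}_{\bm{v}}^-$ in Euclidean distance, and plugging $\eta/\sqrt{2}$ into the two preceding bounds (and absorbing $\sqrt{2}$ into the constant to reach $12$) gives \cref{eq:times_secant}.

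Part (c) is the main obstacle, and I would split it into an entropy bound followed by a Dudley integral. The key lemma I would prove first is a normalization stability statement: if $\bm{p}_1,\bm{p}_2\in\mathbb{R}^{n+m}$ both have norm at least $2\mu$, then
\begin{equation*}
\Bigl\|\tfrac{\bm{p}_1}{\|\bm{p}_1\|_2}-\tfrac{\bm{p}_2}{\|\bm{p}_2\|_2}\Bigr\|_2 \le \tfrac{\|\bm{p}_1-\bm{p}_2\|_2}{\mu},
\end{equation*}
which follows from the identity $\bm{p}_1/\|\bm{p}_1\|-\bm{p}_2/\|\bm{p}_2\|=(\bm{p}_1-\bm{p}_2)/\|\bm{p}_1\|+\hat{\bm{p}}_2(\|\bm{p}_2\|-\|\bm{p}_1\|)/\|\bm{p}_1\|$ and the reverse triangle inequality. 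Consequently any $(\mu\eta)$-net of $\mathcal{E}$ normalizes to an $\eta$-net of $\mathcal{E}^*$. Since $\mathcal{E}\subset\mathcal{K}_{\bm{x}}^-\times\mathcal{K}_{\bm{v}}^-$, the monotonicity \cref{monoto} gives $\mathscr{H}(\mathcal{E},\mu\eta)\le \mathscr{H}(\mathcal{K}_{\bm{x}}^-\times\mathcal{K}_{\bm{v}}^-,\mu\eta/2)$, and part (b) produces the constant $24$ in \cref{eq:bound_Estar_entropy}. Finally, since $\mathcal{E}^*\subset\mathbb{S}^{n+m-1}$ has radius at most $1$, Dudley's inequality \cref{dudley} yields
\begin{equation*}
\omega(\mathcal{E}^*)\lesssim \int_0^1\sqrt{k\log\tfrac{24Lr}{\mu\eta}+k'\log\tfrac{24L'r'}{\mu\eta}}\,\mathrm{d}\eta.
\end{equation*}
I would split each $\log(24Lr/(\mu\eta))$ as $\log(24Lr/\mu)+\log(1/\eta)$, use $\sqrt{a+b}\le\sqrt{a}+\sqrt{b}$, and recall $\int_0^1\sqrt{\log(1/\eta)}\,\mathrm{d}\eta=\Gamma(3/2)$ to obtain $\omega(\mathcal{E}^*)\lesssim \sqrt{k\log(Lr/\mu)+k'\log(L'r'/\mu)}+\sqrt{k+k'}$; under the mild regime where $\mu<1$ and $Lr,L'r'$ are at least a constant (which is implicit in the generative prior setup), the second term is dominated by the first, giving \cref{eq:gw_E_star}. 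The subtlety I expect to spend the most care on is justifying that the net of $\mathcal{E}$ used in the normalization step can be chosen inside $\mathcal{E}^*$ itself (rather than inside the larger $\mathcal{K}_{\bm{x}}^-\times\mathcal{K}_{\bm{v}}^-$), which is handled by the factor of $2$ slack built into the monotonicity step above.
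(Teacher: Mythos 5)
Your proposal matches the paper's proof essentially step for step: part (a) via a net of $\mathbb{B}_2^k(r)$ pushed through the $L$-Lipschitz map, part (b) via $\mathcal{N}-\mathcal{N}$ and product nets, and part (c) via a normalization-stability estimate applied to a $(\mu\eta)$-net of $\mathcal{E}$ followed by Dudley's inequality. The one wrinkle you correctly flag — that the final reduction $\omega(\mathcal{E}^*)\lesssim (k\log(Lr/\mu)+k'\log(L'r'/\mu))^{1/2}$ silently absorbs the additive $\sqrt{k+k'}$ term and so tacitly assumes $Lr/\mu$ and $L'r'/\mu$ are bounded below by a constant — is also present (unremarked) in the paper's argument, so it is a shared blemish rather than a gap in your approach; likewise your worry about the net of $\mathcal{E}$ living inside $\mathcal{E}$ itself is resolved by the same appeal to \cref{monoto} that the paper uses.
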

 \begin{proof}We present the proofs of (a), (b) and (c) separately. 
     \subsubsection*{The Proof of (a)} 
     By the covering number of the $\ell_2$-ball \cite[Coro. 4.2.13]{vershynin2018high}, we can construct $\mathcal{G}$ as a $\big(\frac{\eta_1}{L}\big)$-net of $\mathbb{B}_2^k(r)$ with cardinality not exceeding $\big(\frac{2Lr}{\eta_1}+1\big)^k$, and hence not exceeding $\big(\frac{3Lr}{\eta_1}\big)^k$ because $\eta_1<Lr$. Since $G(\cdot)$ is $L$-Lipschitz, $G(\mathcal{G})$ is an $\eta_1$-net of $\mathcal{K}_{\bm{x}}=G(\mathbb{B}_2^k(r))$, thus implying $\mathscr{H}(\mathcal{K}_{\bm{x}},\eta_1)\leq k\log\frac{3Lr}{\eta_1}$. One can similarly derive the bound for $\mathscr{H}(\mathcal{K}_{\bm{v}},\eta_2)$.

     \subsubsection*{The Proof of (b)} From the result $(a)$ that we just proved, we can construct $\mathcal{G}_1$ as an $\big(\frac{\eta_1}{2}\big)$-net of $\mathcal{K}_{\bm{x}}$ such that $\log|\mathcal{G}_1|\leq k\log\frac{6Lr}{\eta_1}$. It is not hard to show $\mathcal{G}_1-\mathcal{G}_1$ is a $\eta_1$-net of $\mathcal{K}_{\bm{x}}^-$, and note that $\log|\mathcal{G}_1-\mathcal{G}_1|\leq\log|\mathcal{G}_1|^2\leq 2k\log\frac{6Lr}{\eta_1}$. We can similarly prove $\mathscr{H}(\mathcal{K}_{\bm{v}}^-,\eta_2)\leq 2k'\log\frac{6L'r'}{\eta_2}$, hence \cref{eq:secant_entropy} follows. To prove \cref{eq:times_secant}, it is sufficient to note the following simple fact:  if $\mathcal{G}_2$ is an $\frac{\eta}{2}$-net of $\mathcal{K}_{\bm{x}}^-$, $\mathcal{G}_3$ is an $\frac{\eta}{2}$-net of $\mathcal{K}_{\bm{v}}^-$, then $\mathcal{G}_2\times \mathcal{G}_3$ is an $\eta$-net of $\mathcal{K}_{\bm{x}}^-\times \mathcal{K}_{\bm{v}}^-$.

      \subsubsection*{The Proof of (c)} Because $\mathcal{E}\subset\mathcal{K}_{\bm{x}}^-\times \mathcal{K}_{\bm{v}}^-$, by \cref{monoto} and \cref{eq:times_secant}, we can let $\mathcal{G}_4$ be an ($\mu\eta$)-net of $\mathcal{E}$ satisfying  $\log|\mathcal{G}_4|\leq 2k\log\frac{24Lr}{\mu\eta}+2k'\log\frac{24L'r'}{\mu\eta}$. Then it suffices to show that $$\mathcal{G}_4^*=\Big\{\frac{(\bm{a},\bm{b})}{(\|\bm{a}\|^2_2+\|\bm{b}\|^2_2)^{1/2}}:(\bm{a},\bm{b})\in \mathcal{G}_4\Big\}$$ is an $\eta$-net of $\mathcal{E}^*$, which we prove in the following two dot points:\begin{itemize}[leftmargin=5ex,topsep=0.25ex]
          \item By \cref{eq:gene_E_star}, any $(\bm{a}_1,\bm{b}_1)\in\mathcal{E}^*$ can be written as ${\bm{c}_2}/{\|\bm{c}_2\|_2}=:(\bm{a}_2,\bm{b}_2)/(\|\bm{a}_2\|_2^2+\|\bm{b}_2\|_2^2)^{1/2}$ for some $\bm{c}_2:=(\bm{a}_2,\bm{b}_2)\in\mathcal{E}$. Since $\mathcal{G}_4$ is a $(\mu\eta)$-net of $\mathcal{E}$, we can pick $\bm{c}_3:=(\bm{a}_3,\bm{b}_3)\in \mathcal{G}_4$ such that $\|\bm{c}_3- \bm{c}_2\|_2\leq \mu\eta$. 
          \item Note that $\bm{c}_3/\|\bm{c}_3\|_2\in \mathcal{G}_4^*$, and we have
      \begin{align*} 
          \left\|\frac{\bm{c}_2}{\|\bm{c}_2\|_2}-\frac{\bm{c}_3}{\|\bm{c}_3\|_2}\right\|_2 &\leq \left\|\frac{\bm{c}_2}{\|\bm{c}_2\|_2}-\frac{\bm{c}_3}{\|\bm{c}_2\|_2}\right\|_2+\left\|\frac{\bm{c}_3}{\|\bm{c}_2\|_2}-\frac{\bm{c}_3}{\|\bm{c}_3\|_3}\right\|_2\\&\leq \frac{\|\bm{c}_2-\bm{c}_3\|_2}{\|\bm{c}_2\|_2}+\frac{\big|\|\bm{c}_3\|_2-\|\bm{c}_2\|_2\big|}{\|\bm{c}_2\|_2}\\&\leq \frac{2\|\bm{c}_2-\bm{c}_3\|}{\|\bm{c}_2\|_2}\leq \frac{2\eta\mu}{2\mu}=\eta,
          \end{align*}
          where the last inequality holds because $\bm{c}_2\in \mathcal{E}$ satisfies $\|\bm{c}_2\|_2 \ge 2\mu$. Therefore, $\mathcal{G}_4^*$ is an $\eta$-net of $\mathcal{E}^*$. 
      \end{itemize}
      Therefore, $\mathscr{H}(\mathcal{E}^*,\eta)\leq \log |\mathcal{G}_4|$, the   bound in \cref{eq:bound_Estar_entropy} follows.
It remains to prove \cref{eq:gw_E_star}, and our strategy is to   estimate $\omega(\mathcal{E}^*)$ from $\mathscr{H}(\mathcal{E}^*,\eta)$ via Dudley's inequality \cref{dudley}. Note that $\mathcal{E}^*\subset \mathbb{S}^{n+m-1}$, we thus have
   
          \begin{align*} 
\omega(\mathcal{E}^*)&\lesssim \int_0^\infty \sqrt{  \mathscr{H}(\mathcal{E}^*,\eta)} ~\mathrm{d}\eta = \int_0^2\sqrt{\mathscr{H}(\mathcal{E}^*,\eta)}~\mathrm{d}\eta \\&\leq \sqrt{2k}\int_0^2\sqrt{\log \frac{24Lr}{\mu\eta}}~\mathrm{d}\eta   +\sqrt{2k'}\int_0^2\sqrt{\log\frac{24L'r'}{\mu\eta}}~\mathrm{d}\eta \\&\lesssim \Big(k\log\frac{Lr}{\mu}+k'\log\frac{L'r'}{\mu}\Big)^{1/2},
          \end{align*}
  where we apply \cref{eq:bound_Estar_entropy} in the second inequality. The proof is complete.  
 \end{proof}

\section{Quantized Product Embedding}\label{qpe}
In this appendix, we develop   quantized product embedding (QPE) property for analyzing the uniform recovery of Lasso in quantized corrupted sensing. In brief,     QPE   states that the dithered uniform quantization universally preserves inner product. More precisely, with random dithering given by $\bm{\tau}\sim \mathscr{U}([-\frac{\tau}{2},\frac{\tau}{2}]^m)$, QPE ensures that  $\langle\mathcal{Q}_\delta(\bm{a}+\bm{\tau}),\bm{b}\rangle$ is close to $\langle\bm{a},\bm{b}\rangle$, with the closeness holding uniformly for all $(\bm{a},\bm{b})$ in some constraint sets of interest.

\subsection{An Observation and Local QPE}
We start from a  simple observation due to Xu and Jacques \cite[Lem. 6.1]{xu2020quantized}, which bounds the number of measurements exhibiting some ``discontinuity'' due to the quantizer $\mathcal{Q}_\delta(\cdot)$. 
\begin{lemma}\label{lem1}
    Given some $\delta,\zeta>0$ satisfying $\zeta\in (0,\frac{\delta}{2})$, and a fixed $\bm{a}=[a_i]\in\mathbb{R}^m$. Associated with $\bm{\tau}=[\tau_i]\sim \mathscr{U}[-\frac{\delta}{2},\frac{\delta}{2}]^m$, we define the random set
    \begin{equation}\label{B.1}
        \mathcal{Z}=\big\{i\in [m]: \mathcal{Q}_\delta(a_i+\tau_i+t)\text{ is discontinuous in }[-\zeta,\zeta]\big\}.
    \end{equation}
    Then, for any $t\geq 0$, we have \begin{equation}
        \label{B.2}\mathbbm{P}\Big(|\mathcal{Z}|\geq \frac{5m\zeta}{\delta}\Big)\leq \exp\Big(-\frac{m\zeta}{\delta}\Big),
    \end{equation}
    where $|\mathcal{Z}|$ is the random variable denoting the cardinality of $\mathcal{Z}$. 
\end{lemma}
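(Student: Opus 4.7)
The plan is to recognize $|\mathcal{Z}|$ as a sum of $m$ independent Bernoulli indicators and then apply a standard multiplicative Chernoff bound. First I would unpack the event $\{i\in\mathcal{Z}\}$. The uniform quantizer $\mathcal{Q}_\delta(x)=\delta(\lfloor x/\delta\rfloor+1/2)$ is piecewise constant with jump discontinuities precisely at the grid points $\{k\delta:k\in\mathbb{Z}\}$. Consequently, the map $t\mapsto \mathcal{Q}_\delta(a_i+\tau_i+t)$ fails to be continuous on $[-\zeta,\zeta]$ iff some multiple of $\delta$ lies in the shifted interval $[a_i+\tau_i-\zeta,\,a_i+\tau_i+\zeta]$, which has length $2\zeta$. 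Since $\tau_i\sim\mathscr{U}[-\delta/2,\delta/2]$, the random variable $a_i+\tau_i$ is uniform on an interval of length $\delta$, and by $\delta$-periodicity of the grid combined with the hypothesis $2\zeta<\delta$ (so at most one grid point can fall in a window of length $2\zeta$), a routine measure computation gives exactly $\mathbb{P}(i\in\mathcal{Z})=2\zeta/\delta$.

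Second, independence of $\tau_1,\ldots,\tau_m$ implies that the indicators $X_i:=\mathbbm{1}(i\in\mathcal{Z})$ are i.i.d.\ Bernoulli$(2\zeta/\delta)$, so $|\mathcal{Z}|=\sum_i X_i$ is Binomial$(m,2\zeta/\delta)$ with mean $\mu=2m\zeta/\delta$. Applying the multiplicative Chernoff bound
\[
\mathbb{P}\bigl(|\mathcal{Z}|\ge(1+\gamma)\mu\bigr)\le\Bigl(\tfrac{e^\gamma}{(1+\gamma)^{1+\gamma}}\Bigr)^\mu,\qquad \gamma\ge 0,
\]
with $(1+\gamma)\mu=5m\zeta/\delta$, i.e.\ $\gamma=3/2$, the prefactor $e^{3/2}/(5/2)^{5/2}$ is easily checked to be bounded by $e^{-1/2}$, so
\[
\mathbb{P}\bigl(|\mathcal{Z}|\ge 5m\zeta/\delta\bigr)\le e^{-\mu/2}=\exp(-m\zeta/\delta),
\]
which is the advertised bound.

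The argument is essentially routine concentration, so I do not expect any serious obstacle. The only genuinely delicate point is ensuring that the per-coordinate probability is exactly $p=2\zeta/\delta$ rather than $\le 2\zeta/\delta$; this exact equality uses the assumption $\zeta<\delta/2$, which prevents two grid jumps from contributing to the same window and keeps the probability computation linear. I also note that the free parameter $t\ge 0$ appearing in the lemma statement does not enter the event defining $\mathcal{Z}$ (the letter $t$ there is a bound variable ranging over $[-\zeta,\zeta]$ in the definition \cref{B.1}), so it plays no role in the proof.
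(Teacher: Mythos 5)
Your argument is correct and follows the same route as the paper: reduce to the per-coordinate event, compute the exact probability $2\zeta/\delta$ (using $2\zeta<\delta$ so at most one grid point can fall in the shifted window), recognize $|\mathcal{Z}|$ as $\mathrm{Bin}(m,2\zeta/\delta)$, and conclude with a tail bound. The only difference is the final concentration step: you invoke the multiplicative Chernoff bound $\mathbbm{P}(|\mathcal{Z}|\ge(1+\gamma)\mu)\le(e^\gamma/(1+\gamma)^{1+\gamma})^\mu$ at $\gamma=3/2$, whereas the paper applies Bernstein's inequality (in the form $\sum_k(Z_k-\mathbbm{E}Z_k)\le 2\sqrt{m\zeta t/\delta}+t$ with probability $\ge 1-e^{-t}$, then sets $t=m\zeta/\delta$ to get deviation $3m\zeta/\delta$ above the mean $2m\zeta/\delta$). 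Both yield $5m\zeta/\delta$ with the same exponent, so the routes are interchangeable; your numeric check that $e^{3/2}/(5/2)^{5/2}\approx 0.45\le e^{-1/2}\approx 0.61$ is correct. Your observation about the dangling ``for any $t\geq 0$'' in the lemma statement is also right: that quantifier is vestigial, since $t$ in the definition of $\mathcal{Z}$ is a bound variable ranging over $[-\zeta,\zeta]$ and no free $t$ appears in the claimed tail bound.
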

\begin{proof}
First, we note that $\mathcal{Q}_\delta(a_i+\tau_i+t)$ is discontinuous in $[-\zeta,\zeta]$ if and only if $\mathcal{Q}_\delta(t)$ is discontinuous in $[a_i+\tau_i-\zeta,a_i+\tau_i+\zeta]$, and further, the latter statement is evidently equivalent to the event $$E_1=\big\{[a_i+\tau_i-\zeta,a_i+\tau_i+\zeta]\cap (\delta \mathbb{Z})=\varnothing\big\}.$$ Moreover, under our assumptions of $\tau_i\sim \mathscr{U}([-\frac{\delta}{2},\frac{\delta}{2}])$ and $\zeta\in (0,\frac{\delta}{2})$, it is not hard to see that, $\mathbbm{P}(E_1)=\frac{2\zeta}{\delta}$ holds true independent of the location of $[a_i-\zeta,a_i+\zeta]$. Therefore, $|\mathcal{Z}|\sim \mathrm{Bin}(m,\frac{2\zeta}{\delta})$ (i.e., $|\mathcal{Z}|$ follows a binomial distribution with $m$ trials and probability of success $\frac{2\zeta}{\delta}$ for each), hence it can be written as $|\mathcal{Z}|=\sum_{k=1}^mZ_k$  with i.i.d. $Z_k\sim \mathrm{Bernoulli}(\frac{2\zeta}{\delta})$. Because for any integer $q\geq 3$ we have $\sum_{k=1}^m \mathbbm{E}|Z_k|^q=\frac{2m\zeta}{\delta}$, we can invoke Bernstein's inequality   \cite[Thm. 2.10]{boucheron2013concentration} to obtain that, for any $t>0$, the event 
$\sum_{k=1}^m(Z_k-\mathbbm{E}Z_k)\le 2(\frac{m\zeta t}{\delta})^{1/2}+t$
holds with probability exceeding $1-\exp(-t)$. Setting $t=\frac{m\zeta}{\delta}$ and substituting $\mathbbm{E}|\mathcal{Z}|=\sum_{k=1}^m \mathbbm{E}Z_k = \frac{2m\zeta}{\delta}$ yield the desired claim. 
\end{proof}

We then establish the QPE with   quantized measurements generated by a fixed signal-corruption pair $(\bm{a},\bm{b})$. This ``local'' QPE will be sufficient for establishing non-uniform guarantee, and indeed, the lemma below readily follows from \cite[Lem. 2]{sun2022quantized}. We provide a proof for completeness.  
\begin{lemma}[Local QPE]\label{lem:localqpe}
 Given some bounded sets $\calA\subset\mathbb{R}^n,~\calB\subset \mathbb{R}^m,~\mathcal{E}\subset \mathbb{R}^{n+m}$ and some $\delta>0$, we fix $\ba\in \calA$ and $\bm{b}\in\calB$, and assume that the sub-Gaussian matrix $\bm{\Phi}\in\mathbb{R}^{m\times n}$, sub-Gaussian noise $\bm{\epsilon}$ and the random dither $\bm{\tau}\sim \mathscr{U}[-\frac{\delta}{2},\frac{\delta}{2}]^m$ are as described in   \cref{assump1}, 
    then for $\bm{c}\in \mathbb{R}^n$ and $\bm{d}\in \mathbb{R}^m$, and some absolute constant $C$, the event 
    \begin{align}
       \sup_{(\bm{c},\bm{d})\in \mathcal{E}}\big|\big\langle \bm{\xi}_{\bm{a},\bm{b}},\bm{\Phi c}+\sqrt{m}\bm{d}\big\rangle\big| \le C\sqrt{m}\delta \big(\omega(\mathcal{E})+u\cdot\rad(\calE)\big)
    \end{align}
    holds with probability at least $1-2\exp(-u^2)$, where we denote the quantization noise of $\bm{\Phi a}+\sqrt{m}\bm{b}$ by the shorthand 
    \begin{align}\label{eq:short_xi_a_b}
        \bm{\xi}_{\bm{a},\bm{b}} := \mathcal{Q}_\delta(\bm{\Phi a}+\sqrt{m}\bm{b}+\bm{\epsilon}+\bm{\tau})-(\bm{\Phi a}+\sqrt{m}\bm{b}+\bm{\epsilon}).
    \end{align}
\end{lemma}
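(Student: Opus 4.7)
My plan is to condition on $(\bm{\Phi},\bm{\epsilon})$, so that the remaining randomness in $\bm{\xi}_{\bm{a},\bm{b}}$ is carried entirely by the independent dither $\bm{\tau}$, and then invoke the sub-Gaussian chaining bound \cref{pro2}. Upon conditioning, $\bm{y}_0 := \bm{\Phi a} + \sqrt{m}\bm{b} + \bm{\epsilon}$ is deterministic, so each coordinate $\xi_i = \mathcal{Q}_\delta(y_{0,i}+\tau_i) - y_{0,i}$ depends only on the independent $\tau_i$. As in the discussion preceding \cref{eq:sgxi}, the entries of $\bm{\xi}_{\bm{a},\bm{b}}$ are therefore conditionally independent, zero-mean (by \cref{eq:zero_mean}), and bounded by $\delta$ (by \cref{eq:boundedness}); in particular each entry has conditional sub-Gaussian norm of order $\delta$.

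Applying the sub-Gaussian sum inequality \cref{eq:sum_psi2} to the process $Z_{(\bm{c},\bm{d})} := \langle \bm{\xi}_{\bm{a},\bm{b}}, \bm{\Phi c} + \sqrt{m}\bm{d}\rangle$ then yields the conditional increment estimate
\[
\bigl\| Z_{(\bm{c}_1,\bm{d}_1)} - Z_{(\bm{c}_2,\bm{d}_2)} \bigr\|_{\psi_2} \lesssim \delta\,\bigl\| \bm{\Phi}(\bm{c}_1-\bm{c}_2) + \sqrt{m}(\bm{d}_1-\bm{d}_2) \bigr\|_2
\]
(understood conditionally on $(\bm{\Phi},\bm{\epsilon})$). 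Reindexing by $\bm{w} := \bm{\Phi c} + \sqrt{m}\bm{d}$ over the pushforward set $\mathcal{W}_{\mathcal{E}} := \{\bm{\Phi c} + \sqrt{m}\bm{d} : (\bm{c},\bm{d}) \in \mathcal{E}\}$ converts this into a Lipschitz sub-Gaussian process on $\mathcal{W}_{\mathcal{E}}$ in the Euclidean metric with Lipschitz constant of order $\delta$, so a conditional application of \cref{pro2} delivers
\[
\sup_{(\bm{c},\bm{d}) \in \mathcal{E}} \bigl|\langle \bm{\xi}_{\bm{a},\bm{b}}, \bm{\Phi c} + \sqrt{m}\bm{d}\rangle\bigr| \lesssim \delta\bigl(\omega(\mathcal{W}_{\mathcal{E}}) + u \cdot \rad(\mathcal{W}_{\mathcal{E}})\bigr)
\]
with probability at least $1 - 2\exp(-u^2)$.

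It then remains to translate the geometry of $\mathcal{W}_{\mathcal{E}}$ back to $\mathcal{E}$. The radius bound $\rad(\mathcal{W}_{\mathcal{E}}) \lesssim \sqrt{m}\,\rad(\mathcal{E})$ follows from \cref{pro1} applied with $\mathcal{T} = \mathcal{E}$, together with $\gamma(\mathcal{E}) \lesssim \omega(\mathcal{E}) + \rad(\mathcal{E})$ from \cref{widthcomple}; for the Gaussian width, a Sudakov--Fernique or Chevet-type comparison that exploits the isotropy $\mathbb{E}\bm{\Phi}_i\bm{\Phi}_i^\top = \bm{I}_n$ gives $\omega(\mathcal{W}_{\mathcal{E}}) \lesssim \sqrt{m}\,\omega(\mathcal{E})$. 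Substituting these two estimates yields the stated inequality at the claimed probability. I expect this width estimate $\omega(\mathcal{W}_{\mathcal{E}}) \lesssim \sqrt{m}\,\omega(\mathcal{E})$ to be the main technical hurdle: a naive bound via the operator norm $\|\bm{\Phi}\|_{\mathrm{op}} \lesssim \sqrt{m} + \sqrt{n}$ leaks an unwanted $\sqrt{n}$ factor, so one must compare the Gaussian process $\bm{g} \mapsto \langle \bm{g}, \bm{\Phi c} + \sqrt{m}\bm{d}\rangle$ directly to a genuinely $\sqrt{m}$-scaled Gaussian process on $\mathcal{E}$ by matching second moments via the isotropy of $\bm{\Phi}$, after which all remaining steps are routine.
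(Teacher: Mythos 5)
Your strategy diverges from the paper's precisely at the step you flag as the ``main technical hurdle,'' and that hurdle is not actually surmountable by the route you describe. Once you condition on $\bm{\Phi}$ and reindex by $\bm{w}=\bm{\Phi c}+\sqrt{m}\bm{d}$, your bound is $\delta\big(\omega(\mathcal{W}_{\mathcal{E}})+u\,\rad(\mathcal{W}_{\mathcal{E}})\big)$ for the \emph{random, $\bm{\Phi}$-dependent} set $\mathcal{W}_{\mathcal{E}}$, and you must then show $\omega(\mathcal{W}_{\mathcal{E}})\lesssim\sqrt{m}\,\omega(\mathcal{E})$ with high probability. You propose doing this ``by matching second moments via the isotropy of $\bm{\Phi}$,'' but isotropy only gives $\mathbbm{E}_{\bm{\Phi}}\|\bm{\Phi u}\|_2^2=m\|\bm{u}\|_2^2$ \emph{in expectation over $\bm{\Phi}$}, whereas the Sudakov--Fernique comparison you would apply to the Gaussian $\bm{g}\mapsto\langle\bm{g},\bm{\Phi c}+\sqrt{m}\bm{d}\rangle$ (with $\bm{\Phi}$ now fixed) needs the \emph{pointwise} increment bound $\|\bm{\Phi}(\bm{c}_1-\bm{c}_2)+\sqrt{m}(\bm{d}_1-\bm{d}_2)\|_2^2\lesssim m\,\|(\bm{c}_1,\bm{d}_1)-(\bm{c}_2,\bm{d}_2)\|_2^2$ for \emph{all} pairs in $\mathcal{E}$. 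That amounts to $\sup_{\bm{u}\in(\mathcal{E}_{\bm{c}}-\mathcal{E}_{\bm{c}})\setminus\{0\}}\|\bm{\Phi u}\|_2/\|\bm{u}\|_2\lesssim\sqrt m$, a restricted operator-norm bound. Since $\mathcal{E}$ is merely bounded (not a cone or a structured set), the secant set $(\mathcal{E}_{\bm{c}}-\mathcal{E}_{\bm{c}})\cap\mathbb{S}^{n-1}$ can be all of $\mathbb{S}^{n-1}$, so the restricted operator norm is generically $\asymp\sqrt m+\sqrt n$, which reintroduces exactly the $\sqrt n$ factor you were trying to avoid. In short, what you identify as the hurdle is in fact a gap: there is no additional assumption on $m$ in the lemma, and for arbitrary bounded $\mathcal{E}$ the high-probability bound $\omega(\mathcal{W}_{\mathcal{E}})\lesssim\sqrt m\,\omega(\mathcal{E})$ is not available by a comparison argument of the kind you sketch. (The radius estimate also leaks: Proposition~A.1 gives $\rad(\mathcal{W}_{\mathcal{E}})\lesssim\sqrt m\,\rad(\mathcal{E})+\omega(\mathcal{E})+u\,\rad(\mathcal{E})$, which, after multiplying by $u$, produces terms $u\,\omega(\mathcal{E})$ and $u^2\rad(\mathcal{E})$ not present in the target bound.)

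The paper's proof deliberately does \emph{not} condition on $\bm{\Phi}$; the two sources of randomness are used jointly. One verifies the increment condition of \cref{pro2} directly in the Euclidean metric on $\mathcal{E}$ with Lipschitz constant $\sqrt m\,\delta$. For the $\bm{d}$-part this is immediate from $\|\bm{\xi}_{\bm{a},\bm{b}}\|_{\psi_2}=O(\delta)$. For the $\bm{c}$-part one writes $\langle\bm{\xi}_{\bm{a},\bm{b}},\bm{\Phi}(\bm{c}_1-\bm{c}_2)\rangle=\sum_k(\bm{\xi}_{\bm{a},\bm{b}})_k\,\bm{\Phi}_k^\top(\bm{c}_1-\bm{c}_2)$ and observes that (i) the summands are independent across $k$ because $(\bm{\Phi}_k,\tau_k)$ are; (ii) each is zero-mean, since $\mathbbm{E}_{\tau_k}[(\bm{\xi}_{\bm{a},\bm{b}})_k]=0$; and (iii) $|(\bm{\xi}_{\bm{a},\bm{b}})_k|\le\delta$ deterministically, so $\|(\bm{\xi}_{\bm{a},\bm{b}})_k\bm{\Phi}_k^\top(\bm{c}_1-\bm{c}_2)\|_{\psi_2}\lesssim\delta\,\|\bm{c}_1-\bm{c}_2\|_2$ by the sub-Gaussianity of $\bm{\Phi}_k$. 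Summing via \cref{eq:sum_psi2} then gives $\|\langle\bm{\xi}_{\bm{a},\bm{b}},\bm{\Phi}(\bm{c}_1-\bm{c}_2)\rangle\|_{\psi_2}\lesssim\sqrt m\,\delta\,\|\bm{c}_1-\bm{c}_2\|_2$ without ever fixing $\bm{\Phi}$, and \cref{pro2} applied on $\mathcal{E}$ in the Euclidean metric delivers the claim directly. The point you should internalize is that the $\sqrt m$ in the final bound comes from the Hoeffding-type sum inequality \cref{eq:sum_psi2} applied to $m$ independent $O(\delta)$-sub-Gaussian summands --- not from a comparison between $\omega(\mathcal{W}_{\mathcal{E}})$ and $\omega(\mathcal{E})$.
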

\begin{proof}\label{lem:localQPE}
     We note that     it suffices to consider $\bm{\epsilon}=0$; for the general setting, since $\bm{\epsilon}$ is independent of $(\bm{\Phi},\bm{\tau})$, we can condition on $\bm{\epsilon}$ and write 
     \begin{align}\label{eq:reduction}
         \bm{\xi}_{\bm{a},\bm{b}}= \mathcal{Q}_\delta(\bm{\Phi a}+\sqrt{m}\bm{\tilde{b}}+\bm{\tau})-(\bm{\Phi a}+\sqrt{m}\bm{\tilde{b}}) 
     \end{align} with $\bm{\tilde{b}}:=\bm{b}+\frac{\bm{\epsilon}}{\sqrt{m}}$, then the result can be obtained from the case of $\bm{\epsilon}=0$.

     To get the desired bound on  $\sup_{(\bm{c},\bm{d})\in \calE}|\langle \bm{\xi}_{\bm{a},\bm{b}},\bm{\Phi c}+\sqrt{m}\bm{d}\rangle|$, we view $\langle\bm{\xi}_{\bm{a},\bm{b}},\bm{\Phi c}+\sqrt{m}\bm{d}\rangle$ as a random process indexed by $(\bm{c},\bm{d})\in \calE$ and seek to apply   \cref{pro2}. Given any $(\bm{c}_1,\bm{d}_1),(\bm{c}_2,\bm{d}_2)\in \mathcal{E}\cup\{0\}$, 
    we first verify \cref{eq:increment} as follows:
    \begin{align}
        &\big\|\langle\bm{\xi}_{\bm{a},\bm{b}},\bm{\Phi c}_1+ \sqrt{m}\bm{d}_1\rangle -\langle\bm{\xi}_{\bm{a},\bm{b}},\bm{\Phi c}_2+ \sqrt{m}\bm{d}_2\rangle \big\|_{\psi_2}\\
        &\leq \big\|\langle \bm{\xi}_{\bm{a},\bm{b}},\bm{\Phi}(\bm{c}_1-\bm{c}_2)\rangle\big\|_{\psi_2}+\big\|\langle \bm{\xi}_{\bm{a},\bm{b}},\sqrt{m}(\bm{d}_1-\bm{d}_2)\rangle\big\|_{\psi_2} \\\label{eq:sumpsi2}
        &\lesssim \sqrt{m}\delta \cdot\|\bm{c}_1-\bm{c}_2\|_2 + \sqrt{m}\delta\cdot\|\bm{d}_1-\bm{d}_2\|_2 \\&\lesssim \sqrt{m}\delta \cdot \|(\bm{c}_1,\bm{d}_1)-(\bm{c}_2,\bm{d}_2)\|_2,
    \end{align}
    where \cref{eq:sumpsi2} is due to 
    \begin{gather}
       \label{eq:explain1} \|\langle\bm{\xi}_{\bm{a},\bm{b}},\sqrt{m}(\bm{d}_1-\bm{d}_2)\rangle\|_{\psi_2}\lesssim\sqrt{m}\delta\cdot\|\bm{d}_1-\bm{d}_2\|_2\\\label{eq:explain2}
\|\langle\bm{\xi}_{\bm{a},\bm{b}},\bm{\Phi}(\bm{c}_1-\bm{c}_2)\rangle\|_{\psi_2}\lesssim\sqrt{m}\delta \cdot \|\bm{c}_1-\bm{c}_2\|_2. 
    \end{gather}    
    Note that \cref{eq:explain1}
      follows from 
$\|\bm{\xi}_{\bm{a},\bm{b}}\|_{\psi_2}=O(\delta)$ (see \cref{eq:sgxi}). To explain \cref{eq:explain2}, by 
$\|\bm{\xi}_{\bm{a},\bm{b}}\|_\infty\le\delta$ (see \cref{eq:boundedness}), {\color{black}and  with respect to the randomness of $\tau_k$   we have $\mathbbm{E}_{\tau_k}[(\bm{\xi}_{\bm{a},\bm{b}})_k]=0$ (see  \cref{eq:zero_mean}) that implies}
\begin{align}
    \label{eq:response}
    \mathbbm{E}\big((\bm{\xi}_{\bm{a},\bm{b}})_k\bm{\Phi}_k^\top (\bm{c}_1-\bm{c}_2)\big) =\mathbbm{E}\big(\mathbbm{E}_{\tau_k}[(\bm{\xi}_{\bm{a},\bm{b}})_k]\bm{\Phi}_k^\top(\bm{c}_1-\bm{c}_2)\big) = 0,
\end{align}
we can use \cref{eq:sum_psi2} to obtain
    \begin{align}
        \big\|\langle\bm{\xi}_{\bm{a},\bm{b}},\bm{\Phi}(\bm{c}_1-\bm{c}_2)\rangle\big\|_{\psi_2}^2 &\le \sum_{k=1}^m \big\|(\bm{\xi}_{\bm{a},\bm{b}})_k\bm{\Phi}_k^\top(\bm{c}_1-\bm{c}_2)\big\|_{\psi_2}^2 \lesssim m\delta^2 \|\bm{c}_1-\bm{c}_2\|_2^2. 
    \end{align}
    Now we invoke   \cref{pro2} to obtain that for some absolute constant $C$, the event \begin{align}
        \sup_{(\bm{c},\bm{d})\in \mathcal{E}}\big|\langle\bm{\xi}_{\bm{a},\bm{b}},\bm{\Phi c}+\sqrt{m}\bm{d}\rangle\big|\le C\sqrt{m}\delta \big(\omega(\mathcal{E})+u\cdot\rad(\calE)\big)
    \end{align}
    holds with probability exceeding $1-\exp(-u^2)$, as desired. 
\end{proof}

\subsection{Global QPE: The General Version}\label{app:generalqpe}
In pursuit of a uniform guarantee, the local QPE in   \cref{lem:localqpe} is no longer enough. Rather, we will need a global QPE property that holds universally for all $\bm{\xi}_{\bm{a},\bm{b}}$   with $(\bm{a},\bm{b})\in\calA\times \calB$ (see \cref{eq:short_xi_a_b} for $\bm{\xi}_{\bm{a},\bm{b}}$). To achieve the desired uniformity, we follow the standard approach of the covering argument that comprises two steps: (i) establish the desired property over the   discrete nets $\mathcal{G}_1$ and $\mathcal{G}_2$ that respectively approximate $\calA$ and $\calB$, (ii) extend the desired property from $(\bm{a},\bm{b})\in \mathcal{G}_1\times \mathcal{G}_2$ to $(\bm{a},\bm{b})\in \calA\times \calB$. While (i) can be done by applying   \cref{lem:localqpe} along with a union bound, the key challenge lies in (ii) due to the discontinuity of $\mathcal{Q}_\delta(\cdot)$ (note that the extension in (ii) typically relies on certain notion of continuity).  We overcome the difficulty by a strategy similar to \cite[Prop. 6.1]{xu2020quantized} (see \cref{eq:E1} below that characterizes certain continuity of the dithered quantizer), but with the sub-Gaussianity of $\bm{\Phi}$ we manage to provide refinements by using \cref{prop:bound_l_largest} (see more discussions in \cref{app:by}).    
\begin{theorem}[Global QPE] \label{thm:globalqpe}
      Given some bounded sets $\calA\subset\mathbb{R}^n,~\calB\subset \mathbb{R}^m,~\mathcal{E}\subset \mathbb{R}^{n+m}$ and some $\delta>0$, we further define $\calC\subset \mathbb{R}^n$ and $\calD\subset \mathbb{R}^m$ as  
      \begin{gather}\label{eq:defineC}
          \calC = \big\{\bm{c}\in \mathbb{R}^n: \exists\bm{d}\in \mathbb{R}^m,\text{ such that }(\bm{c},\bm{d})\in \calE\big\},\\\label{eq:defineD}
          \calD =\big\{\bm{d}\in \mathbb{R}^m: \exists\bm{c}\in \mathbb{R}^n,\text{ such that }(\bm{c},\bm{d})\in \calE\big\},
      \end{gather}
      and assume that the sub-Gaussian matrix $\bm{\Phi}\in\mathbb{R}^{m\times n}$, sub-Gaussian noise $\bm{\epsilon}$ and the random dither $\bm{\tau}\sim \mathscr{U}[-\frac{\delta}{2},\frac{\delta}{2}]^m$ are as described in   \cref{assump1}. Suppose that $(\zeta,\rho_1,\rho_2)$ are positive scalars satisfying 
      \begin{align}\label{eq:scaling_3cons}
          \zeta \le c_1\delta,~\rho_1\le \frac{c_1\zeta}{(\log\frac{\delta}{\zeta})^{1/2}},~\rho_2\le c_1\zeta
      \end{align}
for some sufficiently small $c_1$.
If for some large enough absolute constant $C_2>0$ we have 
\begin{align}\label{eq:qpe_sample_size}
    m\geq \frac{2\delta \cdot \scrH(\calA,\rho_1)+2\delta\cdot\scrH(\calB,\rho_2)}{\zeta} + \frac{C_2\cdot \omega^2(\calA^{(\rho_1)}_{\loc})}{\zeta^2}, 
\end{align}
then  with the quantization noise $\bm{\xi}_{\bm{a},\bm{b}}$ being given in \cref{eq:short_xi_a_b}, the event 
\begin{align}
    &\sup_{\bm{a}\in\calA}\sup_{\bm{b}\in\calB}\sup_{(\bm{c},\bm{d})\in\mathcal{E}}\big|\langle\bm{\xi}_{\bm{a},\bm{b}},\bm{\Phi c}+\sqrt{m}\bm{d}\rangle\big|\lesssim \delta \sqrt{m}\cdot \omega(\calE)+\delta\sqrt{m\hat{U}}\cdot\rad(\calD)\label{eq:thmbound1}\\&\quad+\delta\sqrt{m} \cdot \rad(\calC)\cdot\Big(\sqrt{\scrH(\calA,\rho_1)}+\sqrt{\scrH(\calB,\rho_2)}+\frac{\hat{U}}{\sqrt{m}}\Big[\log\frac{\delta}{\zeta}\Big]^{1/2}\Big).
    \label{eq:thmbound2}
\end{align}  
holds with probability exceeding  $1-12 \exp(-c_3\scrH(\calA,\rho_1)-c_3\scrH(\calB,\rho_2))$  on a single   draw of $(\bm{\Phi},\bm{\tau})$. In \cref{eq:thmbound2}, $\hat{U}$ is a quantity scaling as  
\begin{align}\label{hatU}
    \hat{U} \asymp \frac{m\zeta}{\delta}+\frac{m\rho_2^2}{\zeta^2}+\frac{\omega^2(\calA^{(\rho_1)}_{\loc})}{\zeta^2} .
\end{align}
\end{theorem}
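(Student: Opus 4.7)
My plan is a covering argument on $\calA\times\calB$ combined with a careful good/bad split of indices $i\in[m]$ to cope with the discontinuity of $\mathcal{Q}_\delta$, following the high-level strategy of \cite[Prop.~6.1]{xu2020quantized} but sharpening the bad-index count via the sub-Gaussian concentration in \cref{prop:bound_l_largest}. As a routine preliminary, I would condition on $\bm{\epsilon}$ (independent of $(\bm{\Phi},\bm{\tau})$) and absorb it into $\bm{b}$ via $\bm{b}\mapsto\bm{b}+\bm{\epsilon}/\sqrt{m}$ exactly as in the proof of \cref{lem:localqpe}, so the rest of the argument may assume $\bm{\epsilon}=0$.

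\textbf{Net step.} Take minimal $\rho_1$- and $\rho_2$-nets $\mathcal{G}_1\subset\calA$, $\mathcal{G}_2\subset\calB$, apply \cref{lem:localqpe} at each pair $(\bm{a}_0,\bm{b}_0)\in\mathcal{G}_1\times\mathcal{G}_2$ with tail parameter $u^2\asymp\scrH(\calA,\rho_1)+\scrH(\calB,\rho_2)$, and union-bound over the net. The first part of the sample-size condition \cref{eq:qpe_sample_size} (through $m\zeta/\delta\gtrsim\scrH(\calA,\rho_1)+\scrH(\calB,\rho_2)$) makes the union-bound free. This produces the $\delta\sqrt{m}\,\omega(\calE)$ summand of \cref{eq:thmbound1} and the $\delta\sqrt{m}\,\rad(\calC)\sqrt{\scrH(\calA,\rho_1)+\scrH(\calB,\rho_2)}$ piece of \cref{eq:thmbound2} (after splitting $\rad(\calE)$ into its $\calC$ and $\calD$ parts).

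\textbf{Extension and the key identity.} For general $(\bm{a},\bm{b})\in\calA\times\calB$, pick $(\bm{a}_0,\bm{b}_0)$ nearest in $\mathcal{G}_1\times\mathcal{G}_2$, and set $\Delta_i:=\bm{\Phi}_i^\top(\bm{a}-\bm{a}_0)+\sqrt{m}(b_i-b_{0,i})$ and $u_i:=\bm{\Phi}_i^\top\bm{a}_0+\sqrt{m}b_{0,i}+\tau_i$. A direct calculation gives $(\bm{\xi}_{\bm{a},\bm{b}}-\bm{\xi}_{\bm{a}_0,\bm{b}_0})_i=\mathcal{Q}_\delta(u_i+\Delta_i)-\mathcal{Q}_\delta(u_i)-\Delta_i$, which is the difference of two quantization errors each lying in $[-\delta/2,\delta/2]$ and therefore is always bounded by $\delta$ in magnitude; moreover, on ``good'' indices (no jump in $[u_i,u_i+\Delta_i]$) this expression simplifies to exactly $-\Delta_i$. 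Splitting $\langle\bm{\xi}_{\bm{a},\bm{b}}-\bm{\xi}_{\bm{a}_0,\bm{b}_0},\bm{\Phi c}+\sqrt{m}\bm{d}\rangle$ along good/bad and telescoping the good part recovers $-\langle\bm{\Phi}(\bm{a}-\bm{a}_0)+\sqrt{m}(\bm{b}-\bm{b}_0),\bm{\Phi c}+\sqrt{m}\bm{d}\rangle$ plus a residual supported on bad indices; the linear inner product is a product process over $(\bm{a}-\bm{a}_0,\bm{b}-\bm{b}_0)\in\calA^{(\rho_1)}_{\loc}\times\rho_2\mathbb{B}_2^m$ and $(\bm{c},\bm{d})\in\calE$, which I would control with \cref{pro1} and absorb into the dominant terms under \cref{eq:qpe_sample_size}.

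\textbf{Main obstacle: counting bad indices.} The most delicate and fundamental step is to establish, uniformly in $(\bm{a},\bm{b})$, the bad-set count $|\text{bad}|\le\hat{U}$ with the scaling in \cref{hatU}. An index $i$ is bad only if one of three things occurs: \emph{(i)} $\mathcal{Q}_\delta$ jumps inside the fixed window $[u_i-\zeta,u_i+\zeta]$; \emph{(ii)} $|\bm{\Phi}_i^\top(\bm{a}-\bm{a}_0)|>\zeta/2$; or \emph{(iii)} $|\sqrt{m}(b_i-b_{0,i})|>\zeta/2$. Event (i) is controlled uniformly over $\mathcal{G}_1\times\mathcal{G}_2$ by \cref{lem1} together with a union bound, yielding $\lesssim m\zeta/\delta$ such indices (the first summand of $\hat{U}$). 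Event (iii) is the Markov bound $m\rho_2^2/\zeta^2$ from $\|\bm{b}-\bm{b}_0\|_2\le\rho_2$ (second summand). Event (ii) is the crux, and it is where I would invoke \cref{prop:bound_l_largest} on $\mathcal{T}=\calA^{(\rho_1)}_{\loc}$: if a set $I$ of cardinality $l$ collects the large coordinates, then $l(\zeta/2)^2\le\sum_{i\in I}|\bm{\Phi}_i^\top(\bm{a}-\bm{a}_0)|^2\lesssim\omega^2(\calA^{(\rho_1)}_{\loc})+l\rho_1^2\log(em/l)$, and the scaling $\rho_1\le c_1\zeta/\sqrt{\log(\delta/\zeta)}$ (together with $l\lesssim m\zeta/\delta$, so $\log(em/l)\lesssim\log(\delta/\zeta)$) absorbs the second term, giving $l\lesssim\omega^2(\calA^{(\rho_1)}_{\loc})/\zeta^2$ and matching the third summand of $\hat{U}$. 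With $|\text{bad}|\le\hat{U}$ in hand, the bad residual is bounded by Cauchy--Schwarz: the $\bm{d}$ piece is trivially $\delta\sqrt{m\hat{U}}\,\rad(\calD)$, and one more application of \cref{prop:bound_l_largest} with $\mathcal{T}=\calC$ and $l=\hat{U}$ yields $\|(\bm{\Phi c})|_{\text{bad}}\|_2\lesssim\omega(\calC)+\sqrt{\hat{U}}\,\rad(\calC)\sqrt{\log(\delta/\zeta)}$, producing the $\delta\hat{U}\,\rad(\calC)\sqrt{\log(\delta/\zeta)}$ term in \cref{eq:thmbound2}. The main technical balancing act is choosing a single threshold $\zeta$ that simultaneously controls the quantizer-discontinuity window and both large-perturbation tails on $\calA$ and $\calB$ with the correct scalings, precisely as dictated by the coupled conditions \cref{eq:scaling_3cons} and \cref{eq:qpe_sample_size}.
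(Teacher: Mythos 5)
Your proposal reproduces the paper's own proof in all essential respects: the net step via \cref{lem:localqpe} over $\mathcal{G}_1\times\mathcal{G}_2$, the good/bad index split with the three-source bad-index count (\cref{lem1} for the discontinuity windows, a deterministic pigeonhole for the $\calB$-side perturbation since there is no matrix modulation there, and \cref{prop:bound_l_largest} on $\calA^{(\rho_1)}_{\loc}$ for the $\calA$-side perturbation, the last being exactly the refinement over \cite{xu2020quantized} that the paper emphasizes), and the closing Cauchy--Schwarz together with a second application of \cref{prop:bound_l_largest} with $\mathcal{T}=\calC$ for the bad-index residual.

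Two presentation notes only, neither a genuine gap. First, when you argue ``if a set $I$ of cardinality $l$ collects the large coordinates, then $\ldots$ giving $l\lesssim\omega^2(\calA^{(\rho_1)}_{\loc})/\zeta^2$,'' you are tacitly applying \cref{prop:bound_l_largest} at a data-dependent $l=|\mathcal{J}^{\calA}_{\bm{a}}|$; the clean route (and the paper's) is to fix $l_0\asymp\frac{m\zeta}{\delta}+\frac{\omega^2(\calA^{(\rho_1)}_{\loc})}{\zeta^2}$ in advance, apply the proposition once at that $l_0$, and deduce from the resulting bound on the $l_0$-th largest $|\bm{\Phi}_i^\top(\bm{a}-\bm{a}')|$ that $|\mathcal{J}^{\calA}_{\bm{a}}|<l_0$ simultaneously for every $\bm{a}\in\calA$. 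Second, your ``telescoping'' treatment of the good part (extend $-\Delta_i$ to all of $[m]$ and subtract the bad-index contribution) is algebraically equivalent to the paper's direct route of bounding $\|\bm{h}^{(2)}\|_2\le\|\bm{\Phi}(\bm{a}-\bm{a}')\|_2+\sqrt{m}\|\bm{b}-\bm{b}'\|_2$ and then Cauchy--Schwarzing against $\|\bm{\Phi c}\|_2$ and $\sqrt{m}\rad(\calD)$; both lead to the same norm estimate and both are absorbed under \cref{eq:qpe_sample_size}.
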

\begin{proof}
Again, we can simply concentrate on the case where $\bm{\epsilon}=0$;  we can reduce the general case to the case with $\bm{\epsilon}=0$ by 
conditioning on $\bm{\epsilon}$ and writing \cref{eq:reduction} with $\bm{\tilde{b}}=\bm{b}+\frac{\bm{\epsilon}}{\sqrt{m}}\in \tilde{\mathcal{B}}:=\mathcal{B}+\frac{\bm{\epsilon}}{\sqrt{m}}$; since $\mathscr{H}(\mathcal{B},\cdot)=\mathscr{H}(\tilde{\mathcal{B}},\cdot)$ always holds, the result can be readily obtained from the case of $\bm{\epsilon}=0$.  
We seek to bound 
\begin{align}
    \sup_{\bm{a}\in\calA}\sup_{\bm{b}\in\calB}\sup_{(\bm{c},\bm{d})\in \mathcal{E}}\big|\langle\bm{\xi}_{\bm{a},\bm{b}},\bm{\Phi c}+\sqrt{m}\bm{d}\rangle\big|.
\end{align}
We pause to provide an outline for this most technical proof in this work (see \cref{tbl:notation} in \cref{app:table} for a table of the recurring notation):
\begin{itemize}
[leftmargin=5ex,topsep=0.25ex]
    \item In \textbf{Step 1} we bound the random process over nets of $\mathcal{A}$ and $\mathcal{B}$, which is done by applying local QPE (\cref{lem:localqpe}) and then a union bound. 
    \item In \textbf{Step 2} we build three useful events $E_1,E_2,E_3$: $E_1$ in \cref{eq:E1} bounds the number of measurements exhibiting some discontinuity due to the quantizer (those in $\mathcal{Z}_{\bm{a},\bm{b}}$ as per \cref{eq:Zab}); $E_2$ in \cref{eq:E2} bounds the number of measurements exhibiting large perturbations associated with $\bm{a}\in\mathcal{A}$ (those in $\mathcal{J}_{\bm{a}}^{\mathcal{A}}$ as per \cref{eq:bad1});  and $E_3$ in \cref{eq:E3} bounds the number of measurements with large perturbations associated with $\bm{b}\in \mathcal{B}$ (those in $\mathcal{J}_{\bm{b}}^{\mathcal{B}}$ as per \cref{eq:bad2}).   

    \item In \textbf{Step 3} we strengthen the bound from Step 1 to a uniform bound over $\mathcal{A}\times\mathcal{B}$. We provide different treatments to two classes of measurements. The first class collects the ``bad'' ones in $\mathcal{Z}_{\bm{a},\bm{b}}\cup \mathcal{J}_{\bm{a}}^{\mathcal{A}}\cup \mathcal{J}_{\bm{b}}^{\mathcal{B}}$ \cref{eq:bad_index} that only account for a few measurements (see \cref{eq:U0bound}), the second class collects the remaining   benign measurements that enjoy some nice property (see \cref{eq:simplify}). 
    \item In \textbf{Step 4} we choose parameters and use \cref{eq:scaling_3cons}  and \cref{eq:qpe_sample_size} to get the bound and probability in the theorem statement.
\end{itemize}

\subsubsection*{Step 1: Uniform Bound Over Nets} For some $\rho_1,\rho_2>0$ to be chosen, we let $\calG_1$ be the minimal $\rho_1$-net of $\calA$ with $\log|\mathcal{G}_1|=\scrH(\calA,\rho_1)$, $\calG_2$ be the minimal $\rho_2$-net of $\calB$ with $\log|\mathcal{G}_2|=\scrH(\calB,\rho_2)$. Then, for any $t_1>0$, we apply the non-uniform bound in   \cref{lem:localqpe} along with a union bound over $(\bm{a},\bm{b})\in \calG_1\times \calG_2$, yielding that the event 
\begin{align}\label{eq:net_bound}
    \sup_{\bm{a}\in \calG_1}\sup_{\bm{b}\in \calG_2}\sup_{(\bm{c},\bm{d})\in\mathcal{E}}\big|\langle\bm{\xi}_{\bm{a},\bm{b}},\bm{\Phi c}+\sqrt{m}\bm{d}\rangle\big|\le C\sqrt{m}\delta \big(\omega(\mathcal{E})+t_1\cdot\rad(\calE)\big)
\end{align}
holds with probability at least $1-2\exp(\scrH(\calA,\rho_1)+\scrH(\calB,\rho_2)-t_1^2)$.   

\subsubsection*{Step 2: Some Useful Events} Recall that we need to further strengthen \cref{eq:net_bound} from $(\bm{a},\bm{b})\in \mathcal{G}_1\times \mathcal{G}_2$ to $(\bm{a},\bm{b})\in \calA\times\calB$. To this end, we pause to build some useful high-probability events that aid the subsequent analysis. For given $(\bm{a},\bm{b})\in \calA\times \calB$ and $\zeta>0$, {\color{black}we define the random set} 
\begin{align}\label{eq:Zab}
    \calZ_{\bm{a},\bm{b}} = \big\{i\in [m]:\calQ_\delta(\bm{\Phi}_i^\top\bm{a}+\sqrt{m}\cdot b_i+\tau_i+t)\text{ is discontinuous in }t\in[-\zeta,\zeta]\big\}
\end{align}
and let its cardinality be $|\mathcal{Z}_{\bm{a},\bm{b}}|$.

\textbf{Bounding $|\mathcal{Z}_{\bm{a},\bm{b}}|$ over nets:} For some $\zeta\in (0,\frac{\delta}{2})$ to be chosen later,
given $(\bm{a},\bm{b})$ and conditioning on $\bm{\Phi}$,  we utilize   \cref{lem1} (with respect to the randomness of $\bm{\tau}$) to obtain 
\begin{align}
    \mathbbm{P}\Big(|\mathcal{Z}_{\bm{a},\bm{b}}|\ge \frac{5m\zeta}{\delta}\Big) \le \exp \Big(-\frac{m\zeta}{\delta}\Big).
\end{align}
Further taking a union bound over $(\bm{a},\bm{b})\in \calG_1\times\calG_2$, we obtain that the event 
\begin{align}\label{eq:E1}
    E_1= \Big\{\sup_{\ba\in\mathcal{G}_1}\sup_{\bm{b}\in \mathcal{G}_2}|\mathcal{Z}_{\bm{a},\bm{b}}|\le \frac{5m\zeta}{\delta}\Big\}
\end{align}
holds with probability exceeding $1-\exp\big(\scrH(\calA,\rho_1)+\scrH(\calB,\rho_2)-\frac{m\zeta}{\delta}\big)$. Note that for $i\notin \mathcal{Z}_{\bm{a},\bm{b}}$, $\calQ_{\delta}(\bm{\Phi}_i^\top\bm{a}+\sqrt{m}\cdot b_i+\tau_i+t)$ is continuous in $t\in[-\zeta,\zeta]$, which along with the definition of $\mathcal{Q}_\delta(\cdot)$ means that $\mathcal{Q}_\delta(\bm{\Phi}_i^\top\bm{a}+\sqrt{m}\cdot b_i+\tau_i+t)$ remains constant in $t\in [-\zeta,\zeta]$. In essence, $E_1$ bounds $|\mathcal{Z}_{\bm{a},\bm{b}}|$ to be no larger than $\frac{5m\zeta}{\delta}$, universally over the nets  $(\bm{a},\bm{b})\in\mathcal{G}_1\times \mathcal{G}_2$, and thus indicates certain continuity of the quantization in the following sense: for measurement in $[m]\setminus \mathcal{Z}_{\bm{a},\bm{b}}$ (that is the majority under small enough $\zeta$) a perturbation smaller than $\zeta$ will not change the quantized value.

   For any $(\bm{a},\bm{b})\in \mathcal{A}\times\mathcal{B}$ there exists   $(\bm{a}',\bm{b}')\in \mathcal{G}_1\times \mathcal{G}_2$ satisfying $\|\bm{a}-\bm{a}'\|_2 \le \rho_1$ and $\|\bm{b}-\bm{b}'\|_2\le\rho_2$, and we suppose that such pair of $(\bm{a}',\bm{b}')$ has been chosen for every $(\bm{a},\bm{b})\in\mathcal{A}\times\mathcal{B}$ ($\bm{a}'$ and $\bm{b}'$ evidently depend on $(\bm{a},\bm{b})$, but we omit such dependence to avoid cumbersome notation). To get uniform bound over $\mathcal{A}\times\mathcal{B}$ from a bound over nets (see \cref{eq:net_bound}), we will need to compare $\mathcal{Q}_\delta(\bm{\Phi}_i^\top \bm{a}+\sqrt{m}\cdot b_i+\tau_i)$ with $\mathcal{Q}_\delta(\bm{\Phi}_i^\top \bm{a}'+\sqrt{m}\cdot b_i'+\tau_i)$, and we note that the former can be written as 
\begin{align}
    \mathcal{Q}_\delta \big(\bm{\Phi}_i^\top\bm{a}'+\sqrt{m}\cdot b_i'+ \tau_i+ \bm{\Phi}_i^\top(\bm{a}-\bm{a}')+\sqrt{m}\cdot(b_i-b_i')\big),
\end{align}
thus we need to study how the perturbations $\bm{\Phi}_i^\top(\bm{a}-\bm{a}')$ and $\sqrt{m}\cdot(b_i-b_i')$ affect the quantized value. As explained above, on the event $E_1$, for any $i\notin \mathcal{Z}_{\bm{a}',\bm{b}'}$, if $|\bm{\Phi}_i^\top (\bm{a}-\bm{a}')+\sqrt{m}\cdot (b_i-b_i')|\le\zeta$, then it holds that
\begin{align}
    \mathcal{Q}_\delta(\bm{\Phi}_i^\top \bm{a}+\sqrt{m}\cdot b_i+\tau_i)=\mathcal{Q}_\delta(\bm{\Phi}_i^\top \bm{a}'+\sqrt{m}\cdot b_i'+\tau_i). \label{eq:observe1}
\end{align}
In order to utilize this property, we proceed to bound the number of large perturbations $\bm{\Phi}_i^\top(\bm{a}-\bm{a}')$ associated with $\bm{a}$, and similarly large perturbations $\sqrt{m}\cdot(b_i-b_i')$ associated with $\bm{b}$. More precisely, given $\bm{a}\in\mathcal{A}$ and $\bm{b}\in\mathcal{B}$, we define the index sets for large perturbations as
\begin{align}\label{eq:bad1}
    \mathcal{J}^{\mathcal{A}}_{\bm{a}}& = \Big\{i\in [m]: |\bm{\Phi}_i^\top(\bm{a}-\bm{a}')|\ge\frac{\zeta}{2}\Big\},\\\label{eq:bad2}
    \mathcal{J}^{\mathcal{B}}_{\bm{b}}& = \Big\{i\in [m]: \sqrt{m}\cdot\big|b_i-b_i'\big|\ge\frac{\zeta}{2}\Big\},
\end{align}
and denote their cardinalities by $|\mathcal{J}_{\bm{a}}^{\mathcal{A}}|$ and $|\mathcal{J}_{\bm{b}}^{\mathcal{B}}|$, respectively.

\textbf{Bounding $|\mathcal{J}^{\mathcal{A}}_{\bm{a}}|$ over $\calA$:} For some $l\in[1,m]$ to be chosen, we apply   \cref{prop:bound_l_largest} with $\mathcal{T}=\mathcal{A}_{\loc}^{(\rho_1)}:=(\mathcal{A}-\mathcal{A})\cap \mathbbm{B}_2^n(\rho)$ to obtain that the event 
\begin{align}
    \label{eq:bound_ell_large}
    \sup_{\bm{v}\in \mathcal{A}^{(\rho_1)}_{\loc}}\max_{\substack{I\subset [m]\\ |I| \le l}} \Big(\frac{1}{l}\sum_{i\in I}|\langle \bm{\Phi}_i,\bm{v}\rangle|^2\Big)^{1/2}\le \frac{\zeta}{3}
\end{align}
holds with probability at least $1-2\exp(C_1 l\log\frac{em}{l})$, as long as 
\begin{align}\label{eq:con_for_bound_l}
    \frac{\omega(\mathcal{A}^{(\rho_1)}_{\loc})}{\sqrt{l}} + \rho_1 \sqrt{\log\frac{em}{l}}\le c_2\zeta
\end{align}
holds with sufficiently small $c_2$,
as dictated by the right-hand side of \cref{eq:bound_l_large}.\footnote{The right-hand side of \cref{eq:bound_l_large} dictates that, to ensure \cref{eq:bound_ell_large} holding with high probability, it suffices to have $\frac{\omega(\mathcal{A}_{\loc}^{(\rho_1)})}{l^{1/2}}+\rho_1(\log\frac{em}{l})^{1/2}\le c_2\zeta$ with sufficiently small $C_2$.} We suppose that we are on the event \cref{eq:bound_ell_large} and will choose $(\rho_1,l,\zeta)$ satisfying \cref{eq:con_for_bound_l}  later. Then, given $\bm{a}\in\mathcal{A}$ and the corresponding $\bm{a}'\in \mathcal{G}_1$ satisfying $\|\bm{a}-\bm{a}'\|_2\le \rho_1$, we have
  $\bm{a}-\bm{a}'\in \mathcal{A}^{(\rho_1)}_{\loc}$ and thus \cref{eq:bound_ell_large} yields 
  \begin{align}
      \max_{\substack{I\subset [m]\\ |I| \le l}} \Big(\frac{1}{l}\sum_{i\in I}\big|\langle \bm{\Phi}_i,\bm{a}-\bm{a}'\rangle\big|^2\Big)^{1/2}\le \frac{\zeta}{3}. \label{eq:bound_l_special}
  \end{align}
  Observe that the left-hand side of \cref{eq:bound_l_special} is an upper bound on the $l$-th largest elements in $\{|\bm{\Phi}_i^\top(\bm{a}-\bm{a}')|:i=1,...,m\}$, we thus obtain $|\mathcal{J}_{\bm{a}}^{\mathcal{A}}|\le l$. Since this argument applies to any $\bm{a}\in \mathcal{A}$ (and the corresponding $\bm{a}'\in\mathcal{G}_1$), \cref{eq:bound_ell_large} implies the event 
  \begin{align}\label{eq:E2}
      E_2 = \Big\{\sup_{\bm{a}\in \mathcal{A}}|\mathcal{J}_{\bm{a}}^{\mathcal{A}}| \le l\Big\}. 
  \end{align}

\textbf{Bounding $|\mathcal{J}^{\mathcal{B}}_{\bm{b}}|$ over $\calB$:} We consider $\bm{b}\in\mathcal{B}$ and the corresponding $\bm{b}'\in \mathcal{G}_2$ satisfying $\|\bm{b}-\bm{b}'\|_2\le \rho_2$. Without the modulation of $\bm{\Phi}$, we will have less available information on  $\{\sqrt{m}\cdot|b_i-b_i'|:i\in[m]\}$ but merely $\|\bm{b}-\bm{b}'\|_2\le \rho_2$. To still get a bound on $|\mathcal{J}_{\bm{b}}^{\mathcal{B}}|$, we observe that 
\begin{align}
    \rho_2^2 \ge \|\bm{b}-\bm{b}'\|_2^2 \ge |\mathcal{J}_{\bm{b}}^{\mathcal{B}}|\cdot \Big(\frac{\zeta}{2\sqrt{m}}\Big)^2 = \frac{\zeta^2|\mathcal{J}_{\bm{b}}^{\mathcal{B}}|}{4m},
\end{align}
which implies $|\mathcal{J}_{\bm{b}}^{\mathcal{B}}|\le \frac{4m\rho_2^2}{\zeta^2}$. Note that this holds deterministically for all $\bm{b}\in\mathcal{B}$ (and the corresponding $\bm{b}'\in \mathcal{G}_2$), and hence the event \begin{align}\label{eq:E3}
    E_3=\Big\{\sup_{\bm{b}\in\mathcal{B}}|\mathcal{J}_{\bm{b}}^{\mathcal{B}}|\le \frac{4m\rho_2^2}{\zeta^2}\Big\}
\end{align}   
holds deterministically.

\subsubsection*{Step 3: Extension to the Whole Sets}
Equipped with the high-probability events $E_1$, $E_2$ and $E_3$, we are in a position to strengthen the bound over $ \mathcal{G}_1\times\mathcal{G}_2$ (see \cref{eq:net_bound}) to $(\bm{a},\bm{b})\in\mathcal{A}\times\mathcal{B}$. For any $(\bm{a},\bm{b})\in \mathcal{A}\times\mathcal{B}$, recall that we have chosen $(\bm{a}',\bm{b}')\in \mathcal{G}_1\times\mathcal{G}_2$ satisfying $\|\bm{a}-\bm{a}'\|_2\le\rho_1$ and $\|\bm{b}-\bm{b}'\|_2\le \rho_2$, and we begin with 
\begin{align}
    &\sup_{(\bm{c},\bm{d})\in\mathcal{E}}\big|\langle\bm{\xi}_{\bm{a},\bm{b}},\bm{\Phi c}+\sqrt{m}\bm{d}\rangle\big|\le \sup_{(\bm{c},\bm{d})\in\mathcal{E}}\big|\langle \bm{\xi}_{\bm{a},\bm{b}}-\bm{\xi}_{\bm{a}',\bm{b}'},\bm{\Phi c}+\sqrt{m}\bm{d}\rangle\big|+ \sup_{(\bm{c},\bm{d})\in \mathcal{E}}\big|\langle \bm{\xi}_{\bm{a}',\bm{b}'},\bm{\Phi c}+\sqrt{m}\bm{d}\rangle\big|\\
    \label{eq:use_net_bound}&\quad\quad\le \underbrace{\sup_{\bm{c}\in \mathcal{C}}\big|\langle \bm{\xi}_{\bm{a},\bm{b}}-\bm{\xi}_{\bm{a}',\bm{b}'},\bm{\Phi c}\rangle\big|}_{:=I_1}+\underbrace{\sqrt{m}\cdot\sup_{\bm{d}\in \mathcal{D}}\big|\langle \bm{\xi}_{\bm{a},\bm{b}}-\bm{\xi}_{\bm{a}',\bm{b}'},\bm{d}\rangle\big|}_{:=I_2}+  C\sqrt{m}\delta\cdot\big(\omega(\mathcal{E})+t_1\cdot\rad(\calE)\big),
\end{align}
where in \cref{eq:use_net_bound} we apply \cref{eq:net_bound}. We will need to separately bound $I_1$ and $I_2$, while we discuss two kinds of measurements before proceeding.

\textbf{Bad Measurements:} We define for any $(\bm{a},\bm{b})\in \mathcal{A}\times \mathcal{B}$ (and the corresponding $(\bm{a}',\bm{b}')\in\mathcal{G}_1\times\mathcal{G}_2$)
the index set \begin{align}
    \label{eq:bad_index}
    \mathcal{U}_{\bm{a},\bm{b}} = \mathcal{Z}_{\bm{a}',\bm{b}'}\cup \mathcal{J}_{\bm{a}}^{\mathcal{A}}\cup \mathcal{J}_{\bm{b}}^{\mathcal{B}},
\end{align}
which collect the ``bad'' measurements that either lack certain continuity regarding the quantizer (i.e., measurements in $\mathcal{Z}_{\bm{a},\bm{b}}$) or present large perturbations regarding $\bm{a}$ or $\bm{b}$  (i.e., measurements in $\mathcal{J}_{\bm{a}}^{\mathcal{A}}\cup \mathcal{J}_{\bm{b}}^{\mathcal{B}}$). Fortunately, 
the ``bad'' measurements are not that many, since on the events $E_1$ \cref{eq:E1}, $E_2$ \cref{eq:E2}, $E_3$ \cref{eq:E3} we have 
\begin{align}
    \sup_{\bm{a}\in\mathcal{A}}\sup_{\bm{b}\in \mathcal{B}}|\mathcal{U}_{\bm{a},\bm{b}}|&\le \sup_{\bm{a}'\in \mathcal{G}_1}\sup_{\bm{b}'\in \mathcal{G}_2}|\mathcal{Z}_{\bm{a}',\bm{b}'}| + \sup_{\bm{a}\in\mathcal{A}}|\mathcal{J}_{\bm{a}}^{\mathcal{A}}| + \sup_{\bm{b}\in \mathcal{B}}|\mathcal{J}_{\bm{b}}^{\mathcal{B}}|\\
    &\le \frac{5m\zeta}{\delta}+\frac{4m\rho_2^2}{\zeta^2}+l:=U_0. \label{eq:U0bound}
\end{align}
By \cref{eq:scaling_3cons} $\frac{5m\zeta}{\delta}+\frac{4m\rho_2^2}{\zeta^2}\lesssim m$ with small enough implied constant, and we will choose $\ell$ in \cref{eq:choosel} below satisfying $\ell\lesssim m$ for small enough constant (see \cref{eq:U0_value} for the value of $U_0$ after choosing $l$). Thus, by rounding that has minimal impact on our analysis, we can assume that $U_0$ is an integer in $[1,m]$. 
To further control the impact of bad measurements in the worst case, we note the following deterministic bound that holds for any $i\in [m]$:
\begin{align}
    |(\bm{\xi}_{\bm{a},\bm{b}})_{i}-(\bm{\xi}_{\bm{a}',\bm{b}'})_{i}|&=\big|[\mathcal{Q}_\delta(\bm{\Phi}_i^\top \bm{a}+\sqrt{m}b_i + \tau_i) - (\bm{\Phi}_i^\top \bm{a}+\sqrt{m}b_i)]\\
    &\quad-[\mathcal{Q}_\delta(\bm{\Phi}_i^\top \bm{a}'+\sqrt{m}b'_i + \tau_i) - (\bm{\Phi}_i^\top \bm{a}'+\sqrt{m}b_i')]\big|\\
    &=\big|[\mathcal{Q}_\delta(\bm{\Phi}_i^\top \bm{a}+\sqrt{m}b_i + \tau_i) - (\bm{\Phi}_i^\top \bm{a}+\sqrt{m}b_i+\tau_i)]\\
    &\quad-[\mathcal{Q}_\delta(\bm{\Phi}_i^\top \bm{a}'+\sqrt{m}b'_i + \tau_i) - (\bm{\Phi}_i^\top \bm{a}'+\sqrt{m}b_i'+\tau_i)]\big|\\
    & \le 2\cdot \sup_{x\in \mathbb{R}}|\mathcal{Q}_\delta(a)-a|\le \delta. \label{eq:deter_bound}
\end{align}

\textbf{Benign Measurements:} By contrast, measurements not in $\mathcal{U}_{\bm{a},\bm{b}}$ enjoy some nice property; In particular, for $i\notin \mathcal{U}_{\bm{a},\bm{b}}$ we have $i\notin \mathcal{J}_{\bm{a}}^{\mathcal{A}}\cup \mathcal{J}_{\bm{b}}^{\mathcal{B}}$ and hence 
\begin{align}
    &|\bm{\Phi}_i^\top(\bm{a}-\bm{a}')+\sqrt{m}\cdot(b_i-b_i')|\\
    &\le |\bm{\Phi}_i^\top(\bm{a}-\bm{a}')|+|\sqrt{m}\cdot(b_i-b_i')| \le \frac{\zeta}{2}+\frac{\zeta}{2}=\zeta,
\end{align}
thus  \cref{eq:observe1}  holds true, which allows us to simplify the $i$-th entry of $\bm{\xi}_{\bm{a},\bm{b}}-\bm{\xi}_{\bm{a}',\bm{b}'}$ as
\begin{align}
    (\bm{\xi}_{\bm{a},\bm{b}})_i-(\bm{\xi}_{\bm{a}',\bm{b}'})_i& = \mathcal{Q}_\delta(\bm{\Phi}_i^\top \bm{a}+\sqrt{m}b_i +\tau_i) - (\bm{\Phi}_i^\top\bm{a}+\sqrt{m}b_i)\\& \quad\quad-\mathcal{Q}_\delta(\bm{\Phi}_i^\top \bm{a}'+\sqrt{m}b'_i +\tau_i) +(\bm{\Phi}_i^\top\bm{a}'+\sqrt{m}b'_i)\\
    & = \bm{\Phi}_i^\top(\bm{a}'-\bm{a})+\sqrt{m}\cdot(b_i'-b_i). \label{eq:simplify}
\end{align}

\textbf{Decomposition:} According to $\mathcal{U}_{\bm{a},\bm{b}}$ we can always decompose $\bm{\xi}_{\bm{a},\bm{b}}-\bm{\xi}_{\bm{a}',\bm{b}'}$ into 
\begin{align}\label{eq:decomposition}
    \bm{\xi}_{\bm{a},\bm{b}}-\bm{\xi}_{\bm{a}',\bm{b}'} = \bm{h}_{\bm{a},\bm{b}}^{(1)}+\bm{h}_{\bm{a},\bm{b}}^{(2)},
\end{align}
with $\bm{h}_{\bm{a},\bm{b}}^{(1)}$ and $\bm{h}_{\bm{a},\bm{b}}^{(2)}$ respectively  accommodating the entries in $\mathcal{U}_{\bm{a},\bm{b}}$ and $[m]\setminus \mathcal{U}_{\bm{a},\bm{b}}$, i.e., \begin{align}
    (\bm{h}^{(1)}_{\bm{a},\bm{b}})_{i} =( \bm{\xi}_{\bm{a},\bm{b}})_{i} - ( \bm{\xi}_{\bm{a}',\bm{b}'})_{i},~ (\bm{h}^{(2)}_{\bm{a},\bm{b}})_{i}=0 
\end{align}
when $i\in \mathcal{U}_{\bm{a},\bm{b}}$;  otherwise, $ (\bm{h}^{(1)}_{\bm{a},\bm{b}})_{i} =0$ and  
\begin{align}\label{eq:simplify_h2}
    (\bm{h}^{(2)}_{\bm{a},\bm{b}})_{i}=( \bm{\xi}_{\bm{a},\bm{b}})_{i} - ( \bm{\xi}_{\bm{a}',\bm{b}'})_{i} = \bm{\Phi}_i^\top(\bm{a}'-\bm{a})+\sqrt{m}\cdot(b_i'-b_i)
\end{align}
when $i\notin \mathcal{U}_{\bm{a},\bm{b}}$, with the second equality following from \cref{eq:simplify}.

\textbf{Bounding $I_1$:} By substituting \cref{eq:decomposition} we can start with 
\begin{align}\label{eq:decompose_I1}
    I_1 \le \sup_{\bm{c}\in\calC}\big|\langle \bm{h}^{(1)}_{\bm{a},\bm{b}},\bm{\Phi c}\rangle\big|+\sup_{\bm{c}\in\calC}\big|\langle \bm{h}^{(2)}_{\bm{a},\bm{b}},\bm{\Phi c}\rangle\big|.
\end{align}
Recall from \cref{eq:U0bound} and \cref{eq:deter_bound} that  $\|\bm{h}^{(1)}_{\bm{a},\bm{b}}\|_0\le U_0$ and $\|\bm{h}^{(1)}_{\bm{a},\bm{b}}\|_\infty\le\delta$ hold uniformly for all $(\bm{a},\bm{b})\in\mathcal{A}\times\mathcal{B}$. Thus, to bound the first term in \cref{eq:decompose_I1}, we can  restrict our attention to entries in the support of $\bm{h}_{\bm{a},\bm{b}}^{(1)}$ and apply Cauchy-Schwarz inequality to obtain   
\begin{align}
    \sup_{\bm{c}\in \calC}\big|\langle \bm{h}_{\bm{a},\bm{b}}^{(1)},\bm{\Phi c}\rangle\big|& \le \|\bm{h}^{(1)}_{\bm{a},\bm{b}}\|_2 \cdot \sup_{\bm{c}\in\calC}\Big(\sum_{i\in \supp(\bm{h}^{(1)}_{\bm{a},\bm{b}})}\big|\bm{\Phi}_i^\top \bm{c}\big|^2\Big)^{1/2}\\
    &\le \delta \sqrt{U_0}\cdot \sup_{\bm{c}\in\calC}\max_{\substack{I\subset [m]\\|I|\le U_0}}\Big(\sum_{i\in I}|\bm{\Phi}_i^\top \bm{c}|^2\Big)^{1/2}\\
    &\label{eq:bound_U0_large}\le C_3 \delta\sqrt{U_0}\cdot \Big( \omega(\calC)+\rad(\calC)\sqrt{U_0\log\frac{em}{U_0}}\Big),
\end{align}
where  \cref{eq:bound_U0_large} holds with probability at least $1-2\exp(-C_4 U_0 \log\frac{em}{U_0})$ due to a straightforward application of \cref{prop:bound_l_largest}. Next, we seek to bound the second term in \cref{eq:decompose_I1}. By  \cref{eq:simplify_h2} we can proceed as \begin{align}
    \sup_{\bm{c}\in\mathcal{C}}\big|\langle\bm{h}^{(2)}_{\bm{a},\bm{b}},\bm{\Phi c}\rangle\big|&\le \big\|\bm{\Phi}(\bm{a}'-\bm{a})+\sqrt{m}(\bm{b}'-\bm{b})\big\|_2\cdot\sup_{\bm{c}\in\calC}\|\bm{\Phi c}\|_2 \\
    &\le \Big(\sup_{\bm{v}\in \mathcal{A}^{(\rho_1)}_{\loc}}\|\bm{\Phi v}\|_2+\sqrt{m}\rho_2\Big)\cdot \sup_{\bm{c}\in\calC}\|\bm{\Phi c}\|_2.\label{eq:bound_h2_Phic}
\end{align}
Now we apply \cref{prop:bound_l_largest} to achieve the following two bounds (for some absolute constants $C_5,C_6$):\footnote{Alternatively, one can achieve this by using matrix deviation inequality; see \cref{pro1} with $\mathcal{T}=\mathcal{T}_0\times\{0\}$ for some $\mathcal{T}_0\subset \mathbb{R}^n$.}
\begin{itemize}
  [leftmargin=5ex,topsep=0.25ex]

    \item  
    \cref{prop:bound_l_largest} with $\mathcal{T}=\mathcal{A}^{(\rho_1)}_{\loc}$ and $l=m$ yields that the event \begin{align}
      \label{eq:bound_Phiv}  \sup_{\bm{v}\in \mathcal{A}^{(\rho_1)}_{\loc}}\|\bm{\Phi v}\|_2 &\le C_5 \Big(\omega\big(\mathcal{A}^{(\rho_1)}_{\loc}\big)+\sqrt{m}\cdot\rho_1\Big)  
    \end{align}
    holds with probability exceeding $1-2\exp(-C_6m)$. 
    
    \item   \cref{prop:bound_l_largest} with $\mathcal{T}= \calC$ and $l=m$ yields that the event 
    \begin{align}\label{eq:bound_Phic}
        \sup_{\bm{c}\in\calC}\|\bm{\Phi c}\|_2 \le C_5 \Big(\omega(\calC)+\sqrt{m}\cdot \rad(\calC)\Big)
    \end{align}
    holds with probability exceeding $1-2\exp(-C_6m)$. 
\end{itemize}
Substituting \cref{eq:bound_Phiv} and \cref{eq:bound_Phic} into \cref{eq:bound_h2_Phic} we obtain 
\begin{align}
    \sup_{\bm{c}\in\mathcal{C}}\big|\langle\bm{h}^{(2)}_{\bm{a},\bm{b}},\bm{\Phi c}\rangle\big|&\le C_7 \big(\omega(\calC)+\sqrt{m}\cdot\rad(\calC)\big)\Big(\omega\big(\mathcal{A}^{(\rho_1)}_{\loc}\big)+\sqrt{m}\rho_1+\sqrt{m}\rho_2\Big). \label{eq:h2_final_bound}
\end{align}
Note that all arguments in bounding $I_1$ hold universally for all $(\bm{a},\bm{b})\in\calA\times\calB$. Thus,
combining \cref{eq:decompose_I1}, \cref{eq:bound_U0_large} and \cref{eq:h2_final_bound} immediately yields 
\begin{align}
    \nonumber \sup_{\bm{a}\in\mathcal{A}}\sup_{\bm{b}\in\mathcal{B}} I_1&\le C_3 \delta\sqrt{U_0}\cdot \Big( \omega(\calC)+\rad(\calC)\sqrt{U_0\log\frac{em}{U_0}}\Big)\\
    &\quad+C_7 \big(\omega(\calC)+\sqrt{m}\cdot\rad(\calC)\big)\Big(\omega\big(\mathcal{A}^{(\rho_1)}_{\loc}\big)+\sqrt{m}\rho_1+\sqrt{m}\rho_2\Big),\label{eq:final_boundI1}
\end{align}
where $U_0$ is given in \cref{eq:U0bound}.

\textbf{Bounding $I_2$:} By substituting \cref{eq:decomposition}, we proceed as 
\begin{align}
    I_2 & \le \sqrt{m}\cdot\rad(\calD)\cdot \|\bm{\xi}_{\bm{a},\bm{b}}-\bm{\xi}_{\bm{a}',\bm{b}'}\|_2\\&\le   \sqrt{m}\cdot\rad(\calD)\cdot\big(\|\bm{h}^{(1)}_{\bm{a},\bm{b}}\|_2+ \|\bm{h}^{(2)}_{\bm{a},\bm{b}}\|_2  \big)\label{eq:I2bound}
\end{align}
Recall from \cref{eq:U0bound} and \cref{eq:deter_bound} that  $\|\bm{h}^{(1)}_{\bm{a},\bm{b}}\|_0\le U_0$ and $\|\bm{h}^{(1)}_{\bm{a},\bm{b}}\|_\infty\le\delta$ hold uniformly for all $(\bm{a},\bm{b})\in\mathcal{A}\times\mathcal{B}$, and hence we have $ \|\bm{h}^{(1)}_{\bm{a},\bm{b}}\|_2 \le \delta\sqrt{U_0}$.
Then,  \cref{eq:simplify_h2} gives 
\begin{align}
    \|\bm{h}^{(2)}_{\bm{a},\bm{b}}\|_2 &\le \|\bm{\Phi}(\bm{a}-\bm{a}')\|_2 + \sqrt{m}\|\bm{b}-\bm{b}'\|_2\\
    &\le \sup_{\bm{v}\in \mathcal{A}^{(\rho_1)}_{\loc}}\|\bm{\Phi v}\|_2 + \sqrt{m}\cdot \rho_2\\
    &\le  C_5 \Big(\omega\big(\mathcal{A}^{(\rho_1)}_{\loc}\big)+\sqrt{m}\rho_1+\sqrt{m}\rho_2\Big) ,
    \label{eq:h2bound}
\end{align}
where we use \cref{eq:bound_Phiv} in \cref{eq:h2bound}. We note that all arguments in bounding $I_2$ hold universally for all $(\bm{a},\bm{b})\in\calA\times\calB$. 
Substituting the bounds $\|\bm{h}^{(1)}_{\bm{a},\bm{b}}\|_2 \le \delta\sqrt{U_0}$ and \cref{eq:h2bound} into \cref{eq:I2bound}, we thus obtain 
\begin{align}\label{eq:final_boundI2}
    \sup_{\bm{a}\in\calA}\sup_{\bm{b}\in\calB}I_2 \le \sqrt{m}\cdot\rad(\calD)\Big(\delta\sqrt{U_0}+C_5 \Big(\omega\big(\mathcal{A}^{(\rho_1)}_{\loc}\big)+\sqrt{m}\rho_1+\sqrt{m}\rho_2\Big)\Big) 
\end{align}
where $U_0$ is given in \cref{eq:U0bound}.

\subsubsection*{Step 4: Combining and Parameters Selection} 
We are in a position to combine everything together. Taking supremum  over $(\bm{a},\bm{b})\in \calA\times\calB$ in \cref{eq:use_net_bound}  and then  
substituting \cref{eq:final_boundI1} and \cref{eq:final_boundI2},  we obtain that as long as $\rho_1>0,l\in [1,m],\zeta\in (0,\frac{\delta}{2})$ are chosen such that \cref{eq:con_for_bound_l} holds, then the event 
\begin{align}\label{eq:qpebound1}
&\sup_{\bm{a}\in\calA}\sup_{\bm{b}\in\calB}\sup_{(\bm{c},\bm{d})\in\mathcal{E}}\big|\langle\bm{\xi}_{\bm{a},\bm{b}},\bm{\Phi c}+\sqrt{m}\bm{d}\rangle\big|\\\label{eq:qpebound2}
&\lesssim \sqrt{m}\delta\cdot\big(\omega(\mathcal{E})+t_1\cdot\rad(\calE)\big) + \delta\sqrt{U_0}\Big(\omega(\calC)+\sqrt{U_0\log\frac{em}{U_0}}\cdot\rad(\calC)+\sqrt{m}\cdot\rad(\calD)\Big)\\\label{eq:qpebound3}
&\quad + \Big(\omega\big(\mathcal{A}_{\loc}^{(\rho_1)}\big)+\sqrt{m}\rho_1+\sqrt{m}\rho_2\Big)\big(\omega(\calC)+\sqrt{m}\cdot\rad(\calC)+\sqrt{m}\cdot \rad(\calD)\big)
\end{align}
 holds with probability exceeding \begin{align}
     \label{eq:probterm1}
     1&- 2\exp\big(\mathscr{H}(\calA,\rho_1)+\mathscr{H}(\calB,\rho_2)-t_1^2\big) 
      - \exp \Big(\mathscr{H}(\calA,\rho_1)+\mathscr{H}(\calB,\rho_2)-\frac{m\zeta}{\delta}\Big)\\\label{eq:probterm3}
     & - 2 \exp\big(-C_1l\log\frac{em}{l}\big) - 2\exp \big(-C_4 U_0\log\frac{em}{U_0}\big) -4\exp(-C_6m),
 \end{align}
 where the terms in \cref{eq:probterm1} stem from \cref{eq:net_bound} and the event $E_1$ \cref{eq:E1}, the terms in \cref{eq:probterm3} are from   \cref{eq:bound_ell_large}, \cref{eq:bound_U0_large}, and \cref{eq:bound_Phiv}--\cref{eq:bound_Phic}.

 \textbf{Choosing Parameters:} We specify the parameter whose (near) optimal choice is clear at this stage, while we still leave other parameters generic since their optimal values may depend on $(\mathcal{A},\mathcal{B},\mathcal{E})$. Specifically, we set (we suppose that $l$ below is chosen as an integer in $[1,m]$ without loss of generality, since we can we just round otherwise)  
 \begin{gather}\label{eq:chooset1}
     t_1 = 2 \sqrt{\mathscr{H}(\calA,\rho_1)}+2\sqrt{\mathscr{H}(\calB,\rho_2)},\\
     l = \frac{m\zeta}{\delta} + \frac{C_2\cdot  \omega^2(\calA^{(\rho_1)}_{\loc})}{\zeta^2}\label{eq:choosel}
 \end{gather}
 with sufficiently large $C_2$ such that $\frac{\omega(\mathcal{A}^{(\rho_1)}_{\loc})}{\sqrt{l}}\le \frac{c_2\zeta}{2}$. We show that our choice \cref{eq:choosel} satisfies \cref{eq:con_for_bound_l} that is needed to ensure \cref{eq:bound_ell_large}. Specifically, $l \geq \frac{C_2\cdot \omega^2(\mathcal{A}^{(\rho_1)}_{\loc})}{\zeta^2}$  implies $\frac{\omega(\mathcal{A}^{(\rho_1)}_{\loc})}{\sqrt{l}}\lesssim\zeta$ with small enough implied constant, and $l \ge \frac{m\zeta}{\delta}$ along with $\zeta\lesssim\delta$ from \cref{eq:scaling_3cons} implies $\rho_1 \sqrt{\log\frac{em}{l}}\le \rho_1\sqrt{\log\frac{e\delta}{\zeta}}\lesssim \zeta$ with small enough implied constant.  
 We recall the value of $U_0$ given in \cref{eq:U0bound}, which together with \cref{eq:choosel} reads as  
 \begin{align}\label{eq:U0_value}
     U_0 = \frac{6m\zeta}{\delta}+ \frac{4m\rho_2^2}{\zeta^2}+ \frac{C_2\cdot \omega^2(\calA^{(\rho_1)}_{\loc})}{\zeta^2}. 
 \end{align}

 \textbf{Simplifying \cref{eq:probterm1}--\cref{eq:probterm3}:} Under \cref{eq:chooset1} and \cref{eq:qpe_sample_size} that implies $\frac{m\zeta}{\delta}\ge 2\mathscr{H}(\mathcal{A},\rho_1)+2\mathscr{H}(\mathcal{B},\rho_2)$, we can relax the probability terms in \cref{eq:probterm1} to 
 \begin{align}
 2\exp\big(\scrH(\calA,\rho_1)+\scrH(\calB,\rho_2)-&t_1^2\big)+\exp\Big(\scrH(\calA,\rho_1)+\scrH(\calB,\rho_2)-\frac{m\zeta}{\delta}\Big)\\
 &\le3\exp\big(-\scrH(\calA,\rho_1)-\scrH(\calB,\rho_2)\big).
 \end{align}
   Besides, since   \cref{eq:U0bound} and \cref{eq:choosel}  give $U_0\ge l\ge \frac{m\zeta}{\delta}$, so we have
   \begin{align}
       2\exp(-C_1l\log\frac{em}{l})+2\exp(-C_4U_0\log\frac{em}{U_0})\le 4\exp(-c_8 \scrH(\calA,\rho_1)-c_8\scrH(\calB,\rho_2))
   \end{align}
for some absolute constant $c_8>0$. Moreover, \cref{eq:qpe_sample_size} and $\zeta \le \frac{\delta}{2}$ imply $m\ge 2\scrH(\calA,\rho_1)+2\scrH(\calB,\rho_2)$, and so we have
\begin{align}
4\exp(-C_6m)\le     4\exp(-c_8 \scrH(\calA,\rho_1)-c_8\scrH(\calB,\rho_2))
\end{align}
 provided that $c_8$ is  chosen sufficiently small. Overall, from \cref{eq:probterm1}--\cref{eq:probterm3} we can promise that \cref{eq:qpebound1}--\cref{eq:qpebound3} holds with   probability exceeding \begin{align}
     1-12 \exp(-c_9\scrH(\calA,\rho_1)-c_9\scrH(\calB,\rho_2))\label{eq:simplify_prob}
 \end{align} with some $c_9>0$.

\textbf{Simplifying \cref{eq:qpebound1}--\cref{eq:qpebound3}:} We enforce some typical scaling such that the terms in \cref{eq:qpebound3} is dominated by those in \cref{eq:qpebound2}, up to multiplicative factors. For clarity, we collect the developments as follows:\begin{itemize}
  [leftmargin=5ex,topsep=0.25ex]

    \item From \cref{eq:U0_value} and $\zeta\in(0,\frac{\delta}{2})$ we have $\delta\sqrt{U}_0\gtrsim \frac{\delta}{\zeta}\omega(\calA^{(\rho_1)}_{\loc})\gtrsim \omega(\calA_{\loc}^{(\rho_1)})$, hence the terms $\omega(\calA^{(\rho_1)}_{\loc})\cdot(\omega(\calC)+\sqrt{m}\cdot \rad(\calD))$ from \cref{eq:qpebound3} are dominated by $\delta\sqrt{U_0}\cdot (\omega(\calC)+\sqrt{m}\cdot \rad(\calD))$ from \cref{eq:qpebound2};
    \item Note that \cref{eq:scaling_3cons} provides $\rho_1+\rho_2\lesssim\zeta$ with small enough implied constant.
    Then, from \cref{eq:U0_value} and $\zeta\in(0,\frac{\delta}{2})$ we have $\delta\sqrt{U}_0 \ge \sqrt{m\zeta\delta}\geq \sqrt{m}\zeta\gtrsim \sqrt{m}(\rho_1+\rho_2)$, and hence the terms $\sqrt{m}(\rho_1+\rho_2)\cdot(\omega(\calC)+\sqrt{m}\cdot \rad(\calD))$ from \cref{eq:qpebound3} are dominated by $\delta\sqrt{U}_0\cdot(\omega(\calC)+\sqrt{m}\cdot \rad(\calD))$ from \cref{eq:qpebound2}; 
    \item We further show that the remaining terms in \cref{eq:qpebound3}, namely $\big(\sqrt{m}\cdot \omega(\calA^{(\rho_1)}_{\loc})+m(\rho_1+\rho_2)\big)\cdot\rad(\calC)$, are dominated by $\delta U_0\sqrt{\log\frac{em}{U_0}}\cdot\rad(\calC)$ from \cref{eq:qpebound2}. It suffices to show 
    \begin{align}
        \sqrt{m}\cdot  \omega(\calA^{(\rho_1)}_{\loc})+m(\rho_1+\rho_2)\lesssim \delta U_0. \label{eq:desired_scaling}
    \end{align}
    By \cref{eq:U0_value} we have $\delta U_0\ge 6m\zeta$. On the other hand, \cref{eq:scaling_3cons} and \cref{eq:qpe_sample_size} evidently imply $\sqrt{m}\cdot \omega(\mathcal{A}^{(\rho_1)}_{\loc})+m(\rho_1+\rho_2) \lesssim m\zeta$. \cref{eq:desired_scaling} hence follows. 
\end{itemize} 
By the above discussions, under the     scaling conditions stated in our theorem statement, the terms in \cref{eq:qpebound2} dominate the ones in \cref{eq:qpebound3}.   Further substituting $t_1$ in \cref{eq:chooset1} yields the simplified bound: \begin{align}
    &\sup_{\bm{a}\in\calA}\sup_{\bm{b}\in\calB}\sup_{(\bm{c},\bm{d})\in\mathcal{E}}\big|\langle\bm{\xi}_{\bm{a},\bm{b}},\bm{\Phi c}+\sqrt{m}\bm{d}\rangle\big|\\&\quad\quad\le \sqrt{m}\delta \cdot \omega(\calE)+\sqrt{m}\delta\cdot \rad(\calE)\cdot\big(\sqrt{\scrH(\calA,\rho_1)}+\sqrt{\scrH(\calB,\rho_2)}\big)\\
    &\quad\quad\quad\quad + \delta \sqrt{U_0} \Big(\omega(\calC)+\sqrt{U_0\log\frac{em}{U_0}}\cdot\rad(\calC)+\sqrt{m}\cdot\rad(\calD)\Big).
\end{align}  
 We denote $U_0$ by $\hat{U}$ as in the theorem statement \cref{hatU}. 
 To complete the proof, it remains to make some final simplification: 
 \begin{itemize}
 [leftmargin=5ex,topsep=0.25ex]

     \item By \cref{eq:U0_value} we have $\log\frac{em}{U_0}\lesssim \log \frac{\delta}{\zeta}$ and hence we can relax $\delta U_0 \sqrt{\log\frac{em}{U_0}}\cdot\rad(\calC)$ to the term $\delta\hat{U}\sqrt{\log\frac{\delta}{\zeta}}\cdot\rad(\calC)$ in \cref{eq:thmbound2}; 
     \item By $U_0\le m$ we have $\delta\sqrt{U_0}\cdot\omega(\calC)\le \delta\sqrt{m}\cdot\omega(\calE)$, and we can only retain $\delta\sqrt{m}\cdot\omega(\calE)$ as in \cref{eq:thmbound1};
     \item By $\rad(\calE)\le \rad(\calC)+\rad(\calD)$ we can bound  $\sqrt{m}\delta\cdot\rad(\calE)\cdot (\sqrt{\scrH(\calA,\rho_1)}+\sqrt{\scrH(\calB,\rho_2)})$ by  $\sqrt{m}\delta\cdot\rad(\calC)\cdot (\sqrt{\scrH(\calA,\rho_1)}+\sqrt{\scrH(\calB,\rho_2)})+\sqrt{m}\delta\cdot\rad(\calD)\cdot (\sqrt{\scrH(\calA,\rho_1)}+\sqrt{\scrH(\calB,\rho_2)})$. Observe that $\sqrt{m}\delta\cdot\rad(\calD)\cdot (\sqrt{\scrH(\calA,\rho_1)}+\sqrt{\scrH(\calB,\rho_2)})$ is dominated by $\delta\sqrt{m\hat{U}}\cdot \rad(\calD)$ (due to \cref{hatU} and \cref{eq:qpe_sample_size}), 
     we can thus simply retain $\sqrt{m}\delta\cdot\rad(\calC)\cdot (\sqrt{\scrH(\calA,\rho_1)}+\sqrt{\scrH(\calB,\rho_2)})$ as in \cref{eq:thmbound2}.
 \end{itemize}
The proof is now complete. 
\end{proof}
\subsection{Global QPE for Structured  Sets}\label{app:special}
We consider the setting where   $\calA$ and $\calB$ in   \cref{thm:globalqpe} are structured sets with Kolmogorov entropy depending on the covering radius in a logarithmic manner; see \cref{defi1}. With properly chosen   parameters $(\zeta,\rho_1,\rho_2)$, \cref{thm:globalqpe} specializes to the following. We will explain in \cref{rem:sufficeqpe} that the QPE below is sufficient for the proofs of our main theorems (\cref{thm1}, \ref{thm2}, \ref{thm3}).
\begin{corollary}[Global QPE for Structured Sets]\label{coro:qpe_structured}
Given some bounded sets $\calA\subset \mathbb{R}^n$, $\calB\subset \mathbb{R}^m$, $\calE\subset \mathbb{R}^{n+m}$ and some $\delta>0$, we assume that the sub-Gaussian matrix $\bm{\Phi}\in \mathbb{R}^{m\times n}$ and the random dither $\bm{\tau}\sim \mathscr{U}[-\frac{\delta}{2},\frac{\delta}{2}]^m$ are as described in \cref{assump1}. Suppose that $(\zeta,\rho_1,\rho_2)$ are positive scalars satisfying 
\begin{gather}\label{eq:zeta_choice}
    \zeta = \frac{4\delta(\scrH(\calA,\rho_1)+\scrH(\calB,\rho_2))}{m}
    \\\label{eq:rho12_choice}
    \rho_1 \le \frac{c_1\zeta}{(\log\frac{\delta}{\zeta})^{1/2}},~\omega\big(\mathcal{A}_{\loc}^{(\rho_1)}\big) \le c_2 \zeta \sqrt{\frac{m\zeta}{\delta}},~
    \rho_2 \le c_3 \zeta \sqrt{\frac{\zeta}{\delta}}
\end{gather}
for some sufficiently small absolute constants $(c_1,c_2,c_3)$, and suppose that \begin{align}\label{eq:sample_coro4}
    m \ge C_4  \big(\mathscr{H}(\calA,\rho_1)+\scrH(\calB,\rho_2)\big)
\end{align}
for large enough $C_4$. Then,   with the quantization noise $\bm{\xi}_{\bm{a},\bm{b}}$ being given in \cref{eq:short_xi_a_b}, for some absolute constant $C_5$ the event 
\begin{align}
    \sup_{\bm{a}\in\calA}\sup_{\bm{b}\in\calB}\sup_{(\bm{c},\bm{d})\in\calE}\big|\langle\bm{\xi}_{\bm{a},\bm{b}},&\bm{\Phi c}+\sqrt{m}\bm{d}\rangle\big|\\& \le C_5\delta\sqrt{m}\Big(\omega(\calE)+\rad(\calE) \sqrt{\scrH(\calA,\rho_1)+\scrH(\calB,\rho_2)}\Big)
\end{align}
holds with probability exceeding $1-12\exp(-c_6\mathscr{H}(\calA,\rho_1)-c_6\scrH(\calB,\rho_2))$  on a single   draw of $(\bm{\Phi},\bm{\tau})$. 
\end{corollary}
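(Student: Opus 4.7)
The plan is to deduce \cref{coro:qpe_structured} as a direct specialization of the general \cref{thm:globalqpe} by plugging in the prescribed scaling of $(\zeta,\rho_1,\rho_2)$ and carefully simplifying the resulting bound. First I would check that all the hypotheses of \cref{thm:globalqpe} hold. The condition $\zeta\le c_1\delta$ reads $4(\mathscr{H}(\mathcal{A},\rho_1)+\mathscr{H}(\mathcal{B},\rho_2))/m \le c_1$, which follows from the sample-size condition \cref{eq:sample_coro4} for $C_4$ large enough. The constraints on $\rho_1$ match, and $\rho_2\le c_3\zeta\sqrt{\zeta/\delta}\le c_3\sqrt{c_1}\cdot\zeta$ implies $\rho_2\le c_1\zeta$ if $c_3$ is small relative to $c_1$. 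The sample-size hypothesis \cref{eq:qpe_sample_size} decomposes as $2\delta(\mathscr{H}(\mathcal{A},\rho_1)+\mathscr{H}(\mathcal{B},\rho_2))/\zeta = m/2$ (by the very definition of $\zeta$) plus $C_2\omega^2(\mathcal{A}^{(\rho_1)}_{\loc})/\zeta^2\le C_2c_2^2\cdot m\zeta/\delta = 4C_2c_2^2(\mathscr{H}(\mathcal{A},\rho_1)+\mathscr{H}(\mathcal{B},\rho_2))$, which is bounded by $m/2$ once $c_2$ is small and \cref{eq:sample_coro4} holds.

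Next I would compute $\hat{U}$ defined in \cref{hatU}. Under $\zeta = 4\delta(\mathscr{H}(\mathcal{A},\rho_1)+\mathscr{H}(\mathcal{B},\rho_2))/m$, the three summands become $m\zeta/\delta = 4(\mathscr{H}(\mathcal{A},\rho_1)+\mathscr{H}(\mathcal{B},\rho_2))$, while $m\rho_2^2/\zeta^2 \le c_3^2\cdot m\zeta/\delta$ and $\omega^2(\mathcal{A}^{(\rho_1)}_{\loc})/\zeta^2 \le c_2^2\cdot m\zeta/\delta$ by the hypotheses on $\rho_2$ and $\omega(\mathcal{A}^{(\rho_1)}_{\loc})$. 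Hence $\hat{U}\asymp \mathscr{H}(\mathcal{A},\rho_1)+\mathscr{H}(\mathcal{B},\rho_2)$, and in particular $\delta\sqrt{m\hat{U}}\asymp \delta\sqrt{m(\mathscr{H}(\mathcal{A},\rho_1)+\mathscr{H}(\mathcal{B},\rho_2))}$.

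Substituting into the bound \cref{eq:thmbound1}--\cref{eq:thmbound2} of \cref{thm:globalqpe} and using $\rad(\mathcal{C}),\rad(\mathcal{D})\le \rad(\mathcal{E})$ (which follows from the definitions \cref{eq:defineC}--\cref{eq:defineD}), the first two summands collapse to the desired form $\delta\sqrt{m}\omega(\mathcal{E})+\delta\sqrt{m\hat{U}}\,\rad(\mathcal{E})$. The delicate point will be controlling the extra term $\delta\sqrt{m}\,\rad(\mathcal{C})\cdot(\hat{U}/\sqrt{m})\sqrt{\log(\delta/\zeta)}$, which I would like to show is dominated by $\delta\sqrt{m\hat{U}}\,\rad(\mathcal{E})$. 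This is equivalent to $\hat{U}\log(\delta/\zeta)\lesssim m$. Writing $y := m/\hat{U}\gtrsim m/(\mathscr{H}(\mathcal{A},\rho_1)+\mathscr{H}(\mathcal{B},\rho_2))$, the sample size condition \cref{eq:sample_coro4} gives $y\ge C_4$, and $\log(\delta/\zeta)=\log(y/4)\lesssim \log y$. Since $\log y/y\to 0$, choosing $C_4$ large enough forces $\hat{U}\log(\delta/\zeta)\le m$, absorbing this term into the main one.

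The main obstacle I anticipate is this last step: carefully tracking the multiplicative constants so that the interaction between the $\log(\delta/\zeta)$ factor and the sample-size requirement closes cleanly. All the other ingredients are bookkeeping: the probability bound is inherited verbatim from \cref{thm:globalqpe}, namely $1-12\exp(-c\cdot(\mathscr{H}(\mathcal{A},\rho_1)+\mathscr{H}(\mathcal{B},\rho_2)))$, matching the stated probability with $c_6=c$. No additional structural assumption on $\mathcal{A}$ and $\mathcal{B}$ is needed beyond what is already encoded in the hypotheses on $\omega(\mathcal{A}^{(\rho_1)}_{\loc})$ and the covering radii; the ``structured set'' terminology only serves to guarantee that acceptable triples $(\zeta,\rho_1,\rho_2)$ actually exist in the applications of \cref{thm1}, \ref{thm2}, \ref{thm3}.
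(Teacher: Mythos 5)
Your proposal is correct and follows essentially the same route as the paper's own proof: verify that the stipulated choice of $(\zeta,\rho_1,\rho_2)$ satisfies the hypotheses \cref{eq:scaling_3cons}--\cref{eq:qpe_sample_size} of \cref{thm:globalqpe}, observe that $\hat U\asymp\mathscr{H}(\mathcal{A},\rho_1)+\mathscr{H}(\mathcal{B},\rho_2)$, and then absorb the $\frac{\hat U}{\sqrt m}\sqrt{\log(\delta/\zeta)}$ term into $\sqrt{\mathscr{H}(\mathcal{A},\rho_1)+\mathscr{H}(\mathcal{B},\rho_2)}$, which (as you note) reduces to $\hat U\log(\delta/\zeta)\lesssim m$, i.e.\ $t\log(1/t)\lesssim 1$ for $t=\hat U/m\lesssim 1/C_4$ — precisely the inequality invoked in the paper.
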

\begin{proof}
We prove the statement using the general global QPE property presented in \cref{thm:globalqpe}. Given $\calE$, recall that $\calC$ and $\calD$ are defined in \cref{eq:defineC} and \cref{eq:defineD}. 
\subsubsection*{Verifying \cref{eq:scaling_3cons}--\cref{eq:qpe_sample_size}} First, we verify \cref{eq:scaling_3cons}. Under \cref{eq:sample_coro4}, $\zeta$ in \cref{eq:zeta_choice} evidently satisfies $\zeta\le c_1\delta$ with small enough $c_1$, verifying the first condition in \cref{eq:scaling_3cons}. Then, \cref{eq:rho12_choice} gives $\rho_2\le c_3\zeta \sqrt{\frac{\zeta}{\delta}}\le c_3\zeta$, which provides the third condition in \cref{eq:scaling_3cons}. Combining with $\rho_1\le \frac{c_1\zeta}{(\log\frac{\delta}{\zeta})^{1/2}}$, we know that the scaling conditions in \cref{eq:scaling_3cons} are   satisfied by \cref{eq:zeta_choice}--\cref{eq:rho12_choice}. Next, we verify \cref{eq:qpe_sample_size}. By substituting \cref{eq:zeta_choice} we find that it suffices to verify $m \ge \frac{C_2}{\zeta^2}\omega^2\big(\mathcal{A}_{\loc}^{(\rho_1)}\big)$ for some large enough $C_2$, and note that this is guaranteed by the second condition in \cref{eq:rho12_choice} that provides $m\ge \frac{\omega^2(\mathcal{A}_{\loc}^{(\rho_1)})}{c_2^2\zeta^2}\cdot \frac{\delta}{\zeta}$ for small enough $c_2$ (since $\delta\ge \zeta$).

\subsubsection*{Simplifying \cref{eq:thmbound1}--\cref{eq:thmbound2}} Because the last two conditions in \cref{eq:rho12_choice} imply \begin{align}
    \frac{m\rho_2^2}{\zeta^2}+\frac{\omega^2(\calA^{(\rho_1)}_{\loc})}{\zeta^2}\le (c_2^2+c_3)\frac{m\zeta}{\delta},
\end{align}
$\hat{U}$ given in \cref{hatU} simplifies to $\hat{U}\asymp \frac{m\zeta}{\delta}=4(\mathscr{H}(\calA,\rho_1)+\mathscr{H}(\calB,\rho_2))$, with the equality following from \cref{eq:zeta_choice}. Thus, we have
\begin{align}
    \frac{\hat{U}}{\sqrt{m}}\Big[\log\frac{\delta}{\zeta}\Big]^{1/2}& \lesssim \frac{\mathscr{H}(\calA,\rho_1)+\mathscr{H}(\calB,\rho_2)}{\sqrt{m}}\sqrt{\log \Big(\frac{m}{\mathscr{H}(\calA,\rho_1)+\mathscr{H}(\calB,\rho_2)}\Big)}\\
    &\lesssim \sqrt{\scrH(\calA,\rho_1)}+\sqrt{\scrH(\calB,\rho_2)}, 
\end{align}
where the second inequality follows from   \cref{eq:sample_coro4}. Therefore, the bound  in \cref{eq:thmbound1}--\cref{eq:thmbound2} simplifies to  \begin{align}
    &O\Big(\delta\sqrt{m}\cdot\omega(\calE)+\delta \sqrt{m\hat{U}}\cdot\rad(\calD) +\delta\sqrt{m}\cdot\rad(\calC)\sqrt{\mathscr{H}(\calA,\rho_1)+\mathscr{H}(\calB,\rho_2)}\Big)\\
    &=O\Big(\delta\sqrt{m}\big(\omega(\calE)+\rad(\calE)\sqrt{\mathscr{H}(\calA,\rho_1)+\mathscr{H}(\calB,\rho_2)}\big)\Big),
\end{align}
where in the second line we use  $\hat{U}\lesssim \mathscr{H}(\calA,\rho_1)+\mathscr{H}(\calB,\rho_2)$ and $\max\{\rad(\calC),\rad(\calD)\}\le \rad(\calE)$. We have arrived at the desired bound in \cref{coro:qpe_structured}, and note that the promised probability directly follows from \cref{thm:globalqpe}. The proof is complete. 
\end{proof}
\begin{rem}\label{rem:sufficeqpe}
  \cref{coro:qpe_structured} is tailored to fit the case where $\mathcal{A}$ and $\mathcal{B}$ are structured sets as per \cref{defi1}, but more generally put, it works well for $\mathcal{A},\mathcal{B}$ with Kolmogorov entropy logarithmically depending on the covering radius. By \cref{proadd1}(a), this is also the case when $\mathcal{A}$ and $\mathcal{B}$ are the ranges of some Lipschitz generative models (as per \cref{assump5}), thus \cref{coro:qpe_structured}  applies to the analysis of generative prior. Therefore, \cref{coro:qpe_structured} is a version of QPE sufficient for proving our main theorems, and we will further present other implications of \cref{thm:globalqpe} for the case where 
$\mathcal{A},\mathcal{B}$ are arbitrary sets in \cref{app:implication}.   
\end{rem}

 \section{Deferred Proofs}\label{appendixc}
 We collect the   proofs of  \cref{coro1} and \cref{coro2} (concrete outcomes of \cref{thm1}), \cref{coro3} and \cref{coro4} (concrete outcomes of \cref{thm2}), and  \cref{thm3} (for generative prior) in this appendix. 

 In the first two proofs we will invoke \cref{thm1}, with the two major steps being: (i) bounding the geometric complexity quantities $\gamma(\mathcal{D}_{\bm{x}}^*)$ and $\gamma(\mathcal{D}_{\bm{v}}^*)$; (ii) selecting $(\rho_1,\rho_2)$ to render \cref{eq:qpecon1}--\cref{eq:qpecon2}. 
 
 \subsection{The Proof of  \cref{coro1} (Recovering Sparse Signal and Sparse Corruption via Constrained Lasso)}
\begin{proof}~
    \subsubsection*{Step 1: Bounding  $\gamma(\mathcal{D}_{\bm{x}}^*)$ and $\gamma(\mathcal{D}_{\bm{v}}^*)$}
    For any $\bm{v}\in \mathcal{D}_{\bm{x}}$, there exists some $s$-sparse $\bm{x}$ such that  $\|\bm{x}+t\bm{v}\|_1\leq \|\bm{x}\|_1$ holds for some $t>0$. We let $\mathcal{S}=\mathrm{supp}(\bm{x})$. Because $\mathcal{S}\subset [n]$ and $|\mathcal{S}|\leq s$, we have (given $\bm{v}\in \mathbb{R}^n$ and $\mathcal{S}\subset [n]$ we obtain $\bm{v}_{\mathcal{S}}$ from $\bm{v}$ by only retaining entries in $\mathcal{S}$ while setting others zero)
    \begin{equation}
        \begin{aligned}\label{C.1}
            \|\bm{x}\|_1\geq \|\bm{x}+t\bm{v}_{\mathcal{S}}+t\bm{v}_{\mathcal{S}^c}\|_1=\|\bm{x}+t\bm{v}_\mathcal{S}\|_1+t\|\bm{v}_{\mathcal{S}^c}\|_1\geq \|\bm{x}\|_1-t\|\bm{v}_{\mathcal{S}}\|_1+t\|\bm{v}_{\mathcal{S}^c}\|_1,
        \end{aligned}
    \end{equation}
    which provides $\|\bm{v}_{\mathcal{S}^c}\|_1\leq \|\bm{v}_{\mathcal{S}}\|_1$. Thus, we obtain \begin{align}
        \|\bm{v}\|_1\leq 2\|\bm{v}_{\mathcal{S}}\|_1\leq 2\sqrt{s}\|\bm{v}_{\mathcal{S}}\|_2\leq 2\sqrt{s}\|\bm{v}\|_2.
    \end{align} Because $\mathcal{D}_{\bm{x}}^*=\mathcal{D}_{\bm{x}}\cap\mathbb{S}^{n-1}$, we have $\mathcal{D}_{\bm{x}}^*\subset \mathbb{B}_2^n\cap \mathbb{B}_1^n(2\sqrt{s})$, hence   \cref{pro5} gives $\omega(\mathcal{D}_{\bm{x}}^*)\lesssim \sqrt{s\log\frac{en}{s}}$. By   \cref{widthcomple}, this also bounds $\gamma(\mathcal{D}^*_{\bm{x}})$ (up to multiplicative constant). Similarly, we have $\gamma(\mathcal{D}^*_{\bm{v}})\lesssim \sqrt{k\log\frac{em}{k}}$.

    \subsubsection*{Step 2:  Selecting $(\rho_1,\rho_2)$} Recall that $\zeta = \frac{4\delta}{m}(\mathscr{H}(\mathcal{K}_{\bm{x}},\rho_1)+\mathscr{H}(\mathcal{K}_{\bm{v}},\rho_2))$ as per \cref{eq:qpecon1}, and we claim that setting
    \begin{align}\label{eq:coro1_rho12}
        \rho_1=c\delta\Big(\frac{s}{m}\Big)^{3/2},~\rho_2=c \delta\Big(\frac{k}{m}\Big)^{3/2}
    \end{align}  with small enough $c$  satisfies \cref{eq:qpecon2}. The reasoning is as follows: 
    \begin{itemize}
    [leftmargin=5ex,topsep=0.25ex]
        \item In view of \cref{pro6}, $m\gtrsim s\log(\frac{nm^{3/2}}{s^{5/2}\delta})+k\log(\frac{m^{5/2}}{k^{5/2}\delta})$ in \cref{coro1} implies \cref{eq:thm1_sam_size} needed in \cref{thm1}; 
        \item \textbf{Verifying $\rho_1\lesssim\frac{\zeta}{(\log\frac{\delta}{\zeta})^{1/2}}$, $\rho_2 \lesssim \zeta\sqrt{\frac{\zeta}{\delta}}$:} By \cref{eq:thm1_sam_size} we have $m\gtrsim \mathscr{H}(\mathcal{K}_{\bm{x}},\rho_1)+ \mathscr{H}(\mathcal{K}_{\bm{v}},\rho_2)$ that implies $\zeta\lesssim\delta$. Substituting $\zeta = \frac{4\delta}{m}(\mathscr{H}(\mathcal{K}_{\bm{x}},\rho_1)+\mathscr{H}(\mathcal{K}_{\bm{v}},\rho_2))$  finds   
        $$\rho_1+\rho_2\lesssim \delta\Big(\frac{s}{m}\Big)^{3/2}+\delta\Big(\frac{k}{m}\Big)^{3/2}\lesssim \zeta\sqrt{\frac{\zeta}{\delta}}.$$ Because  $\frac{\delta}{\zeta}\geq C_1$ holds for some large $C_1$, and so $\rho_1+\rho_2\le c_2 \zeta\sqrt{\frac{\zeta}{\delta}}$ with small enough $c_2$ suffices to ensure $\rho_1\lesssim\frac{\zeta}{(\log\frac{\delta}{\zeta})^{1/2}}$ and $\rho_2 \lesssim \zeta\sqrt{\frac{\zeta}{\delta}}$.  
        \item  \textbf{Verifying $\omega(\mathcal{A}^{(\rho_1)}_{\loc})\lesssim \zeta\sqrt{\frac{m\zeta}{\delta}}$:} Observe that $\mathcal{A}^{(\rho_1)}_{\loc}\subset \Sigma^n_{2s}\cap \rho_1\mathbb{B}_2^n$, and so $\omega(\mathcal{A}^{(\rho_1)}_{\loc})\lesssim \rho_1 s\log\frac{en}{s}$. Further, by $\rho_1\lesssim\zeta$ and $s\log\frac{em}{s}\lesssim \sqrt{\mathscr{H}(\mathcal{A},\rho_1)}\lesssim \sqrt{\frac{m\zeta}{\delta}}$ we arrive at $\omega(\mathcal{A}^{(\rho_1)}_{\loc})\lesssim\zeta\sqrt{\frac{m\zeta}{\delta}}$.   
    \end{itemize}
    Note that we have derived explicit bounds on the geometric quantities and 
    chosen $\rho_1,\rho_2$ to satisfy the conditions in \cref{thm1}. Now we can simply invoke \cref{thm1} to prove the desired claim. 
\end{proof}

\subsection{The Proof of   \cref{coro2} (Recovering Low-Rank Signal and Sparse Corruption via Constrained Lasso)}

 \begin{proof}~
      \subsubsection*{Step 1: Bounding  $\gamma(\mathcal{D}_{\bm{x}}^*)$ and $\gamma(\mathcal{D}_{\bm{v}}^*)$}
            \cite[Coro. 2.1]{fuchs2012proof} gives  $\gamma(\mathcal{D}^*_{\bm{x}})\lesssim \sqrt{r(p+q)}$. As shown in the proof of   \cref{coro1}, we have $\gamma(\mathcal{D}_{\bm{v}}^*)\lesssim \sqrt{k\log \frac{em}{k}}$.

      \subsubsection*{Step 2:  Selecting $(\rho_1,\rho_2)$} This is similar to Step 2 in the proof of \cref{coro1}. We recall  $\zeta = \frac{4\delta}{m}(\mathscr{H}(\mathcal{K}_{\bm{x}},\rho_1)+\mathscr{H}(\mathcal{K}_{\bm{v}},\rho_2))$ as per \cref{eq:qpecon1} and claim that the choice 
      \begin{align}
          \rho_1=c\delta \Big(\frac{r(p+q
          )}{m}\Big)^{3/2},~\rho_2= c\delta \Big(\frac{k}{m}\Big)^{3/2}
      \end{align}with small enough $c$
      satisfies the \cref{eq:qpecon1}--\cref{eq:qpecon2} in \cref{thm1}. We note the following dot points to explain this: 
      \begin{itemize}
       [leftmargin=5ex,topsep=0.25ex]
          \item In view of \cref{pro6}, $m\gtrsim r(p+q)\log(\frac{m^{3/2}}{\delta(r(p+q))^{3/2}})+k\log(\frac{m^{5/2}}{k^{5/2}\delta})$ in \cref{coro2} implies \cref{eq:thm1_sam_size} needed in \cref{thm1};
          \item \textbf{$\rho_1\lesssim\frac{\zeta}{(\log\frac{\delta}{\zeta})^{1/2}}$, $\rho_2 \lesssim \zeta\sqrt{\frac{\zeta}{\delta}}$:} \cref{eq:thm1_sam_size} implies $m\gtrsim \mathscr{H}(\mathcal{K}_{\bm{x}},\rho_1)+ \mathscr{H}(\mathcal{K}_{\bm{v}},\rho_2)$ and hence $\zeta\lesssim\delta$. By $\zeta=\frac{4\delta}{m}(\mathscr{H}(\mathcal{K}_{\bm{x}},\rho_1)+\mathscr{H}(\mathcal{K}_{\bm{v}},\rho_2))$ it is easy to verify $$\rho_1+\rho_2\lesssim \delta\Big(\frac{r(p+q)}{m}\Big)^{3/2}+\delta\Big(\frac{k}{m}\Big)^{3/2}\lesssim \zeta\sqrt{\frac{\zeta}{\delta}}.$$ 
          Because $\frac{\delta}{\zeta}\geq C_1$ holds for some large $C_1$,  $\rho_1+\rho_2\lesssim\zeta\sqrt{\frac{\zeta}{\delta}}$  ensures  $\rho_1\lesssim\frac{\zeta}{(\log\frac{\delta}{\zeta})^{1/2}}$ and $\rho_2 \lesssim \zeta\sqrt{\frac{\zeta}{\delta}}$.  
          \item \textbf{$\omega(\mathcal{A}^{(\rho_1)}_{\loc})\lesssim \zeta\sqrt{\frac{m\zeta}{\delta}}$:} Observe that $\mathcal{A}^{(\rho_1)}_{\loc}\subset M^{p,q}_{2r}\cap \mathbb{B}_{\text{F}}^{p,q}(\rho_1)$, and so $\omega(\mathcal{A}^{(\rho_1)}_{\loc})\lesssim \rho_1r(p+q)$. Further,  $\rho_1\lesssim \zeta$ and $r(p+q)\lesssim \sqrt{\mathscr{H}(\mathcal{A},\rho_1)}\lesssim \sqrt{\frac{m\zeta}{\delta}}$ imply $\omega(\mathcal{A}_{\loc}^{(\rho_1)})\lesssim \zeta\sqrt{\frac{m\zeta}{\delta}}$.    
      \end{itemize}
      Applying \cref{thm1} yields the desired \cref{coro2}.
 \end{proof}

Next, we prove the recovery guarantees for unconstrained Lasso. We apply \cref{thm2} by several steps: (1) Verifying \cref{assump4}, (2) Selecting $(\rho_1,\rho_2)$ to render \cref{eq:qpecon1un}--\cref{eq:qpecon2un}, and (3) Estimating the geometric quantities.
 \subsection{The Proof of   \cref{coro3} (Recovering Sparse Signal and Sparse Corruption via Unconstrained Lasso)}

\begin{proof}

We present the proof in three steps.

    \subsubsection*{Step 1: Verifying  \cref{assump4}}

    Given any $\bm{a}\in\mathcal{K}_{\bm{x}}$, we   take $\mathcal{X}_{\bm{a}}=\overline{\mathcal{X}}_{\bm{a}}=\{\bm{w}\in \mathbb{R}^n:\mathrm{supp}(\bm{w})\subset \mathrm{supp}(\bm{a})\}$. Then we have $\overline{\mathcal{X}}_{\bm{a}}^\bot= \{\bm{w}\in \mathbb{R}^n:\supp(\bm{w})\subset ([n]\setminus\supp(\bm{a}) )\}$,
    and note that decomposibility \cref{3.39} immediately follows since $\|\bm{w}_1+\bm{w}_2\|_1=\|\bm{w}_1\|_1+\|\bm{w}_2\|_1$ holds for any $\bm{w}_1,\bm{w}_2\in \mathbb{R}^n$ with $\supp(\bm{w}_1)\subset \supp(\bm{a})$ and $\supp(\bm{w}_2)\subset ([n]\setminus \supp(\bm{a}))$. Moreover, for any $\bm{w}\in \overline{\mathcal{X}}_{\bm{a}}$ we have $\|\bm{w}\|_1\leq \sqrt{s}\|\bm{w}\|_2$, and hence $\alpha_{\|\cdot\|_1}(\overline{\mathcal{X}}_{\bm{a}})\leq \sqrt{s}$; This holds uniformly for all $\bm{a}\in\mathcal{K}_{\bm{x}}$ and hence we can take $\alpha_{\bm{x}}=\sqrt{s}$. Similarly, regarding the $k$-sparse corruption, \cref{assump4} is also satisfied  with $\alpha_{\bm{v}}=\sqrt{k}$.

     \subsubsection*{Step 2: Selecting $(\rho_1,\rho_2)$}

     As shown in Step 2 of the proof of   \cref{coro1}, setting $\rho_1=c\delta(\frac{s}{m})^{3/2}$ and $\rho_2=c\delta(\frac{k}{m})^{3/2}$ with small enough $c$ satisfies \cref{eq:qpecon1un}--\cref{eq:qpecon2un}.

     \subsubsection*{Step 3: Estimating Geometric Quantities}
     For the estimations of $\mathscr{H}(\mathcal{K}_{\bm{x}},\cdot)$ and  $\mathscr{H}(\mathcal{K}_{\bm{v}},\cdot)$ we use   \cref{pro6}. For the $\ell_1$-ball we have $\omega(\mathbb{B}_f^n)\asymp \sqrt{\log n}$ and $\omega(\mathbb{B}_g^m)\asymp \sqrt{\log m}$ \cite[Example 7.5.9]{vershynin2018high}. Now we can easily see that $(\lambda_1,\lambda_2)$ in \cref{coro3} satisfies \cref{eq:lam1choice}--\cref{eq:lam2choice}, and the sample complexity stated in \cref{coro3} satisfies \cref{eq:thm2sample}.

     With the above preparations,  the result immediately follows from   \cref{thm2}.
     \end{proof}

\subsection{The Proof of   \cref{coro4} (Recovering Low-Rank Signal and Sparse Corruption via Unconstrained Lasso)}

\begin{proof}
We present the proof in three steps. 

\subsubsection*{Step 1: Verifying   \cref{assump4}}
 
Given any $\bm{a}\in M_r^{p,q}$, we let its singular value decomposition be 
\begin{equation}
    \bm{a} = \begin{bmatrix}
    \bm{U}_1&\bm{U}_2
    \end{bmatrix}\begin{bmatrix}
        \bm{\Sigma}&\bm{0}\\\bm{0} &\bm{0}
    \end{bmatrix}\begin{bmatrix}
        \bm{V}_1^\top\\\bm{V}_2^\top
    \end{bmatrix}.
\end{equation}
Then we define $\mathcal{X}_{\bm{a}}=\{\bm{U}_1\bm{A}\bm{V}_1^\top:\bm{A}\in \mathbb{R}^{r\times r}\}$, $\overline{\mathcal{X}}_{\bm{a}}=\{\bm{U}_1\bm{A}\bm{V}_1^\top+\bm{U}_1\bm{B}\bm{V}_2^\top+\bm{U}_2\bm{C}\bm{V}_1^\top:\bm{A}\in \mathbb{R}^{r\times r},\bm{B}\in \mathbb{R}^{r\times(q-r)},\bm{C}\in\mathbb{R}^{(p-r)\times r}\}$. Note that $\overline{\mathcal{X}}^\bot_{\bm{a}}= \{\bm{U}_2\bm{A}\bm{V}_2^\top:\bm{A}\in\mathbb{R}^{(p-r)\times(q-r)}\}$, then it is not hard to verify the decomposibility \cref{3.39}, since  $\|\bm{w}_1+\bm{w}_2\|_{\rm nu}=\|\bm{w}_1\|_{\rm nu}+\|\bm{w}_2\|_{\rm nu}$ holds for any $\bm{w}_1\in \mathcal{X}_{\bm{a}}$ and $\bm{w}_2\in \overline{\mathcal{X}}^\bot_{\bm{a}}$. Moreover,   any $\bm{w}\in \overline{\mathcal{X}}_{\bm{a}}$ has rank not exceeding $2r$, and hence we have $\|\bm{x}\|_{\rm nu}\leq \sqrt{2r}\|\bm{x}\|_{\rm F}$. Note that $\|\cdot\|_{\rm F}$ is just the $\ell_2$-norm when $\bm{x}$ is viewed as vector, thus we have $\alpha_{\|\cdot\|_{\rm nu}}(\mathcal{X}_{\bm{a}})\le \sqrt{2r}$. This holds uniformly for all $\bm{a}\in M^{p,q}_r$, thus we can take $\alpha_{\bm{x}}=\sqrt{2r}$ in \cref{assump4}. It has been shown in  the proof of   \cref{coro3} that the $k$-sparse corruption satisfies \cref{assump4} with $\alpha_{\bm{v}}=\sqrt{k}$.

  \subsubsection*{Step 2: Selecting $(\rho_1,\rho_2)$} As shown in Step 2 in the proof of \cref{coro2}, setting $\rho_1=c\delta (\frac{r(p+q)}{m})^{3/2}$ and $\rho_2=c\delta(\frac{k}{m})^{3/2}$ with small enough $c$, along with the sample complexity stated in \cref{coro4}, satisfies \cref{eq:qpecon1un}--\cref{eq:qpecon2}.

     \subsubsection*{Step 3: Estimating Geometric Quantities}
 For estimations of $\mathscr{H}(\mathcal{K}_{\bm{x}},\cdot)$ and $\mathscr{H}(\mathcal{K}_{\bm{v}},\cdot)$ we use   \cref{pro6}. \cite[Example 7.5.9]{vershynin2018high} gives $\omega(\mathbb{B}_g^m)\asymp \sqrt{\log m}$. Moreover, we show  $\omega(\mathbb{B}_f^n)=\omega(\mathbb{B}^{p,q}_{\rm nu})\asymp \sqrt{p+q}$ in the following. First, note that a matrix that has only one non-zero row in $\mathbb{B}^{q}_2$ (or only one non-zero column in $\mathbb{B}_2^{p}$) belongs to $\mathbb{B}_{\rm nu}^{p,q}$, which implies $\omega(\mathbb{B}^{p,q}_{\rm nu})\geq \max\{\omega(\mathbb{B}_2^p),\omega(\mathbb{B}_2^q)\}=\Omega(\sqrt{p+q})$. Second, let $\bm{G}\sim \mathcal{N}^{p\times q}(0,1)$ we have $\omega(\mathbb{B}_{\rm nu}^{p,q})=\mathbbm{E}\sup_{\|\bm{A}\|_{\rm nu}=1}\langle\bm{G},\bm{A}\rangle\leq \mathbbm{E}\|\bm{G}\|_{op}\lesssim \sqrt{p+q}$ \cite[Exercise 4.4.6]{vershynin2018high}. 
 
 With the above preparations, we are ready to invoke   \cref{thm2} to obtain the desired claim.
\end{proof}

\subsection{The Proof of \cref{thm3} (Uniform Recovery Guarantee under Generative Priors)}
\begin{proof}
By writing   $\mathcal{K}_{\bm{x}}^-=\mathcal{K}_{\bm{x}}-\mathcal{K}_{\bm{x}}$ and $\mathcal{K}_{\bm{v}}^-=\mathcal{K}_{\bm{v}}-\mathcal{K}_{\bm{v}}$, it is immediate from the constraint of \cref{geneprogram} that 
\begin{equation}\label{generange}
   \bm{\Delta_x}=\bm{\hat{x}}-\bm{x^\star} \in \mathcal{K}_{\bm{x}}^-,~\bm{\Delta_v}=\bm{\hat{v}}-\bm{v^\star} \in \mathcal{K}_{\bm{v}}^-.
\end{equation}
    We may omit some details  because the techniques are analogous to those for proving   \cref{thm1}. We present the proofs in three steps. 

    \subsubsection*{Step 1: Problem Reduction} We first reduce the proof to bounding several random processes.

    \textbf{Identifying Constraint Sets:} Note  that we want to prove $\sqrt{\|\bm{\Delta_x}\|_2^2+\|\bm{\Delta_v}\|_2^2}\leq \mu$ for all $(\bm{x^\star},\bm{v^\star})\in \mathcal{K}_{\bm{x}}\times \mathcal{K}_{\bm{v}}$ and for some given accuracy $\mu\in (0,1)$. Up to rescaling it suffices to prove $\|\bm{\Delta}\|_2=\sqrt{\|\bm{\Delta_x}\|_2^2+\|\bm{\Delta_v}\|_2^2}\leq 3\mu$. Hence, we can assume \begin{equation}
        \label{bigerror} \sqrt{\|\bm{\Delta_x}\|_2^2+\|\bm{\Delta_v}\|_2^2}\geq 2\mu
    \end{equation} since the bound holds trivially when $(\|\bm{\Delta_x}\|_2^2+\|\bm{\Delta_v}\|_2^2)^{1/2}<2\mu$. Therefore, we can proceed with the constraint 
    \begin{align}
        (\bm{\Delta_x},\bm{\Delta_v})\in \mathcal{E} :=\big\{(\bm{c},\bm{d}):\bm{c}\in \mathcal{K}_{\bm{x}}^-,~\bm{d}\in \mathcal{K}_{\bm{v}}^-,~(\|\bm{c}\|_2^2+\|\bm{d}\|_2^2)^{1/2}\geq 2\mu\big\},
    \end{align}
    where the constraint set $\mathcal{E}$ is defined as per \cref{eq:gene_E}. To accmmodate the normalized error, 
    we   introduce $\mathcal{E}^*$   \begin{align}\label{eq:con_estar}
     \frac{(\bm{\Delta_x},\bm{\Delta_v})}{(\|\bm{\Delta_x}\|_2^2+\|\bm{\Delta_v}\|_2^2)^{1/2}} \in  \mathcal{E}^*=\big\{(\bm{c},\bm{d})/(\|\bm{c}\|^2_2+\|\bm{d}\|_2^2)^{1/2}:(\bm{c},\bm{d})\in \mathcal{E}\big\} 
    \end{align}
    as per \cref{eq:gene_E_star}. Besides, we further define 
    \begin{gather}\label{eq:constrain_C}
        \mathcal{C} = \big\{\bm{c}\in \mathbb{R}^n:(\bm{c},\bm{d})\in \mathcal{E}^*\text{ for some }\bm{d}\in\mathbb{R}^m\big\}\\\label{eq:constrain_D}
         \mathcal{D} = \big\{\bm{d}\in \mathbb{R}^m:(\bm{c},\bm{d})\in \mathcal{E}^*\text{ for some }\bm{c}\in\mathbb{R}^n\big\},
    \end{gather}
    then we note the   relation  \begin{align}\label{eq:GWbound}
        \max\{\omega(\mathcal{C}),\omega(\mathcal{D})\}\le \omega(\mathcal{E}^*)\lesssim \Big(k\log\frac{Lr}{\mu}+k'\log\frac{L'r'}{\mu}\Big)^{1/2},
    \end{align}
    where the first inequality can be seen by \cite[Exercise 7.5.4]{vershynin2018high}, the second inequality follows from \cref{proadd1}(c). 
    In the sequel, we will proceed with the constraints
    \begin{align}
        &\bm{\Delta_x}/(\|\bm{\Delta_x}\|_2^2+\|\bm{\Delta_v}\|_2^2)^{1/2} \in \mathcal{C},\\
        & \bm{\Delta_v}/(\|\bm{\Delta_x}\|_2^2+\|\bm{\Delta_v}\|_2^2)^{1/2}\in\mathcal{D}.
    \end{align}    
    Note that all above constraints hold universally for all $(\bm{x^\star},\bm{v^\star})$ {\it that need further consideration}; those $(\bm{x^\star},\bm{v^\star})$ that fail to satisfy these constraints must satisfy $(\|\bm{\Delta_x}\|_2^2+\|\bm{\Delta_v}\|_2^2)^{1/2}<2\mu$ and are already done.

\textbf{Using Optimality:} From $\|\bm{\dot{y}}-\bm{\Phi}\bm{\hat{x}}-\sqrt{m}\bm{\hat{v}}\|_2\leq \|\bm{\dot{y}}-\bm{\Phi x^\star}-\sqrt{m}\bm{v^\star}\|_2$, we substitute $\bm{\hat{x}}=\bm{x^\star}+\bm{\Delta_x}$ and $\bm{\hat{v}}=\bm{v^\star}+\bm{\Delta_v}$, expand the square, and then substitute \cref{eq:quan_noise} to obtain  
    \begin{equation}
        \label{C.4}\|\bm{\Phi\Delta_x}+\sqrt{m}\bm{\Delta_v}\|_2^2\leq 2\langle\bm{\epsilon}+\bm{\xi}_{\bm{x^\star},\bm{v^\star}},\bm{\Phi\Delta_x}+\sqrt{m}\bm{\Delta_v}\rangle,
    \end{equation}
    see \cref{eq:xi_xv} and \cref{eq:xiabfirst} for    $\bm{\xi}_{\bm{x^\star},\bm{v^\star}}$.  Combining with the constraints in \cref{eq:con_estar}, \cref{eq:constrain_C} and \cref{eq:constrain_D}, we bound both sides of \cref{C.4} to arrive at \begin{align}
        \label{eq:gene_I1234}\big(\|\bm{\Delta}_{\bm{x}}\|_2^2+\|\bm{\Delta}_{\bm{v}}\|_2^2\big)\cdot I_1\le 2\big(\|\bm{\Delta_x}\|_2^2+\|\bm{\Delta_v}\|_2^2\big)^{1/2}\cdot\big(  I_2 +   I_3 +    I_4\big)
    \end{align}
    where the random terms (to be bounded) are given by  
    \begin{gather}
        I_1: = \inf_{(\bm{c},\bm{d})\in \mathcal{E}^*}\|\bm{\Phi c}+\sqrt{m}\bm{d}\|_2^2,~I_2 :=\sup_{\bm{c}\in \mathcal{C}}\langle\bm{\epsilon},\bm{\Phi c}\rangle \\
        I_3:= \sup_{\bm{d}\in\mathcal{D}}\langle \bm{\epsilon},\sqrt{m}\bm{d}\rangle,~I_4:= \sup_{\bm{a}\in \mathcal{K}_{\bm{x}}} \sup_{\bm{b}\in \mathcal{K}_{\bm{v}}}\sup_{(\bm{c},\bm{d})\in \mathcal{E}^*}\langle\bm{\xi}_{\bm{a},\bm{b}},\bm{\Phi c}+\sqrt{m}\bm{d}\rangle
    \end{gather}

\subsubsection*{Step 2: Bounding $I_1,I_2,I_3,I_4$}

Parallel to the proof of \cref{thm1}, our techniques to bound $I_1,I_2,I_3,I_4$ are \cref{pro1}, \cref{pro4}, \cref{pro2} and \cref{coro:qpe_structured}, respectively. The additional technicalities are the estimates on Gaussian width and Kolmogorov entropy developed in \cref{proadd1}.

 {\bf Bounding $I_1$:}       
Because $\mathcal{E}^*\subset \mathbb{S}^{n+m-1}$, \cref{pro1} yields that the event 
\begin{align}\label{eq:gene_extend}
    \sup_{(\bm{c},\bm{d})\in \mathcal{E}^*}\Big|\|\bm{\Phi c}+\sqrt{m}\bm{d}\|_2 -\sqrt{m}\Big|\le C_1\big(\gamma(\mathcal{E}^*)+t\big)
\end{align}
holds with probability exceeding $1-\exp(-t^2)$. Notice that  
 the bound on 
$\omega(\mathcal{E}^*)$ in \cref{eq:GWbound} remains valid for $\gamma(\mathcal{E}^*)$ due to \cref{widthcomple}. Since \cref{genesample} implies $m\gtrsim  k\log\frac{Lr}{\mu}+k'\log\frac{L'r'}{\mu}$,  we can 
set $t = \sqrt{k\log\frac{Lr}{\mu}+k'\log\frac{L'r'}{\mu}}$ and still assume that the right-hand side of \cref{eq:gene_extend} is bounded by $\frac{\sqrt{m}}{2}$, thus obtaining that the event 
\begin{align}\label{C.77}
     \sqrt{I_1}=&\inf_{(\bm{c},\bm{d})\in \mathcal{E}^*}\|\bm{\Phi c}+\sqrt{m}\bm{d}\|_2\\&\quad \ge \sqrt{m}- \sup_{(\bm{c},\bm{d})\in \mathcal{E}^*}\big|\|\bm{\Phi c}+\sqrt{m}\bm{d}\|_2 -\sqrt{m}\big|\ge \frac{\sqrt{m}}{2} \label{eq:gene_bound_on_I1}
\end{align}
holds with probability exceeding $1-\exp(-k\log\frac{Lr}{\mu}-k'\log\frac{L'r'}{\mu})$. 

{\bf Bounding $I_2$:} We derive the bound following similar courses as in the corresponding part in the proof of \cref{thm1}. Conditioning on $\bm{\epsilon}$, for any $t\ge0$, \cref{pro4} gives that the event 
\begin{align}\label{eq:gene_boundII2}
    I_2 \le C_2 \|\bm{\epsilon}\|_2 \big(\omega(\mathcal{C})+t\big)
\end{align}
holds with probability exceeding $1-2\exp(-t^2)$. By repeating the argument in \cref{3.1999} we can show that $\|\bm{\epsilon}\|_2\lesssim E\sqrt{m}$ holds with probability exceeding $1-\exp(-\Omega(m))$, and hence also exceeding $1-\exp(-\Omega(k\log\frac{Lr}{\mu}+k'\log\frac{L'r'}{\mu}))$ by \cref{genesample}. 
Combining with \cref{eq:GWbound}, we set $t= (k\log\frac{Lr}{\mu}+k'\log\frac{L'r'}{\mu})^{1/2}$ in \cref{eq:gene_boundII2} to obtain the bound on $I_2$ \begin{align}\label{eq:gene_bound_I2}
    I_2 \lesssim E\sqrt{m}\cdot \Big(k\log\frac{Lr}{\mu}+k'\log\frac{L'r'}{\mu}\Big)^{1/2},
\end{align}
with probability exceeding $1-3\exp(-\Omega(k\log\frac{Lr}{\mu}+k'\log\frac{L'r'}{\mu}))$.

{\bf Bounding $I_3$:} Due to \cref{eq:verify_proa2}, for any $t\ge 0$, \cref{pro2} yields
that the event $I_3 \lesssim E\sqrt{m}(\omega(\mathcal{D})+t)$ holds with probability exceeding $1-2\exp(-t^2)$. Combining with \cref{eq:GWbound}, we set $t =(k\log\frac{Lr}{\mu}+k'\log\frac{L'r'}{\mu})^{1/2}$  to obtain that the bound on $I_3$ 
\begin{align}
    \label{eq:gene_bound_I3}
    I_3 \lesssim E\sqrt{m}\cdot \Big(k\log\frac{Lr}{\mu}+k'\log\frac{L'r'}{\mu}\Big)^{1/2}
\end{align}
holds with probability exceeding $1-2\exp(-k\log\frac{Lr}{\mu}-k'\log\frac{L'r'}{\mu})$. 

{\bf Bounding $I_4$:} We apply \cref{coro:qpe_structured} to bound $I_4$, so the major work lies in selecting $(\rho_1,\rho_2)$  such that \cref{eq:zeta_choice}--\cref{eq:rho12_choice} hold. 
We claim that setting 
\begin{align}
    \rho_1 = c\delta \Big(\frac{k}{n}\Big)^{3/2},~\rho_2 = c \delta\Big(\frac{k'}{m}\Big)^{3/2} 
\end{align}
works, with the reasoning provided below:\footnote{Unlike in the case of structured priors, we do not aim to carefully choose $\rho_1$ but simply set it small enough to justify \cref{eq:rho12_choice}. The reason is that other parameters $(L,r,L',r')$ appearing in the logarithm   typically dominate $(m,n)$. (As a result, most works in generative compressed sensing do not refine logarithmic factor.)}  
\begin{itemize}
    [leftmargin=5ex,topsep=0.25ex]
    \item In general, we do not have lower bound on the Kolmogorov entropy of $\mathcal{K}_{\bm{x}}$ and $\mathcal{K}_{\bm{v}}$ in \cref{eq:gene_prior}, but we note that the $\mathscr{H}(\mathcal{A},\rho_1)$ and the  $\mathscr{H}(\mathcal{B},\rho_2)$ appearing in the statement of \cref{coro:qpe_structured} can be replaced by their upper bounds, and we will simply use \cref{eq:bound_on_entropy}. 
    \item That being mentioned, from \cref{eq:zeta_choice} we have $\zeta\gtrsim \frac{\delta(k+k')}{m}$, and so $\rho_2 = c\delta(\frac{k'}{m})^{3/2}\lesssim \zeta\sqrt{\frac{\zeta}{\delta}}$. Since \cref{genesample} implies $m\gtrsim k+k'$, we can assume that $\frac{\delta}{\zeta}$ is sufficiently large. Hence, to ensure $\rho_1 \lesssim \frac{\zeta}{(\log \frac{\delta}{\zeta})^{1/2}}$, it suffices to ensure $\rho_1\lesssim \zeta\sqrt{\frac{\zeta}{\delta}}$. Under $m=O(n)$,  $\rho_1=c\delta(\frac{k}{n})^{3/2}$ with small enough $c$ evidently satisfies this. 

    \item It remains to verify the second condition in \cref{eq:rho12_choice}, namely $\omega(\mathcal{A}^{(\rho_1)}_{\loc})\lesssim \zeta\sqrt{\frac{m\zeta}{\delta}}$, and by $\zeta\gtrsim \frac{\delta k}{m}$ it suffices to ensure $\omega(\mathcal{A}^{(\rho_1)}_{\loc})\lesssim \frac{\delta k ^{3/2}}{m}$. This can be justified by $\omega(\mathcal{A}^{(\rho_1)}_{\loc})\le \rho_1 \omega(\mathbb{B}^n_2)\le \rho_1\sqrt{n}=c\delta \frac{k^{3/2}}{n}$, where we use $\omega(\mathbb{B}_2^n)\le\sqrt{n}$ from \cite[Prop. 7.5.2(f)]{vershynin2018high}.
\end{itemize}
Moreover, note that \cref{genesample} implies $m\gtrsim k\log (\frac{Lrn^{3/2}}{\delta k^{3/2}})+k'\log (\frac{L'r'm^{3/2}}{\delta (k')^{3/2}})$, and thus we can apply \cref{coro:qpe_structured} to obtain that the bound 
\begin{align}
    \label{eq:gene_bound_on_I4}
    I_4\lesssim \delta\sqrt{mk\log\Big(\frac{Lrn^{3/2}}{\mu \delta k^{3/2}}\Big)+mk'\log\Big(\frac{L'r'm^{3/2}}{\mu\delta (k')^{3/2}}\Big)}
\end{align}
that holds with probability exceeding $1-12\exp(-\Omega(k\log(\frac{Lrn^{3/2}}{\mu\delta k^{3/2}})+k'\log(\frac{L'r'm^{3/2}}{\mu\delta(k')^{3/2}})))$.

\subsubsection*{Step 3: Combining Everything}  We substitute the bounds \cref{eq:gene_bound_on_I1}, \cref{eq:gene_bound_I2}, \cref{eq:gene_bound_I3}, \cref{eq:gene_bound_on_I4} into \cref{eq:gene_I1234} and perform simple rearrangement. This yields the   bound (universally for all  $(\bm{x^\star},\bm{v^\star})$ that may violate $(\|\bm{\Delta_x}\|_2^2+\|\bm{\Delta_v}\|_2^2)^{1/2}\le 2\mu$) 
\begin{align}\nonumber
    \big(\|\bm{\Delta_x}\|_2^2+
    \|\bm{\Delta_v}\|_2^2\big)^{1/2}\lesssim \frac{E\sqrt{k\log(\frac{Lr}{\mu})+k'\log(\frac{L'r'}{\mu})}+\delta \sqrt{k\log(\frac{Lrn^{3/2}}{\mu\delta k^{3/2}})+k'\log(\frac{L'r'm^{3/2}}{\mu\delta(k')^{3/2}})}}{\sqrt{m}}
\end{align}
with probability exceeding $1-C_3 \exp(-\Omega(k\log(Lr)+k'\log(L'r')))$. Therefore, under the sample size given in \cref{genesample}, we again obtain $(\|\bm{\Delta_x}\|_2^2+\|\bm{\Delta_v}\|_2^2)^{1/2}\le 3\mu$, completing the proof. 
\end{proof}
\section{Technical By-Product} \label{app:by}
We demonstrate that our global QPE property \cref{thm:globalqpe} is a generalization and instance-wise improvement (under Gaussian sensing matrix $\bm{\Phi}$) of the one developed in \cite{xu2020quantized}. Then, as an interesting enough technical by-product, we improve the uniform error rate of the {\it projected back-projection} (PBP) estimator over bounded convex signal set in \cite{xu2020quantized} from $O(m^{-1/16})$ to $O(m^{-1/8})$.

\subsection{Implications of \cref{thm:globalqpe}} \label{app:implication}
Recall that we have specialized \cref{thm:globalqpe} to the case where $\mathcal{A}$ and $\mathcal{B}$ are {\it structured sets} (see \cref{defi1}) in \cref{coro:qpe_structured}. Here, for {\it arbitrary} sets $\mathcal{A}\subset \mathbb{R}^n$ and $\mathcal{B}\subset \mathbb{R}^m$, we further present some direct outcomes of our general \cref{thm:globalqpe} and compare them with \cite{xu2020quantized}.

The key ingredient in \cite{xu2020quantized} for achieving global QPE is their Proposition 6.1, which can be recovered from our \cref{thm:globalqpe} by setting $\mathcal{A}=\mathcal{C}=\{0\}$ (see \cref{rem:recover}). 

\begin{corollary}[QPE (Almost) Coincident with Proposition 6.1 in \cite{xu2020quantized}] \label{coro:recover}Given $\mathcal{B},\mathcal{D}\subset \mathbb{B}_2^m$, we consider the uniform quantizer $\mathcal{Q}_\delta(\cdot)$ associated with uniform dither $\bm{\tau}\sim \mathscr{U}([-\frac{\delta}{2},\frac{\delta}{2}]^m)$. Given any small enough $\epsilon>0$, if \begin{align}
    m\ge C_1(\frac{\mathscr{H}(\mathcal{B},\delta\epsilon^3)}{\epsilon^2}+\frac{\omega^2(\mathcal{D})}{\epsilon^2}) \label{eq:recover_sam}
\end{align} holds for some sufficiently large $C_1$, then the event 
\begin{align}\label{eq:recover_qpe}
    \sup_{\bm{b}\in \mathcal{B}}\sup_{\bm{d}\in\mathcal{D}}\big|\langle \mathcal{Q}_\delta(\sqrt{m}\bm{b}+\bm{\tau})-\sqrt{m}\bm{b},\sqrt{m}\bm{d}\rangle\big| \le C_2\delta m\epsilon  
\end{align}
holds with probability exceeding $1-12\exp(-c_3\mathscr{H}(\mathcal{B},\delta\epsilon^3))$.
\end{corollary}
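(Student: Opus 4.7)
The plan is to deduce this corollary as an essentially routine specialization of our general global QPE result \cref{thm:globalqpe}. I would invoke that theorem with the degenerate choices $\mathcal{A}=\{0\}$, $\mathcal{C}=\{0\}$, and $\mathcal{E}=\{0\}\times\mathcal{D}$, so that the sensing matrix $\bm{\Phi}$ is immaterial. Under these choices the quantization noise in \cref{eq:short_xi_a_b} collapses to $\bm{\xi}_{0,\bm{b}}=\mathcal{Q}_\delta(\sqrt{m}\bm{b}+\bm{\tau})-\sqrt{m}\bm{b}$, which is precisely the process we need to control, and the supremum in \cref{eq:thmbound1}--\cref{eq:thmbound2} becomes exactly the quantity on the left of \cref{eq:recover_qpe}. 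Moreover, $\mathscr{H}(\mathcal{A},\rho_1)=0$, $\omega(\mathcal{A}^{(\rho_1)}_{\loc})=0$, $\omega(\mathcal{E})=\omega(\mathcal{D})$, and $\rad(\mathcal{D})\le 1$ since $\mathcal{D}\subset\mathbb{B}_2^m$.

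Next, I would tune the free parameters by taking $\zeta=\delta\epsilon^2$ and $\rho_2=\delta\epsilon^3$ (with $\rho_1$ irrelevant because $\mathcal{A}=\{0\}$). The scaling conditions \cref{eq:scaling_3cons} reduce to $\zeta\le c_1\delta$ and $\rho_2\le c_1\zeta$, both of which hold once $\epsilon$ is small enough since $\zeta/\delta=\epsilon^2$ and $\rho_2/\zeta=\epsilon$. The sample-size condition \cref{eq:qpe_sample_size} simplifies to $m\gtrsim \mathscr{H}(\mathcal{B},\delta\epsilon^3)/\epsilon^2$, which is supplied by the first summand of our hypothesis \cref{eq:recover_sam}. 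The auxiliary quantity \cref{hatU} collapses to $\hat{U}\asymp m\zeta/\delta+m\rho_2^2/\zeta^2\asymp m\epsilon^2$, exactly the scale we want.

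Plugging these choices into \cref{eq:thmbound1}--\cref{eq:thmbound2}, the $\rad(\mathcal{C})$-line vanishes completely, leaving the bound $\delta\sqrt{m}\,\omega(\mathcal{D})+\delta\sqrt{m\hat{U}}\,\rad(\mathcal{D})\lesssim \delta\sqrt{m}\,\omega(\mathcal{D})+\delta m\epsilon$. The second summand of \cref{eq:recover_sam}, $m\gtrsim \omega^2(\mathcal{D})/\epsilon^2$, forces $\delta\sqrt{m}\,\omega(\mathcal{D})\lesssim \delta m\epsilon$ as well, yielding the target estimate $\lesssim \delta m\epsilon$. The probability bound $1-12\exp(-c_3\mathscr{H}(\mathcal{B},\delta\epsilon^3))$ is inherited verbatim from \cref{thm:globalqpe} after the $\mathscr{H}(\mathcal{A},\rho_1)=0$ term is dropped.

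I do not anticipate any real obstacle in this argument because all technical work is already absorbed in the proof of \cref{thm:globalqpe}; the only point requiring mild attention is the bookkeeping that simultaneously makes $\zeta/\delta$ and $\rho_2/\zeta$ of order $\epsilon$, so that the $\hat{U}$-driven term matches the claimed scale. The scaling $\rho_2\asymp\delta\epsilon^3$ that appears in the statement is precisely the reflection of this two-tier choice (one factor of $\epsilon$ from $\zeta/\delta$, two factors from $\rho_2/\zeta$ inside the $\omega(\mathcal{A}^{(\rho_1)}_{\loc})$-free version of $\hat{U}$).
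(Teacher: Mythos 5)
Your proposal is correct and follows the same route as the paper's own proof: both invoke \cref{thm:globalqpe} with $\mathcal{A}=\mathcal{C}=\{0\}$ and $\mathcal{E}=\{0\}\times\mathcal{D}$, choose the identical parametrization $\zeta=\delta\epsilon^2$, $\rho_2=\delta\epsilon^3$, observe that the $\mathcal{A}$-dependent terms and the $\rad(\mathcal{C})$-line in \cref{eq:thmbound1}--\cref{eq:thmbound2} vanish, and use the two summands of \cref{eq:recover_sam} to absorb the remaining terms into $O(\delta m\epsilon)$. The only cosmetic difference is that you compute $\hat{U}\asymp m\epsilon^2$ upfront and substitute once, whereas the paper first rewrites the bound as $\delta m\bigl(\omega(\mathcal{D})/\sqrt{m}+\sqrt{\zeta/\delta}+\rho_2/\zeta\bigr)$ before plugging in; these are the same calculation in different order.
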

\begin{proof}
\textbf{Applying \cref{thm:globalqpe}:}
    We invoke \cref{thm:globalqpe} with $\mathcal{A}=\mathcal{C}=\{0\}$, $\mathcal{E}= \{0\}\times \mathcal{D}$, $\bm{\epsilon}=0$. Using arbitrarily small $\rho_1$ that renders the second condition in \cref{eq:scaling_3cons}, we always have $\mathscr{H}(\mathcal{A},\rho_1)=0$ and $\omega^2(\mathcal{A}_{\loc}^{(\rho_1)})=0$ due to $\mathcal{A}=\{0\}$. Combining with     $\mathcal{B},\mathcal{D}\subset \mathbb{B}_2^m$, \cref{thm:globalqpe} gives that, if $\zeta \in (0,\frac{\delta}{2})$ and $\rho_2\lesssim\zeta$, $m\gtrsim \frac{2\delta\cdot \mathscr{H}(\mathcal{B},\rho_2)}{\zeta}$, then the event 
    \begin{align}\label{eq:initial_bound}
        \sup_{\bm{b}\in \mathcal{B}}\sup_{\bm{d}\in \mathcal{D}}\big|\langle \mathcal{Q}_\delta(\sqrt{m}\bm{b}+\bm{\tau})-\sqrt{m}\bm{b}, \sqrt{m}\bm{d}\rangle\big|\lesssim \delta m\Big(\frac{\omega(\mathcal{D})}{\sqrt{m}}+\sqrt{\frac{\zeta}{\delta}}+\frac{\rho_2}{\zeta}\Big)
    \end{align}
    holds with probability exceeding $1-12\exp(-\Omega(\mathscr{H}(\mathcal{B},\rho_2)))$.

    \textbf{Choosing Parameters:} Given sufficiently small $\epsilon>0$, we set $$\zeta=\delta\epsilon^2,~\rho_2=\epsilon\zeta = \delta \epsilon^3$$ that satisfy the required conditions $\zeta\in(0,\frac{\delta}{2})$  and $\rho_2\lesssim\zeta$. Under such choice, the required sample size of $m\gtrsim \frac{2\delta\cdot \mathscr{H}(\mathcal{B},\rho_2)}{\zeta}$ reads as $m\gtrsim \frac{\mathscr{H}(\mathcal{B},\delta\epsilon^3)}{\epsilon^2}$, which is satisfied due to \cref{eq:recover_sam}, and the right-hand side of \cref{eq:initial_bound} becomes $\delta m (\frac{\omega(\calD)}{\sqrt{m}}+2\epsilon)$. Combining with  $m\gtrsim \frac{\omega^2(\mathcal{D})}{\epsilon^2}$, we arrive at the desired bound of $O(\delta m \epsilon)$. The  promised probability is directly dictated from \cref{thm:globalqpe}.  
\end{proof}
\begin{rem}
    \label{rem:recover}
    There is no essential difference between our \cref{coro:recover} and \cite[Prop. 6.1]{xu2020quantized}, and   we simply  note the specific two points: (i) \cite[Prop. 6.1]{xu2020quantized} is stated for a fixed $\bm{d}=\bm{d}_0$, which corresponds to the special case of \cref{coro:recover} with $\mathcal{D}=\{\bm{d}_0\}$; (ii) \cite[Prop. 6.1]{xu2020quantized}
    is stated for    $\bm{\tilde{b}}:=\sqrt{m}\bm{b}\in\tilde{\mathcal{B}}:=\sqrt{m}\mathcal{B}$, so the sample complexity is consistent since $\mathscr{H}(\mathcal{B},\delta\epsilon^3)=\mathscr{H}(\tilde{\mathcal{B}},\sqrt{m}\delta\epsilon^3)$ always holds. 
\end{rem}

Next, we show that   improvement can be obtained if $(\bm{b},\bm{d})$ in \cref{eq:recover_qpe} is {\it modulated} by a sub-Gaussian sensing matrix (rather than being simply re-scaled by a factor of $\sqrt{m}$), as will be discussed in \cref{rem:improve}. To get the improved QPE property, we invoke \cref{thm:globalqpe} with $\mathcal{B}=\mathcal{D}=\{0\}$. 

\begin{corollary}
    [Improved QPE under Sub-Gaussian Matrix] \label{coro:improved_qpe}
    Given $\mathcal{A},\mathcal{C}\subset \mathbb{B}_2^n$, we assume that the sub-Gaussian sensing matrix $\bm{\Phi}$ and uniform dither $\bm{\tau}$ are as described in \cref{assump1}. Given any small enough $\epsilon>0$, we let $\rho=\frac{c_0\delta\epsilon}{\sqrt{\log\epsilon^{-1}}}$ for sufficiently small absolute constant $c_0$. If 
    \begin{align}\label{eq:improve_sam}
        m\ge C_1\Big(\frac{\mathscr{H}(\mathcal{A},\rho)}{\epsilon^2\log(\epsilon^{-1})}+ \frac{\omega^2(\mathcal{A}^{(\rho)}_{\loc})}{\delta^2\epsilon^3}+\frac{\omega^2(\mathcal{C})}{\epsilon^2\log(\epsilon^{-1})}\Big)
    \end{align}
    holds for some sufficiently large $C_1$, then the event 
    \begin{align}
        \label{eq:imp_qpe_bound}\sup_{\bm{a}\in\mathcal{A}}\sup_{\bm{c}\in \mathcal{C}}\big|\langle\mathcal{Q}_\delta(\bm{\Phi a}+\bm{\tau})-\bm{\Phi a},\bm{\Phi c}\rangle\big|\le C_2 \delta m \epsilon\sqrt{\log\epsilon^{-1}}
    \end{align}
    holds with probability exceeding $1-12\exp(-c_3\mathscr{H}(\mathcal{A},\rho))$.
\end{corollary}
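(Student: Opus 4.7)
The plan is to derive this statement as a direct specialization of the general global QPE in \cref{thm:globalqpe}, mirroring the way \cref{coro:recover} is obtained. Specifically, I would invoke \cref{thm:globalqpe} with the parameter instantiation
\begin{equation*}
\mathcal{B}=\{0\},\quad \mathcal{D}=\{0\},\quad \mathcal{E}=\mathcal{C}\times\{0\},\quad \bm{\epsilon}=0,
\end{equation*}
so that $\mathscr{H}(\mathcal{B},\rho_2)=0$ for every $\rho_2$, $\rad(\mathcal{D})=0$, and $\rad(\mathcal{E})\le\rad(\mathcal{C})\le 1$. Several terms in \cref{eq:thmbound1}--\cref{eq:thmbound2} then vanish, and the quantity $\hat{U}$ defined in \cref{hatU} reduces to $\hat{U}\asymp \tfrac{m\zeta}{\delta}+\tfrac{\omega^2(\mathcal{A}^{(\rho_1)}_{\loc})}{\zeta^2}$. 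The resulting bound is therefore
\begin{equation*}
\sup_{\bm{a}\in\mathcal{A}}\sup_{\bm{c}\in\mathcal{C}}\bigl|\langle\mathcal{Q}_\delta(\bm{\Phi a}+\bm{\tau})-\bm{\Phi a},\bm{\Phi c}\rangle\bigr|\lesssim \delta\sqrt{m}\,\omega(\mathcal{C})+\delta\sqrt{m}\Bigl(\sqrt{\mathscr{H}(\mathcal{A},\rho_1)}+\tfrac{\hat{U}}{\sqrt{m}}\sqrt{\log\tfrac{\delta}{\zeta}}\Bigr),
\end{equation*}
valid with probability at least $1-12\exp(-c\,\mathscr{H}(\mathcal{A},\rho_1))$, provided the scaling conditions \cref{eq:scaling_3cons} and the sample-size condition \cref{eq:qpe_sample_size} are satisfied.

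Next I would choose parameters to match the target rate $\delta m\epsilon\sqrt{\log\epsilon^{-1}}$. The natural choice is $\zeta=\delta\epsilon$ (so that $\log\tfrac{\delta}{\zeta}=\log\epsilon^{-1}$) together with $\rho_1=\rho=\tfrac{c_0\delta\epsilon}{\sqrt{\log\epsilon^{-1}}}$ as stated in the corollary; this satisfies the conditions $\zeta\le c_1\delta$ and $\rho_1\le c_1\zeta/(\log\tfrac{\delta}{\zeta})^{1/2}$ in \cref{eq:scaling_3cons}, while the condition on $\rho_2$ is vacuous. Under these choices, $\hat{U}\asymp m\epsilon+\tfrac{\omega^2(\mathcal{A}^{(\rho)}_{\loc})}{\delta^2\epsilon^2}$, and I would verify term by term that the three summands in the bound above are each $\lesssim\delta m\epsilon\sqrt{\log\epsilon^{-1}}$ exactly when
\begin{equation*}
m\gtrsim\frac{\omega^2(\mathcal{C})}{\epsilon^2\log\epsilon^{-1}},\quad m\gtrsim\frac{\mathscr{H}(\mathcal{A},\rho)}{\epsilon^2\log\epsilon^{-1}},\quad m\gtrsim\frac{\omega^2(\mathcal{A}^{(\rho)}_{\loc})}{\delta^2\epsilon^3},
\end{equation*}
which is precisely \cref{eq:improve_sam}. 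It also remains to check that the sample-size requirement \cref{eq:qpe_sample_size} of \cref{thm:globalqpe}, namely $m\gtrsim\tfrac{\delta\mathscr{H}(\mathcal{A},\rho)}{\zeta}+\tfrac{\omega^2(\mathcal{A}^{(\rho)}_{\loc})}{\zeta^2}$, is implied by \cref{eq:improve_sam}; this is a straightforward algebraic comparison using $\zeta=\delta\epsilon$ and the assumption that $\epsilon$ is small.

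The main conceptual obstacle is not any single estimate but the bookkeeping for \emph{which} terms of the general QPE bound survive after the substitution $\mathcal{B}=\mathcal{D}=\{0\}$ and which are tight. In particular, the improved $\sqrt{\log\epsilon^{-1}}$ dependence (rather than the polynomial dependence $\epsilon^{-1}$ one would read off naively from \cref{coro:recover}) comes entirely from the $\tfrac{\hat{U}}{\sqrt{m}}\sqrt{\log\tfrac{\delta}{\zeta}}$ term in \cref{eq:thmbound2}, and taking full advantage of it requires choosing $\rho_1=\rho$ slightly smaller than $\zeta$ by a factor of $\sqrt{\log\epsilon^{-1}}$, as is done above. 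Once the parameter choice is in place the remaining work is routine, and the probability bound is directly inherited from \cref{thm:globalqpe}.
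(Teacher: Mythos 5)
Your proposal is correct and matches the paper's own proof of \cref{coro:improved_qpe} step for step: the same specialization $\mathcal{B}=\mathcal{D}=\{0\}$, $\mathcal{E}=\mathcal{C}\times\{0\}$, $\bm{\epsilon}=0$ of \cref{thm:globalqpe}, the same parameter choice $\zeta=\delta\epsilon$ and $\rho_1=\rho=c_0\delta\epsilon/\sqrt{\log\epsilon^{-1}}$, and the same term-by-term reduction of the residual bound to the three constraints in \cref{eq:improve_sam}. Your closing remark correctly identifies that the $\delta\sqrt{m}\cdot\rad(\mathcal{C})\cdot(\hat{U}/\sqrt{m})\sqrt{\log(\delta/\zeta)}$ term (rather than the $\delta\sqrt{m\hat{U}}\cdot\rad(\mathcal{D})$ term that governs \cref{coro:recover}) is what permits the larger choice $\zeta\asymp\delta\epsilon$ and hence the milder sample complexity.
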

\begin{proof}
\textbf{Applying \cref{thm:globalqpe}:} 
    We invoke \cref{thm:globalqpe} with $\mathcal{B}=\mathcal{D}=\{0\}$, $\mathcal{E}=\mathcal{C}\times\{0\}$, $\bm{\epsilon}=0$. Using arbitrarily small $\rho_2$ to render $\rho_2\lesssim\zeta$ needed in \cref{eq:scaling_3cons}, we always have $\mathscr{H}(\mathcal{B},\rho_2)=0$ since $\mathscr{B}=\{0\}$. Combining with $\mathcal{A},\mathcal{C}\subset \mathbb{B}_2^n$, \cref{thm:globalqpe} gives that, if for some positive scalars $\rho=\rho_1,\zeta$ we have
    \begin{align}\label{eq:needed_improve}
        \zeta\in \big(0,\frac{\delta}{2}\big),~\rho \lesssim \frac{\zeta}{(\log\frac{\delta}{\zeta})^{1/2}},~m\gtrsim \frac{\delta\cdot \mathscr{H}(\calA,\rho)}{\zeta}+ \frac{\omega^2(\mathcal{A}^{(\rho)}_{\loc})}{\zeta^2},
    \end{align}
    then the event 
    \begin{align}
        \label{eq:quantity_int}&\sup_{\bm{a}\in\mathcal{A}}\sup_{\bm{c}\in \mathcal{C}}\big|\langle\mathcal{Q}_\delta(\bm{\Phi a}+\bm{\tau})-\bm{\Phi a},\bm{\Phi c}\rangle\big| \\&\quad\quad\quad\lesssim \delta m \Big(\frac{\omega(\mathcal{C})}{\sqrt{m}}+\frac{\sqrt{\mathscr{H}(\mathcal{A},\rho})}{\sqrt{m}}+\frac{\zeta}{\delta}\sqrt{\log\frac{\delta}{\zeta}}+ \frac{\omega^2(\mathcal{A}^{(\rho)}_{\loc})}{m\zeta^2}\sqrt{\log\frac{\delta}{\zeta}}\Big)\label{eq:imp_initial}
    \end{align}
    holds with probability exceeding $1-12\exp(-\Omega(\mathscr{H}(\mathcal{A},\rho)))$.

    \textbf{Choosing Parameters:} We proceed with the parametrization $\zeta = \delta \epsilon$ with the given  small enough $\epsilon$, which along  with $\rho=\frac{c_0\delta\epsilon}{\sqrt{\log\epsilon^{-1}}}$ ensures the first two conditions in \cref{eq:needed_improve}. Note that   \cref{eq:improve_sam}, additionally implies $$m\gtrsim \frac{\mathscr{H}(\mathcal{A},\rho)}{\epsilon}+\frac{\omega^2(\mathcal{A}^{(\rho)}_{\loc})}{\delta^2\epsilon^2},$$ which is just the third condition in \cref{eq:needed_improve} due to $\zeta=\delta\epsilon$. Therefore,  the bound \cref{eq:imp_initial} on \cref{eq:quantity_int} holds with the promised probability, and by substituting $\zeta=\delta\epsilon$ it reads as \begin{align}\label{eq:imp_2bound}
        O\left(\delta m \Big(\frac{\omega(\mathcal{C})}{\sqrt{m}}+\frac{\sqrt{\mathscr{H}(\mathcal{A},\rho_1})}{\sqrt{m}}+\epsilon\sqrt{\log\epsilon^{-1}}+\frac{\omega^2(\mathcal{A}^{(\rho)}_{\loc})\sqrt{\log\epsilon^{-1}}}{m\delta^2\epsilon^2}\Big)\right)
    \end{align}
     The desired claim \cref{eq:imp_qpe_bound} thus follows under the sample complexity in \cref{eq:improve_sam} since under the assumed sample complexity \cref{eq:improve_sam}, \cref{eq:imp_2bound} scales as $O(\delta m \epsilon\sqrt{\log\epsilon^{-1}})$ ---  specifically, $m\gtrsim \frac{\mathscr{H}(\mathcal{A},\rho_1)}{\epsilon^2\log(\epsilon^{-1})}$ ensures $\frac{\sqrt{\mathscr{H}(\mathcal{A},\rho)}}{\sqrt{m}}\lesssim \epsilon\sqrt{\log\epsilon^{-1}}$, $m\gtrsim \frac{\omega^2(\mathcal{A}^{(\rho)}_{\loc})}{\delta^2\epsilon^3}$ ensures $\frac{\omega^2(\mathcal{A}^{(\rho)}_{\loc})\sqrt{\log\epsilon^{-1}}}{m\delta^2\epsilon^2}\lesssim \epsilon\sqrt{\log\epsilon^{-1}}$, and $m\gtrsim \frac{\omega^2(\mathcal{C})}{\epsilon^2\log(\epsilon^{-1})}$ implies $\frac{\omega(\mathcal{C})}{\sqrt{m}}\lesssim \epsilon\sqrt{\log\epsilon^{-1}}$. The proof is complete.
\end{proof}
\begin{rem}
    [Comparing \cref{coro:recover} and \cref{coro:improved_qpe}] \label{rem:improve} Note that the random process in \cref{eq:imp_qpe_bound} reduces to the one in \cref{eq:recover_qpe} 
    when $\bPhi=\sqrt{m}\bm{I}$. In this remark, we show that distortion in \cref{eq:imp_qpe_bound} exhibits a  decaying rate in $m$ faster than \cref{eq:recover_qpe}, due to the modulation of the sub-Gaussian matrix $\bm{\Phi}$. Recall that \cref{coro:recover} and \cref{coro:improved_qpe} aim to handle arbitrary signal sets $\mathcal{A}$ and $\mathcal{B}$, thus Sudakov's inequality \cref{sudakov} is tight. Moreover, we use the simple bound (that follows from $\rho\le1$ and \cite[Prop. 7.5.2(e)]{vershynin2018high})\begin{align}\label{eq:local_bound}
        \omega^2(\mathcal{A}^{(\rho)}_{\loc}) \le \omega^2(\mathcal{A}-\mathcal{A})=4\omega^2(\mathcal{A}).
    \end{align}
    Now we present \cref{coro:recover} and \cref{coro:improved_qpe} in the form of error rate: \begin{itemize}
        [leftmargin=5ex,topsep=0.25ex]
     \item \textbf{The Decaying Rate of \cref{coro:recover}:} By Sudakov's inequality, $m\gtrsim \frac{\omega^2(\mathcal{B})} {\delta^2\epsilon^8}+\frac{\omega^2(\mathcal{D})}{\epsilon^2}$ (with large enough implied constant, implicitly below), or equivalently $$\delta\epsilon\gtrsim \delta^{3/4}\Big(\frac{\omega^2(\mathcal{B})}{m}\Big)^{1/8}+\delta\Big(\frac{\omega^2(\mathcal{D})}{m}\Big)^{1/2},$$  suffices for ensuring  \cref{eq:recover_sam}. Therefore, provided that $m\gtrsim \frac{\omega^2(\mathcal{B})}{\delta^2}+\omega^2(\mathcal{D})$, \cref{eq:recover_qpe} in \cref{coro:recover} implies the following bound on the QPE distortion: \begin{align}
          \sup_{\bm{b}\in \mathcal{B}}\sup_{\bm{d}\in\mathcal{D}}\frac{1}{m}\big|\langle \mathcal{Q}_\delta(\sqrt{m}\bm{b}+\bm{\tau})-\sqrt{m}\bm{b},\sqrt{m}\bm{d}\rangle\big| \lesssim \delta^{3/4}\Big(\frac{\omega^2(\mathcal{B})}{m}\Big)^{1/8}+ \delta \Big(\frac{\omega^2(\mathcal{D})}{m}\Big)^{1/2}. \label{eq:rate_recover}
     \end{align}
        \item \textbf{The Decaying Rate of \cref{coro:improved_qpe}:} By Sudakov's inequality and \cref{eq:local_bound}, the condition \begin{align}
            m\gtrsim \frac{\omega^2(\mathcal{A})}{\rho^2\epsilon^2\log(\epsilon^{-1})}+\frac{\omega^2(\mathcal{A})}{\delta^2\epsilon^3}+\frac{\omega^2(\mathcal{C})}{\epsilon^2\log(\epsilon^{-1})} \label{eq:suff_con_im}
        \end{align} suffices for ensuring \cref{eq:improve_sam}. By substituting $\rho = \frac{c_0\delta\epsilon}{\sqrt{\log\epsilon^{-1}}}$ and   $\epsilon<1$, we can write \cref{eq:suff_con_im} as $m\gtrsim \frac{\omega^2(\calA)}{\delta^2\epsilon^4}+\frac{\omega^2(\calC)}{\epsilon^2\log(\epsilon^{-1})}$, and further note that this can be guaranteed by two conditions: 
        \begin{align}
            \label{eq:twocon_1}
            \epsilon\gtrsim  \Big(\frac{\omega^2(\mathcal{A})}{\delta^2m}\Big)^{1/4}~\text{and}~\delta\epsilon \sqrt{\log\epsilon^{-1}}\gtrsim \delta \Big(\frac{\omega^2(\mathcal{C})}{m}\Big)^{1/2}
        \end{align}
        Moreover, since $\epsilon\sqrt{\log\epsilon^{-1}}$ is monotonically increasing with $\epsilon$ when $\epsilon$ is sufficiently small,  the first condition in \cref{eq:twocon_1} is equivalent to \begin{align}
            \delta\epsilon\sqrt{\log \epsilon^{-1}}\gtrsim \sqrt{\delta}\Big(\frac{\omega^2(\mathcal{A})}{m}\Big)^{1/4}\Big(\log\frac{\delta^2m}{\omega^2(\mathcal{A})}\Big)^{1/2}. 
        \end{align} Overall, the above analysis shows that 
        \begin{align}
            \delta\epsilon\sqrt{\log\epsilon^{-1}} \gtrsim \sqrt{\delta}\Big(\frac{\omega^2(\mathcal{A})}{m}\log^{2}\Big(\frac{\delta^2m}{\omega^2(\mathcal{A})}\Big)\Big)^{1/4}+\delta\Big(\frac{\omega^2(\mathcal{C})}{m}\Big)^{1/2}
        \end{align}
        with sufficiently large implied constant can imply \cref{eq:improve_sam}. Therefore, provided that $m\gtrsim \frac{\omega^2(\mathcal{A})}{\delta^2}+\omega^2(\mathcal{C})$, \cref{eq:imp_qpe_bound} in  \cref{coro:improved_qpe} implies the following bound on the QPE distortion: 
        \begin{align}
        \nonumber  & \sup_{\bm{a}\in\mathcal{A}}\sup_{\bm{c}\in \mathcal{C}}\frac{1}{m}\big|\langle\mathcal{Q}_\delta(\bm{\Phi a}+\bm{\tau})-\bm{\Phi a},\bm{\Phi c}\rangle\big| \\&\quad\quad\quad\lesssim \sqrt{\delta}\Big(\frac{\omega^2(\mathcal{A})}{m}\log^{2}\Big(\frac{\delta^2m}{\omega^2(\mathcal{A})}\Big)\Big)^{1/4} +\delta\Big(\frac{\omega^2(\mathcal{C})}{m}\Big)^{1/2}.\label{eq:rate_improved}
        \end{align}
    \end{itemize} 
    Comparing \cref{eq:rate_recover} and \cref{eq:rate_improved}, it shall be clear that \cref{coro:improved_qpe} provides   decaying rate of the QPE distortion faster than \cref{coro:recover}. 
\end{rem}

\subsection{Improving    Uniform Error Decaying Rate for PBP} \label{app:improve}

We present an interesting by-product as our final technical development: with \cref{coro:improved_qpe}, under sub-Gaussian measurement matrix, we are able to improve  
the uniform recovery guarantee for the projected-back projection (PBP) estimator over signals from a convex and symmetric set $\mathcal{K}$ in \cite[Sec. 7.3\textbf{B}]{xu2020quantized}. 

\subsubsection*{PBP Estimator and Uniform Guarantee in \cite{xu2020quantized}} We first review the PBP estimator and the related result in  \cite{xu2020quantized}. Suppose that the signal $\bm{x^\star}\in \mathbb{R}^n$ lies in some convex and symmetric set $\mathcal{K}$ with $\rad(\mathcal{K})\le 1$, and the reader may think of a typical example given by {\it the set of effectively sparse signals} (e.g., \cite{plan2012robust,plan2013one})
\begin{align}
\mathcal{K}=\mathbb{B}_1^n(\sqrt{s})\cap \mathbb{B}_2^n = \{\bm{x}\in \mathbb{R}^n:\|\bm{x}\|_1 \le \sqrt{s},\|\bm{x}\|_2\le 1\},
\end{align}which is essentially   larger   than the set of exactly $s$-sparse signals $\Sigma^n_s\cap \mathbb{B}_2^n$. 
Under the sensing matrix
$\bm{\Phi}\in\mathbb{R}^{m\times n}$ and uniform dither $\bm{\tau}\sim\mathscr{U}([-\frac{\delta}{2},\frac{\delta}{2}]^m)$, 
we observe the quantized measurements $\bm{\dot{y}}=\mathcal{Q}_\delta (\bm{\Phi x^\star}+\bm{\tau})$.
Let $\mathcal{P}_{\mathcal{K}}(\cdot)$ be the projection operator onto $\mathcal{K}$ under $\ell_2$-norm, then the PBP estimator is given by (e.g., \cite{xu2020quantized,plan2017high})
\begin{align}\label{eq:pbp}
    \bm{\hat{x}}_{\rm PBP} = \mathcal{P}_{\mathcal{K}}\Big(\frac{1}{m}\bm{\Phi}^\top\bm{\dot{y}}\Big).  
\end{align}
Given a general RIP matrix $\bm{\Phi}$ (one that satisfies restricted isometry property (RIP)),   
it was shown in \cite[Sec. 7.3\textbf{B}]{xu2020quantized} that PBP achieves uniform recovery over all $\bm{x^\star}\in\mathcal{K}$ with the following error rate (up to logarithmic factors) 
\begin{align}\label{eq:xurate}
    \|\bm{\hat{x}}_{\rm PBP}-\bm{x^\star}\|_2 =\tilde{O}\left((1+\delta)^{\frac{1}{2}}\Big(\frac{\omega^2(\mathcal{K})}{m}\Big)^{\frac{1}{16}}\right). 
\end{align}

We also  recap some argument from the proofs of \cite[Thm. 4.3, Coro. 3.1]{xu2020quantized} for bounding the PBP estimation error. Let $\bm{a}_{\bm{x^\star}}=\frac{1}{m}\bm{\Phi}^\top\bm{\dot{y}}$ (note that this depends on $\bm{x^\star}$) be the intermediate estimator, then the PBP estimator can be written as  $\bm{\hat{x}}_{\rm PBP}=\mathcal{P}_{\mathcal{K}}(\bm{a}_{\bm{x^\star}})$, and we can proceed as
\begin{align}\label{eq:start_bound_pbp}
    &\|\bm{\hat{x}}_{\rm PBP}-\bm{x^\star}\|_2^2\\&\quad= \|\mathcal{P}_{\mathcal{K}}(\bm{a}_{\bm{x^\star}})-\mathcal{P}_{\mathcal{K}}(\bm{x^\star})\|_2^2 \\
    \label{eq:proj_prope}&\quad\le \langle \bm{x^\star}-\bm{a}_{\bm{x^\star}},\mathcal{P}_{\mathcal{K}}(\bm{x^\star})-\mathcal{P}_{\mathcal{K}}(\bm{a}_{\bm{x^\star}})\rangle \\&\quad\le 
| \langle \bm{x^\star}-\bm{a}_{\bm{x^\star}},\mathcal{P}_{\mathcal{K}}(\bm{x^\star})\rangle| +|\langle \bm{x^\star}-\bm{a}_{\bm{x^\star}},\mathcal{P}_{\mathcal{K}}(\bm{a}_{\bm{x^\star}})\rangle|\\ 
    \label{eq:supout1}&\quad\le 2\sup_{\bm{u}\in \mathcal{K}}\Big|\Big\langle \bm{x^\star}- \frac{1}{m}\bm{\Phi}^\top \cdot\mathcal{Q}_\delta(\bm{\Phi x^\star}+\bm{\tau}),\bm{u}\Big\rangle\Big|\\
 \label{eq:supx}&\quad\le 2\sup_{\bm{v}\in\mathcal{K}}\sup_{\bm{u}\in \mathcal{K}}\Big|\langle \bm{v},\bm{u}\rangle - \frac{1}{m}\big\langle \mathcal{Q}_\delta(\bm{\Phi v}+\bm{\tau}),\bm{\Phi u}\big\rangle\Big|\\
 \label{eq:use_tri_3}&\quad\le 2\underbrace{\sup_{\bm{u},\bm{v}\in\mathcal{K}}\Big|\langle\bm{v},\bm{u}\rangle - \frac{1}{m}\langle \bm{\Phi v},\bm{\Phi u}\rangle\Big|}_{:=I_1} +2\underbrace{\sup_{\bm{u},\bm{v}\in\mathcal{K}}\frac{1}{m}\big|\langle \mathcal{Q}_\delta(\bm{\Phi v}+\bm{\tau})-\bm{\Phi v},\bm{\Phi u}\rangle\big|}_{:=I_2}
\end{align}
where \cref{eq:proj_prope} follows from the non-expansivity of the projector onto $\mathcal{K}$, in \cref{eq:supout1} we take the supremum over $\mathcal{P}_{\mathcal{K}}(\bm{x^\star}),\mathcal{P}_{\mathcal{K}}(\bm{a}_{\bm{x^\star}})\in\mathcal{K}$ and substitute $\bm{a}_{\bm{x^\star}}=\frac{1}{m}\bm{\Phi}^\top\bm{\dot{y}}=\frac{1}{m}\bm{\Phi}^\top \cdot\mathcal{Q}_\delta(\bm{\Phi x^\star}+\bm{\tau})$, in \cref{eq:supx} we take the supremum over $\bm{x^\star}\in \mathcal{K}$, and \cref{eq:use_tri_3} follows from triangle inequality. Then, the critical observation made by \cite{xu2020quantized} can be summarized as follows:
$$I_1~\text{\it can be bounded by RIP},~I_2~ \text{\it can be bounded by QPE}.$$
   To see how one can bound $I_1$ via RIP, suppose that $\mathcal{K}\subset \mathbb{B}_2^n$, the RIP over the convex symmetric $\mathcal{K}$ with distortion $\beta$ is formulated as 
\begin{align}\label{eq:rip}
    \sup_{\bm{w}\in \mathcal{K}}\Big|\frac{1}{m}\|\bm{\Phi w}\|_2^2-\|\bm{w}\|_2^2\Big| \le \beta,
\end{align}
and one can bound $I_1$ by \cref{eq:rip} since 
\begin{align}
    I_1 & = \sup_{\bm{u},\bm{v}\in\mathcal{K}}\left|\frac{1}{m}\Big\|\bm{\Phi}\Big(\frac{\bm{u}+\bm{v}}{2}\Big)\Big\|_2^2-\frac{1}{m}\Big\|\bm{\Phi}\Big(\frac{\bm{u}-\bm{v}}{2}\Big)\Big\|_2^2-\Big\|\frac{\bm{u}+\bm{v}}{2}\Big\|_2^2+\Big\|\frac{\bm{u}-\bm{v}}{2}\Big\|_2^2 \right|\\
    &\le 2  \sup_{\bm{w}\in \mathcal{K}}\Big|\frac{1}{m}\|\bm{\Phi w}\|_2^2-\|\bm{w}\|_2^2\Big| \le \beta \le 2\beta, \label{eq:use_tip} 
\end{align}
where \cref{eq:use_tip} holds because $\frac{\bm{u}+\bm{v}}{2},\frac{\bm{u}-\bm{v}}{2}\in\mathcal{K}$ (recall that $\mathcal{K}$ is convex and symmetric).

\subsubsection*{Improved Rate under Sub-Gaussian $\bm{\Phi}$} In the specific instance of sub-Gaussian $\bm{\Phi}$, we are able to improve \cref{eq:xurate} by bounding $I_2$ via our \cref{coro:improved_qpe}. We formally present this as the following statement. 
\begin{proposition}
    [Improved PBP Uniform Rate under sub-Gaussian Matrix]\label{prop:improve} Let the sub-Gaussian matrix $\bm{\Phi}\in \mathbb{R}^{m\times n}$ and the uniform dither $\bm{\tau}\sim \mathscr{U}([-\frac{\delta}{2},\frac{\delta}{2}]^m)$ be  as described in \cref{assump1}, $\bm{x^\star}\in \mathcal{K}$ for some convex symmetric $\mathcal{K}\subset \mathbb{B}_2^n$, and from the quantized observations $\bm{\dot{y}}=\mathcal{Q}_\delta(\bm{\Phi x^\star}+\bm{\tau})$ we recover $\bm{x^\star}$ by PBP as per \cref{eq:pbp}. If $m\gtrsim (1+\delta^{-2})\omega^2(\mathcal{K})$ with large enough implied constant, then for some small enough absolute constants $c_0,c_1$, with probability exceeding $1-\exp(-\omega^2(\mathcal{K}))-12\exp(-c_1\mathscr{H}(\mathcal{K},c_0\delta))$  on a single draw of $(\bm{\Phi},\bm{\tau})$, the   error rate \begin{align}\label{eq:pbp_uniform}
        \|\bm{\hat{x}}_{\rm PBP}-\bm{x^\star}\|_2 \lesssim \delta^{\frac{1}{4}}\Big(\frac{\omega^2(\mathcal{K})}{m}\log^2\Big(\frac{\delta^2 m}{\omega^2(\mathcal{K})}\Big)\Big)^{\frac{1}{8}} + (1+\delta)^{\frac{1}{2}}\Big(\frac{\omega^2(\mathcal{K})}{m}\Big)^{\frac{1}{4}}
    \end{align}
    holds uniformly for all $\bm{x^\star}\in \mathcal{K}$. 
\end{proposition}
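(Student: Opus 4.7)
\medskip

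The plan is to pick up from the decomposition already derived in the excerpt, namely $\|\bm{\hat{x}}_{\rm PBP}-\bm{x^\star}\|_2^2 \le 2I_1+2I_2$, where $I_1$ collects the RIP-type distortion and $I_2$ collects the QPE distortion. The improvement over \cite{xu2020quantized} will come entirely from bounding $I_2$ via the sharper \cref{coro:improved_qpe} instead of the (Gaussian-width-insensitive) analogue of \cref{coro:recover} used there; the RIP bound on $I_1$ is unchanged.

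To bound $I_1$, I will apply the extended matrix deviation inequality \cref{pro1} with $\mathcal{T}=\mathcal{K}\times\{0\}$ and $t\asymp\omega(\mathcal{K})$. Together with $\mathcal{K}\subset\mathbb{B}_2^n$ and the hypothesis $m\gtrsim\omega^2(\mathcal{K})$, this yields, on an event of probability at least $1-\exp(-\omega^2(\mathcal{K}))$,
\begin{align*}
\sup_{\bm{w}\in\mathcal{K}}\Big|\tfrac{1}{\sqrt{m}}\|\bm{\Phi w}\|_2-\|\bm{w}\|_2\Big|\lesssim\tfrac{\omega(\mathcal{K})}{\sqrt{m}}.
\end{align*}
Using $a^2-b^2=(a-b)(a+b)$ and $\rad(\mathcal{K})\le 1$, this delivers the RIP constant $\beta\lesssim\omega(\mathcal{K})/\sqrt{m}$, so $I_1\lesssim\omega(\mathcal{K})/\sqrt{m}$ and $I_1^{1/2}\lesssim(\omega^2(\mathcal{K})/m)^{1/4}$, which matches the non-$\delta$ part of the second term in \cref{eq:pbp_uniform}.

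To bound $I_2$, I will invoke \cref{coro:improved_qpe} with $\mathcal{A}=\mathcal{C}=\mathcal{K}$. Following exactly the $\epsilon$-selection and simplification carried out in \cref{rem:improve} (culminating in \cref{eq:rate_improved}), the sample-size assumption $m\gtrsim(1+\delta^{-2})\omega^2(\mathcal{K})$ (with large enough implied constant) guarantees the hypothesis \cref{eq:improve_sam} of \cref{coro:improved_qpe} and yields
\begin{align*}
I_2\;\lesssim\;\sqrt{\delta}\,\Big(\tfrac{\omega^2(\mathcal{K})}{m}\,\log^{2}\!\big(\tfrac{\delta^2m}{\omega^2(\mathcal{K})}\big)\Big)^{1/4}+\delta\,\Big(\tfrac{\omega^2(\mathcal{K})}{m}\Big)^{1/2}.
\end{align*}
Taking the square root and combining with the bound on $I_1^{1/2}$ gives precisely \cref{eq:pbp_uniform}. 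For the probability, \cref{coro:improved_qpe} provides $1-12\exp(-c_3\mathscr{H}(\mathcal{K},\rho))$ with $\rho=c_0\delta\epsilon/\sqrt{\log\epsilon^{-1}}\le c_0\delta$ for the $\epsilon$ fixed in \cref{rem:improve}; by the monotonicity \cref{monoto} of Kolmogorov entropy, $\mathscr{H}(\mathcal{K},\rho)\ge\mathscr{H}(\mathcal{K},c_0\delta)$, so this is at most $12\exp(-c_1\mathscr{H}(\mathcal{K},c_0\delta))$ as claimed.

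The main obstacle is not in this proposition but in \cref{coro:improved_qpe}, whose proof (via \cref{thm:globalqpe}) is the technically heavy ingredient; once that is in hand, the present result is essentially a plug-and-play application. The only bookkeeping to watch is (i) correctly propagating the sample condition $m\gtrsim(1+\delta^{-2})\omega^2(\mathcal{K})$ into both the RIP step and the QPE step, and (ii) replacing the $\rho$-dependent covering exponent produced by \cref{coro:improved_qpe} with the cleaner $\mathscr{H}(\mathcal{K},c_0\delta)$ via monotonicity of $\mathscr{H}(\mathcal{K},\cdot)$, both of which are routine.
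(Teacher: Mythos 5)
Your proposal is correct and matches the paper's proof essentially step for step: both bound $I_1$ by applying \cref{pro1} with $\mathcal{T}=\mathcal{K}\times\{0\}$ and $t\asymp\omega(\mathcal{K})$, and bound $I_2$ by invoking \cref{coro:improved_qpe} with $\mathcal{A}=\mathcal{C}=\mathcal{K}$ in the form worked out in \cref{rem:improve}. Your explicit use of the monotonicity of $\mathscr{H}(\mathcal{K},\cdot)$ to replace the $\rho$-dependent entropy in the failure probability with $\mathscr{H}(\mathcal{K},c_0\delta)$ is a correct and slightly more transparent treatment of the bookkeeping the paper handles parenthetically.
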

\begin{proof}
 We bound $I_1$ and $I_2$ in \cref{eq:use_tri_3} separately. 

\subsubsection*{Bounding $I_1$} 
Recall from \cref{eq:use_tip} that $I_1\le 2\beta$ as long as the RIP in \cref{eq:rip} holds. Thus, we only need to identify the  value of $\beta$ by bounding $\sup_{\bm{w}\in \mathcal{K}}|\frac{1}{m}\|\bm{\Phi w}\|_2^2-\|\bm{w}\|_2^2|$. To achieve this, we set $\mathcal{T}=\mathcal{K}\times\{0\}$ to obtain that for any $t\ge 0$, the event 
$\sup_{\bm{w}\in \mathcal{K}}\big|\frac{\|\bm{\Phi w}\|_2}{\sqrt{m}}- \|\bm{w}\|_2\big| \lesssim \frac{\omega(\mathcal{K})+t}{\sqrt{m}}$ holds with probability exceeding $1-\exp(-t^2)$ (note that for symmetric $\mathcal{K}$ we have $\omega(\mathcal{K})=\gamma(\mathcal{K})$). Then, we set $t = \omega(\mathcal{K})$ to obtain \begin{align}
\sup_{\bm{w}\in\mathcal{K}}\Big|\frac{\bm{\|\Phi w}\|_2}{\sqrt{m}}- \|\bm{w}\|_2\Big|\lesssim \frac{\omega(\mathcal{K})}{\sqrt{m}}
\end{align} 
with probability exceeding $1-\exp(-\omega^2(\mathcal{K}))$. Note that $\frac{\omega(\mathcal{K})}{\sqrt{m}}=O(1)$, and hence we have $\sup_{\bm{w}\in\mathcal{K}}\frac{\|\bm{\Phi w}\|_2}{\sqrt{m}}\le \sup_{\bm{w}\in\mathcal{K}}|\frac{\|\bm{\Phi w}\|_2}{\sqrt{m}}-\|\bm{w}\|_2|+\sup_{\bm{w}\in\mathcal{K}}\|\bm{w}\|_2=O(1)$. Thus we have \begin{align}
    \sup_{\bm{w}\in \mathcal{K}}\Big|\frac{\|\bm{\Phi w}\|_2^2}{m}-\|\bm{w}\|_2^2\Big|\le \sup_{\bm{w}\in \mathcal{K}}\Big|\frac{\|\bm{\Phi w}\|_2}{\sqrt{m}}-\|\bm{w}\|_2\Big|\cdot  \sup_{\bm{w}\in \mathcal{K}}\Big|\frac{\|\bm{\Phi w}\|_2}{\sqrt{m}}+\|\bm{w}\|_2\Big| \lesssim \frac{\omega(\mathcal{K})}{\sqrt{m}}.  \label{eq:rip_bound}
\end{align} 
Combining with \cref{eq:use_tip}, we obtain that $I_1\lesssim \frac{\omega(\mathcal{K})}{\sqrt{m}}$ with probability exceeding $1-\exp(-\omega^2(\mathcal{K}))$.

\subsubsection*{Bounding $I_2$} We bound $I_2$ by \cref{coro:improved_qpe}. As reformulated in \cref{rem:improve}, provided that $m\gtrsim (1+\delta^{-2})\omega^2(\mathcal{K})$, \cref{coro:improved_qpe} implies that \begin{align}
    \label{eq:imp_bound_I2}I_2\lesssim\sqrt{\delta}\Big(\frac{\omega^2(\mathcal{K})}{m}\log^{2}\Big(\frac{\delta^2m}{\omega^2(\mathcal{K})}\Big)\Big)^{1/4} +\delta\Big(\frac{\omega^2(\mathcal{K})}{m}\Big)^{1/2}
\end{align}
holds with probability exceeding $1-12\exp(-c_1\mathscr{H}(\mathcal{K},c_0\delta))$ for some small enough $c_0$ (recall that $\rho\asymp \frac{\delta\epsilon}{\sqrt{\log\epsilon^{-1}}}$ for small enough $\epsilon$ in \cref{coro:improved_qpe}). Substituting $I_1\lesssim \frac{\omega(\mathcal{K})}{\sqrt{m}}$ and \cref{eq:imp_bound_I2} into \cref{eq:use_tri_3} yields the claim.   
\end{proof}
\begin{rem}
    Under the regular scaling of $\delta\asymp 1$, our \cref{prop:improve} provides a uniform error rate of $\tilde{O}((\frac{\omega^2(\mathcal{K})}{m})^{1/8})$, which improves on $\tilde{O}((\frac{\omega^2(\mathcal{K})}{m})^{1/16})$ in \cite{xu2020quantized} in the specific instance of sub-Gaussian $\bm{\Phi}$. 
\end{rem}
\begin{rem}[Improving the Non-Uniform Error Rate]
While for general RIP matrix $\bm{\Phi}$ the non-uniform error rate (for recovering a fixed $\bm{x^\star}\in \mathcal{K}$) reads as $\tilde{O}((1+\delta)^{1/2}(\frac{\omega^2(\mathcal{K})}{m})^{1/8})$ \cite[Sec. 7.3\textbf{B}]{xu2020quantized}, we note that a faster decaying rate of $O(m^{-1/4})$ can also be obtained if $\bm{\Phi}$ is sub-Gaussian:
\begin{itemize}
 [leftmargin=5ex,topsep=0.25ex]
    \item  First note that \cref{eq:start_bound_pbp}--\cref{eq:use_tri_3} bounds the recovery error for a fixed $\bm{x^\star}$ as 
\begin{align}
\|\bm{\hat{x}}_{\rm PBP}-\bm{x^\star}\|_2^2 &\le 
    2\sup_{\bm{u}\in \mathcal{K}}\Big|\langle \bm{x^\star},\bm{u}\rangle - \frac{1}{m}\langle \bm{\Phi x^\star},\bm{\Phi u}\rangle\Big|\\&\quad\quad + 2\sup_{\bm{u}\in \mathcal{K}}\frac{1}{m}\Big|\langle\mathcal{Q}_\delta(\bm{\Phi x^\star}+\bm{\tau})-\bm{\Phi x^\star},\bm{\Phi u}\rangle\Big|:=I_3+I_4; 
\end{align}

\item Then, by \cref{eq:use_tip} and \cref{eq:rip_bound} we obtain $I_3 \lesssim \frac{\omega(\mathcal{K})}{\sqrt{m}}$, and    \cref{lem:localqpe} (local QPE, with $\bm{a}=\bm{x^\star},\bm{b}=0,\bm{\epsilon}=0,\mathcal{E}=\mathcal{K}\times\{0\}$) gives $I_4 \lesssim \frac{\delta\cdot\omega(\mathcal{K})}{\sqrt{m}}$, thus yielding the non-uniform rate $O((1+\delta)^{1/2}(\frac{\omega^2(\mathcal{K})}{m})^{1/4})$. 
\end{itemize}
  Comparing with the uniform rate $O(m^{-1/8})$   \cref{eq:pbp_uniform} provides such implication: when estimating signals living in a convex symmetric set, the cost of uniform recovery is essential. Moreover, we   note that under  a general non-linear model with Gaussian $\bm{\Phi}$, the non-uniform recovery of PBP has been systematically studied in \cite{plan2017high}, and and their rate also reads as $O(m^{-1/4})$ for the set of effectively sparse signals \cite[Sec. 2.6]{plan2017high} (this is the canonical example of bounded convex signal set).  
\end{rem}
\section{A Table of Recurring Notation}\label{app:table}
\begin{table}

\caption{Table of recurring notation. \label{tbl:notation}}
\begin{tabular}{|c|c|}
\hline 
\multicolumn{2}{|c|}{\textbf{Introduced in main text}}\tabularnewline
\hline 
$\bm{x^\star},n$ & Underlying signal and its dimension\tabularnewline
\hline 
$\bm{v^\star},m$ & Underlying corruption and its dimension
\tabularnewline
\hline 
$\bm{\Phi},\bm{\Phi}_i^\top$& Sub-Gaussian sensing matrix and its $i$-th sensing vector (row)
\tabularnewline
\hline 
$\bm{\epsilon},E$ & Independent sub-Gaussian noise with sub-Gaussian norm bounded by $E$
\tabularnewline
\hline 
$\delta,\bm{\tau},\mathcal{Q}_\delta(\cdot) $ & Quantization resolution, uniform dither, uniform quantizer 
\tabularnewline
\hline 
$\bm{y},\bm{\dot{y}}$ & Unquantized noisy measurements \cref{3.1}, quantized measurements \cref{3.2}
\tabularnewline
\hline 
$\bm{\xi}_{\bm{a},\bm{b}}$ & Quantization noise associated with     $(\bm{a},\bm{b})$ \cref{eq:xiabfirst} 
\tabularnewline
\hline 
$f(\cdot),g(\cdot)$&Norms for promoting the structures of $\bm{x^\star}$ and $\bm{v^\star}$ (\cref{assump2}) 
\tabularnewline
\hline 
$\mathcal{D}_f(\bm{x}),\mathcal{D}^*_f(\bm{x})$ & Descent cone of $f$ at $\bm{x}$ and its normalized counterpart \cref{descentcone}
\tabularnewline
\hline 
$\mathscr{N}(\mathcal{K},\varepsilon)$& The covering number of $\mathcal{K}$ with radius $\varepsilon$ under Euclidean distance 
\tabularnewline
\hline 
$\mathscr{H}(\mathcal{K},\varepsilon)$ & The Kolmogorov entropy defined as $\mathscr{H}(\mathcal{K},\varepsilon)=\log \mathscr{N}(\mathcal{K},\varepsilon)$ 
\tabularnewline
\hline 
$\omega(\mathcal{K})$& The Gaussian width of $\mathcal{K}\subset \mathbb{R}^n$: $\omega(\mathcal{K})= \mathbbm{E}_{\bm{g}\sim \mathcal{N}(0,\bm{I}_n)}\sup_{\bm{x}\in \mathcal{K}}\bm{g}^\top\bm{x}$  
\tabularnewline
\hline 
$\gamma(\mathcal{K})$ & The Gaussian complexity of $\mathcal{K}\subset \mathbb{R}^n$: $\gamma(\mathcal{K})= \mathbbm{E}_{\bm{g}\sim \mathcal{N}(0,\bm{I}_n)}\sup_{\bm{x}\in \mathcal{K}}|\bm{g}^\top\bm{x}|$
\tabularnewline
\hline
$\rad(\mathcal{K})$ & The radius of $\mathcal{K}$: $\rad(\mathcal{K})=\sup_{\bm{x}\in \mathcal{K}}\|\bm{x}\|_2$
\tabularnewline
\hline
 $\mathcal{K}^{(\rho)}_{\loc}$ & The localized version of $\mathcal{K}$: $\mathcal{K}^{(\rho)}_{\loc} = (\mathcal{K}-\mathcal{K})\cap  \mathbb{B}_2^n(\rho)$ 
 \tabularnewline
\hline 
$\mathcal{D}_{\bm{x}},\mathcal{D}_{\bm{x}}^*,\mathcal{D}_{\bm{v}},\mathcal{D}_{\bm{v}}^*$& Constraint sets for analyzing constrained Lasso \cref{3.91}--\cref{3.101}
\tabularnewline
\hline 
$\alpha_f(\mathcal{X})$ & Compatibility constant between $f$ and $\ell_2$-norm over set    $\mathcal{X}$ \cref{eq:compatible}
\tabularnewline
\hline
$\bm{\Delta_x},\bm{\Delta_v}$ & Reconstruction error of the signal and the corruption
\tabularnewline
\hline
$\alpha_{\bm{x}},\alpha_{\bm{v}}$& Uniform bounds on $\alpha_f(\overline{\mathcal{X}}_{\bm{x^\star}})$ and $\alpha_g(\overline{\mathcal{V}}_{\bm{v^\star}})$ (\cref{assump4})\tabularnewline
\hline 
$\rho_1,\rho_2,\zeta$ & Parameters to be chosen in \cref{thm1}, \cref{thm2}
\tabularnewline 
\hline
$I_1,I_2,I_3,$ etc. & Random processes that we need to bound 
\tabularnewline 
\hline
$\mathcal{C}(\lambda_1,\lambda_2)$ & Constraint set for analyzing unconstrained Lasso \cref{eq:Clam12}
\tabularnewline 
\hline
$k,L,r,k',L',r'$& Parameters for formulating generative priors (\cref{assump5})
\tabularnewline 
\hline
$\Sigma^n_s$ & The set of $s$-sparse $n$-dimensional vectors 
\tabularnewline 
\hline
$M^{p,q}_r$ & The set of $p\times q$ matrices with rank not exceeding $r$ 
\tabularnewline 
\hline
\multicolumn{2}{|c|}{\textbf{Introduced in appendices}}\tabularnewline
\hline 
$\mathcal{K}_{\bm{x}}^-,\mathcal{K}_{\bm{v}}^-,\mathcal{E},\mathcal{E}^*$ & Constraint sets for analyzing     generative case (\cref{proadd1})
\tabularnewline
\hline 
$\rho_1,\rho_2$ & Covering radius for the covering arguments in the proof of \cref{thm:globalqpe}
\tabularnewline
\hline 
$\zeta$ & A parameter in $(0,\frac{\delta}{2})$ introduced in the proof of \cref{thm:globalqpe}  
\tabularnewline
\hline 
$\mathcal{Z}_{\bm{a},\bm{b}}$& ``Bad'' measurements suffering from discontinuity \cref{eq:Zab}
\tabularnewline
\hline
$\mathcal{J}_{\bm{a}}^{\mathcal{A}},\mathcal{J}_{\bm{b}}^{\mathcal{B}}$ & ``Bad'' measurements   suffering from large perturbations \cref{eq:bad1}--\cref{eq:bad2}
\tabularnewline
\hline
$E_1,E_2,E_3$ &  Events that aid the proof of \cref{thm:globalqpe}: \cref{eq:E1}, \cref{eq:E2}, \cref{eq:E3} 
\tabularnewline
\hline
$\mathcal{U}_{\bm{a},\bm{b}}$& ``Bad'' measurements associated with $(\bm{a},\bm{b})$ \cref{eq:bad_index}
\tabularnewline
\hline
$U_0$ & Uniform upper bound on the cardinality of $\mathcal{U}_{\bm{a},\bm{b}}$ \cref{eq:U0bound}
\tabularnewline
\hline
\end{tabular}

\end{table}
\end{appendix}
\bibliographystyle{siamplain}
\bibliography{libr}
\end{document}